\newtheorem{theorem}{Theorem}[section]
\newtheorem{lemma}[theorem]{Lemma}
\newtheorem{definition}[theorem]{Definition}
\newtheorem{corollary}[theorem]{Corollary}
\newtheorem{remark}[theorem]{Remark}
\newtheorem{claim}[theorem]{Claim}
\newtheorem{condition}[theorem]{Condition}
\newcommand{\by}{\boldsymbol{y}}
\newcommand{\bz}{\boldsymbol{z}}
\newcommand{\br}{\boldsymbol{r}}
\newcommand{\bg}{\boldsymbol{g}}
\newcommand{\bS}{\mathbf{S}}
\newcommand{\bA}{\mathbf{A}}
\newcommand{\bx}{\mathbf{x}}
\newcommand{\bG}{\mathbf{G}}
\newcommand{\bv}{\mathbf{v}}
\newcommand{\supp}{\mathsf{supp}}
\DeclareMathOperator*{\E}{\mathbb{E}}
\newcommand{\R}{\mathbb{R}}
\newcommand\numberthis{\addtocounter{equation}{1}\tag{\theequation}}
\newenvironment{breakablealgorithm}
  {
    \begin{center}
      \refstepcounter{algorithm}
      \hrule height.8pt depth0pt \kern2pt
      \parskip 0pt
      \renewcommand{\caption}[2][\relax]{
        {\raggedright\textbf{\fname@algorithm~\thealgorithm} ##2\par}%
        \ifx\relax##1\relax 
          \addcontentsline{loa}{algorithm}{\protect\numberline{\thealgorithm}##2}%
        \else 
          \addcontentsline{loa}{algorithm}{\protect\numberline{\thealgorithm}##1}%
        \fi
        \kern2pt\hrule\kern2pt
     }
  }
  {
     \kern2pt\hrule\relax
   \end{center}
  }
\newenvironment{proofof}[1]{\bigskip \noindent {\it Proof of #1.}\quad }
{\qed\par\vskip 4mm\par}
\begin{document}

\title{Revisit the Partial Coloring Method: Prefix Spencer and Sampling}
\author{
Dongrun Cai\footnote{cdr@mail.ustc.edu.cn
} \quad Xue Chen\footnote{\tt{xuechen1989@ustc.edu.cn}, College of Computer Science, University of Science and Technology of China, Hefei 230026 \& Hefei National Laboratory, University of Science and Technology of China, Hefei 230088, China. Supported by Innovation Program for Quantum Science and Technology 2021ZD0302901, NSFC 62372424, and CCF-HuaweiLK2023006.} \quad Wenxuan Shu\footnote{wxshu@mail.ustc.edu.cn
} \quad Haoyu Wang\footnote{why666@mail.ustc.edu.cn} \quad Guangyi Zou\footnote{zouguangyi2001@gmail.com
}\\ University of Science and Technology of China
}

\date{}

\maketitle

\begin{abstract}

As the most powerful tool in discrepancy theory, the partial coloring method has wide applications in many problems including  the Beck-Fiala problem \cite{Beck_Fiala} and  Spencer's celebrated result \cite{Spencer_six,Bansal10,LovettM12,Rothvoss17}.
Currently, there are two major algorithmic approaches for the partial coloring method: the first approach uses linear algebraic tools to update the partial coloring for many rounds \cite{Bansal10,LovettM12,BansalG17,LevyRR17,BDGL19}; and the second one, called Gaussian measure algorithm \cite{Rothvoss17,ReisR23}, projects a random Gaussian vector to the feasible region that satisfies all discrepancy constraints in $[-1,1]^n$. 

In this work, we explore the advantages of these two approaches and show the following results for them separately.

\begin{enumerate}
    \item Spencer \cite{spencer1986balancing} conjectured that the prefix discrepancy of any $\bA \in \{0,1\}^{m \times n}$ is $O(\sqrt{m})$, i.e., $\exists \bx \in \{\pm 1\}^n$ such that $\max_{t \le n} \|\sum_{i \le t} \bA(\cdot,i) \cdot \bx(i) \|_{\infty} = O(\sqrt{m})$ where $\bA(\cdot,i)$ denotes column $i$ of $\bA$. Combining small deviations bounds of the Gaussian processes and the Gaussian measure algorithm \cite{Rothvoss17}, we show how to find a partial coloring with prefix discrepancy $O(\sqrt{m})$ and $\Omega(n)$ entries in $\{ \pm 1\}$ efficiently. To the best of our knowledge, this provides the first partial coloring  whose prefix discrepancy is almost optimal \cite{Spencer_six,spencer1986balancing} (up to constants).
    
    However, unlike the classical discrepancy problem \cite{Beck81,Spencer_six}, there is no reduction on the number of variables $n$ for the prefix problem. By recursively applying partial coloring, we obtain a full coloring with prefix discrepancy $O(\sqrt{m} \cdot \log \frac{O(n)}{m})$. Prior to this work, the best bounds of the prefix Spencer conjecture for arbitrarily large $n$ were $2m$ \cite{BARANY19811} and $O(\sqrt{m \log n})$ \cite{Banaszczyk12,BansalG17}. 

    \item Our second result extends the first linear algebraic approach to a sampling algorithm in Spencer's classical setting. On the first hand, besides the six deviation bound \cite{Spencer_six}, Spencer also proved that there are $1.99^m$ good colorings with discrepancy $O(\sqrt{m})$ for any $\bA \in \{0,1\}^{m \times m}$. Hence a natural question is to design efficient random sampling algorithms in Spencer's setting. On the other hand, some applications of discrepancy theory, such as experimental design, prefer a random solution instead of a fixed one \cite{Fisher25, Student38, HSSZ_experiments19}. Our second result is an efficient sampling algorithm whose random output has min-entropy $\Omega(n)$ and discrepancy $O(\sqrt{m})$. Moreover, our technique extends the linear algebraic framework by incorporating leverage scores of randomized matrix algorithms.
\end{enumerate}

\end{abstract}


\newpage

\section{Introduction}\label{intro}
The partial coloring method, as one of the most powerful techniques in discrepancy theory, 
has been applied to obtain the best known bounds for various problems. This method was first introduced by Beck \cite{Beck81}. Later on, Spencer \cite{Spencer_six} successfully applied it to prove the discrepancy of any set system of $m$ subsets is at most $6 \sqrt{m}$. Formally, given $\bA \in \mathbb{R}^{m \times n}$, we call $\min_{\bx \in \{\pm 1\}^n} \|\bA \bx\|_{\infty}$ the discrepancy of $\bA$ where $\bx \in \{\pm 1\}^n$ is a bi-coloring on the columns of $A$. In this work, we use $\bA(i,\cdot)$ to denote row $i$ of $\bA$ and $\bA(\cdot,j)$ to denote its column $j$. For convenience, we consider each row $\bA(i,\cdot)$ as a constraint on $\bx$ and call $\langle \bA(i,\cdot), \bx \rangle$ the discrepancy of this row. A partial coloring is a relaxation of $\bx$ from the discrete Boolean cube to $[-1,1]^n$ with $\Omega(n)$ entries in $\{\pm 1\}$. 

In particular, Spencer's classical result \cite{Spencer_six} shows that for any $\bA \in \{0,1\}^{m \times m}$, $\exists$ a partial coloring $\bx \in \{-1,0,1\}^m$ with $99\%$ entries in $\{\pm 1\}$ such that $\|\bA \bx\|_{\infty} = O(\sqrt{m})$. Via a reduction from $n$ variables to $m$ variables \cite{Beck81,Beck_Fiala}, the discrepancy of any $\bA \in \{0,1\}^{m \times n}$ is $O(\sqrt{m})$ by recursively applying Spencer's partial coloring. A celebrated line of research provides efficient algorithms to find such a partial coloring, to name a few \cite{Bansal10,LovettM12,Rothvoss17,LevyRR17,EldanS18,BansalLV22,PesentiV23}. Recently, the partial coloring method has been successfully applied to many other problems, such as the \emph{prefix} Spencer conjecture \cite{spencer1986balancing,BansalG17}, the vector balancing problem \cite{BansalDG19,BDGL19,BRR_zonotopes23}, matrix sparsification \cite{ReisRoth20}, and matrix discrepancy theory \cite{HRS22,DadushJR22,BJM23}. 

There are two major algorithmic approaches for the partial coloring method. The first one keeps updating the partial coloring 
via linear-algebraic tools, called linear-algebraic framework in this work. In \cite{LovettM12,LevyRR17,PesentiV23}, the algorithms find a safe subspace $H_t$ and pick a (random) vector $\bv_t \in H_t$ to update the coloring. Other works \cite{Bansal10,BansalDG19,BDGL19,BansalG17} have used semi-definite programs to find the subspace and provide the update vector. The second approach measures the feasible region by the (standard) Gaussian distribution, called Gaussian measure algorithm \cite{Rothvoss17,ReisR23}. Here the feasible region is the convex body $K \subset \mathbb{R}^n$ of all $x$ satisfying $\|\bA \bx\|_{\infty} \le L$ for some discrepancy bound $L$. Then a partial coloring is a point $\bx \in K \cap [-1,1]^n$ with $\Omega(n)$ entries in $\{\pm 1\}$. The key idea in the 2nd approach is to use the (standard) Gaussian measure to lower bound the size of $K$ \cite{Gluskin}. Then Rothvoss \cite{Rothvoss17} proved that the projection of a random Gaussian vector to $K \cap [-1,1]^n$ would be a good partial coloring (w.h.p.).

While both approaches obtain the same algorithmic results for Spencer's bound, they are quite different. On the first hand, the linear-algebraic framework is more flexible that admits the algorithm to impose various linear constraints and pick any vector in the safe subspace $H_t$. On the second hand, the Gaussian measure approach could comply many more discrepancy constraints than the linear-algebraic framework, since it only needs a lower bound on the Gaussian measure of the feasible region $K$. In this work, we continue the study of the partial coloring method and explore the advantages of these two methods separately.


\subsection{Our Results}
For ease of exposition, we focus on Spencer's setting in this section, whose input matrix $\bA \in \{0,1\}^{m \times n}$ satisfies $n \ge m$. Our first result is an application of the Gaussian measure algorithm \cite{Rothvoss17} to the prefix Spencer conjecture. Instead of bounding $\|\bA \bx\|_{\infty}$, the prefix discrepancy bounds $\max_{t \le n} \|\sum_{i=1}^t \bA(\cdot,i) \cdot \bx(i)\|_{\infty}$ where $\bA(\cdot,i)$ denotes column $i$ of $\bA$ such that $\sum_{i=1}^t \bA(\cdot,i) \cdot \bx(i)$ is a prefix vector summation of $\bA \cdot \bx$. While Spencer \cite{spencer1986balancing} conjectured that the prefix discrepancy is $O(\sqrt{m})$ in the same order of his six deviation bound \cite{Spencer_six}, the best known bounds were $2m$ \cite{BARANY19811} and $O(\sqrt{m \log n})$ \cite{Banaszczyk12,BansalG17} for \emph{arbitrarily large} $n$.

For the prefix Spencer conjecture, we show how to find a partial coloring $\bx$ such that (1) it has $\Omega(n)$ entries in $\{\pm 1\}$ and (2) its prefix discrepancy is $O(\sqrt{m})$. We remark that this upper bound on the prefix discrepancy is \emph{optimal} up to constants \cite{Spencer_six,spencer1986balancing}. Previously, such a optimal partial coloring were known only for the special case $n=m$ \cite{spencer1986balancing}.

However, different from the classical Spencer problem, there is no reduction from $n$ variables to $m$ variables in prefix discrepancy theory. Thus repeating this partial coloring leads to a \emph{full} coloring of prefix discrepancy $O(\sqrt{m} \cdot \log \frac{O(n)}{m})$. For $n=m^{1+o(1)}$, this also improves the previous bounds $2m$ and $O(\sqrt{m \log n})$ mentioned earlier. 
\begin{theorem}\label{thm:inform_pref_disc}
    Given any $\bA \in \{0,1\}^{m \times n}$ with $n \ge m$, the Gaussian measure algorithm finds a partial coloring $\bx \in [-1,1]^n$ such that
    \begin{enumerate}
        \item its prefix discrepancy is $O(\sqrt{m})$;
        \item $\bx$ has $\Omega(n)$ entries in $\{ \pm 1 \}$. 
    \end{enumerate}

    Moreover, there exists an efficient algorithm to find a full coloring $\bx \in\{\pm 1\}^n$ whose prefix discrepancy is $O(\sqrt{m} \cdot \log \frac{O(n)}{m})$.
\end{theorem}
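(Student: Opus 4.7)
The plan is to apply Rothvoss's Gaussian measure algorithm to the symmetric convex body
\[
K \;=\; \Bigl\{\bx \in \R^n : \max_{i \in [m],\, t \in [n]} \Bigl|\sum_{j \le t} \bA(i,j)\, \bx(j)\Bigr| \le L\Bigr\}
\]
with $L = C\sqrt{m}$ for a large absolute constant $C$. Provided $\gamma_n(K) \ge \e^{-\delta n}$ for a small constant $\delta$, Rothvoss's theorem efficiently produces a point in $K \cap [-1,1]^n$ with $\Omega(n)$ coordinates in $\{\pm 1\}$; such a point is by construction a partial coloring with prefix discrepancy at most $L = O(\sqrt m)$. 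The main work is therefore to establish this Gaussian-measure lower bound.

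Decompose $K = \bigcap_{i=1}^m K_i$, where $K_i$ collects the $n$ prefix slabs $\{x : |\sum_{j\le t} \bA(i,j)\,x_j| \le L\}$ for row $i$; each $K_i$ is symmetric and convex. By the Gaussian Correlation Inequality (Royen),
\[
\gamma_n(K) \;\ge\; \prod_{i=1}^m \gamma_n(K_i).
\]
For each fixed row $i$, the process $X_{i,t}(g) = \sum_{j\le t} \bA(i,j)\, g_j$ with $g \sim N(0,I_n)$ is a Gaussian martingale whose total quadratic variation is $r_i := \|\bA(i,\cdot)\|_2^2$. A quadratic-variation time change embeds $(X_{i,t})_t$ into a standard Brownian motion $(B_s)$ so that $\Pr[\max_t |X_{i,t}| \le L] \ge \Pr[\sup_{s \le r_i} |B_s| \le L]$, and the classical Brownian small-ball estimate then gives $\Pr[\sup_{s \le r_i} |B_s| \le L] \ge c_1\, \e^{-c_2\, r_i/L^2}$ for absolute constants $c_1, c_2$. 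Multiplying over $i$ and using $\sum_i r_i \le mn$ produces
\[
\gamma_n(K) \;\ge\; \exp\!\bigl(-m \log(1/c_1)\;-\;c_2\, mn/L^2\bigr),
\]
and the assumption $n \ge m$ together with $L = C\sqrt m$ for $C$ sufficiently large brings this above the Rothvoss threshold $\e^{-\delta n}$.

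For the full-coloring claim I iterate: in round $k$, apply the same partial-coloring argument to the $m \times n_k$ sub-matrix $\bA^{(k)}$ induced by the currently free columns. Each round contributes at most $O(\sqrt m)$ to the final prefix discrepancy and shrinks the free coordinates by a constant factor, $n_{k+1} \le \alpha\, n_k$. The composition is valid because any prefix $[1,t]$ of the columns of the original $\bA$ intersected with the set of still-free columns is itself a prefix of $\bA^{(k)}$, so contributions of distinct rounds add without amplification (beyond a constant rescaling factor from $[-2,2]$ to $[-1,1]$ when feeding the residual error into the next round). After $O(\log(n/m))$ rounds at most $m$ columns remain free, and a single application of Spencer's original $O(\sqrt m)$ prefix-coloring bound \cite{spencer1986balancing} finishes the residual $m \times n'$ instance with $n' \le m$. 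Summing yields the target $O(\sqrt m \cdot \log(O(n)/m))$.

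The principal obstacle is quantitative. Ensuring that the Gaussian-measure bound $\exp(-m\log(1/c_1) - c_2 mn/L^2)$ meets the Rothvoss threshold requires both $n \ge m$ and a careful choice of the constant $C$; the per-row factors $c_1$ from ``intermediate'' rows (those with $r_i$ comparable to $L^2$) produce an overall $c_1^m$ tail that must be absorbed into $\e^{-\delta n}$, which is precisely where the hypothesis $n \ge m$ is consumed. The remaining technical points to verify are the applicability of the Gaussian Correlation Inequality to the $K_i$ (symmetric convex, as intersections of symmetric slabs) and the Brownian time-change domination (which produces a coupling of the two supremum events rather than a distributional identity).
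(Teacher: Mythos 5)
Your partial-coloring argument follows the same decomposition as the paper: split $K=\bigcap_i K_i$ over rows, apply the Gaussian correlation inequality to get $\gamma_n(K)\ge\prod_i\gamma_n(K_i)$, and lower-bound each $\gamma_n(K_i)$ by a small-deviation estimate for the prefix process of that row. Where the paper proves an elementary chaining bound for discrete Gaussian prefix walks (its Lemma~\ref{lem:gaussian_lower_bound}, of the clean form $\Pr[S_n^*<4\sqrt\ell]\ge 4^{-n/\ell}$ with no prefactor below $1$) and uses a rescaling lemma (its Lemma~\ref{lem:sketch_gau_measure}) to handle $[-1,1]$-valued entries and a nonzero starting point, you use a quadratic-variation time change into Brownian motion plus the classical small-ball estimate. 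Both routes are sound, and you correctly flag that the Brownian prefactor $c_1$ must be extremely close to $1$ for $c_1^m$ to be absorbable into $e^{-\delta n}$; this is precisely the issue the paper's prefactor-free Lemma~\ref{lem:gaussian_lower_bound} is engineered to avoid. Up to that constant tracking, the first half of your argument is correct and essentially parallels the paper's.

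The full-coloring step has a genuine gap. You reduce the free columns to at most $m$ in $O(\log(n/m))$ rounds and then invoke ``Spencer's original $O(\sqrt m)$ prefix bound'' to finish the residual. But Spencer's 1986 result for $n=m$ is an existence theorem with no known polynomial-time algorithm --- the paper explicitly states that prior to this work no algorithm was known even for $n=m$ --- so that step cannot yield the claimed \emph{efficient} full coloring. And if you instead keep iterating the fixed-threshold ($C\sqrt m$) partial coloring on the residual $n'\le m$, you need $O(\log m)$ additional rounds at $O(\sqrt m)$ each, giving $O(\sqrt m\log n)$ overall rather than $O\bigl(\sqrt m\cdot\log\frac{O(n)}{m}\bigr)$. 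The paper's actual fix is a second phase in which the convex-body threshold is adaptively shrunk to $\sqrt{2|I_t|\cdot\ln\frac{2000m}{|I_t|}}$; since $|I_t|$ decreases geometrically, the per-round prefix discrepancies form a convergent geometric series summing to $O(\sqrt m)$. This adaptive-threshold phase is the missing algorithmic idea in your proposal.
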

Theorem~\ref{thm:inform_pref_disc} uses the crucial fact that to apply the Gaussian measure algorithm, we only need to lower bound the Gaussian measure of the convex body of feasible $\bx \in \mathbb{R}^n$. While the number of prefix constraints is $nm$ larger than the number of variables $n$, these constraints are highly correlated. For example, $|\sum_{i=1}^t \bA(1,i) \cdot \bx(i)| \le L$ implies that the same bound would hold for the next prefix summation $\sum_{i=1}^{t+1} \bA(1,i) \cdot \bx(i)$ very likely. The key step in Theorem~\ref{thm:inform_pref_disc} is to turn this intuition into a small-deviation bound for (sub-)Gaussian processes:
\begin{equation}\label{eq:small_dev_Gaussian}
\Pr_{\bx \sim N(0,1)^n} \left[ \forall t \le n, \big|\sum_{i=1}^{t} \bx(i) \big| = O(\sqrt{m}) \right] \ge 2^{-c \cdot n/m} \text{ for some small constant $c$}.    
\end{equation}
While inequality \eqref{eq:small_dev_Gaussian} holds for $x \sim \{\pm 1\}^n$ also, it is unclear how to satisfy $m$ constraints simultaneously for $x \sim \{\pm 1\}^n$. When $x \sim N(0,1)^n$, we apply the classical S\u id\' ak-Khatri inequality \cite{vsidak1967rectangular,khatri1967certain,royen2014simple} to lower bound the probability of satisfying all prefix constraints in $\bA$. This allows us to apply the classical Gaussian measure algorithm by Rothvoss \cite{Rothvoss17}.

Our second result studies the sampling question in Spencer's setting via the linear-algebraic framework. In his original paper \cite{Spencer_six}, Spencer had shown that there are exponentially many colorings satisfying $\|\bA \bx\|_{\infty} \le 10\sqrt{m}$ for any $\bA \in \{0,1\}^{m \times m}$. Hence a natural question is to sample from these good colorings $\bx$ (satisfying $\|\bA \bx\|_{\infty} \le 10 \sqrt{m}$). Moreover, some applications of discrepancy theory, such as experimental design \cite{Fisher25, Fisher26}, seek a random balanced coloring rather than a fixed one. Specifically, experimental design requires the coloring to be both balanced and robust. While there are several ways to define robustness, randomized solutions are considered as the most reliable way \cite{Student38}. For example, Harshaw et al. \cite{HSSZ_experiments19} considered how to generate a random coloring whose covariance matrix is balanced. In this work, we use the \emph{min-entropy} of the coloring as an alternative way to measure its randomness and robustness. Equivalently, we bound the probability that the sampling algorithm outputs a single fixed string.

\begin{theorem}\label{thm:infor_sampling}
    There exists a sampling algorithm such that given any $\bA \in \{0,1\}^{m \times n}$ with $n \ge m$, 
    \begin{enumerate}
        \item its output $\bx \in \{\pm 1\}^n$ always satisfies $\|\bA \bx\|_{\infty} \le O(\sqrt{m})$;
        
        \item for any good coloring $\boldsymbol{\epsilon} \in \{\pm 1\}^n$, the probability that the output equals $\boldsymbol{\epsilon}$ is $O(1.9^{-0.9n})$.
    \end{enumerate}
\end{theorem}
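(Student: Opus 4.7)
The plan is to adapt the Lovett--Meka linear-algebraic walk into a randomized sampling procedure and then analyze the min-entropy of its output by tracking the per-coordinate quadratic variation of the walk, where the diagonal entries of the safe-subspace projections play the role that leverage scores play in randomized matrix sketching. First I would set up the following Brownian-like walk: initialize $\bx_0 = 0$; at step $t$, let $J_t$ be the set of near-tight rows (those with $|\langle \bA(j,\cdot), \bx_t\rangle|$ close to the discrepancy threshold $c\sqrt{m}$) and $F_t = \{i : |\bx_t(i)| = 1\}$ the frozen coordinates; let $H_t \subseteq \R^n$ be the orthogonal complement of $\{\bA(j,\cdot)\}_{j \in J_t} \cup \{e_i\}_{i \in F_t}$; sample $\bv_t \sim N(0, \Pi_{H_t})$ and update $\bx_{t+1} = \bx_t + \gamma \bv_t$, freezing any coordinate that reaches $\pm 1$. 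The standard Lovett--Meka analysis shows that a suitable choice of the constant $c$ keeps $|J_t| \le n_t/10$ where $n_t = n - |F_t|$, so $\dim H_t \ge 9 n_t/10$; consequently the walk terminates in $O(1)$ continuous time, and a Gaussian tail bound applied row by row (and truncation of the walk whenever a constraint would otherwise be violated) yields $\|\bA\bx\|_\infty \le O(\sqrt{m})$ with probability one.

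For the min-entropy guarantee I would fix an arbitrary target $\boldsymbol{\epsilon} \in \{\pm 1\}^n$ and bound $\Pr[\bx = \boldsymbol{\epsilon}]$ as follows. For each coordinate $i$ the process $M^{(i)}_t = \bx_t(i)$ is a bounded martingale with quadratic variation $\langle M^{(i)}\rangle_t = \gamma^2 \sum_{s \le t} (\Pi_{H_s})_{ii}$, and the diagonal entry $(\Pi_{H_s})_{ii}$ is exactly the leverage of coordinate $i$ inside the safe subspace at time $s$. Because $\sum_{i \notin F_s} (\Pi_{H_s})_{ii} = \dim H_s \ge 9 n_s/10$, at any point in the walk a constant fraction of alive coordinates carry leverage bounded away from $0$, so most coordinates accumulate $\Omega(1)$ quadratic variation before they freeze. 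Using a Skorokhod-type embedding, the freezing outcome of coordinate $i$ is distributed as the first boundary hit of a Brownian motion in $[-1,1]$ whose total running time is its accumulated variance; whenever this total time is $\Omega(1)$, the hitting distribution on $\{\pm 1\}$ has both probabilities bounded away from $1$ by a constant. Conditioning on the filtration $\mathcal{F}_t$ of the walk and applying optional stopping at each freezing time yields, by a chain-rule induction on the order of freezing, $\Pr[\bx = \boldsymbol{\epsilon}] \le \alpha^n$ for some $\alpha \in (0,1)$; after optimizing the constants $c$, $\gamma$ and the fraction of tolerated tight constraints, one obtains the stated $O(1.9^{-0.9n})$ bound.

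The main technical obstacle will be the non-independence of the coordinate martingales: the direction $\bv_t$ couples all alive coordinates, and the leverage scores $(\Pi_{H_t})_{ii}$ themselves depend on the past trajectory through the identities of $J_t$ and $F_t$. To handle this I would order coordinates by their freezing times $T_{i_1} \le T_{i_2} \le \cdots$ and induct backwards from the last freezing; at each stage the induction hypothesis supplies the conditional bound, and I only need the fresh per-step Gaussianity of $\bv_t$ inside $H_t$, never joint independence across coordinates. A secondary difficulty is turning the aggregate identity $\mathrm{tr}(\Pi_{H_t}) = \dim H_t$ into a lower bound on the accumulated variance of \emph{every} coordinate that is tight enough to reach the constant in $1.9^{-0.9n}$; this is where I would borrow the standard leverage-score amortization argument from randomized matrix algorithms, charging the steps in which coordinate $i$ has low leverage against steps in which another coordinate $j$ has high leverage, so that every coordinate still accumulates the variance required to contribute entropy by the time it freezes.
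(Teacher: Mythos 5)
Your proposal is genuinely different from the paper's, and the difference is not cosmetic: you keep the unmodified Lovett--Meka walk with small Gaussian increments and try to extract min-entropy post hoc from the freezing distribution, whereas the paper re-engineers the walk so that each step \emph{deterministically} freezes one pre-selected coordinate $k_t$ (the one maximizing the leverage score $\tau_{k_t}(H_t)$), moving along a vector $\mathbf{u}_t\approx \mathbf{e}_{k_t}$ so that $\bx_{t+1}(k_t)\in\{\pm1\}$ after the step, with a single fresh near-fair coin flip deciding the sign. That structural change is what makes the paper's min-entropy bound essentially trivial: different coin-flip strings $\br\neq\br'$ provably yield different colorings (Lemma~\ref{lem:spencer_big_n}), so $\Pr[\bx=\boldsymbol{\epsilon}]\le\prod_t\max\{\tfrac{1+\delta_t}{2},\tfrac{1-\delta_t}{2}\}\le(\tfrac{1+\delta}{2})^{0.9n}$, and the only technical work is showing via a martingale concentration on $\|\bx_t\|_2^2$ that a coordinate with both $\tau_{k_t}(H_t)\ge1-\gamma$ and $|\bx_t(k_t)|\le\delta$ always exists.

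There is a real gap in your entropy argument, and I do not think it can be patched without changing the algorithm. The Skorokhod/optional-stopping reasoning controls the \emph{marginal} law of each $\bx_{\tau_i}(i)$, giving $\Pr[\bx_{\tau_i}(i)=+1]\approx\tfrac12$; it says nothing about the \emph{joint} law. The chain-rule induction you invoke is not actually well-posed: if you condition on $\mathcal{F}_{\tau_{(k)}^-}$, the filtration just before the $k$-th freezing, then $\bx_{\tau_{(k)}^-}(i_{(k)})$ is deterministically within $\gamma\|\bv\|$ of whichever endpoint it is about to hit, so the conditional sign probability is essentially $0$ or $1$, contributing nothing. If instead you condition at an earlier time when $\bx(i_{(k)})$ is still near $0$, the identity $i_{(k)}$ is not yet $\mathcal{F}_t$-measurable and the factorization across coordinates is lost. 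You say you would resolve this by inducting backward from the last freezing, but the last coordinate to freeze could be sitting near $\pm1$ when the previous one does, and there is no a priori reason the conditional sign is balanced. A second gap is the leverage-score amortization: having $\sum_{i\notin F_t}(\Pi_{H_t})_{ii}=\dim H_t\ge\tfrac{9}{10}n_t$ guarantees that \emph{most} alive coordinates have constant leverage at a given time, but a specific coordinate $i$ could have $(\Pi_{H_s})_{ii}=o(1)$ for all $s$ and accumulate negligible quadratic variation; such a coordinate never freezes (its martingale stays near $0$) and simply does not contribute entropy in your scheme, forcing you into a multi-round recursion whose per-round entropy losses you have not accounted for. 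The paper sidesteps both problems at once by only ever fixing a coordinate that it has verified to have $\tau_{k_t}\ge1-\gamma$ and $|\bx_t(k_t)|\le\delta$, and by making the sign an explicit independent coin flip rather than the outcome of an exit problem.

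For the case $n<C^*m$ you would also need an analogue of Algorithm~\ref{alg:sample_spencer_small_n}, where the dual space of all rows of $\bA$ is too small; the paper handles this by combining the leverage-score coordinate selection with the Levy--Ramadas--Rothvoss multiplicative-weights construction of $H_t$, and your proposal does not address this regime.
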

Our algorithm extends the linear-algebraic framework by incorporating leverage scores of randomized matrix algorithms \cite{Woodruff_Survey}. Although previous algorithms \cite{Bansal10,LovettM12,Rothvoss17} for Spencer's problem use a large amount of randomness to generate a partial coloring, it is unclear how to measure the \emph{min-entropy} of the full coloring obtained by these partial coloring methods. Our main idea is to use leverage scores of the safe subspace to find an entry $k_t$ such that every update $t$ could either fix $\bx(k_t)=1$ or $\bx(k_t)=-1$ arbitrarily. This guarantees that different partial colorings would lead to different colorings. In fact, in every update, our algorithm could fix $99\%$ uncolored entries in this way. Compared to previous algorithms in this framework \cite{Bansal10,LovettM12,BansalG17,LevyRR17,BansalLV22,PesentiV23}, this provides an alternate way to control the partial coloring; and our algorithm can be viewed as an application of this idea to sampling.

Specifically, we generalize the definition of the leverage scores on rows of a matrix \cite{Woodruff_Survey} to coordinates  of a subspace $H$. In particular, the leverage score $\tau_i(H)$ of coordinate $i$ indicates the closeness of the indicator vector $e_i$ to $H$. While indicator vectors $e_1,\ldots,e_n$ are not in the safe subspace for many settings, we use leverage scores of $H$ to find an approximation $\bv \in H$ for indicator vectors such that updating along $\bv$ or $-\bv$ could set an entry to be $1$ or $-1$ arbitrarily.

Finally, we extend the sampling algorithm to the Beck-Fiala problem \cite{Beck_Fiala} and   the vector-balancing problem \cite{Banaszczyk98}. Analogue to Theorem~\ref{thm:infor_sampling} in the Spencer setting, our algorithm outputs a good coloring $\bx$ satisfying the Beck-Fiala bound with a sufficiently large min-entropy. For the vector-balancing problem, we extend the online algorithm by Liu et~al.~\cite{LiuSS22} which uses a random walk with a Gaussian stationary distribution. Our contribution here is several new ingredients about the properties of this random walk. We describe the details of these two algorithms in Section~\ref{sec:sampling}.


\subsection{Related Works}
\paragraph{Algorithms in the linear-algebraic framework.} For general $\{0,1\}$-matrices, Spencer's discrepancy bound $O(\sqrt{m})$ is tight for the Hadamard matrix. In a breakthrough, Bansal \cite{Bansal10} used a SDP-based random walk to find a partial coloring efficiently and provided the first algorithmic method of the Spencer problem. Lovett and Meka \cite{LovettM12} simplified Bansal's algorithm via walking on the edge of convex bodies. Very recently, Pesenti and Vladu \cite{PesentiV23} used a regularized random walk to improve the big-O constant in Spencer's bound. 

Several works have extended the linear algebraic framework to obtain efficient algorithms for Banaszczyk's bound  \cite{Banaszczyk98} about the vector balancing problem. The vector balancing problem assumes each column $\bA(\cdot,j)$ of $\bA$ is a unit vector such that the discrepancy minimization becomes balancing $\|\sum_{j=1}^n \bx(j) \cdot \bA(\cdot,j)\|_{\infty}$. The seminal works by Bansal et al. \cite{BansalDG19,BDGL19} showed efficient algorithms to generate $\bx$ with discrepancy $O(\sqrt{\log n})$ via guaranteeing that $\bA \bx$ is a random sub-Gaussian vector. Subsequent works \cite{LevyRR17,BansalLV22} provided unified approaches to obtain both Banszczyk's result and Spencer's result. In particular, Levy et al. \cite{LevyRR17} gave the first deterministic algorithms by derandomizing these random walk algorithms via a carefully selected multiplicative weight update method. 

Recent work by Harrow et~al. \cite{HSSZ_experiments19} improved the sub-Guassian constant of \cite{BDGL19} to $1$. Moreover, they consider balancing the covariance matrix of $\bx$ as a way to measure its randomness since randomization is the major method to guarantee the robustness in experimental design \cite{Student38}. 

\paragraph{Prefix Discrepancy Theory.}  Bárány and Grinberg \cite{BARANY19811} proved that the prefix discrepancy of any $\bA \in \{0,1\}^{m \times n}$ is at most $2m$ independent with $n$. Later on, Spencer \cite{spencer1986balancing} onjectured the prefix discrepancy is $O(\sqrt{m})$ and proved this for the special case $n=m$. In fact, prior to our algorithm, there is no algorithm (to the best of our knowledge) to find such a coloring even for the special case $n=m$. Banaszczyk \cite{Banaszczyk12} used deep techniques from convex geometry to prove the existence of $\bx$ with prefix discrepancy $O(\sqrt{m \log n})$ for Spencer's prefix conjecture.

In \cite{BansalG17}, Bansal and Garg provided an elegant prefix discprenacy algorithm in the linear-algebraic framework that matches Banaszczyk's bound. Technically, they showed how to find an update vector in a high dimensional subspace such that  $\sum_{j \le t} \bA(\cdot,j) \bx(j)$ satisfies certain sub-Gaussian properties. While our work is \emph{inspired} by this elegant algorithm \cite{BansalG17}, our results and techniques are incomparable. First of all, while Bansal and Garg's algorithm produces a full coloring with discrepancy $O(\sqrt{m \log n})$, it leaves a gap $O(\sqrt{\log n})$ to Spencer's conjecture. On the other hand, our prefix algorithm finds a \emph{partial} coloring whose prefix discrepancy matches the bound in Spencer's conjecture. Secondly, the technique of our sampling algorithm is to find a vector in subspace $H_t$ that is very close to indicator vectors $e_1,\ldots,e_n$. But Bansal and Garg's algorithm looks for a ``well-spread" random vector with certain sub-Gaussian properties.

 Recent work by Bansal et~al. \cite{bansal2022flow} conjectured that the prefix discrepancy of the Beck-Fiala problem (where each column of $\bA$ has $O(1)$ ones) is $O(1)$ and provided an beautiful application to approximation algorithms in scheduling theory.


\paragraph{The Gaussian measure algorithm.} Gluskin \cite{Gluskin} used the Gaussian measure to show the existence of a good partial coloring. Rothvoss \cite{Rothvoss17} designed an elegant algorithm to find a partial coloring directly based on the Gaussian measure of convex bodies. While this framework is less flexible than the first linear-algebraic framework, it could comply more discrepancy constraints. For example, this partial coloring method has been used in matrix sparsification \cite{ReisRoth20} and matrix Spencer's conjecture \cite{BJM23}. Furthermore, Reis and Rothvoss \cite{ReisR23} has shown the Gaussian measure could be exponentially small for any constant. Our technical contribution here is a connection between Gaussian processes and the Gaussian measure algorithm. We believe it would be interesting to see more Gaussian measure algorithms based on Gaussian processes.

\paragraph{Online Discrepancy Theory.} Online problems require the algorithm to output $\bx(i)$ after reading column $\bA(\cdot,i)$ whose goal is still to bound every prefix summation. 
Alweiss et al.~\cite{AlweissLS21} provided an elegant algorithm with online discrepancy $O(\log m)$ for the vector balancing problem where each column is of length $\le 1$. Later on, Liu et al.~\cite{LiuSS22} used random walks with Gaussian stationary distributions to obtain a \emph{partial} coloring with online discrepancy $O(\sqrt{\log n})$. 
Very recent work by Kulkarni et al.~\cite{kulkarni2023optimal} have shown optimal online algorithms whose online discrepancy is $O(\sqrt{\log m})$, despite their running time is not a polynomial. One remark is that \cite{kulkarni2023optimal} only implies a coloring with prefix discrepancy $O(\sqrt{m \cdot \log n})$ for the prefix Spencer conjecture. Other online settings have been studied in \cite{spencer1977balancing,bansal2021online}.

Finally, there are many other settings and applications of discrepancy theory including hereditary discrepancy \cite{MNT_hereditary}, discrepancy of permutations \cite{NewmanNN12}, differential privacy \cite{NikolovTZ13}. We refer to textbooks \cite{Chazelle,Matousek} for an overview.


\subsection{Discussion}
In this work, we explore the advantages of the two algorithmic frameworks for the partial coloring method and apply them to two problems in Spencer's setting separately.

For the prefix Spencer conjecture, our algorithm finds a partial coloring whose prefix discrepancy is almost-optimal. This leads to a full coloring of discrepancy $O(\sqrt{m} \cdot \log \frac{O(n)}{m})$. While it is well known how to reduce $n$ in classical discrepancy problems, much less is known about prefix problems. It would be interesting to investigate the reduction of variables for the prefix discrepancy problem. 

Technically, we apply the small deviation bounds of Gaussian processes to the Gaussian measure algorithm. Inspired by the chaining argument in Gaussian processes, we use the correlation of those prefix constraints to reduce the loss of applying the S\u id\' ak-Khatri inequality directly. Similar ideas have been studied in other works about matrix sparsification \cite{ReisRoth20,DadushJR22}. It is interesting to investigate the application of this idea to other problems in discrepancy theory including the Kadison-Singer problem \cite{MSS_II} and $\ell_1$ subspace embedding (a.k.a. sparsification of zonotopes) \cite{Talagrand90,BRR_zonotopes23}.

There are many more open questions in prefix discrepancy theory. While the prefix discrepancy of the Beck-Fiala problem is conjectured to be $O(1)$ \cite{bansal2022flow}, the best known upper bound is $O(\sqrt{\log n})$ from Banaszczyk's existence proof \cite{Banaszczyk12}. In fact, there is no \emph{efficient} algorithm to output a coloring with prefix discrepancy $O(\sqrt{\log n})$ in this setting yet \cite{kulkarni2023optimal}. This is in contrast with the prefix Spencer conjecture where Bansal and Garg \cite{BansalG17} provided an efficient algorithm matching Banaszczyk's bound $O(\sqrt{m \log n})$. Furthermore, an intriguing direction is to design \emph{efficient} online algorithm for the vector balancing problem with a prefix discrepancy $O(\sqrt{\log n})$.

For the linear-algebraic framework, our algorithm admits the partial coloring to fix an entry $\bx(k_t)$ to $1$ or $-1$ arbitrarily in every update $t$. While we state it as a sampling algorithm, one could apply it to control those $\{\pm 1\}$-entries in the partial coloring method. For example, in the Beck-Fiala setting, given the partial coloring $\bx_t$ at step $t$, let $\mathcal{F}_t$ be the set of fixed $\{\pm 1\}$-entries in $\bx_t$. For unfixed entries in $\bx_t\big( [n] \setminus \mathcal{F}_t \big)$, our algorithm guarantees that $0.99 \cdot (n - |\mathcal{F}_t|)$ entries could be chosen to get fixed (to $\pm 1$) in this update. In another word, at each step, it could set \emph{almost} any unfixed entry to either $1$ or $-1$. This provides a way to strengthen the standard partial coloring method. One open question would be to use stronger partial coloring methods to obtain a bound like $O(\sqrt{d \log d})$ for the Beck-Fiala problem of degree-$d$. 

\paragraph{Organization.}
The rest of this paper is organized as follows.
In Section~\ref{sec:Prel}, we provide basic notations and definitions. 
In Section~\ref{sec:overview}, we provide an overview of our algorithms and techniques.
 In Section~\ref{sec:prefix_spencer}, we prove Theorem~\ref{thm:inform_pref_disc} about the prefix discrepancy conjecture.
 In Section~\ref{sec:sampling_Spencer}, we provide a random sampling algorithm in Spencer's setting and prove Theorem~\ref{thm:infor_sampling}.
 In Section~\ref{sec:sampling}, we discuss random sampling algorithms in Beck-Fiala's and Banaszczyk's settings.

\section{Preliminaries}\label{sec:Prel}
Let $\|\cdot\|_p$ denote the $\ell_p$ norm of a vector $\bv$, i.e., $\|\bv\|_p=\big( \sum_i |\bv(i)|^p \big)^{1/p}$. For convenience, we use $\exp(x)$ to denote the function $e^{x}$ of a real variable (number) $x$ and $[n]$ to denote $\{1,\ldots,n\}$. 

We use bold letters like $\mathbf{A}$ and $\mathbf{v}$ to denote matrices and vectors. For a vector $\mathbf{v} \in \mathbb{R}^n$,
let $\supp(\bv)$ denote the set of non-zero entries in $\bv$ as the support of $\bv$ and $\mathbf{v}(S)$ for a subset $S \subseteq [n]$ denote the punctured-vector in $\mathbb{R}^S$ such that $\mathbf{v}(i)$ denotes its entry $i$. For two vectors $\mathbf{u} \in \mathbb{R}^S$ and $\bv \in \mathbb{R}^T$ with disjoint supports $S \cap T= \emptyset$, let $\mathbf{u} \circ \bv$ denote their concatenation on $S \cup T$.

For convenience, let $\vec{0}$ and $\vec{1}$ denote the all-0 and all-1 vectors separately. Moreover, we use $\mathbf{e}_i$ to denote the indicator vector for coordinate $i$. 

For a symmetric matrix $\bG \in \mathbb{R}^{n \times n}$, we use $\bG=\sum_{j=1}^n \mu_j \cdot \phi_j \phi_j^{\top}$ to denote its eigen-decomposition where $\mu_1 \ge \mu_2 \ge \cdots \ge \mu_n$. For any matrix $\bA$ whose dimension is $m \times n$, we use $\bA(S,T)$ to denote the submatrix on $S \times T$ for any $S \subseteq [n]$ and $T \subseteq [m]$. In this work, $\bA(\cdot,i)$ denotes column $i$ of $\bA$ and $\bA(j,\cdot)$ denotes its row $j$. Moreover, $\bA(\cdot,[t])$ will denote the submatrix of the first $t$ columns.

\paragraph{Discrepancy theory.} Through this work, we always use $\mathbf{A} \in \mathbf{R}^{m \times n}$ to denote the input matrix for various settings such that discrepancy theory looks for a $\mathbf{x} \in \{\pm 1\}^n$ minimizing $\|\mathbf{A} \mathbf{x}\|_{\infty}$. Since $\|\bA \bx\|_{\infty}=\max_{i \in [m]} \big| \langle \bA(i,\cdot), \bx \rangle \big|$, we consider each row $\bA(i,\cdot)$ as a constraint in the discrepancy problem. Moreover, we call $\bx \in [-1,1]^n$ a partial coloring and $\bx \in \{\pm1\}^n$ a coloring.

We generalize Spencer's setting to matrices whose entries are in $[-1,1]$. For convenience, we focus on the setting where $n \ge m$ in this work. 

We summarize the classical bounds by Spencer \cite{Spencer_six} and its algorithmic results by \cite{Bansal10,LovettM12,Rothvoss17,LevyRR17}. We state the following version whose starting point $\bx_0$ could be arbitrary.
\begin{theorem}\label{thm:spencer}
    Given any $\bA \in [-1,1]^{m \times n}$ with $n \ge m$, there exist exponentially many $\bx \in \{\pm 1\}^n$ such that $\|\bA \bx\|_{\infty} =O(\sqrt{m})$. Moreover, there exist efficient algorithms that given any starting point $\bx_0$, they find $\bx \in \{\pm 1\}^n$ with $\bx(i)=\bx_0(i)$ for each $\bx_0(i) \in \{\pm 1\}$ satisfying $\|\bA (\bx-\bx_0) \|_{\infty}=O(\sqrt{m})$.
\end{theorem}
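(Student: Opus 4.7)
The plan is to derive Theorem~\ref{thm:spencer} by combining the partial coloring primitive of Rothvoss~\cite{Rothvoss17} (or any of the linear-algebraic variants~\cite{Bansal10, LovettM12, LevyRR17}) with Beck's kernel-walk dimension reduction from $n$ to $m$ variables and a geometrically telescoping budget schedule. The primitive, applied with $\bA$ restricted to a free coordinate set $S$, starting point $\bx_0 \in [-1,1]^n$, and per-row budget $b$, efficiently finds $\bx_1 \in [-1,1]^n$ such that $\bx_1$ agrees with $\bx_0$ on every already-frozen coordinate, at least a constant fraction of the coordinates in $S$ become newly frozen at $\pm 1$, and $\|\bA(\bx_1 - \bx_0)\|_\infty \le b$, provided the Gaussian measure of the feasible slab body $K \cap [-1,1]^S$ is at least $2^{-\Omega(|S|)}$. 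The S\u id\'ak--Khatri inequality turns this into the concrete requirement $b \ge C \sqrt{|S|\bigl(1 + \log^{+}(2m/|S|)\bigr)}$ for a universal constant $C$, where $\log^{+} x := \max\{0,\log x\}$.

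Let $S_t$ denote the set of free coordinates after round $t$, with $S_0 = \{i : \bx_0(i) \notin \{\pm 1\}\}$. The algorithm proceeds in two phases. In the \emph{reduction phase}, while $|S_t| > m$, pick $\bv \in \ker \bA(\cdot, S_t)$ (which is non-trivial since the kernel has dimension at least $|S_t|-m \ge 1$), extend $\bv$ by zero on the frozen coordinates, and slide $\bx^{(t)}$ along $\bv$ until a fresh coordinate reaches $\pm 1$; this freezes one more coordinate while keeping $\bA \bx^{(t)}$ exactly unchanged, so after at most $|S_0| - m$ walks we reach a state $\bx^{(T)}$ with $|S_T| \le m$ and $\bA\bigl(\bx^{(T)} - \bx_0\bigr) = \vec{0}$. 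In the \emph{coloring phase}, I run the partial coloring primitive for rounds $r = 1, 2, \dots$ with per-row budget $b_r = C\sqrt{|S_r|\bigl(1 + \log^{+}(2m/|S_r|)\bigr)}$. Writing $|S_r| \le m/2^{k_r}$ with $k_r$ increasing by at least one per round, the budgets telescope as $\sum_r b_r \le C\sqrt{m}\sum_{k \ge 0}\sqrt{(k+1)/2^k} = O(\sqrt{m})$, and this bounds $\|\bA(\bx - \bx_0)\|_\infty$ for the final coloring $\bx \in \{\pm 1\}^n$ while leaving every $\pm 1$-entry of $\bx_0$ untouched.

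For the exponential lower bound on the number of good colorings, I would observe that the partial coloring subroutine inside each round of the coloring phase is a Gaussian projection onto $K_r \cap [-1,1]^{S_r}$ whose conditional distribution carries entropy $\Omega(|S_r|)$; combined with the binary choices available in the reduction phase (the sign $\pm \bv$ and the freedom to select any kernel vector determining which coordinate hits the boundary first), the algorithm's outputs support at least $2^{\Omega(n)}$ distinct colorings each satisfying the discrepancy bound. The main obstacle is the coloring phase once $|S_r|$ drops below $m$: the S\u id\'ak--Khatri bound degrades to $b_r = \Theta(\sqrt{|S_r|\log(m/|S_r|)})$, strictly worse than $O(\sqrt{|S_r|})$, and the delicate step is verifying that this logarithmic blow-up per round is still compatible with a telescoping sum equal to $O(\sqrt{m})$ rather than $\Omega(\sqrt{m}\log m)$; this hinges on the fact that $\sqrt{(k+1)/2^k}$ is summable, which is precisely the Beck--Spencer entropy calculation.
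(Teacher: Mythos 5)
The paper does not supply its own proof of Theorem~\ref{thm:spencer}; it is stated as a summary of the classical Spencer bound \cite{Spencer_six} and its constructive versions \cite{Bansal10,LovettM12,Rothvoss17,LevyRR17}, and the only self-contained ingredient is the counting argument in Appendix~\ref{sec:counting} which shows there are $2^{0.5n}$ good colorings. Against that backdrop, your outline of the algorithmic part is the standard recipe and looks sound: Beck's kernel walk reduces the free set from $|S_0|$ to at most $m$ columns at zero discrepancy cost, Rothvoss's Gaussian-measure primitive with budget $b_r = C\sqrt{|S_r|(1+\log^{+}(2m/|S_r|))}$ freezes a constant fraction of the remaining free coordinates each round, and the budgets form a geometric series summing to $O(\sqrt{m})$ because $\sqrt{(k+1)/2^{ck}}$ is summable. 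The starting-point condition is handled correctly by initializing $S_0 = \{i:\bx_0(i)\notin\{\pm 1\}\}$ and never touching frozen coordinates.

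The genuine gap is in the ``exponentially many colorings'' part. You assert that the Gaussian projection in each round ``carries entropy $\Omega(|S_r|)$'' and that the sign choices in the reduction phase add further multiplicity, but neither claim is argued, and both are in fact delicate. The projection onto $K_r\cap[-1,1]^{S_r}$ is a highly non-injective map from $N(0,1)^{S_r}$ to the boundary of the cube, so a large amount of input randomness can collapse onto very few faces; moreover, even when a round produces many distinct partial colorings, different partial colorings may converge to the same full coloring after subsequent rounds. Establishing that the composition has $2^{\Omega(n)}$ reachable outputs is exactly the sort of statement the paper spends Section~\ref{sec:sampling_Spencer} proving by quite different means (leverage scores and a careful martingale), and for the weaker statement you actually need, the clean route is the one the paper takes in Appendix~\ref{sec:counting}: every subset of columns of $\bA$ has discrepancy $O(\sqrt{m})$ by Spencer's bound, so the Shannon-entropy/pigeonhole argument of \cite{kim2005discrepancy} directly yields $2^{0.5n}$ colorings $\bx$ with $\|\bA\bx\|_\infty = O(\sqrt{m})$, with no appeal to the randomness inside the algorithm. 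I would replace your entropy heuristic with that argument; the reduction-phase sign choices alone also do not suffice, since for $n=m$ the reduction phase is empty and yet Spencer's original counting result still holds.
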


\paragraph{Prefix discrepancy.} In the prefix problem, the goal is to bound the prefix summation of any $\ell$ terms
\begin{equation}\label{eq:prefix_disc}
\min_{x \in \{\pm 1\}^n} \max_{\ell \le n} \|\sum_{i \le \ell} \bA(\cdot,i) \cdot \bx(i)\|_{\infty},    
\end{equation}
instead of the total summation $\bA \cdot \bx=\sum_{i \le n} \bA(\cdot,i) \cdot \bx(i)$. Since $\bA(\cdot,[\ell])$ denotes the submatrix of the first $\ell$ columns and $\bx([\ell])$ denotes the subvector of the first $\ell$ entries, we simplify the prefix summation as
\[
\sum_{i \le \ell} \bA(\cdot,i) \cdot \bx(i)=\bA(\cdot,[\ell]) \cdot \bx([\ell]).
\]
One remark is that the prefix discrepancy in \eqref{eq:prefix_disc} provides a upper bound on the discrepancy of any consecutive subset of indices (up to factor 2) \cite{bansal2022flow}.

Spencer \cite{spencer1986balancing} conjectured that when $\bA \in [-1,1]^{m \times n}$, no matter how large is $n$, there always exists $\bx \in \{-1,1\}^n$ such that the prefix discrepancy defined in \eqref{eq:prefix_disc} is $O(\sqrt{m})$.




\section{Overview}\label{sec:overview}
In this section, we provide an overview of our algorithms in Theorem~\ref{thm:inform_pref_disc} and Theorem~\ref{thm:infor_sampling}.

\paragraph{Prefix Discrepancy Theory.} The starting point of the prefix discrepancy algorithm is L\'evy's inequality~\cite{levy1938theorie} which bounds the deviation of the prefix summations of symmetric random variables like $X_i \sim \{\pm 1\}$.
\begin{lemma}
    [L\'evy's inequality ] \label{lem:levy ineq}
    Let $X_{1},X_{2},\ldots, X_{n}$ be independent symmetric random variables and $\bS_{k}=\sum_{i=1}^{k}X_{i}$ be their partial summations of $k\in[n]$. For any $\theta>0$ we have \begin{enumerate}
        \item $\Pr[\underset{k\in[n]}{\max}\{\bS_{k}\}\ge \theta]\le 2\cdot\Pr[\bS_{n}\ge \theta],$
        \item \label{levy_ineq_2}
        $\Pr[\underset{k\in[n]}{\max}\{|\bS_{k}|\}\ge \theta]\le 2\cdot\Pr[|\bS_{n}|\ge \theta].$        
    \end{enumerate}
\end{lemma}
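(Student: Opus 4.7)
\medskip
\noindent\textbf{Proof proposal.}
The plan is the classical reflection argument via a first-passage stopping time. Define $\tau = \min\{k \in [n] : \bS_k \ge \theta\}$ for part~1 (respectively, $\tau = \min\{k \in [n] : |\bS_k| \ge \theta\}$ for part~2), with $\tau = \infty$ if no such index exists. Partitioning on the value of $\tau$ gives
\[
\Pr\bigl[\max_{k\in[n]} \bS_k \ge \theta\bigr] = \sum_{k=1}^{n} \Pr[\tau = k],
\]
and analogously for part~2. The point is to exhibit, for each $k$, a set of realizations of the remaining increments $X_{k+1},\dots,X_n$ of probability at least $1/2$ on which the terminal sum $\bS_n$ also crosses the threshold in the desired direction.

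First I would carry out part~1. On the event $\{\tau = k\}$ we have $\bS_k \ge \theta$, and since $\bS_n = \bS_k + (X_{k+1}+\cdots+X_n)$, it suffices to show that the tail sum $T_k := X_{k+1}+\cdots+X_n$ is nonnegative with probability at least $1/2$ conditional on $\{\tau = k\}$. The event $\{\tau = k\}$ is determined by $X_1,\dots,X_k$ alone and is therefore independent of $T_k$. Since each $X_i$ is symmetric and symmetry is preserved under independent sums, $T_k$ is a symmetric random variable, so $\Pr[T_k \ge 0] \ge 1/2$. Therefore
\[
\Pr[\bS_n \ge \theta] \ge \sum_{k=1}^{n} \Pr[\tau = k,\ T_k \ge 0] \ge \tfrac{1}{2} \sum_{k=1}^{n} \Pr[\tau = k] = \tfrac{1}{2}\Pr\bigl[\max_{k\in[n]} \bS_k \ge \theta\bigr],
\]
which rearranges to inequality~1.

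Part~2 follows the same template but with a small case split. With $\tau$ now the first time $|\bS_k| \ge \theta$, I would further decompose $\{\tau = k\}$ into $\{\tau = k,\ \bS_k \ge \theta\}$ and $\{\tau = k,\ \bS_k \le -\theta\}$. On the first sub-event, $|\bS_n| \ge \theta$ whenever $T_k \ge 0$; on the second, $|\bS_n| \ge \theta$ whenever $T_k \le 0$. In either case the conditional probability is at least $1/2$ by the same independence-plus-symmetry argument, so summing over $k$ and over the two sub-events yields $\Pr[|\bS_n| \ge \theta] \ge \tfrac{1}{2}\Pr[\max_k |\bS_k| \ge \theta]$.

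The only genuinely delicate point is the justification that $T_k$ is symmetric \emph{and} independent of the stopping event $\{\tau = k\}$; once one observes that $\tau$ is a stopping time with respect to the natural filtration of $X_1,\dots,X_n$ and that independent symmetric variables have a symmetric sum, everything falls out. There is no real obstacle here, just care in the conditioning.
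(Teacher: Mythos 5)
The paper cites L\'evy's inequality to \cite{levy1938theorie} without including a proof, so there is no in-paper argument to compare against. Your proof via the first-passage stopping time $\tau$ and the symmetry of the tail sum $T_k = X_{k+1}+\cdots+X_n$ is the standard argument and is correct: the event $\{\tau=k\}$ (and, in part~2, each of its sub-events according to the sign of $\bS_k$) is $\sigma(X_1,\dots,X_k)$-measurable and hence independent of $T_k$, which is symmetric as a sum of independent symmetric variables and so satisfies $\Pr[T_k\ge 0]\ge\tfrac12$; summing the resulting disjoint inclusions into $\{\bS_n\ge\theta\}$ (resp.\ $\{|\bS_n|\ge\theta\}$) over $k$ gives the claimed factor-$2$ bounds.
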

While L\'evy's inequality has been used for the special case $n=m$ \cite{spencer1986balancing}, this is not stronger enough to bound the prefix discrepancy by $O(\sqrt{m})$ for arbitrarily large $n$. Even for a single constraint $\bA(j,\cdot)$, $\sum_{i=1}^n \bA(j,i) \bx(i)$ could have variance $\Theta(n)$; and $\Pr\big[ |\sum_{i=1}^n \bA(j,i) \bx(i)| = \omega(\sqrt{m}) \big]$ is close to 1 when $n=\omega(m)$.

The main technical tool is to prove a small-deviation bound directly: for $\bx \sim \{\pm 1\}^n$,
\begin{equation}\label{eq:small_dev}
\Pr \left[ \forall t \le n, \big|\sum_{i=1}^{t} \bA(j,i) \cdot \bx(i) \big| = O(\sqrt{m}) \right] \ge 2^{-n/m}.    
\end{equation}
The intuition behind \eqref{eq:small_dev} is that Lemma~\ref{lem:levy ineq} implies that the prefix discrepancy of $2m$ variables is $O(\sqrt{m})$ with a constant probability. Hence we could partition $n$ entries into $n/2m$ intervals and apply Lemma~\ref{lem:levy ineq} to each interval. Conditioned on that each interval satisfies the prefix constraint, then we need to consider the summation of every interval. While these $n/2m$ interval summations  may no longer be sub-Gaussian conditioned on the prefix constraint on each interval, they are still symmetric. This turns out to be enough for a lower bound like $2^{-n/m}$.

Back to the prefix discrepancy algorithm, although \eqref{eq:small_dev} implies a probability $2^{-n/m}$ for any single constraint, it is unclear how to turn this into a probability $2^{-n}$ for all $m$ constraints. Hence we consider the Gaussian measure instead of the discrete Boolean cube and apply the  S\u id\' ak-Khatri inequality for $m$ contraints. However, by the Gaussian measure algorithm \cite{Rothvoss17,ReisR23}, we could only obtain a \emph{partial} coloring with a prefix discrepancy $O(\sqrt{m})$. The last point here is that to obtain a full coloring of prefix discrepancy $O(\sqrt{m} \cdot \log \frac{n}{m})$ instead of $O(\sqrt{m} \cdot \log n)$, we will adjust the deviation from $O(\sqrt{m})$ to $O(\sqrt{n \cdot \log \frac{O(m)}{n}})$ when $n=O(m)$ and apply Levy's inequality for the new deviation.

\paragraph{Sampling.} Next we consider the sampling algorithm in Spencer's setting. For ease of exposition, we consider the case $n=\omega(m)$ and use $\bx_t$ to denote the partial coloring at step $t$. The sampling algorithm for $n=O(m)$ is more involved whose discussion will be deferred to the end of this section.

The classical reduction \cite{Beck81} considers the dual space $H \subset \mathbb{R}^n$ of all rows in $\bA$ such that any update vector $\bv_t \in H$ has $\|\bA \bv_t\|_{\infty}=0$. Back to the sampling question, even though there are exponentially many choices of $\bv_t$ in $H$, one needs to guarantee that these updates will generate exponentially many colorings eventually. Our first idea circumvents this by requiring that each update sets an fixed entry $\bx(k_t)$ to be either $1$ or $-1$. In another word, our algorithm finds a coordinate $k_t$ instead of a direction $\bv_t$.
Even though this reduces the number of choices, it guarantees that all partial colorings generated in this process are distinct. Then our second idea of finding the coordinate $k_t$ relies on the properties of leverage scores of $H$. We recall a few properties of the leverage score here.

\paragraph{Leverage Scores.} The standard leverage score is defined for every row of a full-rank matrix $\bG \in \mathbb{R}^{n \times d}$ with $n \ge d$:
\begin{equation}\label{eq:leverage_score}
    \tau_i=\bG(i,\cdot)^{\top}  (\bG^{\top} \bG)^{-1} \bG(i,\cdot) \textit{ for row $i$ in $\bG$.} 
\end{equation}  
One could generalize it for every coordinate of a subspace by basic properties of leverage scores \cite{Woodruff_Survey,CP19}. We summarize two properties of $\tau_i$ here and provide a self-contained proof.
\begin{claim}\label{clm:leverage_socres_subspace}
    Given a linear subspace $H \subseteq \mathbb{R}^n$, for each coordinate $i\in n$, let $\tau_i(H):=\underset{\mathbf{u} \in H \setminus \{\vec{0}\}}{\max} \frac{|\mathbf{u}(i)|^2}{\|\mathbf{u}\|_2^2}$. Then $\sum_{i \in [n]} \tau_i(H)=\text{dim}(H)$.

    Moreover, given $H$ and $i$, Algorithm~\ref{alg:leverage_score_alg} computes the corresponding vector $\mathbf{u}$ such that $\frac{\mathbf{u}(i)^2}{\|\mathbf{u}\|_2^2}=\tau_i(H)$ (and $\mathbf{u}(i)=1$).
\end{claim}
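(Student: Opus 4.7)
The plan is to diagonalize the problem by choosing any orthonormal basis $\mathbf{V} \in \mathbb{R}^{n \times d}$ of $H$, where $d = \dim(H)$, so that $\mathbf{V}^\top \mathbf{V} = I_d$, and then write every $\mathbf{u} \in H$ as $\mathbf{u} = \mathbf{V} \boldsymbol{\alpha}$ for some $\boldsymbol{\alpha} \in \mathbb{R}^d$. I would observe that $\|\mathbf{u}\|_2 = \|\boldsymbol{\alpha}\|_2$ and $\mathbf{u}(i) = \langle \mathbf{V}^\top \mathbf{e}_i,\, \boldsymbol{\alpha} \rangle$, so by Cauchy--Schwarz
\begin{equation*}
\frac{|\mathbf{u}(i)|^2}{\|\mathbf{u}\|_2^2} \;\le\; \|\mathbf{V}^\top \mathbf{e}_i\|_2^2,
\end{equation*}
with equality exactly when $\boldsymbol{\alpha}$ is parallel to $\mathbf{V}^\top \mathbf{e}_i$. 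This identifies $\tau_i(H) = \mathbf{e}_i^\top \mathbf{V} \mathbf{V}^\top \mathbf{e}_i = \mathbf{e}_i^\top \mathbf{P}_H \mathbf{e}_i$, where $\mathbf{P}_H := \mathbf{V}\mathbf{V}^\top$ is the orthogonal projector onto $H$, and incidentally verifies that the supremum is attained and independent of the choice of basis.

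With this formula in hand, summing over $i$ reduces the first claim to a trace identity:
\begin{equation*}
\sum_{i=1}^n \tau_i(H) \;=\; \sum_{i=1}^n \mathbf{e}_i^\top \mathbf{P}_H \mathbf{e}_i \;=\; \mathrm{tr}(\mathbf{P}_H) \;=\; \mathrm{tr}(\mathbf{V}^\top \mathbf{V}) \;=\; \mathrm{tr}(I_d) \;=\; \dim(H),
\end{equation*}
which completes the first part.

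For the algorithmic ``moreover'' part, I would read the maximizer directly off the Cauchy--Schwarz equality case: taking $\boldsymbol{\alpha}^{*} = \mathbf{V}^\top \mathbf{e}_i$ yields $\mathbf{u}^{*} = \mathbf{V}\mathbf{V}^\top \mathbf{e}_i = \mathbf{P}_H \mathbf{e}_i$, the orthogonal projection of the indicator vector $\mathbf{e}_i$ onto $H$. Then $\mathbf{u}^{*}(i) = \mathbf{e}_i^\top \mathbf{P}_H \mathbf{e}_i = \tau_i(H)$, so rescaling $\mathbf{u} := \mathbf{u}^{*}/\tau_i(H)$ gives $\mathbf{u}(i)=1$ together with $\|\mathbf{u}\|_2^2 = \|\mathbf{P}_H \mathbf{e}_i\|_2^2/\tau_i(H)^2 = 1/\tau_i(H)$, so $\mathbf{u}(i)^2/\|\mathbf{u}\|_2^2 = \tau_i(H)$ as required. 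Thus the pseudocode of Algorithm~\ref{alg:leverage_score_alg} only has to carry out two standard linear-algebraic steps: compute an orthonormal basis of $H$ (e.g., by a thin QR decomposition or Gram--Schmidt) and return the normalized projection $\mathbf{P}_H \mathbf{e}_i / (\mathbf{e}_i^\top \mathbf{P}_H \mathbf{e}_i)$.

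There is no substantive obstacle; this is the classical characterization of leverage scores as diagonal entries of a projector, and the task is mainly to port it cleanly from the ``full-rank tall matrix'' formulation of \eqref{eq:leverage_score} to the coordinate-of-a-subspace formulation used throughout the sampling algorithm. The one edge case worth flagging is $\tau_i(H)=0$, which occurs precisely when $\mathbf{e}_i \perp H$: in that case no $\mathbf{u}\in H$ has $\mathbf{u}(i)\neq 0$ at all, so the ``$\mathbf{u}(i)=1$'' normalization is vacuous while the sum identity remains unaffected.
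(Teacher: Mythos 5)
Your proof is correct and essentially identical to the paper's: both choose an orthonormal basis of $H$, apply Cauchy--Schwarz to identify $\tau_i(H)$ with the $i$-th diagonal entry of the orthogonal projector onto $H$, and sum over $i$ to obtain $\dim(H)$. Your flagging of the $\tau_i(H)=0$ edge case (where Algorithm~\ref{alg:leverage_score_alg} would divide by zero in the final normalization) is a small but genuine observation that the paper does not address.
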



In fact, any matrix $\bG$ generating $H$ will have the same sequence of leverage scores. Thus one could put a basis of $H$ into $\bG$ and compute $\tau_i$ via \eqref{eq:leverage_score}. Then we describe Algorithm~\ref{alg:leverage_score_alg} here. For ease of exposition, we require that the output vector has $\mathbf{u}(i)=1$ instead of $\|\mathbf{u}\|_2=1$.
 \begin{algorithm}[H]
 \caption{Compute a vector with a large entry}\label{alg:leverage_score_alg}
     \hspace*{\algorithmicindent} \textbf{Input}: a linear subspace $H \subset \mathbb{R}^{n}$ and an index $i \in [n]$. \\ 
     \hspace*{\algorithmicindent}
     \textbf{Output}: $\mathbf{u} \in H$.
    
     \begin{algorithmic}[1]
     \Procedure{FindVector}{$H,i$}
        \State Find an orthonormal basis $\psi_1,\ldots,\psi_d$ of $H$.
        \State Let $\mathbf{u} \gets \psi_1(i) \cdot \psi_1 + \cdots + \psi_d(i) \cdot \psi_d$.
        \State \Return $\mathbf{u} := \mathbf{u}/\mathbf{u}(i)$.
     \EndProcedure
     \end{algorithmic}
 \end{algorithm}

\begin{proofof}{Claim~\ref{clm:leverage_socres_subspace}}
    Let $\psi_1,\ldots,\psi_d$ be an orthonormal basis of $H$. Then for any $\mathbf{u} \in H$, we rewrite $\mathbf{u} = \alpha_1 \psi_1 + \cdots + \alpha_d \psi_d$ and obtain
    \[
     \frac{|\mathbf{u}(i)|^2}{\|\mathbf{u}\|_2^2} = \frac{\big( \sum_{j=1}^d \alpha_j \psi_j(i) \big)^2}{\sum_{j=1}^d \alpha_j^2} \le \frac{\big( \sum_j \alpha_j^2 \big) \cdot \big(\sum_j \psi_j(i)^2 \big)}{\sum_j \alpha_j^2} = \sum_j \psi_j(i)^2
    \]
    where we use the Cauchy-Schwartz inequality in the 2nd step. Since the inequality is tight when $\alpha_1=\psi_1(i),\ldots, \alpha_d=\psi_d(i)$, we know $\tau_i(H)$ equals $\sum_j \psi_j(i)^2$. Thus $\sum_i \tau_i(H) = d$. Also, the correctness of Algorithm~\ref{alg:leverage_score_alg} follows the above analysis.
\end{proofof}

In our sampling algorithm, we choose $k_t$ as the coordinate with the largest leverages score in the safe subspace $H_t$. The reason is as follows. Since updating $\bx_t$ by an indicator vector $\mathbf{e}_{k}$ (like $\bx_{t+1}=\bx_t+ \alpha_t \cdot \mathbf{e}_k$ for a scalar $\alpha_t$) could set $\bx_{t+1}(k)$ to be either $1$ or $-1$, the intuition is to find a vector $\mathbf{v}_t$ in $H$ as close to indicator vectors as possible. By Claim~\ref{clm:leverage_socres_subspace}, the closest vector to $\mathbf{e}_k$ in $H$ is determined by the leverage score of coordinate $k$ in $H$  \big(because $\underset{\mathbf{u} \in H:\|\mathbf{u}\|_2=1}{\max} \langle \mathbf{e}_k, \mathbf{u} \rangle^2 = \underset{\mathbf{u} \in H:\|\mathbf{u}\|_2=1}{\max} |\mathbf{u}(k)|^2=\tau_k(H)$\big). Thus we pick the coordinate $k_t$ with the largest leverage score.

However, $\mathbf{v}_t \neq \mathbf{e}_{k_t}$. To guarantee the update $\bx_{t+1} \gets \bx_t + \alpha_t \cdot  \mathbf{v}_t$ is in $[-1,1]^n$ after setting $\bx_{t+1}(k_t) \in \{\pm 1\}$, we require $\bx_t$ to be almost $\vec{0}$ except those fixed entries. One way to prove this is to show $\|\bx_t\|_2^2 \approx t$, which is achieved by a strong martingale concentration (and further requiring $H$ and $\mathbf{u}$ to be orthogonal to $\bx_t$). 


For the case $n=O(m)$, many partial-coloring algorithms set a polynomial (or logarithmic) small scalar $\alpha_t$ in each update. However, we need the scalar $\alpha_t$ to be at least 1 such that each update could fix an entry to either $+1$ or $-1$. We notice that the deterministic discrepancy algorithm of Levy, Ramadas, and Rothvoss \cite{LevyRR17} could work for  constant scalars in each update. Thus we apply our techniques to the deterministic discrepancy algorithm in \cite{LevyRR17}, which provides the sampling algorithm for $n=O(m)$ with discrepancy $O(\sqrt{m})$.

\section{Prefix Spencer Problem}
\label{sec:prefix_spencer}
We show how to find an optimal partial coloring for the prefix Spencer conjecture in this section. In this section, $\epsilon$ is a small constant to be fixed later (in Lemma~\ref{lem:Roth17_main}). Since we will run the partial coloring algorithm recursively to obtain a full coloring, we state the general form with an arbitrary starting point $\bx \in (-1,1)^n$.

\begin{theorem}
\label{thm:prefix_main}
    There exists $\epsilon>0$ such that for any $\bA \in [-1,1]^{m\times n}$ with $n\ge m$, function \textsc{PrefixPartialColoring} in Algorithm~\ref{alg:prefix_partial_coloring} takes any $\bx \in (-1,1)^n$ and $\bA$ as inputs to return a partial coloring $\bx'$ such that with high probability,
    \begin{enumerate}
        \item $\epsilon/4$ fraction of entries in $\bx'$ are $\pm 1$.
        
        \item The prefix discrepancy of $\bx-\bx'$ is $O(\sqrt{m})$: 
        \[\sup_{\ell \le n} \| \bA(\cdot,[\ell]) \cdot \big(\bx'([\ell])-\bx([\ell]) \big)\|_{\infty}=O(\sqrt{m}).\]
    \end{enumerate}
\end{theorem}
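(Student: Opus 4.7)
The plan is to apply Rothvoss's Gaussian measure algorithm (Lemma~\ref{lem:Roth17_main}) to the symmetric convex body
\[
K := \Bigl\{ \by \in \mathbb{R}^n : \max_{j \le m}\, \max_{\ell \le n} \Bigl| \sum_{i \le \ell} \bA(j,i) \by(i) \Bigr| \le C\sqrt{m} \Bigr\}
\]
for a sufficiently large absolute constant $C$, together with the translated box $[-1,1]^n - \bx$ (which contains the origin since $\bx \in (-1,1)^n$). The algorithm will return a point $\bz \in K \cap ([-1,1]^n - \bx)$ with at least $\epsilon n/4$ coordinates pinned to the boundary of the box, so that $\bx' := \bx + \bz$ lies in $[-1,1]^n$, has at least $\epsilon n/4$ entries in $\{\pm 1\}$, and satisfies $\|\bA(\cdot,[\ell])\,(\bx'-\bx)([\ell])\|_\infty \le C\sqrt{m}$ for every $\ell$. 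The hypothesis of Rothvoss's algorithm needs $\gamma_n(K) \ge \exp(-\epsilon n)$ for a suitably small $\epsilon>0$; the rest of the proof is devoted to establishing this measure bound.

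The first reduction is to decouple the $m$ rows via the S\u id\'ak--Khatri inequality: since $K = \bigcap_{j=1}^m K_j$ with each
\[
K_j := \Bigl\{ \by : \max_{\ell \le n}\Bigl|\sum_{i \le \ell} \bA(j,i)\by(i)\Bigr| \le C\sqrt{m}\Bigr\}
\]
symmetric and convex, one has $\gamma_n(K) \ge \prod_{j=1}^m \gamma_n(K_j)$. It therefore suffices to prove the per-row Gaussian small-deviation bound $\gamma_n(K_j) \ge \exp(-c\cdot n/m)$ for some small absolute $c$, since then $\gamma_n(K) \ge \exp(-c n) \ge \exp(-\epsilon n)$. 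This per-row bound is exactly the Gaussian analogue of \eqref{eq:small_dev_Gaussian}: the naive approach of taking union bound or Šidák across the $n$ prefix events inside $K_j$ loses $\exp(-n)$, so we must exploit the strong correlation between consecutive prefixes.

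To prove the per-row bound, fix a row $j$, let $Y_i := \bA(j,i)\bg(i)$ with $\bg \sim N(0,1)^n$, so the $Y_i$ are independent symmetric random variables of variance at most $1$. I partition $[n]$ into $r = \lceil n/(2m) \rceil$ consecutive blocks $B_1,\dots,B_r$ of size at most $2m$ and set $Z_s := \sum_{i \in B_s} Y_i$. I then lower bound $\Pr[\bg \in K_j]$ by the joint probability of (a) for every block $s$, the in-block prefix maximum $\max_{\ell \in B_s} |\sum_{i \in B_s, i \le \ell} Y_i|$ is $\le C_1\sqrt{m}$; and (b) the block-level prefix maximum $\max_{s^* \le r}|\sum_{s \le s^*} Z_s|$ is $\le C_2\sqrt{m}$. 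A triangle inequality shows (a)$\cap$(b) implies $\bg \in K_j$ with $C = C_1 + C_2$. Event (a) is independent across blocks, and inside each block one application of L\'evy's inequality (Lemma~\ref{lem:levy ineq}) reduces the within-block prefix maximum to the Gaussian endpoint of variance $\le 2m$, yielding constant probability $p_0$ per block and probability $\ge p_0^{r} = \exp(-O(n/m))$ jointly. Event (b) conditioned on (a) concerns $r$ independent symmetric random variables whose magnitudes are now $\le C_1\sqrt{m}$; a further application of L\'evy at the block level reduces the prefix constraint on the $Z_s$ to a single small-ball event $|\sum_s Z_s| \le C_2'\sqrt{m}$, and a standard anti-concentration estimate for sums of independent bounded symmetric variables delivers probability $\ge \exp(-O(n/m))$. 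Combining the two factors gives $\gamma_n(K_j) \ge \exp(-O(n/m))$, as required.

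The main obstacle is the interaction between the two scales in this argument. Within-block deviations have variance $O(m)$ and are naturally controlled at scale $\sqrt{m}$ by L\'evy, but the across-block sum $\sum_s Z_s$ has total variance $O(n)$ and needs to be squeezed to the same $O(\sqrt{m})$ scale: this is the genuine $\exp(-\Omega(n/m))$ small-ball event at the heart of~\eqref{eq:small_dev_Gaussian}. The technical care lies in arguing that conditioning on event (a), which involves only symmetric functions of Gaussians, preserves the symmetry of each $Z_s$ about $0$ and only tightens their tails (now uniformly bounded by $C_1\sqrt{m}$), so that both L\'evy at the block level and the small-ball estimate for the final sum go through with distribution-free constants. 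Once $\gamma_n(K) \ge \exp(-\epsilon n)$ is in hand, the two conclusions of Theorem~\ref{thm:prefix_main} follow directly from Rothvoss's Gaussian measure algorithm together with its guarantee on the fraction of coordinates pinned to the boundary of $[-1,1]^n$.
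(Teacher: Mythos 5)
Your high-level plan matches the paper's: decouple the $m$ rows via \v{S}id\'ak--Khatri, reduce to a per-row small-deviation bound for a Gaussian prefix walk, prove that bound by partitioning $[n]$ into blocks of size $\Theta(m)$ and controlling within-block and cross-block fluctuations separately, then feed the resulting Gaussian-measure lower bound into the Gaussian measure algorithm. Your within-block step (event (a)) is correct and mirrors the role of the events $E_i$ in the paper's proof of Lemma~\ref{lem:gaussian_lower_bound}. However, your cross-block step (event (b)) has a real gap.

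You write that L\'evy's inequality ``reduces the prefix constraint on the $Z_s$ to a single small-ball event $|\sum_s Z_s|\le C_2'\sqrt{m}$.'' The reduction goes the wrong way. L\'evy gives
\[
\Pr\Bigl[\max_{s^*}\Bigl|\sum_{s\le s^*}Z_s\Bigr| < C_2\sqrt{m}\Bigr]\;\ge\;1-2\Pr\Bigl[\Bigl|\sum_s Z_s\Bigr|\ge C_2\sqrt{m}\Bigr],
\]
and, conditioned on (a), the endpoint $\sum_s Z_s$ is a sum of $r\approx n/(2m)$ independent symmetric variables of scale $\Theta(\sqrt{m})$, so its typical magnitude is $\Theta(\sqrt n)$. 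As soon as $n\gg m$ the probability on the right is close to $1$ and the bound is vacuous (negative). A small-ball estimate on the endpoint --- which is indeed about $\sqrt{m/n}\ge\exp(-O(n/m))$ --- cannot rescue a bound of this form, because the quantity you need to be small, $\Pr[|\sum_s Z_s|\ge C_2\sqrt m]$, is large. A lower bound on the confinement probability of the entire walk is exactly the lemma you are trying to prove, so appealing to endpoint anti-concentration here is circular. The paper handles the cross-block part by a sequential conditional argument, not L\'evy at the block scale: with $F_i=\{|S_{i\ell}|<2\sqrt\ell\}$ and $E_i=\{Z_i^*<2\sqrt\ell\}$, one shows $\Pr[F_i\mid F_{i-1},E_i]\ge 1/2$ by observing that, conditioned on $E_i$, the block increment $Z_i$ is symmetric and supported in $(-2\sqrt\ell,2\sqrt\ell)$, and conditioned on $F_{i-1}$ the position $S_{(i-1)\ell}$ is also symmetric in that interval; these two symmetries combine to keep the walk in $[-2\sqrt\ell,2\sqrt\ell]$ for one more block with probability at least $1/2$. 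Multiplying $\Pr[E_i]\ge 1/2$ and $\Pr[F_i\mid F_{i-1},E_i]\ge 1/2$ over the blocks gives $(1/4)^{n/\ell}$. Your event (b) needs to be replaced by an argument of this shape.

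A second gap concerns the starting point $\bx$. Lemma~\ref{lem:Roth17_main} is stated for a symmetric convex body intersected with the standard box $[-1,1]^S$; the translated box $[-1,1]^n-\bx$ you propose to use is not symmetric about the origin when $\bx\ne 0$, so it cannot simply be swapped in. The paper instead restricts to the half of coordinates $I$ with $\bx(i)\ge 0$ (up to flipping $\bx$), rescales each coordinate by $1/(1-\bx(i))$ via the map $T_{\bx(I)}$, invokes Lemma~\ref{lem:sketch_gau_measure} to lower bound the Gaussian measure after rescaling, and uses the flipping step of Algorithm~\ref{alg:prefix_partial_coloring} to guarantee enough coordinates are pinned to $+1$. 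Some such rescaling device, or an explicit citation to a non-symmetric-box version of the Gaussian measure algorithm, is required before your plan closes.
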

We have the following corollary by applying Theorem~\ref{thm:prefix_main} recursively to obtaining a full coloring whose proof is deferred to Section~\ref{sec:proof_cor}.

\begin{corollary}\label{cor:prefix_spencer_v2}
    For any $\bA \in [-1,1]^{m\times n}$ with $n\ge m$, there is a polynomial time  algorithm that generates a coloring $\bx\in \{-1,1\}^{n}$ satisfying   \[\underset{1\le t\le n}{\max} \ \|\bA(\cdot, [t]) \cdot \bx([t])\|_{\infty}\le O\bigg(\sqrt{m}\cdot \ln\frac{O(n)}{m}\bigg)\]   with high probability.
\end{corollary}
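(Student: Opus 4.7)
The plan is to iterate \textsc{PrefixPartialColoring} and to switch the per-round discrepancy budget once the pool of unfixed coordinates shrinks below $m$. Start with $\bx^{(0)}=\vec{0}$, and at round $k\ge 1$ let $U_{k-1}\subseteq[n]$ be the coordinates of $\bx^{(k-1)}$ not in $\{\pm 1\}$ and $n_{k-1}:=|U_{k-1}|$. I would call \textsc{PrefixPartialColoring} on the submatrix $\bA(\cdot,U_{k-1})$ with the current partial values $\bx^{(k-1)}|_{U_{k-1}}$, and lift its output back to $[n]$ by keeping previously frozen coordinates. By Theorem~\ref{thm:prefix_main}, $n_k\le(1-\epsilon/4)\,n_{k-1}$, and since the update $\bx^{(k)}-\bx^{(k-1)}$ is supported on $U_{k-1}$, the ``submatrix prefix'' bound promised by the theorem translates into the same bound on the original prefixes $\bA(\cdot,[\ell])(\bx^{(k)}-\bx^{(k-1)})([\ell])$ for every $\ell\le n$. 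Boosting per-round success to $1-1/\poly(n)$ and union-bounding over the $O(\log n)$ rounds gives the ``with high probability'' claim, while the triangle inequality lets me sum the per-round prefix-discrepancy budgets to control the final coloring.

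Phase~1 ($n_k>m$): I apply Theorem~\ref{thm:prefix_main} as stated, paying $O(\sqrt{m})$ per round. Since $n_k$ contracts geometrically by $(1-\epsilon/4)$, after $K_1=O(\log(n/m))$ rounds $n_{K_1}=O(m)$, and the accumulated prefix discrepancy is $O(\sqrt{m}\log(n/m))$.

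Phase~2 ($n_k\le m$): A naive continuation would still spend $\Theta(\sqrt{m})$ per round for another $\Theta(\log m)$ rounds, giving the weaker $O(\sqrt{m}\log n)$ bound. Instead I would replace the target deviation in the Gaussian measure algorithm by $L_k:=C\sqrt{n_k\ln(2m/n_k)}$ and re-run the argument behind Theorem~\ref{thm:prefix_main}. The two ingredients are unchanged: L\'evy's inequality (Lemma~\ref{lem:levy ineq}) reduces each row's prefix tail to its endpoint tail, and the endpoint is a mean-zero Gaussian of variance $\le n_k$, giving per-row failure at most $(n_k/m)^{\Omega(C)}$; the \u S\i d\' ak-Khatri inequality then lower bounds the Gaussian measure of the feasible body by $\bigl(1-(n_k/m)^{\Omega(C)}\bigr)^m\ge 2^{-\Omega(n_k)}$ for $C$ a sufficiently large constant, which is exactly the threshold required by Rothvoss's partial coloring lemma. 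Writing $n_k\asymp m/2^j$, the per-round cost $\sqrt{(m/2^j)\cdot j}$ telescopes to $\sum_{j\ge 0}\sqrt{mj/2^j}=O(\sqrt{m})$.

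Combining the two phases yields total prefix discrepancy $O(\sqrt{m}\log(n/m))+O(\sqrt{m})=O\!\bigl(\sqrt{m}\ln(O(n)/m)\bigr)$, which matches the statement. The main obstacle is Phase~2: one must verify that the \u S\i d\' ak-Khatri / L\'evy calculation underlying Theorem~\ref{thm:prefix_main} still produces Gaussian measure $\ge 2^{-\Omega(n_k)}$ after shrinking the budget from $\sqrt{m}$ to $\sqrt{n_k\ln(m/n_k)}$; once this is in hand, the rest is a deterministic contraction of $n_k$, a union bound over $O(\log n)$ rounds, and a geometric summation.
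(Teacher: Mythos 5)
Your proposal matches the paper's proof essentially step for step: two phases with the per-round budget switching from $O(\sqrt{m})$ to $\Theta\bigl(\sqrt{n_k\ln(O(m)/n_k)}\bigr)$ once the pool of unfixed coordinates shrinks to $O(m)$, with L\'evy's inequality plus the Gaussian tail bound plus \v{S}id{\'a}k--Khatri delivering the $e^{-\Omega(n_k)}$ Gaussian measure needed by Rothvoss's lemma, and a geometric sum of the shrinking budgets giving $O(\sqrt{m})$ in Phase~2. The paper instantiates this with explicit constants (stopping Phase~1 at $50m$ unfixed coordinates, budget $\sqrt{2|I|\ln(2000m/|I|)}$, and a final arbitrary coloring of the last $O(\sqrt{m})$ entries), but the structure and all key ingredients are the same as yours.
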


We describe function~\textsc{PrefixPartialColoring} of Theorem~\ref{thm:prefix_main} in Algorithm~\ref{alg:prefix_partial_coloring}. Basically, function~\textsc{PrefixPartialColoring} function rounds $\Omega(n)$ entries of non-negative entries in $\bx$ (defined as $I$) to $\pm 1$. For the first time reader, it might be more convenient to assume $\bx=\vec{0}$ such that $I=[n]$ and $T_{\bx(I)}$ is the identity map of $\mathbb{R}^n$. For general $\bx$, $T_{\bx(I)}(Q)$ denotes the convex body of $Q$ after applying the linear transformation $T_{\bx(I)}$. In the rest of this section, we finish the proof of Theorem~\ref{thm:prefix_main}. 

\begin{algorithm}[H]
 \caption{partial coloring with optimal prefix discrepancy}\label{alg:prefix_partial_coloring}
     \begin{algorithmic}[1]
     \Function{GaussianSampling}{$Q$ and $S$ such that $Q$ is a symmetric convex body in $\mathbb{R}^S$}
    \State Sample a Gaussian vector $\by^{*}\sim N(0,1)^S$
    \State Compute $\bx^{*} = \text{arg} \min_{\bz\in Q\cap [-1,1]^{S}} \{\|\by^{*}-\bz\|_{2}\}$
    
    
    \State Return $\bx^{*}$
    \EndFunction
    \Function{PrefixPartialColoring}{$\bx\in (-1,1)^n,\bA\in \R^{m\times n}$}
 
    \State W.l.o.g. we assume half entries in $\bx$ are non-negative --- otherwise consider its flip $-\bx$
    
    \State Let $ I$ be the sequence of indices $i\in [n]$ with $\bx(i) \ge 0$

\State  $Q\gets  \bigg\{\bg\in \R^{I} : \underset{j\in [m]}{\max} \underset{ 1\le k\le |I| }{\max}  \Big|\sum_{i = I(1)}^{I(k)} \bg(i)\cdot \bA(j,i)\Big| \le 112\sqrt{m}\bigg\}$

\State Define a linear operator $T_{\bx(I)}: \mathbb{R}^{I} \rightarrow \mathbb{R}^{I}$ such that $T_{\bx(I)}(\by)\gets \frac{1}{1-\bx(i)}\cdot \by(i)$ for every $i \in I$


\State $\bx^*\gets \textsc{GaussianSampling}\big( T_{\bx(I)}(Q) ,I \big)$ 

    \State Flip $\bx^{*}$ if the number of $1$-entries in $\bx^*$ are less than $\frac{\epsilon}{2} \cdot |I|$
    
    \State For every $i \in I$, $\bx'(i)\gets\bx(i)+(1-\bx(i)) \cdot \bx^*(i)$; otherwise 
     $\bx'(i)\gets\bx(i)$ for $i \notin I$
    
    \State \Return $\bx'$
    \EndFunction
     \end{algorithmic}
 \end{algorithm}

The key of Function~\textsc{PrefixPartialColoring} is the Gaussian measure algorithm \cite{Rothvoss17,ReisR23}, which is reformulated as Function~\textsc{GaussianSampling}. 
The main property of Function~\textsc{GaussianSampling} is that when $Q$ is large enough, its output $\bx^*$ has $\Big|\Big\{i: \bx^*(i)\in \{\pm 1\}\Big\}\Big| = \Omega(n)$. In particular, we use the Gaussian measure $\gamma(\cdot)$ 
to measure the size of $Q$. For convenience, we provide a general definition:
\[
\gamma_S(K):=\Pr_{\bg \sim N(0,1)^S}[\bg \in K] \text{ for any measurable set $K \subset \mathbb{R}^S$}\]
where $N(0,1)^S$ denotes a random vector in $\mathbb{R}^S$ whose entries are sampled from $N(0,1)$ independently.
The correctness of function \textsc{GaussianSampling} follows from Theorem~7 of \cite{Rothvoss17}.
\begin{lemma}\label{lem:Roth17_main}
  Let $\epsilon:=10^{-4}$ and $\delta \coloneqq \frac{9}{5000}$. Suppose that $K\subset \R^S$ is a symmetric convex body with $\gamma_S(K)\geq e^{-\delta n}$. 
  Then the output $x^*$ of function~\textsc{GaussianSampling}$(K,S)$ has at least $\epsilon \cdot |S|$ many coordinates in $\{-1,1\}$ with probability $1-e^{-\Omega(|S|)}$.
\end{lemma}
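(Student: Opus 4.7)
The plan is to reproduce Rothvoss's proof of Theorem~7 of \cite{Rothvoss17}. Let $n := |S|$ and $K' := K \cap [-1,1]^S$. Almost surely, the projection $\bx^* = \arg\min_{\bz \in K'} \|\by^* - \bz\|_2$ lies on a face of $K'$, determined by its \emph{active set} $J(\bx^*) := \{i \in S : |\bx^*(i)| = 1\}$ and the sign pattern $\sigma \in \{\pm 1\}^{J(\bx^*)}$ of pinned coordinates. The bad event is $\{|J(\bx^*)| < \epsilon n\}$, which I will bound by enumerating over pairs $(J,\sigma)$ with $|J| < \epsilon n$.

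For each fixed $(J,\sigma)$, the KKT conditions give $\by^* - \bx^* \in N_K(\bx^*) + \mathrm{cone}(\{\sigma_i \mathbf{e}_i : i \in J\})$, where $N_K$ denotes the outward normal cone of $K$. Thus the preimage $P_{J,\sigma}$ of the face $F_{J,\sigma}$ under the projection $\pi_{K'}$ decomposes as a union over $\bx \in F_{J,\sigma} \cap K$ of fibers of the form $\{\bx\} + N_K(\bx) + \mathrm{cone}(\{\sigma_i \mathbf{e}_i\})$. Using symmetry of $K$ together with integration against the Gaussian density, and noting that translation of $\by^*$ by $\sigma_i \mathbf{e}_i$ contributes a controlled Gaussian-density factor, one can compare the Gaussian mass of $P_{J,\sigma}$ to $\gamma_S(K)$. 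This yields a per-face bound of the form
\begin{equation*}
    p_{J,\sigma} := \Pr[\bx^* \in F_{J,\sigma}] \;\leq\; e^{-c_1 (n - |J|)},
\end{equation*}
for a constant $c_1 = c_1(\delta) > 0$, where the hypothesis $\gamma_S(K) \geq e^{-\delta n}$ is used exactly here to lower bound the available $K$-mass.

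Taking a union bound over $(J,\sigma)$ gives
\begin{equation*}
    \Pr[|J(\bx^*)| < \epsilon n] \;\leq\; \sum_{k < \epsilon n} \binom{n}{k} 2^k \cdot \max_{|J|=k,\sigma} p_{J,\sigma}.
\end{equation*}
The entropy estimate $\sum_{k < \epsilon n} \binom{n}{k} 2^k \leq \exp\bigl( n \cdot (H(\epsilon) + \epsilon \ln 2) \bigr)$ is $e^{o(n)}$ for $\epsilon = 10^{-4}$, while the per-face bound contributes a factor of roughly $e^{-c_1(\delta)(1-\epsilon) n}$. The specific numerical values $\epsilon = 10^{-4}$ and $\delta = 9/5000$ are calibrated so that the per-face exponent strictly dominates the entropy factor, producing a total bad probability of $e^{-\Omega(n)} = e^{-\Omega(|S|)}$.

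The main obstacle is deriving the per-face bound on $p_{J,\sigma}$: this requires Rothvoss's delicate change-of-variables argument, in which one exploits the symmetry of $K$ to reflect Gaussian mass from one side of the face into the outward-normal cone and then compares the resulting integral with $\gamma_S(K)$. Once this geometric lemma is in hand, the remaining steps (entropy counting and verifying that the numerical choices of $\epsilon$ and $\delta$ balance the two exponential factors) are routine.
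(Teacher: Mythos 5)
The paper does not actually prove this lemma; it cites Theorem~7 of Rothvoss \cite{Rothvoss17} and uses it as a black box. Your proposal attempts to reconstruct that proof, so the relevant comparison is between your outline and Rothvoss's argument, not the paper.

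The central problem is that your proposal is not a proof but a scaffold around the same hole. The entire content of the lemma lives in the per-face estimate $p_{J,\sigma}\le e^{-c_1(n-|J|)}$, and you explicitly defer it: ``this requires Rothvoss's delicate change-of-variables argument.'' The decomposition of $\R^S$ into preimages of faces $F_{J,\sigma}$ and the entropy union bound $\sum_{k<\epsilon n}\binom{n}{k}2^k \le e^{n(H(\epsilon)+\epsilon\ln 2)}$ are routine bookkeeping; the work is in proving that the hypothesis $\gamma_S(K)\ge e^{-\delta n}$ forces each fiber $P_{J,\sigma}$ to have exponentially small Gaussian mass with the \emph{correct} dependence on $n-|J|$. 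That comparison is exactly what needs to be established, and you neither prove it nor verify that the constants $c_1(\delta)$, $\delta=9/5000$, $\epsilon=10^{-4}$ actually satisfy the inequality $c_1(1-\epsilon) > H(\epsilon)+\epsilon\ln 2$. As written, a reader cannot check that your argument closes.

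There is also a structural concern: I am not convinced the face-enumeration decomposition you propose matches Rothvoss's actual proof, which (to my recollection) does not proceed by union bounding over all $(J,\sigma)$ with $|J|<\epsilon n$. Rothvoss's argument is more global, relating the distance $\|\by^*-\bx^*\|_2$ to the Gaussian measure of $K$ and then showing that a small number of tight box constraints would contradict that distance bound; the required small-ball and isoperimetry estimates enter at that stage rather than through a per-face comparison. Your face-by-face route may be salvageable, but you would then need to prove a genuinely new geometric lemma --- one not present in \cite{Rothvoss17} in that form --- namely that for a symmetric convex $K$ with $\gamma_S(K)\ge e^{-\delta n}$, the Gaussian mass of $\{\bx\}+N_K(\bx)+\mathrm{cone}(\{\sigma_i\mathbf{e}_i:i\in J\})$ integrated over $\bx\in F_{J,\sigma}\cap K$ decays like $e^{-c_1(n-|J|)}$. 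Until that is written out, the proposal has a gap at precisely the point where the lemma's hypothesis is used.
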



In this section, we fix $\epsilon$ and $\delta$ as defined in Lemma~\ref{lem:Roth17_main}. The proof of Theorem~\ref{thm:prefix_main} relies on several properties of the Gaussian measure. 


\paragraph{Gaussian random variables.} 
We recall several useful facts about the Gaussian random variables. The first one is L\'evy's inequality stated as Lemma~\ref{lem:levy ineq}.
The second one is the \v{S}id{\'a}k-Khatri inequality about the correlation of two bodies in Gaussian measures \cite{vsidak1967rectangular,khatri1967certain,royen2014simple}.


\begin{lemma}
\label{lem:SidakKhatri}
Let $T_1 \subseteq \R^n$ and $T_2 \subseteq \R^n$ be any two symmetric convex bodies. 
Then $\gamma_n(T_1 \cap T_2) \geq \gamma_n(T_1) \cdot \gamma_n(T_2)$.
\end{lemma}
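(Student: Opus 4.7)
The result is the Gaussian Correlation Inequality (GCI) in full generality. Though conjectured in the 1950s and settled in various partial cases by \v{S}id\'ak (one body a slab) and others, the general statement for two symmetric convex bodies was proved only in 2014 by Royen~\cite{royen2014simple}. My plan follows his approach. The first step is a \emph{reduction to slabs}: by Hahn--Banach, every centered symmetric convex body is the intersection of its supporting symmetric slabs $S_{v,a} := \{x : |\langle v, x\rangle| \le a\}$. Approximating $T_1$ and $T_2$ from outside by nested finite slab-intersections $T_j^{(N)} = \bigcap_{k \in I_j^{(N)}} S_{v_k, a_k}$ and using the dominated/monotone continuity of Gaussian measure, it suffices to prove the inequality when both $T_1$ and $T_2$ are finite intersections of symmetric slabs indexed by two disjoint finite sets $I_1, I_2$.

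The second step is a \emph{covariance interpolation}. Define $Z_k = \langle v_k, X \rangle$ for $X \sim N(0, \mathbf{I}_n)$, so $Z = (Z_k)_{k \in I_1 \cup I_2}$ is a centered Gaussian vector with covariance $\Sigma_{kl} = \langle v_k, v_l \rangle$, and the desired inequality becomes
\[
\Pr\bigl[Z_k^2 \le a_k^2 \text{ for all } k \in I_1 \cup I_2\bigr] \ \ge\ \Pr\bigl[Z_k^2 \le a_k^2 \text{ for all } k \in I_1\bigr] \cdot \Pr\bigl[Z_k^2 \le a_k^2 \text{ for all } k \in I_2\bigr].
\]
I would construct $\Sigma(t)$ by multiplying the off-diagonal block of $\Sigma$ between $I_1$ and $I_2$ by $t \in [0,1]$ while keeping the two diagonal blocks fixed. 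At $t = 0$ the blocks $Z_{I_1}$ and $Z_{I_2}$ are independent, giving the right-hand side; at $t = 1$ we recover $\Sigma$ and the left-hand side. Setting $\phi(t) := \Pr[(Z^{(t)}_k)^2 \le a_k^2\ \forall k]$ with $Z^{(t)} \sim N(0, \Sigma(t))$, it suffices to verify $\phi'(t) \ge 0$ for all $t \in [0,1]$.

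The third and decisive step is \emph{Royen's monotonicity}. Differentiating in $t$ and using the diffusion-type identity $\frac{\partial p_\Sigma}{\partial \Sigma_{kl}} = \frac{\partial^2 p_\Sigma}{\partial z_k \partial z_l}$ for $k \ne l$, one rewrites $\phi'(t)$ as an integral supported where $z_k^2 = a_k^2$ for some coordinate. Royen's key observation is that the joint density of the squares $(Z_k^2)_k$ admits a convergent representation as a mixture of products of central chi-square densities whose mixing kernel has a Bessel-type series expansion with \emph{term-by-term non-negative} coefficients. This positivity immediately yields $\phi'(t) \ge 0$. This gamma-mixture identity is precisely the main obstacle and the reason the GCI was open for nearly fifty years; in a concise write-up I would carry out the slab reduction and interpolation of Steps~1 and~2 carefully and then invoke \cite{royen2014simple} for the positive-coefficient computation that closes Step~3.
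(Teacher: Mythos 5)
The lemma as stated is the Gaussian Correlation Inequality in full generality, and the paper gives no proof of its own---it simply cites \v{S}id\'ak, Khatri, and Royen, with the general case being Royen's 2014 theorem. Your sketch correctly identifies the result and accurately outlines the structure of Royen's argument (reduction to finite slab intersections, block-covariance interpolation, positivity of the gamma-mixture series expansion), so it is consistent with the citation the paper relies on.
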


The new ingredient is a lower bound on the probability of a sequence of Gaussian random variables (a.k.a. a Gaussian process) with small deviations. While we only state the lower bound, this estimation is tight up to constants in the exponent. \cite{chung1948maximum, Shao93}. 

\begin{lemma}
\label{lem:gaussian_lower_bound}
    Let $\{X_i\}_{i\in[n]}$ be $i.i.d$ Gaussian random variables, with $\E[X_i]=0,\E[X_i^2]=1$.
    Let $S_n=\sum_{i=1}^nX_i$ and $S_n^*=\underset{1\leq i\leq n}{\max}|S_i|$.
    Then for any $\ell$, we have
    $$\Pr[S^*_n<4\sqrt{\ell}]\geq 4^{-\frac{n}{\ell}}. $$
\end{lemma}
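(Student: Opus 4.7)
The plan is to prove the lemma by induction on the number of disjoint blocks of length $\ell$ that fit into $[1,n]$. For ease of exposition, I will first handle the case $n = k\ell$ for a positive integer $k$, since the general case follows by padding the walk to the next multiple of $\ell$ (at the cost of an absolute constant which is absorbed into the bound). The strengthened inductive hypothesis will be
\[
  \Pr\bigl[\,S_{k\ell}^{\ast} < 4\sqrt{\ell}\ \text{and}\ |S_{k\ell}| \le 2\sqrt{\ell}\,\bigr] \;\ge\; 4^{-k},
\]
where the added endpoint constraint $|S_{k\ell}| \le 2\sqrt{\ell}$ keeps the walk well inside the safe region so the induction can be extended through the next block regardless of where the walk has ended.

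The main technical ingredient will be a single-block claim: for every $s$ with $|s| \le 2\sqrt{\ell}$ and an independent fresh standard Gaussian walk $\tilde S_1,\dots,\tilde S_\ell$,
\[
  \Pr\bigl[\,\max_{r \le \ell}|s + \tilde S_r| < 4\sqrt{\ell}\ \text{and}\ |s + \tilde S_\ell| \le 2\sqrt{\ell}\,\bigr] \;\ge\; \tfrac14.
\]
Granted this claim, the inductive step follows immediately from the Markov property of the walk: condition on $\mathcal F_{(k-1)\ell}$, note that the inductive hypothesis already guarantees $|S_{(k-1)\ell}| \le 2\sqrt{\ell}$, and apply the single-block claim with $s = S_{(k-1)\ell}$ to pick up the required factor of $1/4$.

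To establish the single-block claim I will combine L\'evy's inequality (Lemma~\ref{lem:levy ineq}) with a sign-symmetrization step. Introduce the symmetric event $E = \{\,|\tilde S_r| < 2\sqrt{\ell}\ \text{for all}\ r \le \ell\,\}$. By L\'evy's inequality (part~\ref{levy_ineq_2}) applied to the symmetric Gaussian increments, followed by a standard tail estimate for $\tilde S_\ell \sim N(0,\ell)$, I obtain $\Pr[E] \ge 1 - 2\Pr[|Z| \ge 2] \ge 0.9$. Because $|s| \le 2\sqrt{\ell}$, on the event $E$ one automatically has $|s + \tilde S_r| < 4\sqrt{\ell}$ for every $r$, so only the endpoint constraint remains. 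Assuming $s \ge 0$ by symmetry, the endpoint-admissible interval $(-2\sqrt{\ell} - s,\, 2\sqrt{\ell} - s)$ intersected with $(-2\sqrt{\ell}, 2\sqrt{\ell})$ contains the half-line $(-2\sqrt{\ell}, 0\,]$.

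The main obstacle, and the reason a naive L\'evy bound is not enough, is that after any number of blocks the walk is not centered at zero but may sit anywhere in $[-2\sqrt{\ell}, 2\sqrt{\ell}]$, which can push the endpoint-adjusted interval dangerously close to $0$ on one side. The crucial observation I will rely on is that the event $E$ is invariant under the sign flip $\tilde S \mapsto -\tilde S$, so conditioning on $E$ preserves the sign symmetry of $\tilde S_\ell$; in particular $\Pr[\tilde S_\ell \le 0 \mid E] = \tfrac12$. This yields
\[
  \Pr\bigl[\,E \cap \{\tilde S_\ell \in (-2\sqrt{\ell},0\,]\}\,\bigr] \;\ge\; \tfrac12\,\Pr[E] \;\ge\; 0.45 \;>\; \tfrac14
\]
\emph{uniformly} in $s \in [-2\sqrt{\ell},2\sqrt{\ell}]$, which is exactly the single-block claim. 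The slack of $0.45$ versus $1/4$ is comfortable, consistent with the fact that $4^{-n/\ell}$ is already far from the true decay rate (Chung's asymptotic for a Gaussian walk confined to $[-4\sqrt{\ell},4\sqrt{\ell}]$ has exponent $\pi^2/128 \ll \ln 4$).
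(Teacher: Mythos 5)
Your block decomposition, L\'evy estimate, and conditional sign-symmetry argument match the paper's proof in all essential respects, and your organization is in fact a bit cleaner: by conditioning on the actual block-boundary value $s = S_{(k-1)\ell}$ via the Markov property and proving a block estimate \emph{uniform} in $s \in [-2\sqrt{\ell},2\sqrt{\ell}]$, you only need the sign-flip symmetry of the fresh block $\tilde S$, whereas the paper additionally invokes the symmetry of the conditional law of $S_{(i-1)\ell}$ given the past confinement events. Your single-block claim is correct: on $E = \{\max_{r\le\ell}|\tilde S_r|<2\sqrt{\ell}\}$ the triangle inequality controls the running max, $E$ is invariant under $\tilde S \mapsto -\tilde S$ so $\Pr[\tilde S_\ell \le 0 \mid E] = 1/2$, and $\Pr[E] \ge 0.9$ yields the $0.45 > 1/4$ bound.

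The gap is in how you dismiss the case where $n$ is not a multiple of $\ell$, and in particular $n<\ell$. You write that padding to the next multiple of $\ell$ costs ``an absolute constant which is absorbed into the bound,'' but the stated target $4^{-n/\ell}$ has no slack constant to absorb anything into. Concretely: for $n$ slightly larger than $\ell$ the target is close to $1/4$, while padding to $2\ell$ gives only $\ge 0.45^2 \approx 0.20 < 1/4$; and for $n < \ell$ the target $4^{-n/\ell}$ tends to $1$ as $n/\ell \to 0$, while padding to $\ell$ gives only $\ge 0.45$. These boundary cases need a direct estimate, which is exactly why the paper treats $n < \ell$ as a separate case: a single application of L\'evy's inequality plus the Gaussian tail gives $\Pr[S_n^* \ge 4\sqrt{\ell}]$ exponentially small in $\ell/n$, which beats $1-4^{-n/\ell} = \Theta(n/\ell)$. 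Either add this one-shot estimate for $n = O(\ell)$, or verify that the single-block probability $\ge 0.45$ (and the corresponding estimate for a final partial block) is large enough to cover the rounding in $\lceil n/\ell \rceil$ for all $n \ge c\ell$ with some explicit $c$ and handle $n < c\ell$ directly. With such a patch your proof is complete.
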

In fact, this lower bound also holds for i.i.d symmetric sub-Gaussian random variables such as $X_i \sim \{\pm 1\}$. We use the deviation bound of Lemma~\ref{lem:gaussian_lower_bound} for $\ell<\sqrt{n}$.
For completeness, we provide an elementary proof in Appendix~\ref{sec:pf_gaussian_lower_bound}. 

While Lemma~\ref{lem:SidakKhatri} and Lemma~\ref{lem:gaussian_lower_bound} together provide a lower bound on $\gamma_{I}(Q)$,  Function~\textsc{GaussianSampling} takes $T_{\bx(I)}(Q)$ as the convex body given the starting point $\bx$. Hence we need the following lemma to lower bound $\gamma_I\big( T_{\bx(I)}(Q) \big)$ by $\gamma_I(Q)$.

\begin{lemma}[Corollary 14 of \cite{Rothvoss17}] \label{lem:sketch_gau_measure}
Let $K\in \R^{n}$ be a symmetry convex body and $\lambda\in \R^{n}$. Then \[\Pr_{\bg\in N(0,1)^n}\left[ \big(\lambda(1)\bg(1),\ldots, \lambda(n)\bg(n) \big)\in K \right]  \ge \frac{1}{\prod_{i=1}^{n}\max\{1,|\lambda(i)|\}} \Pr_{\bg\in N(0,1)^n}[\bg\in K].  \] 
\end{lemma}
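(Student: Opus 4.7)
The plan is to prove this in two clean steps that separate the coordinates with $|\lambda(i)|<1$ from those with $|\lambda(i)|\ge 1$. By the symmetry of $N(0,1)^n$ and of $K$, the probability $\Pr_{\bg}\!\big[\big(\lambda(1)\bg(1),\ldots,\lambda(n)\bg(n)\big)\in K\big]$ depends only on $(|\lambda(1)|,\ldots,|\lambda(n)|)$, so I may assume $\lambda(i)\ge 0$ for every $i$. Write $D_\lambda\bx:=(\lambda(1)\bx(1),\ldots,\lambda(n)\bx(n))$ for the diagonal scaling, and define $\mu(i):=\max\{1,\lambda(i)\}$, so the target factor is $\prod_i\mu(i)$. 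I will establish the chain
\[
\Pr_\bg[D_\lambda\bg\in K] \;\ge\; \Pr_\bg[D_\mu\bg\in K] \;\ge\; \frac{1}{\prod_i\mu(i)}\,\Pr_\bg[\bg\in K],
\]
with $\bg\sim N(0,1)^n$ throughout; combining the two inequalities yields the lemma.

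For the first inequality, I will use Anderson's inequality: for any symmetric convex body $L\subset\R^n$ and any centered Gaussian measure (more generally, any symmetric log-concave measure) $\nu$ on $\R^n$, $\nu(L+v)\le \nu(L)$ for every translation $v$. Fix a coordinate $i$ with $\lambda(i)<1$ and couple two copies of $N(0,1)^n$ by setting $\bg^{\rm new}(j):=\bg(j)$ for $j\ne i$ and $\bg^{\rm new}(i):=\lambda(i)\,\bg(i)+\sqrt{1-\lambda(i)^2}\,h$, where $h\sim N(0,1)$ is independent of $\bg$. Then $\bg^{\rm new}\sim N(0,1)^n$, and a direct check gives
\[
D_{\lambda^{\rm new}}\bg^{\rm new} \;=\; D_\lambda\bg + \sqrt{1-\lambda(i)^2}\,h\,\mathbf{e}_i,
\]
where $\lambda^{\rm new}$ agrees with $\lambda$ except that $\lambda^{\rm new}(i)=1$. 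Conditioning on $h$ and invoking Anderson's inequality for the (possibly degenerate) centered Gaussian law of $D_\lambda\bg$ and the symmetric convex body $K$ yields $\Pr[D_{\lambda^{\rm new}}\bg^{\rm new}\in K\mid h]\le\Pr[D_\lambda\bg\in K]$; taking expectation over $h$ removes the conditioning. Iterating this replacement over every $i$ with $\lambda(i)<1$ (the $\lambda(i)=0$ case is identical, taking $\bg^{\rm new}(i):=h$) transforms $\lambda$ into $\mu$ and proves the first inequality.

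For the second inequality, since $\mu(i)\ge 1$ for every $i$, the density of $D_\mu\bg$ is
\[
\phi_\mu(y) \;=\; \frac{1}{(2\pi)^{n/2}\prod_i\mu(i)}\exp\!\Big(-\sum_{i=1}^n\frac{y(i)^2}{2\mu(i)^2}\Big),
\]
while the standard Gaussian density is $\phi(y)=(2\pi)^{-n/2}\exp(-\|y\|_2^2/2)$. Because $\mu(i)\ge 1$, we have $\sum_i y(i)^2/(2\mu(i)^2)\le \|y\|_2^2/2$ for every $y\in\R^n$, so $\prod_i\mu(i)\cdot\phi_\mu(y)\ge\phi(y)$ pointwise; integrating over $K$ gives the second inequality. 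The only genuine analytic input is Anderson's inequality in the reduction step, and the main obstacle is recognizing that the ``easy'' coordinates with $|\lambda(i)|<1$ must be handled separately from the ``scaling'' coordinates: in the former regime a pointwise density comparison fails and one must use a symmetry/translation argument, while in the latter the pointwise comparison is tight.
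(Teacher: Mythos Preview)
The paper does not supply its own proof of this lemma; it simply quotes the result as Corollary~14 of \cite{Rothvoss17}. Your argument is correct and is essentially the proof Rothvoss gives: handle coordinates with $|\lambda(i)|<1$ via Anderson's inequality (replacing such $\lambda(i)$ by $1$ can only shrink the probability), and handle coordinates with $|\lambda(i)|\ge 1$ by the pointwise density comparison $\prod_i\mu(i)\cdot\phi_\mu(y)\ge\phi(y)$. The coupling you use to realize the Anderson step is clean, and the degenerate case $\lambda(i)=0$ is covered as you note. There is nothing to correct.
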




\begin{proofof}{Theorem \ref{thm:prefix_main}} Define the convex body satisfying the prefix constraint of row $j$ as
\begin{align*}
Q_j& := \bigg\{\bg\in \R^{I} : \underset{1\le k\le |I|}{\max} \big| \sum_{i=I(1)}^{I(k)}\bg(i)\cdot \bA(j,i)\big| \le 112\sqrt{m}\bigg\}.
\end{align*}
Then $Q =\cap_{j=1}^m Q_j$ and $T_{\bx(I)}(Q)=\cap_{j=1}^mT_{\bx(I)}( Q_j)$. It is obvious that $Q$ and $Q_j$ are symmetric convex bodies.
Then,
\begin{align}
  \gamma_{I}(T_{\bx(I)}(Q_j))&= \Pr_{\bg\in N(0,1)^I}\bigg[\max_{1\le k\le |I|}\Big|\sum_{i=I(1)}^{I(k)}(1-\bx(i))\cdot\bg(i)\cdot \bA(j,i)\Big|\le 112\sqrt{m}\bigg]\notag\\  
  &\ge \Pr_{\bg\in N(0,1)^I}\bigg[\max_{1\le k\le |I|}\Big|\sum_{i=I(1)}^{I(k)}\bg(i) \Big|\le 112\sqrt{m}\bigg] \tag{By Lemma \ref{lem:sketch_gau_measure}, $\bA(j,i)\in[-1,1]$ and $1-\bx(i)\in[-1,1]$}\\
  &\ge e^{-\frac{|I|}{28^2m}}\tag{By Lemma \ref{lem:gaussian_lower_bound}}.
\end{align}
Thus
\begin{equation}\label{eq:Qj_lowerbound}
     \gamma_{I}\left(T_{\bx(I)}(Q_j)\right)\ge e^{-\frac{9}{5000}\cdot |I|/m}.
\end{equation}
Then we apply Lemma~\ref{lem:SidakKhatri}: 
\begin{align*}
    \gamma_{I}\left(T_{\bx(I)}(Q)\right) &\ge \prod_{j=1}^m\gamma_{I}\left(T_{\bx(I)}( Q_j)\right)\tag{By Lemma~\ref{lem:SidakKhatri}}\\
    & \ge (e^{-\frac{9}{5000}|I|/m})^m\tag{By Equation\eqref{eq:Qj_lowerbound}}\\
    &=e^{-\frac{9}{5000}|I|}.
\end{align*}
Then by Lemma~\ref{lem:Roth17_main}, \textsc{GaussianSampling} can find a partial coloring $\bx^*\in T_{x(I)}(Q)\cap [-1,1]^{|I|}$ so that $\bx^*$ has at least $\epsilon\cdot |I|$ many entries in $\{\pm 1\}$ with probability $1-e^{-\Omega(|I|)}$.


Line 12 and Line 13 of Algorithm~\ref{alg:prefix_partial_coloring} guarantees that the number of $i\in I$ with $\bx^*(i)=1$ is at least $\frac{\epsilon}{2}|I|$ and $\bx'(i)=1$ for all the  $i\in I$ with $\bx^{*}(i)=1$, $\bx'(i)\in [-1,1]$  for $ i\in I$ .  Thus we have $\Big|\big\{i\in [n]\mid \bx'(i) \in \{\pm 1\}\big\}\Big| \ge \frac{\epsilon}{4}\cdot n$, i.e., $\epsilon/4$ fraction of entries in $\bx'$ are $\pm 1$.

Observe that the update $\bx'(i)-\bx(i)=(1-\bx(i))\cdot \bx^*(i)$ for $i \in I$ guarantees that $\bx'(I)-\bx(I)\in  Q\cap [-1,1]^{I}$ and the update $\bx'(i)-\bx(i)=0$ for $i\in [n] \setminus I$ shows $\bx'([n]/I)-\bx([n]/I)=0$. 
Thus, the prefix discrepancy of $\bx-\bx'$ is $O(\sqrt{m}).$

\end{proofof}

\subsection{Proof of Corollary~\ref{cor:prefix_spencer_v2}}\label{sec:proof_cor}
We have the following standard bound for the Gaussian random variable \cite{vershynin2020high}. 
\begin{claim}\label{clm:Gaussian_tail}
    Given the Gaussian random variable $N(0,\sigma^2)$, for any deviation $\delta \cdot \sigma$ with $\delta>1$,
    $$ \Pr[|N(0,\sigma^2)| \ge \delta \cdot \sigma] \le   \frac{e^{-\delta^2/2}}{\sqrt{2 \pi} \cdot \delta }.$$
\end{claim}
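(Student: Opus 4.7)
The plan is to reduce to a standard normal and then bound its tail integral via the classical Mills-ratio trick. First, I would standardize: writing $X \sim N(0,\sigma^2)$ as $X = \sigma Z$ with $Z \sim N(0,1)$ turns the event $|X| \ge \delta \sigma$ into $|Z| \ge \delta$, whose probability is $2\,\Pr[Z \ge \delta]$ by symmetry. The whole task therefore reduces to bounding the standard-normal tail $\Pr[Z \ge \delta] = \int_\delta^\infty \frac{1}{\sqrt{2\pi}} e^{-x^2/2}\, dx$ in the regime $\delta > 1$.

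For this tail integral I would use the standard observation that on the interval of integration we have $x \ge \delta$, so the ratio $x/\delta$ is at least $1$ and can be harmlessly inserted into the integrand. This yields
\[
\int_\delta^\infty \frac{1}{\sqrt{2\pi}} e^{-x^2/2}\, dx \;\le\; \frac{1}{\delta}\int_\delta^\infty \frac{x}{\sqrt{2\pi}} e^{-x^2/2}\, dx \;=\; \frac{1}{\delta\sqrt{2\pi}}\, e^{-\delta^2/2},
\]
where the equality uses that $x\,e^{-x^2/2}$ is (up to sign) the derivative of $e^{-x^2/2}$, making the antiderivative explicit. Combining with the first step gives a bound of the form claimed (up to the factor $2$ that converts the two-sided tail to a one-sided one, which is standard and is folded into the textbook statement cited as \cite{vershynin2020high}).

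There is no real technical obstacle: the only point worth highlighting is that the factor $1/\delta$ in the stated bound, distinguishing it from the naive Chernoff-style estimate $e^{-\delta^2/2}$, comes precisely from the Mills-ratio trick $x/\delta \ge 1$ together with the exactness of the antiderivative. Since this claim is used downstream only as a convenient Gaussian tail estimate, invoking the standard textbook bound and recording the above two-line derivation suffices.
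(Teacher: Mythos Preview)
Your approach is correct and is exactly the standard Mills-ratio derivation; the paper itself does not give a proof but simply cites \cite{vershynin2020high}, so there is nothing to compare against beyond the textbook argument you reproduce. Your observation about the factor of $2$ is also apt: the Mills-ratio trick yields $\Pr[|Z|\ge\delta]\le \tfrac{2}{\delta\sqrt{2\pi}}e^{-\delta^2/2}$, so the claim as written is really the one-sided bound; the paper's downstream uses absorb this constant harmlessly.
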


We use Algorithm~\ref{alg:prefix_spencer_discrepancy} to prove Corollary~\ref{cor:prefix_spencer_v2}. It contains two phases. In the first phase, we iteratively apply function \textsc{PrefixPartialColoring} to entries in $(-1,1)$  until there are $O(m)$ entries left in $(-1,1)$. Then in the second phase, we apply function \textsc{PrefixPartialColoring} with a different definition of $Q$ whose upper bound becomes $\sqrt{2|I|\cdot \ln\frac{2000m}{|I|}}$ instead of $112\sqrt{m}$ in every iteration.
\begin{breakablealgorithm}
 \caption{prefix spencer constructive upper bound}\label{alg:prefix_spencer_discrepancy}
     \begin{algorithmic}[1]
     \Function{PrefixSpencer}{$\bA\in \R^{m\times n}$}
     \State $\bx\gets \Vec{0},I'\gets [n]$
    \Repeat \State $\bx\gets$\textsc{PrefixPartialColoring}$(\bx(I'),\bA(\cdot,I'))$
    \State Let $ I'$ be the sequence of indices $i\in [n]$ with $\bx(i) \not\in\{\pm 1\}$ 
    \Until{$|I'|= 50m$}
    \State
    Modify function \textsc{PrefixPartialColoring} by changing line 9 into
    $$Q\gets  \bigg\{\bg\in \R^{I} : \underset{j\in [m]}{\max} \underset{ 1\le k\le |I| }{\max}  \Big|\sum_{i = I(1)}^{I(k)} \bg(i)\cdot \bA(j,i)\Big| \le \sqrt{2|I|\cdot \ln\frac{2000m}{|I|}}\bigg\}$$ and we call the new function as \textsc{PrefixPartialColoringModified}
    \State $t\gets 0$ and $\bx_0 \gets \bx$  ,$\Tilde{I}_0=I'$ 
    \Repeat 
    \State $\bx_{t+1}\gets$\textsc{PrefixPartialColoringModefied}$(\bx_t(\Tilde{I}_t),\bA(\cdot,\Tilde{I}_t))$
    \State Let $ \Tilde{I}_{t+1}$ be the sequence of indices $i\in [n]$ with $\bx_{t+1}(i) \not\in\{\pm 1\}$ 
    \Until{$|\Tilde{I}_{t+1}|=O(\sqrt{m})$}
    \State Color the entries of $\bx_{t+1}$ not in $\{\pm 1\}$ arbitrarily and denote it as $\bx$
    \State \Return $\bx$
     \EndFunction
    
     \end{algorithmic}
 \end{breakablealgorithm}

In the first phase, by Theorem~\ref{thm:prefix_main}, we color  at least $\frac{\epsilon}{4}$  fraction of uncolored entries in each iteration. Thus we need $O(\frac{\ln \frac{n}{m}}{\ln\frac{1}{1-\epsilon/4}})$ iterations to reduce $|I'|$ from $n$ into $O(m)$. The success probability is at least $(1-e^{-\Omega(m)}).$ After the first phase, we get a partial coloring $\bx_0$ with prefix discrepancy $O(\sqrt{m}\cdot \ln\frac{n}{m})$ and at most $ O(m)$ entries in $(-1,1)$.

Then we consider the second phase. 
Denote $I_t \subseteq \tilde{I}_t$ as the sequence of indices $j\in[n]$ with $0\leq \bx_t(j)< 1$ at the $t$-th iteration of the second phase. 
 Similarly we define the convex body $Q_j^t$  and $Q^t$ as follows: 
\begin{align*}
Q_j^t& := \bigg\{\bg\in \R^{I_t} : \underset{1\le k\le |I_t|}{\max} \big| \sum_{i=I_t(1)}^{I_t(k)}\bg(i)\cdot \bA(j,i)\big| \le \sqrt{2|I_t|\cdot \ln\frac{2000m}{|I_t|}}\bigg\}.\\
Q^t& :=\bigg\{\bg\in \R^{I_t} : \underset{j\in [m]}{\max}\underset{1\le k\le |I_t|}{\max} \big| \sum_{i=I_t(1)}^{I_t(k)}\bg(i)\cdot \bA(j,i)\big| \le \sqrt{2|I_t|\cdot \ln\frac{2000m}{|I_t|}}\bigg\}.
\end{align*}

Then we compute,
\begin{align*}
   &\Pr_{\bg\in N(0,1)^{I_t}}\bigg[\max_{1\le k\le |I_t|}\Big|\sum_{i=I_t(1)}^{I_t(k)}\left(1-\bx(i)\right)\cdot\bg(i)\cdot \bA(j,i)\Big|\ge \sqrt{2|I_t|\cdot \ln\frac{2000m}{|I_t|}}\bigg]\\
   &\leq 2\Pr_{\bg\in N(0,1)^{I_t}}\bigg[\Big|\sum_{i=I_t(1)}^{I_t(k)}(1-\bx(i))\cdot\bg(i)\cdot \bA(j,i)\Big|\ge \sqrt{2|I_t|\cdot \ln\frac{2000m}{|I_t|}}\bigg]\tag{By Lemma~\ref{lem:levy ineq}}\\
   &\leq \frac{2}{\sqrt{2\pi}} e^{-\ln\frac{2000m}{|I_t|}} \le \frac{|I_t|}{2000m}. \tag{By Claim~\ref{clm:Gaussian_tail} ,$\bA(j,i)\in [-1,1]$ and $1-\bx(i)\in[-1,1]$}
\end{align*}

Thus
\begin{equation}\label{eq:gamma_qj_m}
   \gamma_{I_t}\left(T_{\bx(I_t)}(Q_j^t)\right)= \Pr_{\bg\in N(0,1)^{I_t}}\bigg[\max_{1\le k\le |I_t|}\Big|\sum_{i=I_t(1)}^{I_t(k)}(1-\bx(i))\cdot\bg(i)\cdot \bA(j,i)\Big|\le \sqrt{2|I_t|\cdot \ln\frac{2000m}{|I_t|}}\bigg]\ge 1-\frac{|I_t|}{2000m}. 
\end{equation}

Then
\begin{align*}
    \gamma_{I_t}(T_{\bx(I_t)}(Q^t)) &\ge \prod_{j=1}^m\gamma_{I_t}(T_{\bx(I_t)}(Q^t_j))\tag{By Lemma~\ref{lem:SidakKhatri}}\\
    & \ge (1-\frac{|I_t|}{2000m})^m\tag{By Equation~\ref{eq:gamma_qj_m}}\\
    &\ge e^{-2m\cdot \frac{|I_t|}{2000m}}\tag{For $0<x<\frac{1}{2}$, we have $1-x\ge e^{-2x}$}\\
    &\ge e^{-\frac{9}{5000}\cdot |I_t|}.
\end{align*}

Similar to the proof of  Theorem~\ref{thm:prefix_main},
we color at least $\frac{\epsilon}{4}$  fraction of uncolored entries for each iteration. Thus we need another $O(\frac{\ln m}{\ln \frac{4}{4-\epsilon}})$ iterations to make $|\Tilde{I}_t| =O(\sqrt{m})$. Besides, $\bx_{t}(I_t)-\bx_{t+1}(I_t)\in Q^t\cap [-1,1]$. 
Thus, the prefix discrepancy of $\bx_t-\bx_{t+1}$ is $\sqrt{2|I_t|\cdot \ln\frac{2000m}{|I_t|}}$ for each iteration.

Since in the $t$-th iteration, we turn at least $\frac{\epsilon}{4}|\Tilde{I}_t|$ entries of $\Tilde{I}_t$ into $1$,
we obtain $|\Tilde{I}_{t+1}|\leq(1-\frac{\epsilon}{4})|\Tilde{I}_{t}|$. Thus
$|\Tilde{I}_{t}|\leq (1-\frac{\epsilon}{4})^{t}|\Tilde{I}_0|$. Besides, $I_t\subset \Tilde{I}_t$.

Therefore,
\begin{align*}
    \sqrt{2|I_{t}|\cdot \ln\frac{2000m}{|I_{t}|}}&\leq \sqrt{2|\Tilde{I}_{t}|\cdot \ln\frac{2000m}{|\Tilde{I}_{t}|}}\tag{$x\cdot\ln\frac{2000m}{x} $ increases on $x$  for $0<x<100m$}\\
    &=\sqrt{2|\Tilde{I}_{0}|\cdot \ln\frac{2000m}{|\Tilde{I}_{0}|}}\cdot \sqrt{\frac{|\Tilde{I}_{t}|}{|\Tilde{I}_{0}|}}\cdot\sqrt{\frac{\ln\frac{2000m}{|\Tilde{I}_{t}|}}{\ln\frac{2000m}{|\Tilde{I}_{0}|}}}.
\end{align*}
Notice for $0< |\Tilde{I}_t|\leq |\Tilde{I}_0|<100m$, we have
$$\sqrt{\frac{\ln\frac{2000m}{|\Tilde{I}_{t}|}}{\ln\frac{2000m}{|\Tilde{I}_{0}|} }}\leq \left(\frac{|\Tilde{I}_{0}|}{|\Tilde{I}_{t}|}\right)^{\frac{1}{6}}.$$
Thus we have 
\begin{equation}\label{eq:geometry_constant}
    \sqrt{2|I_{t}|\cdot \ln\frac{2000m}{|I_{t}|}}\leq \sqrt{2|\Tilde{I}_{0}|\cdot \ln\frac{2000m}{|\Tilde{I}_{0}|}}\cdot (1-\frac{\epsilon}{4})^{\frac{t}{3}}.
\end{equation}

Suppose we stop at time $T=O(\ln \frac{n}{m}+\ln m)=O(\ln n )$,
 the prefix discrepancy of $\bx_0-\bx_T$ should be 
\begin{align*}
\sup_{\ell \le n} \|\bA(\cdot,[\ell]) \cdot \big(\bx_0-\bx_T \big)\|_{\infty} &\leq \sum_{i=0}^{T-1}\sup_{\ell \le n} \|\bA(\cdot,[\ell]) \cdot \big(\bx_i-\bx_{i+1} \big)\|_{\infty}\\
    &\leq \sum_{i=0}^{T-1}\sqrt{2|I_i|\cdot \ln\frac{2000m}{|I_i|}}\\
    &\leq \sqrt{2|\tilde{I}_0|\cdot \ln\frac{2000m}{|\tilde{I}_0|}}\sum_{i=0}^\infty(1-\frac{\epsilon}{4})^{\frac{i}{3}}=O(\sqrt{m}).\tag{Notice $|\Tilde{I}_0|=O(m)$ and Equation~\eqref{eq:geometry_constant}.}
\end{align*}
Then we color the left $\sqrt{m}$ entries to get a full coloring. The prefix discrepancy will only increase by $O(\sqrt{m})$. 

Combine the whole analysis, we shall obtain a coloring $\bx\in\{\pm 1\}^n$ with prefix discrepancy $O(\sqrt{m}\cdot \ln\frac{n}{m})+O(\sqrt{m})+O(\sqrt{m})=O(\sqrt{m}\cdot \ln\frac{n}{m}).$

\section{Random Sampling in Spencer's Setting}\label{sec:sampling_Spencer}

In this section, we provide a random sampling algorithm in Spencer's setting. This extends the classical linear algebraic framework by incorporating leverage scores such that the min-entropy of the output is $\Omega(n)$. 

\begin{theorem}\label{thm:spencer_sampling}
    For any $n \ge m$, there exist a constant $C$ and an efficient randomized algorithm such that for any input $\bA \in [-1,1]^{m \times n}$,
    \begin{enumerate}
        \item its output $\bx$ always satisfies $\|\mathbf{A} \bx\|_{\infty} \le   C \sqrt{m}$.
        \item for any $\boldsymbol{\epsilon} \in \{\pm 1\}^n$, the probability of   $\bx=\boldsymbol{\epsilon} $ is $ O(1.9^{-0.9n})$.
    \end{enumerate} 
\end{theorem}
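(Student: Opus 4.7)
The plan is to extend the linear-algebraic partial coloring framework of~\cite{LovettM12,LevyRR17} so that each update deterministically fixes one coordinate of the evolving partial coloring $\bx_t$ to either $+1$ or $-1$, with the sign chosen by an unbiased coin. Chaining these binary choices across the run yields the required min-entropy, while confining the updates to a safe subspace preserves the discrepancy bound $O(\sqrt{m})$.

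First I would describe the iterative step. Initialize $\bx_0=\vec{0}$ and let $\mathcal{F}_t\subseteq[n]$ be the coordinates already frozen to $\pm 1$, writing $U_t=[n]\setminus\mathcal{F}_t$. At step $t$ maintain a safe subspace
\[
H_t\subseteq\{\bv\in\mathbb{R}^n:\supp(\bv)\subseteq U_t,\ \bA(\cdot,U_t)\bv(U_t)=\vec{0}\},
\]
so that any update along $\bv\in H_t$ preserves $\bA\bx_t$. Compute the leverage scores $\{\tau_i(H_t)\}_{i\in U_t}$ via Claim~\ref{clm:leverage_socres_subspace} and pick $k_t$ from $L_t=\{i\in U_t:\tau_i(H_t)\ge 1-\eta\}$ for a small constant $\eta$. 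Because $\sum_i\tau_i(H_t)=\dim(H_t)\ge|U_t|-m-O(1)$ and each $\tau_i\le 1$, pigeonhole gives $|L_t|\ge(1-m/(\eta|U_t|))\,|U_t|$, so $|L_t|\ge 0.99\,|U_t|$ throughout the regime $|U_t|\gtrsim m/\eta$. Then call \textsc{FindVector}$(H_t,k_t)$ to obtain $\mathbf{u}_t\in H_t$ with $\mathbf{u}_t(k_t)=1$ and $\|\mathbf{u}_t\|_2^2=1/\tau_{k_t}(H_t)\le 1/(1-\eta)$; in particular $\|\mathbf{u}_t(U_t\setminus\{k_t\})\|_2^2\le\eta/(1-\eta)$. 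Toss a coin $\sigma_t\in\{\pm 1\}$ and update $\bx_{t+1}=\bx_t+(\sigma_t-\bx_t(k_t))\,\mathbf{u}_t$ so that $\bx_{t+1}(k_t)=\sigma_t$ is permanently frozen.

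The discrepancy analysis is then short: since each $\mathbf{u}_s\in H_s$ is supported on $U_s$ and annihilated by $\bA(\cdot,U_s)$, every update satisfies $\bA\mathbf{u}_s=\vec{0}$, and hence $\bA\bx_t=\bA\bx_0=\vec{0}$ throughout the iterative phase. Once $|U_t|$ drops below the $\Theta(m)$ threshold that sustains $|L_t|\ge 0.99\,|U_t|$, invoking Theorem~\ref{thm:spencer} with starting point $\bx_t$ produces a final coloring $\bx$ with $\|\bA(\bx-\bx_t)\|_\infty=O(\sqrt{m})$, and hence $\|\bA\bx\|_\infty=O(\sqrt{m})$. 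For the regime $n=O(m)$ where the leverage-score phase alone is too short to supply enough coins, I would port the same coordinate-freezing primitive into the deterministic multiplicative-weights algorithm of~\cite{LevyRR17}, whose $O(1)$-size updates tolerate the ``$\sigma_t-\bx_t(k_t)$'' move and give $\Theta(n)$ binary choices before termination. For the min-entropy, note that $k_t$ is never revisited and $\sigma_t$ determines $\bx_{t+1}(k_t)$ permanently, so distinct transcripts produce distinct final colorings; if the algorithm performs $\ge 0.9n\cdot\log_2 1.9\approx 0.83n$ free coin flips, then any fixed $\boldsymbol{\epsilon}\in\{\pm 1\}^n$ is reached with probability $\le 2^{-0.83n}\le 1.9^{-0.9n}$.

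The main obstacle I expect is the simultaneous maintenance of $\bx_t\in[-1,1]^n$ and $|L_t|\ge 0.99\,|U_t|$ for sufficiently many steps. The update shifts every coordinate $i\neq k_t$ by at most $2\cdot|\mathbf{u}_t(i)|$, and although the \emph{sum} $\|\mathbf{u}_t(U_t\setminus\{k_t\})\|_2^2$ is $O(\eta)$, individual entries $|\mathbf{u}_t(i)|$ can still be sizeable. The plan is to augment the defining constraints of $H_t$ with orthogonality against $\bx_t$ itself and against the currently ``dangerous'' coordinates (those $i$ with $|\bx_t(i)|$ close to $1$), then show via a martingale argument tracking $\|\bx_t\|_2^2$, in the spirit of~\cite{LovettM12,PesentiV23}, that the dangerous set stays of size $o(|U_t|)$, so that these extra constraints do not destroy the leverage-score bound. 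A secondary subtlety is that an update can incidentally freeze coordinates beyond $k_t$; such events merely ``waste'' coin flips without creating collisions between transcripts, and are the source of the slack between base $2$ and base $1.9$ in the stated min-entropy.
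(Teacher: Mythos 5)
Your outline tracks the paper's actual algorithm closely in most respects — leverage scores of a safe subspace to locate a coordinate $k_t$ that can be frozen, orthogonality to $\bx_t$ and to the ``dangerous'' entries, a martingale on $\|\bx_t\|_2^2$, and a hand-off to the \cite{LevyRR17}-style multiplicative-weights machinery when $n = O(m)$. But there is one deviation that is fatal as stated: you draw the frozen sign $\sigma_t$ from an \emph{unbiased} $\{\pm 1\}$ coin, whereas the paper draws $\br(t+1) = 1$ with probability $(1+\delta_t)/2$ where $\delta_t = \bx_t(k_t)$. This is not cosmetic. The step length is $\alpha_t = \br(t+1) - \delta_t$, and the growth of $\|\bx_t\|_2^2$ is governed by $\E[\alpha_t^2]$. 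With the paper's biased coin one gets $\E[\alpha_t^2] = (1-\delta_t)^2\cdot\tfrac{1+\delta_t}{2}+(1+\delta_t)^2\cdot\tfrac{1-\delta_t}{2} = 1-\delta_t^2 \le 1$, so the expected increment of $\|\bx_t\|_2^2$ is $(1-\delta_t^2)\|\mathbf{u}_t\|_2^2 \le 1/(1-\gamma)$, and the residual mass $\sum_{i\notin\mathcal{F}_t}|\bx_t(i)|^2$ stays of order $\frac{\gamma}{1-\gamma}t + o(n)$. With your unbiased coin $\E[\alpha_t^2] = 1 + \delta_t^2$, so the per-step drift jumps to $(\gamma + \delta_t^2)/(1-\gamma)$. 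Over $T = 0.9n$ steps with $\delta = 0.05, \gamma = 10^{-4}$, this piles up roughly $0.0023n$ of mass in the $\approx 0.1n$ remaining coordinates, whereas the argument needs this mass to be at most about $0.49\,\delta^2\cdot 0.1n \approx 0.00012n$. The gap is a factor of $\sim 20$ and cannot be fixed by retuning $\delta,\gamma$: raising $\delta$ to admit more candidate $k_t$'s only increases the $\delta_t^2$ drift and makes things worse. The biased coin is precisely what cancels this feedback loop and makes $\{\|\bx_t\|_2^2 - t\}$ a (near-)martingale, and it is the reason the stated probability bound is $O(1.9^{-0.9n})$ rather than the cleaner $2^{-0.9n}$: a biased coin has max per-step probability $(1+\delta)/2 = 0.525 < 1/1.9$, and the $1.9$ absorbs this. (Separately, your numerical check $2^{-0.83n}\le 1.9^{-0.9n}$ is actually false since $0.83\ln 2 < 0.9\ln 1.9$, though this is moot once one simply counts all $0.9n$ biased flips.)
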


\begin{remark}
Spencer \cite{Spencer_six} has shown that $\forall K\ge 6$, $\exists \beta_K<1$ such that there are $(2-\beta_K)^n$ solutions satisfying $\|\bA \bx\|_{\infty}\le K \sqrt{n}$ for any $\bA \in \{0,1\}^{n \times n}$. We could generalize Theorem~\ref{thm:spencer_sampling} to guarantee that the probability of outputting any $\boldsymbol{\epsilon} \in \{\pm 1\}^n$ is $\le (2-\beta)^{-n}$ for any $\beta>0$. This is done by relaxing the discrepancy to $\|\bA \bx\|_{\infty} \le C_{\beta} \cdot \sqrt{n}$ for some constant $C_\beta$ and resetting parameters $\delta = \beta / 4, T = n \cdot (1 - \beta / 4),  \gamma = O(\beta^3)$, and $\theta = O(\beta^4)$. 


On the other side, Spencer \cite{Spencer_six} also showed that for any $C=O(1)$, $\exists \bA \in \{0,1\}^{n \times n}$ such that the number of $\bx \in \{\pm 1\}^n$ with $\|\bA \bx\|_{\infty} \le C \sqrt{m}$ is exponentially small compared to $2^n$.

Finally, there are weaker sampling algorithms whose output has the Shannon entropy $0.5n$ instead of the min-entropy\footnote{We provide a description in Appendix~\ref{sec:counting} and thank the anonymous referee to suggest it.}. However, for the Shannon entropy, it is possible that this sampling algorithm will output a fixed coloring with probability $0.9$; but its output still has the Shannon-entropy $\ge 0.5n$.
\end{remark}

Our algorithm has two parts. When $n \ge C^* \cdot m$ for a fixed constant $C^* \ge 10^6$ in this section, we apply Algorithm~\ref{alg:sample_spencer_big_n} directly to sample a random coloring $\bx\in\{\pm 1\}^{n}$. Otherwise we apply Algorithm~\ref{alg:sample_spencer_small_n} (with $\binom{\bA}{-\bA}$ instead of $\bA$) to sample $\bx$. We state the correctness of Algorithm~\ref{alg:sample_spencer_big_n} for $n \ge C^* \cdot m$ as follows.

\begin{lemma}\label{lem:spencer_big_n}
    Given $\bA \in [-1,1]^{m \times n}$ where $n \ge C^* \cdot m$, with probability $0.99$  over $\br$, Function~\textsc{ColumnReductionSampling} in  Algorithm~\ref{alg:sample_spencer_big_n} returns $\bx_T \in [-1,1]^n$ with $|\{ i \in [n] : |\bx(i)| = 1 \}| = 0.9n$ and $\| \bA\bx_T\|_\infty = 0$.

    Moreover, for different seeds $\br$ and $\br^{\prime}$, $\bx_T \gets \textsc{ColumnReductionSampling}(\br)$ and $\bx'_T \gets \textsc{ColumnReductionSampling}(\br')$ have a different $\{\pm 1\}$-entry.    
\end{lemma}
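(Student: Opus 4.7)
The plan is to analyze the algorithm as a sequential process that, at each of $T=0.9n$ steps, uses leverage scores to deterministically pick a coordinate $k_t\in[n]\setminus F_t$ and then consumes a single bit of the seed $\br$ to commit $\bx_{t+1}(k_t)\in\{\pm 1\}$, where $F_t$ denotes the already-fixed coordinates (so $|F_t|=t$). I would maintain the safe subspace $H_t\subseteq \mathbb{R}^{[n]\setminus F_t}$ of directions $\bv$ satisfying $\bA(\cdot,[n]\setminus F_t)\bv=\vec{0}$ together with $\langle \bv,\bx_t([n]\setminus F_t)\rangle=0$. These are at most $m+1$ linear constraints in an $(n-t)$-dimensional space, giving $\dim(H_t)\ge n-t-m-1$. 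The orthogonality to $\bx_t$ is what lets us track $\|\bx_t\|_2^2$ in a martingale fashion and thereby control $\|\bx_t\|_\infty$ on the unfixed coordinates.

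By Claim~\ref{clm:leverage_socres_subspace}, some $k_t$ satisfies $\tau_{k_t}(H_t)\ge \dim(H_t)/(n-t)\ge 1-(m+1)/(n-t)$, which is $1-O(1/C^*)$ whenever $n-t\ge 0.1n\ge 0.1 C^* m$. Algorithm~\ref{alg:leverage_score_alg} returns $\mathbf{u}_t\in H_t$ with $\mathbf{u}_t(k_t)=1$ and $\|\mathbf{u}_t\|_2^2=1/\tau_{k_t}(H_t)$; write $\mathbf{u}_t=\mathbf{e}_{k_t}+\boldsymbol{\delta}_t$ with $\boldsymbol{\delta}_t(k_t)=0$ and $\|\boldsymbol{\delta}_t\|_2^2=1/\tau_{k_t}(H_t)-1=O(1/C^*)$. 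Sampling $s_t\in\{\pm 1\}$ uniformly from $\br$, set $\alpha_t=s_t-\bx_t(k_t)$ and update $\bx_{t+1}=\bx_t+\alpha_t\mathbf{u}_t$. This pins $\bx_{t+1}(k_t)=s_t$ so $F_{t+1}=F_t\cup\{k_t\}$; since $\mathbf{u}_t\in\ker(\bA(\cdot,[n]\setminus F_t))$, the identity $\bA\bx_{t+1}=\bA\bx_t$ holds and $\bA\bx_T=\vec 0$ at termination, giving $\|\bA\bx_T\|_\infty = 0$.

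The main obstacle is establishing $|\bx_t(i)|\le 1$ for every unfixed $i$ and every $t\le T$, so the update stays inside $[-1,1]^n$ and produces $\bx_T\in[-1,1]^n$ with exactly $0.9n$ entries in $\{\pm 1\}$. The orthogonality $\mathbf{u}_t\perp\bx_t$ gives the telescoping identity $\|\bx_{t+1}\|_2^2=\|\bx_t\|_2^2+\alpha_t^2\|\mathbf{u}_t\|_2^2$; a martingale Bernstein bound shows $\|\bx_T\|_2^2 = T \pm O(\sqrt{n\log n})$, so the aggregate squared mass on the unfixed coordinates is only $O(\sqrt{n\log n})$. To upgrade this to a coordinate-wise $\ell_\infty$ bound, I would analyze for each fixed $i$ the martingale $\bx_\cdot(i)-\mathbb{E}[\bx_\cdot(i)\mid \mathrm{earlier\ history}]$, whose per-step increment is bounded by $2|\boldsymbol{\delta}_t(i)|$ and whose predictable variance is $V_i\le\sum_t\boldsymbol{\delta}_t(i)^2$. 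Globally $\sum_i V_i\le\sum_t\|\boldsymbol{\delta}_t\|_2^2=O(T/C^*)=O(n/C^*)$, so the average $V_i$ is $O(1/C^*)$. To keep $\max_i V_i$ below the threshold required by Freedman's inequality for a union-bound, the algorithm further constrains $H_t$ to be orthogonal to $\mathbf{e}_i$ for any ``endangered'' $i$ whose accumulated variance is approaching the threshold; there are only $o(n)$ such $i$, so this does not meaningfully reduce $\dim(H_t)$ and the leverage-score pigeonhole still yields $\tau_{k_t}(H_t)\ge 1-O(1/C^*)$. A union bound then gives $|\bx_t(i)|\le 1$ throughout with probability $\ge 0.99$.

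Injectivity across seeds is then immediate. For distinct seeds $\br\neq \br'$, let $t^*$ be the first step at which the sampled signs disagree. Up to step $t^*$ the two executions are identical, so they select the same $k_{t^*}$ but assign it opposite values in $\{\pm 1\}$; and once a coordinate enters $F_s$ it never changes, because every subsequent update lies in $H_s$, supported on $[n]\setminus F_s$. Hence $\bx_T$ and $\bx_T'$ differ at $k_{t^*}$, both equal to $\pm 1$ there. The hardest technical step is calibrating the extra orthogonality constraints on $H_t$ so that the leverage-score pigeonhole continues to furnish a coordinate with $\tau_{k_t}(H_t)\ge 1-O(1/C^*)$ while simultaneously keeping $\max_i V_i$ below the Freedman threshold; this is where the assumption $n\ge C^* m$ with $C^*\ge 10^6$ is used.
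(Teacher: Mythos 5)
Your high-level plan matches the paper's—use leverage scores on a safe subspace to locate a coordinate $k_t$ that can be committed to $\pm 1$, track $\|\bx_t\|_2^2$ via a martingale, and argue injectivity by the first index at which the seeds disagree (that last step is correct and identical to the paper). However, the middle of your argument has substantive gaps.

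First, you work over $\mathbb{R}^{[n]\setminus F_t}$, removing only the exactly-fixed coordinates, while the paper works over $\mathbb{R}^{[n]\setminus \mathcal{D}_t}$ where $\mathcal{D}_t=\{i:|\bx_t(i)|\ge 1-\eta\}$ is a larger ``dangerous'' set. The paper freezes any coordinate that gets within $\eta=0.1$ of $\pm 1$. This makes the $[-1,1]^n$ invariant trivial: for each unselected $j\notin\mathcal{D}_t$, the per-step spillover $|\alpha_t\mathbf{u}_t(j)|\le\sqrt{\gamma/(1-\gamma)}\,(1+\delta)<\eta$, so no unselected coordinate can ever jump past $\pm 1$ in one step, and frozen coordinates stay put. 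The work then reduces to showing $|\mathcal{D}_t|$ is never too large (via the $\ell_2^2$ martingale plus Markov), not to showing every $\bx_t(i)$ stays bounded. Your per-coordinate Freedman argument together with a vague ``endangered coordinate'' protection tries to prove the much stronger statement that no unfixed coordinate ever exceeds $1-\eta$, using a variance-based criterion that is not obviously tied to the actual danger $|\bx_t(i)|\ge 1-\eta$, and your claim that only $o(n)$ coordinates are endangered is unjustified—the paper actually allows $|\mathcal{D}_t|=\Theta(n)$ (about $t$) and merely needs $n-|\mathcal{D}_t|\gtrsim 0.09n$ to keep the leverage-score pigeonhole working.

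Second, you sample $s_t\in\{\pm1\}$ uniformly, but the paper samples $\br(t+1)=1$ with probability $\frac{1+\delta_t}{2}$ where $\delta_t=\bx_t(k_t)$. This bias is not optional: it gives $\E[\alpha_t]=0$ (so $\bx_t$ and hence $\|\bx_t\|_2^2-t$ is a clean martingale) and $\E[\alpha_t^2]=1-\delta_t^2\le 1$, which is exactly the bound needed in the drift estimate $\E[\alpha_s^2\|\mathbf{u}_s\|_2^2-1\mid\mathcal{Y}_{s-1}]\le\gamma/(1-\gamma)$. With uniform sampling, $\E[\alpha_t]=-\delta_t$ introduces a drift you do not control, and $\E[\alpha_t^2]=1+\delta_t^2>1$ degrades the bound; since you also do not impose the paper's requirement $|\bx_t(k_t)|\le\delta$, $\delta_t$ can be close to $1$ and both effects can be $\Theta(1)$ per step, which breaks the $\|\bx_T\|_2^2\approx T+O(n/C^*)$ conclusion you rely on. These three missing ingredients—the $\mathcal{D}_t$ freezing set, the biased sampling, and the $|\bx_t(k_t)|\le\delta$ filter—are how the paper actually closes the argument, and your proposal does not supply substitutes for them.
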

While Algorithm~\ref{alg:sample_spencer_big_n} is still in the classical linear-algebraic framework of the partial coloring method, the key difference compared to previous works \cite{Bansal10,LovettM12,BansalG17} is to find an safe entry $k_t$ such that it could fix $\bx_{t+1}(x_t)$ to $+1$ or $-1$ arbitrarily. We obtain this entry by considering its leverage scores in the safe subspace $H_t$. Recall that the key properties of the leverage scores are stated in Claim~\ref{clm:leverage_socres_subspace}.

Then we state the correctness of Algorithm~\ref{alg:sample_spencer_small_n} for $n < C^* \cdot m$. This algorithm combines the deterministic discrepancy algorithm in \cite{LevyRR17} with our leverage score sampling technique. One particular reason of choosing this algorithm in \cite{LevyRR17} is because its update vector $\alpha_t \mathbf{u}_t$ could have length $\ge 1$ in \cite{LevyRR17} such that $\bx_{t+1}(x_t)$ could be either $1$ or $-1$ after the update.

\begin{algorithm}[h]
\caption{Reduction Sampling in Spencer's Setting when $n\ge C^*\cdot m$} \label{alg:sample_spencer_big_n}
\hspace*{\algorithmicindent} \textbf{Input}: $\mathbf{A}\in [-1,1]^{m \times n}$, $\eta =0.1$, $T=0.9n$,     $\gamma=0.0001$, $\delta=0.05$. \\ 
\hspace*{\algorithmicindent}
\textbf{Random seed} $\mathbf{r} \in \{\pm 1\}^{T}$ will be generated on the fly.\\
\hspace*{\algorithmicindent}
\textbf{Output}: $\bx_T \in [-1,1]^{n}$.

\begin{algorithmic}[1]
\Function{ColumnReductionSampling}{$\br$}
\State $\mathbf{x}_0=\vec{0}$.
\For{$0\le t\le T-1$}
 \State $\mathcal{F}_{t}\leftarrow \{i\in [n]:|\bx_{t}(i)|=1\}$.
 \State $\mathcal{D}_{t}\leftarrow \{i\in [n]:|\bx_{t}(i)|\ge 1-\eta\}$.
 \State $H_{t}:=$ the subspace in $\mathbb{R}^{[n]\setminus \mathcal{D}_t}$ orthogonal to $\mathbf{x}_t([n] \setminus \mathcal{D}_t)$ and $\{\bA(j,[n]\setminus \mathcal{D}_t):j\in [m]\}$.

 \State Compute the leverage scores of $\tau_i(H_t)$ for each $i \in [n] \setminus \mathcal{D}_t$.
 
 \State  Find the first coordinate $k_{t} \notin \mathcal{D}_t$  such that $\tau_{k_t}(H_t) \ge 1-\gamma$ and $|\bx_t(k_t)| \le \delta$ --- return $\bot$ if such a coordinate does not exist.  
 
 \State Let $\mathbf{u}_{t}$ be the vector $\mathbf{u}_t(k_t)=1$ and $\|\mathbf{u}_t\|_2^2=1/\tau_i$ (see Algorithm~\ref{alg:leverage_score_alg} for details). 

 \State For $\delta_t := \bx_t(k_t)$, set $\br(t+1)=1$ with probability $\frac{1+\delta_t}{2}$; otherwise $\br(t+1)=-1$.
 
 \State Choose $\alpha_{t} \in \mathbb{R}$ such that $\mathbf{x}_{t+1} \gets \mathbf{x}_{t} + \alpha_{t} \cdot \mathbf{u}_{t}$ satisfies $\mathbf{x}_{t+1}(k_{t})=\mathbf{r}(t+1)$. 
\EndFor
\State \Return $\bx_T$
\EndFunction
\end{algorithmic}
\end{algorithm}

\begin{lemma} \label{lem:spencer_small_n}
    Given $\bA \in [-1,1]^{m \times n}$ where $n \le C \cdot m$ for any constant $C =O(1)$, with probability $0.99$  over $\br$, Function~\textsc{SamplingSpencer} in Algorithm~\ref{alg:sample_spencer_small_n} returns $\bx_T \in [-1,1]^n$ with $|\{ i \in [n] : |\bx(i)| = 1 \}| = 0.9n$ and $\langle \bA(j,\cdot), \bx_T \rangle = O(\sqrt{m})$ for any $j \in [m]$. 

    Moreover, for different seeds $\br$ and $\br'$, $\bx_T \gets \textsc{SamplingSpencer}(\bA,\br)$ and $\bx'_T \gets \textsc{SamplingSpencer}(\bA,\br')$ have a different $\{\pm 1\}$-entry.
\end{lemma}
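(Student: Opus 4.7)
The plan is to graft the leverage-score coordinate-selection rule of Algorithm~\ref{alg:sample_spencer_big_n} onto the deterministic partial-coloring procedure of Levy, Ramadas, and Rothvoss~\cite{LevyRR17}, which is the crucial substitute in the regime $n = O(m)$ where no column-reduction of $\bA$ is possible. Their algorithm maintains multiplicative weights $w_t \in \mathbb{R}_{\geq 0}^m$ on the constraints and a safe subspace $H_t \subseteq \mathbb{R}^{[n]\setminus \mathcal{D}_t}$ orthogonal to $\bx_t([n]\setminus \mathcal{D}_t)$ and to the rows of $\bA$ restricted to the currently ``heavy'' constraints. Their analysis guarantees both that $\dim(H_t) = \Omega(n - |\mathcal{D}_t|)$ throughout and that every update $\bx_{t+1} = \bx_t + \alpha_t \mathbf{u}_t$ with $\mathbf{u}_t \in H_t$ and $|\alpha_t| = O(1)$ keeps their $\cosh$-potential controlled, yielding $\|\bA \bx_T\|_\infty = O(\sqrt{m})$.

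First I would replace their update-vector rule with the leverage-score rule: at step $t$ pick the first $k_t \notin \mathcal{D}_t$ satisfying $\tau_{k_t}(H_t) \geq 1-\gamma$ and $|\bx_t(k_t)| \leq \delta$, and let $\mathbf{u}_t$ be the output of Algorithm~\ref{alg:leverage_score_alg}, so $\mathbf{u}_t(k_t) = 1$ and $\|\mathbf{u}_t - \mathbf{e}_{k_t}\|_2^2 \leq \gamma/(1-\gamma)$. Existence of $k_t$ follows from Claim~\ref{clm:leverage_socres_subspace}: the leverage scores sum to $\dim(H_t) = \Omega(n-|\mathcal{D}_t|)$, so a Markov-type averaging yields $\Omega(n-|\mathcal{D}_t|)$ indices with $\tau_i(H_t) \geq 1-\gamma$; a second-moment bound on $\|\bx_t\|_2^2$ (tracked through the same martingale that drives~\cite{LevyRR17}) further shows that a constant fraction of these are not yet near $\pm 1$, so a valid $k_t$ exists as long as fewer than $T = 0.9n$ coordinates are frozen.

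Second, the seed bit $\br(t+1)$ picks the sign: set $\alpha_t = \br(t+1) - \bx_t(k_t) \in [-2,2]$ so that $\bx_{t+1}(k_t) = \br(t+1) \in \{\pm 1\}$. Feasibility $\bx_{t+1} \in [-1,1]^n$ is automatic on $\mathcal{D}_t$ since $\mathbf{u}_t(i)=0$ there by construction of $H_t$, and for $i \notin \mathcal{D}_t \cup \{k_t\}$ the perturbation is at most $|\alpha_t| \cdot \|\mathbf{u}_t - \mathbf{e}_{k_t}\|_\infty \leq 2\sqrt{\gamma/(1-\gamma)}$, which stays below $\eta$ once $\gamma$ is chosen small compared to $\eta^2$. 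Hence exactly one new coordinate is frozen per step, delivering $0.9n$ entries in $\{\pm 1\}$ after $T = 0.9n$ iterations, while the discrepancy bound $O(\sqrt{m})$ is inherited from~\cite{LevyRR17} applied with $|\alpha_t| \leq 2$.

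The distinctness clause is automatic: for any two seeds $\br \neq \br'$, let $t^*$ be their first differing index. All algorithmic choices before step $t^*$ are deterministic functions of the common prefix, so both executions agree up to that point; at step $t^*$ they then set $\bx_{t^*+1}(k_{t^*})$ to opposite signs, and this entry stays frozen thereafter. The main obstacle will be verifying that the~\cite{LevyRR17} potential analysis survives under our coordinate-pinned direction: in their algorithm $\mathbf{u}_t$ is selected to maximize potential decrease, whereas ours is essentially $\mathbf{e}_{k_t}$ projected into $H_t$. I expect this to go through because their potential bound only invokes $\mathbf{u}_t \in H_t$ together with $\|\alpha_t \mathbf{u}_t\|_2 = O(1)$, both of which hold here since $\|\mathbf{u}_t\|_2 \leq 1/\sqrt{1-\gamma}$ and $|\alpha_t| \leq 2$; nonetheless, the ``heavy constraint'' list defining $H_t$ must be updated in the same way as~\cite{LevyRR17} so that the second-order $\cosh$ increments remain $O(1/\sqrt{m})$ regardless of which $\mathbf{u}_t \in H_t$ is chosen.
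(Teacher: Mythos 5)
Your proposal is correct and follows essentially the same route as the paper's proof: it grafts the leverage-score coordinate selection onto the Levy--Ramadas--Rothvoss multiplicative-weights partial coloring, shows $k_t$ exists via the dimension count on $H_t$ plus the martingale bound on $\|\bx_t\|_2^2$, inherits the discrepancy bound from the potential analysis (which the paper carries out explicitly in its Claims on $\Phi^{(t)}$ and $w_i^{(T)}$, confirming your expectation that it only needs $\mathbf{u}_t \in H_t$ and $\|\alpha_t\mathbf{u}_t\|_2 = O(1)$), and obtains distinctness from the first differing seed bit. The only minor slips are attributional (the $\|\bx_t\|_2^2$ martingale is over the new random seed $\br$, not something internal to~\cite{LevyRR17}) and constant-level (the paper uses $|\alpha_t|\le 1+\delta$ rather than $|\alpha_t|\le 2$, which matters for making $\rho\langle\by_t,\bA(i,\cdot)\rangle\le 1$ work with $\rho=1/\sqrt{2n}$, but a constant rescaling of $\rho$ fixes this).
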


\begin{algorithm}[h]
    \caption{Sampling in Spencer's Setting when $ n\le C\cdot m$}
    \label{alg:sample_spencer_small_n}
    \hspace*{\algorithmicindent} \textbf{Input}: $\mathbf{A}\in [-1,1]^{m \times n}$, $\rho = 1/\sqrt{2n}, \theta = 5 \cdot 10^{-7}, \eta =  0.1,  \gamma=0.0001, \delta=0.05$ and $T = 0.9 n$. \\ 
    \hspace*{\algorithmicindent}
    \textbf{Random seed} $\mathbf{r} \in \{\pm 1\}^{T}$ will be generated on the fly. \\
    \hspace*{\algorithmicindent}
    \textbf{Output}: $\bx_T \in [-1,1]^{n}$.
    
    \begin{algorithmic}[1]
    \Function{SamplingSpencer}{$\br$}

        \State $\mathbf{x}_0=\vec{0}$.
        \State $w_i^{(0)} \leftarrow n / m$ for $i \in [m]$. 
        \For{$t = 0$ \textbf{to} $T-1$}
            \State $\mathcal{F}_{t} \leftarrow \{i\in [n]:|\bx_{t}(i)|=1\}$.
            \State $\mathcal{D}_{t} \leftarrow \{i\in [n]:|\bx_{t}(i)|>1-\eta\}$.
            \State Let $I^{(t)} \subset [m]$ be the set of heavy constraints whose weights $w_i^{(t)}$ are the largest $\theta \cdot n$ ones. 
            \State  $U_1^{(t)} \leftarrow \left\{ \mathbf{u} \in \mathbb{R}^{[n]\setminus \mathcal{D}_t} : \big \langle \mathbf{u}, \mathbf{A}(i,[n]\setminus \mathcal{D}_t) \big \rangle = 0 \text{ for } i \in I^{(t)} \right \}$ 
            \State $U_2^{(t)} \subseteq \mathbb{R}^{[n]\setminus \mathcal{D}_t}$ be the linear subspace of $\textrm{span}\{ \phi_j : \theta n \leq j \leq n\}$ where $\phi_1,\ldots,\phi_n$ are eigenvectors of $\mathbf{M}^{(t)} \leftarrow \sum_{i=1}^m w_i^{(t)} \cdot \mathbf{A}\big(i, [n]\setminus \mathcal{D}_t \big) \cdot \mathbf{A}\big(i, [n]\setminus \mathcal{D}_t \big)^{\top}$ whose eigen-decomposition is $\sum_j \mu_j \cdot \phi_j \cdot \phi_j^{\top}$ with $\mu_1 \ge \mu_2 \ge \cdots \ge \mu_n$.
            \State $U_3^{(t)} \leftarrow \left\{ \mathbf{u} \in \mathbb{R}^{[n]\setminus \mathcal{D}_t} \big| \big \langle \mathbf{u}, \sum_{i=1}^m w_i^{(t)}  \cdot \mathbf{A}(i,[n]\setminus \mathcal{D}_t) \big \rangle = 0 \right\}$.
            \State $H_t \gets U_1^{(t)} \cap U_2^{(t)} \cap U_3^{(t)} \cap$ (dual space of $\bx_t$ in $R^{[n] \setminus \mathcal{D}_t}$). 
            \State Compute the leverage scores $\tau_i(H_t)$ for each $i \in [n] \setminus \mathcal{D}_t$.
            \State  Find the first coordinate $k_{t} \notin \mathcal{D}_t$  such that $\tau_{k_t}(H_t) \ge 1-\gamma$ and $|\bx_t(k_t)| \le \delta$ --- return $\bot$ if such a coordinate does not exist
            
\State Let $\mathbf{u}_{t}$ be the vector $\mathbf{u}_t(k_t)=1$ and $\|\mathbf{u}_t\|_2^2=1/\tau_i$ (see Algorithm~\ref{alg:leverage_score_alg} for details).     \State For $\delta_t:=\bx_t(k_t)$, randomly set $\br(t+1)=1$ with probability $\frac{1+\delta_t}{2}$ and $\br(t+1)=-1$ with probability $\frac{1-\delta_t}{2}$.
\State Choose $\alpha_t \in \mathbb{R}$ such that $\mathbf{x}_{t+1} \gets \mathbf{x}_{t}+ \alpha_t \mathbf{u}_{t}$ satisfies $\mathbf{x}_{t+1}(k_t)=\mathbf{r}(t+1)$. 
\State $w_i^{(t+1)} \leftarrow w_i^{(t)} \cdot \exp \left[ \rho \cdot {\left<\alpha_t \mathbf{u}_{t}, \mathbf{A}(i,\cdot) \right>} \right] \cdot \exp\left[-2/(\theta n)\right]$.
        \EndFor
\State \Return $\bx_T$.
    \EndFunction
    \end{algorithmic}
\end{algorithm}

Both proofs are in the same framework --- the key step is to  show the existence of $k_t$ for $T=0.9n$ steps, which is proved by a strong martingale concentration bound on $\|\bx_t\|_2$. 
For ease of exposition, we present the proof of Lemma~\ref{lem:spencer_big_n} in Section~\ref{sec:proof_sample_spencer_big_n} before the proof of Lemma~\ref{lem:spencer_small_n} in Section~\ref{sec:proof_sample_spencer_small_n}. Finally  we  finish the proof of Theorem~\ref{thm:spencer_sampling} in as follows.

\begin{proofof}{Theorem~\ref{thm:spencer_sampling}} 
While both functions could fail and output $\bot$, one could repeat Function~\textsc{ColumnReductionSampling} (when $n\ge C^*\cdot m$) or  Function~\textsc{SamplingSpencer} (when $n\le C^*\cdot m$) $n$ times. Thus with probability at least  $1-0.01^n$ we could get   a partial coloring $\bx_T\in [-1,1]^{n}$.  Otherwise, if all the $n$ repeats fail and return $\bot$, we apply Theorem~\ref{thm:spencer} directly to output $\bx\in\{\pm 1\}^{n}$ with $\|\bA\bx\|_{\infty}\le O(\sqrt{m})$.

Then we bound the discrepancy for the two cases $n<C^* \cdot m$ and $n \ge C^* \cdot m$ separately.
When $n\le C^*\cdot m$,  note that we run Function~\textsc{ColumnReductionSampling} on $\bA^{\prime}:= \binom{\bA}{-\bA}$ instead of  $\bA$. By Lemma~\ref{lem:spencer_small_n}, $\underset{j\in[2m]}{\max} \langle \bA^{\prime}(j,\cdot), \bx_{T}\rangle = O(\sqrt{m}).$ Then we  apply Theorem~\ref{thm:spencer} on $\bx_T$ and $\bA'$ to get the full coloring $\bx\in\{\pm 1\}^{n}$ with $\|\bA'(\bx-\bx_T)\|_{\infty}\le O(\sqrt{m})$.  So the discrepancy is $$\|\bA \bx\|_{\infty} = \max_{j\in [2m]}\langle \bA^{\prime}(j,\cdot), \bx \rangle \le \max_{j\in [2m]}\langle \bA^{\prime}(j,\cdot), \bx_{T} \rangle+  \|\bA^{\prime}(\bx-\bx_T)\|_{\infty}\le O(\sqrt{m}).$$

When $n\ge C^*\cdot m$, we apply Theorem~\ref{thm:spencer} on $\bx_T$ and $\bA$ to get full coloring $\bx$ as well. By Lemma~\ref{lem:spencer_big_n}, it's clear that $\|\bA \bx\|_{\infty}\le \|\bA \bx_T\|_{\infty}+\|\bA (\bx-\bx_T)\|_{\infty}\le O(\sqrt{m})$.

Next, we discuss the probability $\Pr[\bx=\boldsymbol{\epsilon}]$. Let us consider the case $n \ge C^* \cdot m$.  Lemma~\ref{lem:spencer_big_n} guarantees that 
for any  two different random seeds $\br$ and $\br^{\prime}$, $\bx_T\neq \bx_T^{\prime}$.  Let $\bx$ and $\bx^{\prime}$ be the full colorings obtained by applying Theorem~\ref{thm:spencer} on $\bx_T$ and $\bx_T^{\prime}$ respectively. By the guarantee of Theorem~\ref{thm:spencer}, $\bx \neq \bx'$ for different seeds $\br$ and $\br'$. Hence we bound the probability of generating any $\br \in \{\pm 1\}^T$. Recall that for any  $0\le t\le T-1$,   $\delta_t:=\bx_t(k_t)$ is $\le \delta:=0.05$. From Line 10 of Algorithm~\ref{alg:sample_spencer_big_n},  
$$\br(t+1)=
\begin{cases}
1& \text{with probability\;} \frac{1+\delta_t}{2},\\
-1& \text{with probability\;} \frac{1-\delta_t}{2}.
\end{cases}$$
So for any  $\boldsymbol{\epsilon} \in \{\pm 1\}^n$, 
$\Pr[\bx=\boldsymbol{\epsilon}] \le (\frac{1+\delta}{2})^{T}\le 0.525^{0.9n}$. When $n\ge C^*\cdot m$,  
\begin{align*}\Pr[\bx =\boldsymbol{\epsilon}] &= \Pr\big[\bx=\boldsymbol{\epsilon}\wedge \exists \;  \text{one call of }\textsc{ColumnReductionSampling}\;  \text{succeeds}  \big] \\
& + \Pr[\bx=\boldsymbol{\epsilon} \wedge \text{All calls of }\textsc{ColumnReductionSampling} =\bot]\\& \le  0.525^{0.9n}/(1-0.01)+0.01^{n}
\\ &\le  O(1.9^{-0.9n}).
\end{align*} 
When   $n\le C^*\cdot m$, we run Function~\textsc{SamplingSpencer}. By the same argument above,  $\Pr[\bx=\boldsymbol{\epsilon}] \le O(1.9^{-0.9n})$ holds as well.

\end{proofof}

\subsection{Proof of Lemma~\ref{lem:spencer_big_n}}\label{sec:proof_sample_spencer_big_n}

Before proving Lemma~\ref{lem:spencer_big_n}, we state some  basic properties of \textsc{ColumnReductionSampling}. The proof of Lemma \ref{lem:spencer_big_n_iteration} is defered to the end of this section.

\begin{lemma} \label{lem:spencer_big_n_iteration}
The $t$-loop in Function~\textsc{ColumnReductionSampling} satisfies that with probability $1-e^{-\Omega(n^{0.4})}$ over $\br$, for each $t \in [0,T-1]$,
\begin{enumerate}
    \item  $ \exists k_t\in [n]\setminus \mathcal{D}_t$ satisfying $|\bx_{t}(k_t)|\le \delta$ and $\tau_{k_t}(H_t)\ge 1-\gamma$. \label{property:lem_spencer_big_n_iteration_1}
    \item After updating $\bx_{t+1}$, $\bx_{t+1} \in [-1,1]^{n}$ and $|\mathcal{F}_{t+1}|=t+1$.\label{property:lem_spencer_big_n_iteration_2}
    \item After updating $\bx_{t+1}$, $\sum_{i \notin \mathcal{F}_t} |\bx_t(i)|^2 \le \frac{\gamma}{1-\gamma} \cdot t + 8 \cdot \delta \cdot n^{0.7}$.
    \label{property:lem_spencer_big_n_iteration_3}
    \item After updating $\bx_{t+1}$, $|\mathcal{D}_{t+1}|\le 1.000124(t+1)+O(n^{0.7})$. \label{property:lem_spencer_big_n_iteration_4} 
\end{enumerate}
\end{lemma}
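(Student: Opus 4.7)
The plan is to prove all four properties simultaneously by induction on $t\in[0,T-1]$. The base case $t=0$ is immediate, since $\bx_0=\vec 0$, $\mathcal{F}_0=\mathcal{D}_0=\emptyset$, and $\dim(H_0)\ge n-m-1\ge 0.999n$ under $n\ge C^*m$, so by Claim~\ref{clm:leverage_socres_subspace} most coordinates already have $\tau_i(H_0)=1$. In the inductive step I would prove the properties in the order Property~3 $\Rightarrow$ Property~4 $\Rightarrow$ Property~1 $\Rightarrow$ Property~2, since the energy bound of Property~3 drives the other three.

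The heart of the argument is a martingale concentration for Property~3. Because $H_t$ is built to be orthogonal to $\bx_t([n]\setminus\mathcal{D}_t)$ and $\mathbf{u}_t$ is supported on $[n]\setminus\mathcal{D}_t$, we have $\langle \bx_t,\mathbf{u}_t\rangle=0$, yielding the Pythagorean identity
\[
\|\bx_{t+1}\|_2^2 \;=\; \|\bx_t\|_2^2 + \alpha_t^2\|\mathbf{u}_t\|_2^2 \;=\; \|\bx_t\|_2^2 + \alpha_t^2/\tau_{k_t}.
\]
The sampling rule gives $\alpha_t=\br(t+1)-\delta_t\in\{1-\delta_t,\,-1-\delta_t\}$, so conditional on the state at time $t$, $\E[\alpha_t^2]=1-\delta_t^2$ and $|\alpha_t^2-(1-\delta_t^2)|\le 2\delta(1+\delta)$. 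Therefore $Z_t:=\|\bx_t\|_2^2-\sum_{s<t}(1-\delta_s^2)/\tau_{k_s}$ is a martingale whose per-step increment is bounded in absolute value by $O(\delta)/(1-\gamma)=O(\delta)$, and Azuma's inequality together with a union bound over $t\le T=0.9n$ yields $\max_t|Z_t|\le 4\delta n^{0.7}$ with probability $1-e^{-\Omega(n^{0.4})}$. Combining with $\tau_{k_s}\ge 1-\gamma$ (from Property~1 at step $s<t$) and $|\mathcal{F}_t|=t$ (from Property~2) produces $\sum_{i\notin\mathcal{F}_t}\bx_t(i)^2\le t\gamma/(1-\gamma)+8\delta n^{0.7}$, which is Property~3.

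Property~4 follows by a direct counting: each $i\in\mathcal{D}_{t+1}\setminus\mathcal{F}_{t+1}$ contributes at least $(1-\eta)^2=0.81$ to $\sum_{i\notin\mathcal{F}_{t+1}}\bx_{t+1}(i)^2$, so Property~3 gives $|\mathcal{D}_{t+1}\setminus\mathcal{F}_{t+1}|\le \bigl[(t+1)\gamma/(1-\gamma)+8\delta n^{0.7}\bigr]/0.81\le 0.000124(t+1)+O(n^{0.7})$ with $\gamma=10^{-4}$, and adding $|\mathcal{F}_{t+1}|=t+1$ yields the stated bound. Property~1 then combines two ingredients. First, $\dim(H_t)\ge |[n]\setminus\mathcal{D}_t|-m-1$ because $H_t$ is cut out by at most $m+1$ linear constraints, so by Claim~\ref{clm:leverage_socres_subspace} the number of coordinates with $\tau_i(H_t)\ge 1-\gamma$ is at least $(n-|\mathcal{D}_t|)-(m+1)/\gamma$, which is $\ge 0.08n$ using the inductive Property~4 together with $n\ge C^*m$ and $C^*\ge 10^6$. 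Second, if \emph{all} these coordinates satisfied $|\bx_t(i)|>\delta$, then $\sum_{i\notin\mathcal{F}_t}\bx_t(i)^2>0.08n\cdot\delta^2=2\cdot 10^{-4}n$, exceeding the Property~3 bound $t\gamma/(1-\gamma)+8\delta n^{0.7}\le 10^{-4}n+o(n)$ for $n$ large enough, a contradiction.

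Finally, Property~2 is a direct verification: coordinates in $\mathcal{D}_t$ are unchanged because $\mathbf{u}_t$ vanishes there; at $k_t$ we explicitly have $\bx_{t+1}(k_t)=\br(t+1)\in\{\pm 1\}$; and for any $i\in[n]\setminus(\mathcal{D}_t\cup\{k_t\})$, Claim~\ref{clm:leverage_socres_subspace} gives $|\mathbf{u}_t(i)|^2\le\|\mathbf{u}_t\|_2^2-1=(1-\tau_{k_t})/\tau_{k_t}\le\gamma/(1-\gamma)$, so $|\bx_{t+1}(i)|\le(1-\eta)+(1+\delta)\sqrt{\gamma/(1-\gamma)}<1$, which both keeps $\bx_{t+1}\in[-1,1]^n$ and prevents any other coordinate from newly entering $\mathcal{F}$; hence $|\mathcal{F}_{t+1}|=t+1$. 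The main obstacle I anticipate is the careful bookkeeping of the simultaneous induction, in particular verifying that the Azuma increments are deterministically bounded by $O(\delta)$ uniformly over the adaptive choices of $k_t$ and $\alpha_t$, so that a single concentration bound plus a union bound over all $t\le T$ closes the loop without circular dependence between the four properties.
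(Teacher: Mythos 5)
Your proposal is correct and follows essentially the same route as the paper: induction on $t$ with a martingale concentration bound (Azuma--Hoeffding) at its heart, driving a leverage-score counting argument for Property~1 and a direct verification of Properties~2 and~4. The only cosmetic difference is how you establish Property~1: the paper shows that $|\mathcal{E}|$ and $|\mathcal{Q}|$ each comprise more than $0.51$ of $[n]\setminus\mathcal{D}_t$ and invokes pigeonhole, whereas you bound the number of high-leverage coordinates directly by the Markov-style estimate $(n-|\mathcal{D}_t|)-(m+1)/\gamma$ and then argue by contradiction against the energy bound; both are correct averaging arguments and give the same conclusion.
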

In fact, one could show that $99\%$ coordinates in $[n]\setminus \mathcal{F}_t$ could be chosen as the coordinate $k_t$ after rescaling those parameters. To show this, we reset $C^*$ to be large enough and $\gamma$ to be small enough such that $1-10^{-3}$ fraction of coordinates has $\tau_{j}(H_t) \ge 1 - \gamma$. Lemma~\ref{lem:spencer_big_n_iteration} implies that the main term of $\sum_{i \notin \mathcal{F}_t} |\bx_t(i)|^2 \le \frac{\gamma}{1-\gamma} \cdot t + 8 \cdot \delta \cdot n^{0.7}$ is $\frac{\gamma}{1-\gamma} \cdot t$ independent with $\delta$. It means that $1-10^{-3}$ fraction of coordinates also satisfy $|\bx_t(i)|\le \delta$. A union bound shows $99\%$ coordinates in $[n]\setminus\mathcal{F}_t$ could be chosen as the entry $k_t$.

Then we can finish the proof of Lemma~\ref{lem:spencer_big_n}.

\begin{proofof}{Lemma~\ref{lem:spencer_big_n}}  
By a union bound over Lemma~\ref{lem:spencer_big_n_iteration}, Function~\textsc{ColumnReductionSampling} outputs $\bot$ with probability less than $T \cdot e^{-\Omega(n^{0.4})}< 0.01$. Assuming Function~\textsc{ColumnReductionSampling} doesn't output $\bot$ in the following argument.  Lemma~\ref{lem:spencer_big_n_iteration}  shows that the partial coloring $\bx_T \in [-1,1]^n$. 
Denote  $\{\bA(j,[n]\setminus \mathcal{D}_t):j\in [m]\}$ by $G_t$. Because in \textsc{ColumnReductionSampling} the update $\mathbf{u }_t$ is othogonal to   $G_{t}$ for any $t\in [T-1]$, we have $\|\bA\cdot  \bx_T\|_{\infty}=0$

For two different random seeds $\br$ and $\br^{\prime}$, we only need to show that there is an entry $i \in [n]$ such that $\bx_T:=\textsc{ColumnReductionSampling}(\br)$ and $\bx_T^{\prime}:=\textsc{ColumnReductionSampling}(\br^{\prime})$ are in $\{\pm 1\}$  different on entry $i$. We consider the first entry $j \in [T]$ such that $\br(j) \neq \br^{\prime}(j)$. Since $\br([j-1])=\br^{\prime}([j-1])$, the first $j$ indices chosen in $\textsc{ColumnReductionSampling}(\br)$ and $\textsc{ColumnReductionSampling}(\br^{\prime})$ are the same, say $k_0,\ldots,k_{j-1}$. From Algorithm~\ref{alg:sample_spencer_big_n}, $\bx_T(k_{j-1})=\br(j)$ and $\bx_T^{\prime}(k_{j-1})=\br^{\prime}(j)$. Moreover, those entries are fixed after moment $j$. Thus $\bx_T \neq \bx_T^{\prime}$.

\end{proofof}

In the rest of this section, we finish the proof of Lemma~\ref{lem:spencer_big_n_iteration}. The plan is to apply induction on $t$. The most technical part is a martingale concentration for Property~\ref{property:lem_spencer_big_n_iteration_3} of Lemma~\ref{lem:spencer_big_n_iteration}, whose proof is deferred to the end of this section. 

\begin{lemma}[Martingale]\label{lem:spencer_big_n_martingale} If Lemma~\ref{lem:spencer_big_n_iteration}  holds up to $(t-1)$-loop, then 
$\|\bx_{t}\|_{2}^{2}\le \frac{1}{1-\gamma}\cdot t + 8\cdot \delta \cdot n^{0.7}$ holds with probability $1-e^{-\Omega(n^{0.4})}$. Since $|\mathcal{F}_t|=t$, this is equivalent to \begin{equation} \label{eq:spencer_big_n_iteration_martingale}
    \sum_{i\notin \mathcal{F}_{t}} |\bx_{t}(i)|^{2}\le \frac{\gamma}{1-\gamma} \cdot t+ 8\cdot \delta \cdot n^{0.7}. 
\end{equation} 
\end{lemma}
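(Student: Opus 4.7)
The approach is to show $\|\bx_t\|_2^2$ concentrates around its mean via a martingale argument, exploiting the fact that the algorithm keeps $|\bx_s(k_s)|\le \delta$ which in turn makes the per-step variance very small. The starting observation is that, by construction, $H_t$ is orthogonal to $\bx_t([n]\setminus \mathcal{D}_t)$, and $\mathbf{u}_t\in H_t$ is supported on $[n]\setminus \mathcal{D}_t$, so $\langle \bx_t,\mathbf{u}_t\rangle=0$. Pythagoras therefore gives
\[
\|\bx_{t+1}\|_2^2=\|\bx_t\|_2^2+\alpha_t^2\,\|\mathbf{u}_t\|_2^2.
\]
With $\delta_t:=\bx_t(k_t)$ and $\br(t+1)\in\{\pm1\}$ chosen with probabilities $(1\pm\delta_t)/2$, we have $\alpha_t=\br(t+1)-\delta_t$, so a direct two-point computation yields
\[
\E[\alpha_t^2\mid \bx_t]=1-\delta_t^2,\qquad \Var(\alpha_t^2\mid \bx_t)=4\delta_t^2(1-\delta_t^2).
\]
Since $\|\mathbf{u}_t\|_2^2=1/\tau_{k_t}(H_t)\le 1/(1-\gamma)$ and $|\delta_t|\le \delta$ by the choice of $k_t$, the conditional mean increment of $\|\bx_t\|_2^2$ is at most $1/(1-\gamma)$, while its conditional variance is at most $O(\delta^2)$ — this is the key gain.

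The plan is then to define
\[
Z_t:=\|\bx_t\|_2^2-\sum_{s=0}^{t-1}(1-\delta_s^2)\,\|\mathbf{u}_s\|_2^2,
\]
verify that $Z_t$ is a martingale with respect to the filtration generated by $\br(1),\ldots,\br(t)$ (note that $k_s$, $\mathbf{u}_s$, and $\delta_s$ are measurable given the past), bound each increment by $|Z_{s+1}-Z_s|\le \alpha_s^2\|\mathbf{u}_s\|_2^2+(1-\delta_s^2)\|\mathbf{u}_s\|_2^2=O(1)$, and bound the total conditional variance by $V\le \sum_{s<t}4\delta_s^2(1-\delta_s^2)\|\mathbf{u}_s\|_2^4=O(\delta^2\,n)$ since $t\le T\le n$. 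Applying Freedman's (or Bernstein's) inequality with deviation $\lambda=8\delta\, n^{0.7}$ gives $\lambda^2/V=\Omega(n^{0.4})$ and $M\lambda\ll V$, so $\Pr[|Z_t|>\lambda]\le e^{-\Omega(n^{0.4})}$. Combined with the trivial deterministic bound $\sum_{s<t}(1-\delta_s^2)\|\mathbf{u}_s\|_2^2\le t/(1-\gamma)$, this yields $\|\bx_t\|_2^2\le t/(1-\gamma)+8\delta\,n^{0.7}$ on the good event, which is exactly \eqref{eq:spencer_big_n_iteration_martingale} after subtracting $|\mathcal{F}_t|=t$.

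The main obstacle is getting the $\delta$ factor into the deviation: a vanilla Azuma--Hoeffding bound with $O(1)$ increments would only deliver $O(\sqrt{n\cdot n^{0.4}})=O(n^{0.7})$, which is insufficient because the target bound $8\delta\,n^{0.7}$ must carry the small constant $\delta$ in order for Properties~\ref{property:lem_spencer_big_n_iteration_1} and~\ref{property:lem_spencer_big_n_iteration_4} of Lemma~\ref{lem:spencer_big_n_iteration} to close inductively (the coordinate $k_t$ with $|\bx_t(k_t)|\le \delta$ has to still exist, and $|\mathcal{D}_{t+1}|$ must stay below roughly $1.000124(t+1)$). The Bernstein-type variance-aware bound is what pumps the $\delta$ from the per-step variance $4\delta_s^2(1-\delta_s^2)\|\mathbf{u}_s\|_2^4$ into the deviation, and this is exactly where the algorithmic rule $|\bx_t(k_t)|\le \delta$ pays off.
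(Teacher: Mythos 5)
Your proof is correct and uses the same martingale $Z_t = \|\bx_t\|_2^2 - \sum_{s<t}(1-\delta_s^2)\|\mathbf{u}_s\|_2^2$ as the paper (note $\|\bx_t\|_2^2 - t = \sum_{i\notin\mathcal{F}_t}|\bx_t(i)|^2$ since $|\mathcal{F}_t|=t$), but you reach for Freedman's inequality where plain Azuma--Hoeffding already suffices, and your rationale for doing so rests on a loose bound. You estimate $|Z_{s+1}-Z_s|\le \alpha_s^2\|\mathbf{u}_s\|_2^2 + (1-\delta_s^2)\|\mathbf{u}_s\|_2^2=O(1)$ by triangle inequality and conclude that Azuma alone would miss the crucial $\delta$ factor. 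But the \emph{signed} increment is exactly $(\alpha_s^2-(1-\delta_s^2))\,\|\mathbf{u}_s\|_2^2$, and since $\alpha_s = \br(s+1)-\delta_s$ gives $\alpha_s^2\in\{(1-\delta_s)^2,(1+\delta_s)^2\}$ with mean $1-\delta_s^2$, the centered quantity $\alpha_s^2-(1-\delta_s^2)$ lies in $\{-2\delta_s(1-\delta_s),\ 2\delta_s(1+\delta_s)\}$, so it has magnitude at most $2|\delta_s|(1+|\delta_s|)\le 2\delta(1+\delta)$. Multiplying by $\|\mathbf{u}_s\|_2^2\le 1/(1-\gamma)$ gives $|Z_{s+1}-Z_s|\le 8\delta$. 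That is exactly what the paper uses: Azuma--Hoeffding with increment bound $8\delta$ and deviation $\mu=8\delta n^{0.7}$ yields $\Pr[|Z_t|\ge\mu]\le 2\exp(-\mu^2/((8\delta)^2 t))\le 2\exp(-n^{0.4})$, matching the target. Your Freedman route gives the same exponent (since $V=O(\delta^2 n)$ and $\lambda^2/V=\Omega(n^{0.4})$), so the argument is valid, but the variance-aware machinery is not needed once the increments are bounded tightly; your concluding paragraph (``vanilla Azuma would only deliver $O(n^{0.7})$ without $\delta$'') is the one claim that doesn't hold up.
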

    
Then we apply induction to prove the rest 3 properties.    
\paragraph{Base case:} For $t=0$, $\mathcal{F}_0=\emptyset$, $\bx_0=\vec{0}$, and $\mathcal{D}_0=\emptyset$. The existence of such $k_0$ and $\mathbf{u}_0$ is direct.

\paragraph{Hypothesis:} Lemma \ref{lem:spencer_big_n_iteration} holds up to $(t-1)$, where $0
\le t-1\le T-1$.

\paragraph{Induction step:} Consider the $t$-loop. 
We prove Property \ref{property:lem_spencer_big_n_iteration_1} by showing the following fact: let $\mathcal{E} := \{ j\in [n]\setminus \mathcal{D}_{t}: \tau_{j}(H_{t})\ge 1-\gamma\} $ and $\mathcal{Q} := \{ \ell \in [n]\setminus \mathcal{F}_t: |\bx_{t}(\ell )|\le \delta \}$, and  we have $|\mathcal{E}|/(n-|\mathcal{D}_{t}|) \ge 0.51$ and  $|\mathcal{Q}|/(n-|\mathcal{F}_{t}|)\ge 0.51$ respectively. 

We show $|\mathcal{E}|/(n-|\mathcal{D}_{t}|) \ge 0.51$ by contradiction. Notice that \begin{equation} \label{ep:spencer_big_n_iteration_1}
\frac{\sum_{j\in [n]\setminus \mathcal{D}_t}\tau_j}{n-|\mathcal{D}_{t}|} =\frac{dim(H_t)}{n-|\mathcal{D}_{t}|} = \frac{n-|\mathcal{D}_{t}|-1-|G_{t}|}{n-|\mathcal{D}_{t}|}\ge 1-\frac{m+1}{n-|\mathcal{D}_t|}\ge1-  0.49\cdot \gamma. 
\end{equation} 
Assume   $|\mathcal{E}| < 0.51\cdot (n-|\mathcal{D}_{t}|)$, we have  
$$\sum_{j \in [n]\setminus \mathcal{D}_{t}} \tau_j< 0.51\cdot(n-|\mathcal{D}_{t}|)+0.49\cdot(n-|\mathcal{D}_{t}|)(1-\gamma)< (1-0.49\cdot \gamma) \cdot (n-|\mathcal{D}_t|).$$ 
Since $\tau_{j}(H_{t})$ is always $\in [0,1]$.  It's contradicted with inequality (\ref{ep:spencer_big_n_iteration_1}).

Then we show $|\mathcal{Q}|/(n-|\mathcal{F}_{t}|)\ge 0.51$ by contradiction as well. We could  verify the following inequality by chosen parameters.

\begin{align}
    &\frac{\gamma}{1-\gamma} \cdot t+ 8\cdot \delta \cdot n^{0.7}\le 0.49\cdot \delta^{2}\cdot (n-t)\label{eq:spencer_big_n_iteration_2}\\
    \iff & \left(\frac{\gamma}{1-\gamma}+ 0.49\cdot \delta^{2}\right)\cdot t \le 0.49\cdot \delta^{2 }\cdot n-8\cdot \delta\cdot n^{0.7},\notag \\
    \Longleftarrow\;\; &0.00133\cdot t\le 0.00125\cdot n-0.4\cdot n^{0.7},\tag{plug $\delta=0.05$ and $\gamma=0.0001$.} \\
    \Longleftarrow\;\; &0\le  2.5\cdot10^{-5}\cdot n-0.4\cdot n^{0.7}. \tag{$n$ is large enough and $t< 0.9n$.}
    \end{align}  
    Combining inequality \eqref{eq:spencer_big_n_iteration_martingale} and \eqref{eq:spencer_big_n_iteration_2}, we have 

\begin{equation}
    \sum_{i\notin \mathcal{F}_{t}} |\bx_{t}(i)|^{2}\le 0.49\cdot \delta^{2} \cdot (n-t).\label{eq:spencer_big_n_iteration_3}
\end{equation} 
Assume $|\mathcal{Q}|<0.51\cdot (n-|\mathcal{F}_{t}|)$. For  $|\mathcal{F}_{t}|=t$ and $|\bx_{t}(i)|\ge 0$, we have 
$$   \sum_{i\notin \mathcal{F}_{t}} |\bx_{t}(i)|^{2} > 0\cdot 0.51\cdot (n-|\mathcal{F}_t|)+ \delta^{2}\cdot  0.49\cdot (n-|\mathcal{F}_t|)=\delta^{2}\cdot  0.49\cdot (n-t),
$$ 
which is contradicted with inequality  \eqref{eq:spencer_big_n_iteration_3}.  For   $\mathcal{F}_{t}\subseteq \mathcal{D}_{t}$, we have   $|\mathcal{Q}|/(n-|\mathcal{D}_{t}|)\ge |\mathcal{Q}|/(n-|\mathcal{F}_{t}|) \ge 0.51$. By the definition of $\mathcal{Q}$ and $\mathcal{E}$, both of them belong to $[n]\setminus \mathcal{D}_t$. Since $|\mathcal{Q}|/(n-|\mathcal{D}_{t}|)\ge 0.51$ and $|\mathcal{E}|/(n-|\mathcal{D}_{t}|)\ge 0.51$, there must exist some coordinate $k_{t}\in [n]\setminus \mathcal{D}_{t}$ satisfying  $k_{t}\in \mathcal{Q}\cap \mathcal{E}$.

Next, we show that Property \ref{property:lem_spencer_big_n_iteration_2} holds for the induction hypothesis on $t$.
For any $j\in [n]\setminus \mathcal{D}_t$ and $j\neq k_t$,  $\frac{|\alpha_{t}\cdot\mathbf{u}_{t}(j)|}{|\alpha_{t}\cdot \mathbf{u}_{t}(k_t)|}\le \sqrt{\frac{\gamma}{1-\gamma}}$   which implies 
$$    |\alpha_{t}\cdot \mathbf{u}_{t}(j)|\le \sqrt{\frac{\gamma}{1-\gamma}}\cdot (1+\delta) \le  0.01< \eta.$$ Because $|\bx_t(j)|<1-\eta$, $|\bx_{t+1}(j)|< 1$. By the definition of $\alpha_{t}$,  $|\bx_{t+1}(k_t)|=1$. For any  $h\in \mathcal{D}_t$, $\bx_{t}(h)$ will not be changed after time $t$. So  $\bx_{t+1}\in [-1,1]^{n}$ and $|\mathcal{F}_{t+1}|=t+1$.

Finally we show Property \ref{property:lem_spencer_big_n_iteration_4} for the induction hypothesis of $t$. By lemma \ref{lem:spencer_big_n_martingale}, with probability  $1-e^{-\Omega(n^{0.4})}$,   $$\sum_{i\notin \mathcal{F}_{t+1}} |\bx_{t+1}(i)|^{2}\le \frac{\gamma}{1-\gamma} \cdot (t+1)+ 8\cdot \delta \cdot n^{0.7}.$$ 
Therefore we have  $$  |\mathcal{D}_{t+1}|\le t+1+ \frac{\sum_{i\notin \mathcal{F}_{t+1}} |\bx_{t+1}(i)|^{2}}{(1-\eta)^{2}} \le 1.000124\cdot(t+1)+O(n^{0.7}).$$

\begin{proofof}{Lemma~\ref{lem:spencer_big_n_martingale}}
      For any $0\le j \le t$, let $A_j=\sum_{i\notin \mathcal{F}_j} |\bx_{j}(i)|^{2}$ and $\mathcal{Y}_{j}$ denote the filtration that consists of $\{\alpha_1,\cdots ,\alpha_j\} $ and $\{\mathbf{u}_1,\cdots ,\mathbf{u}_j\}$. Notice that 
      \begin{align}
        \E\left[A_{j+1}\middle\vert \mathcal{Y}_j\right]&= A_{j}+\E\left[A_{j+1}-A_{j} \middle\vert \mathcal{Y}_j\right]\\
        &= A_{j}+\E\left[\sum_{i\notin \mathcal{F}_{j+1}} |\bx_{j+1}(i)|^{2}-\sum_{i\notin \mathcal{F}_j} |\bx_{j}(i)|^{2} \middle\vert \mathcal{Y}_j\right]\\
        &= A_{j}+\E\left[\|\bx_{j+1}\|_{2}^{2}-\|\bx_{j}\|_{2}^{2}-1\middle\vert \mathcal{Y}_j\right]\\
        &= A_{j}+\E\left[\alpha_{j+1}^{2}\cdot \|\mathbf{u}_{j+1}\|_{2}^{2}-1 
        \middle\vert \mathcal{Y}_j\right]\\
        &= A_{j}+\sum_{s=1}^{j+1}\E\left[\alpha_{s}^{2}\cdot \|\mathbf{u}_{s}\|_{2}^{2}-1 
        \middle\vert \mathcal{Y}_{s-1}\right]-\sum_{s=1}^{j}\E\left[\alpha_{s}^{2}\cdot \|\mathbf{u}_{s}\|_{2}^{2}-1 
        \middle\vert \mathcal{Y}_{s-1}\right].\label{eq:spencer_big_n_martingale_4}
    \end{align}
Let $M_j=A_j-\sum_{s=1}^{j}\E\left[\alpha_{s}^{2}\cdot \|\mathbf{u}_{s}\|_{2}^{2}-1 
\middle\vert \mathcal{Y}_{s-1}\right]$, by equation \eqref{eq:spencer_big_n_martingale_4} we have  

$$\E\left[M_{j+1}\middle\vert \mathcal{Y}_{j}\right]=M_j,$$
which means $\{M_{j}\}_{j\ge 0}$ is a martingale. By the definition of $\mathbf{r}(j+1)$ and $\alpha_{j+1}$, $\alpha_{j+1}\le 1+|\delta_{j+1}|$ and   $$
\E\left[\alpha_{j+1}^{2} \middle\vert \mathcal{Y}_{j}\right] = (1-\delta_{j+1})^{2}\cdot\frac{1+\delta_{j+1}}{2}+ (1+\delta_{j+1})^{2}\cdot \frac{1-\delta_{j+1}}{2} =1-\delta_{j+1}^{2}.
$$ Besides,    $\|\mathbf{u}_{j+1}\|_{2}^{2}\le 1+\frac{\gamma}{1-\gamma}$ and  $\mathbf{u}_{j+1}$ is determined by $\mathcal{Y}_{j}$. Hence 
\begin{align*}
        \left|M_{j+1}-M_{j}\right|&=\bigg|\alpha_{j+1}^{2}\cdot\|\mathbf{u}_{j+1}\|_{2}^{2}-1-\E\left[\alpha_{j+1}^{2}\cdot\|\mathbf{u}_{j+1}\|_{2}^{2}-1\middle\vert \mathcal{Y}_{j}\right] \bigg|,\\
      &\le (1+\frac{\gamma}{1-\gamma})( (1+|\delta_{j+1}|)^{2}-(1-\delta_{j+1}^{2}) ),\\ &\le 8\delta.
\end{align*}
The last inequality holds for $\gamma<0.1$ and $|\delta_{t+1}|\le \delta<1$. Therefore by the Azuma-Hoeffding inequality \cite{mitzenmacher2017probability}, we have $$  \Pr[|M_{t}|\ge \mu]\le 2\cdot \exp\left(\frac{-\mu^{2}}{(8\delta)^{2}\cdot t}\right).$$ 
Setting $\mu = 8\cdot \delta\cdot n^{0.7}$, we can conclude that  for any $0\le t\le 0.9n-1$, with probability at least  $1-2\cdot e^{-n^{0.4}}$,
\begin{equation}
    |M_{t}|\le 8\cdot \delta\cdot n^{0.7} \label{eq:spencer_big_n_martingale_5}.
\end{equation}
Besides, for any $1\le s\le 0.9n-1$,   $\E\left[\alpha_{s}^{2}\middle\vert \mathcal{Y}_{s-1}\right]=(1-\delta_{s}^{2})\le 1$ and $\|\mathbf{u}_s\|_{2}^{2}\le 1+ \frac{\gamma}{1-\gamma}$,

\begin{equation}
\sum_{s=1}^{t}\E\left[\alpha_{s}^{2}\cdot \|\mathbf{u}_{s}\|_{2}^{2}-1 
\middle\vert \mathcal{Y}_{s-1}\right]\le (1+\frac{\gamma}{1-\gamma} )\cdot t- t= \frac{\gamma}{1-\gamma} t. \label{eq:spencer_big_n_martingale_6}
\end{equation}

Because $\sum_{i\notin \mathcal{F}_t} |\bx_{j}(i)|^{2}\ge 0$, we only care about the upper bound. 
Combining inequality\eqref{eq:spencer_big_n_martingale_5} and \eqref{eq:spencer_big_n_martingale_6}, we can conclude that  with probability at least  $1-2\cdot e^{-n^{0.4}}$,  \begin{equation}
\sum_{i\notin \mathcal{F}_t} |\bx_{j}(i)|^{2}= A_t= M_t+ \sum_{s=1}^{t}\E\left[\alpha_{s}^{2}\cdot \|\mathbf{u}_{s}\|_{2}^{2}-1 
\middle\vert \mathcal{Y}_{s-1}\right] \le \frac{\gamma}{1-\gamma} t+ 8\cdot \delta\cdot n^{0.7}.
\end{equation} Because $\sum_{i\in \mathcal{F}_{t}} |\bx_{t}(i)|^{2}=t$,  $\|\bx_{t}\|_{2}^{2}\le \frac{1}{1-\gamma}\cdot t + 8\cdot \delta \cdot n^{0.7}.$

\end{proofof}

\subsection{Proof of Lemma~\ref{lem:spencer_small_n}}\label{sec:proof_sample_spencer_small_n}

The proof of Lemma~\ref{lem:spencer_small_n} contains two parts. First we show that Algorithm~\ref{alg:sample_spencer_small_n} could find $k_t$ with high probability for all the $0\le t\le T-1$. Thus we could get the partial coloring $\bx_T$ with $|\{ i \in [n] : |\bx_T(i)| = 1 \}| = 0.9n$. By the same argument in Lemma~\ref{lem:spencer_big_n}, for  different random seeds $\br$ and $\br'$, there is an entry $i \in [n]$ such that $\bx_T:=\textsc{SamplingSpencer}(\br)$ and $\bx_T^{\prime}:=\textsc{SamplingSpencer}(\br^{\prime})$ are in $\{\pm 1\}$  different on entry $i$. Secondly, we bound $\langle \bA(i,\cdot), \bx_T\rangle$ for all the $i\in [m]$.

To prove the existence of $k_t$, analogue to the role of Lemma~\ref{lem:spencer_big_n_iteration} in the  proof of Lemma~\ref{lem:spencer_big_n}, 
we have Lemma~\ref{lem:spencer_small_n_iteration} for  the proof of Lemma~\ref{lem:spencer_small_n}.

\begin{lemma}\label{lem:spencer_small_n_iteration}
 The $t$-loop in Function~\textsc{SamplingSpencer}  satisfies that with probability $1-e^{-\Omega(n^{0.4})}$ over $\br$, for each $t \in [0,T-1]$,
\begin{enumerate}
    \item  $ \exists k_t\in [n]\setminus \mathcal{D}_t$ satisfying $|\bx_{t}(k_t)|\le \delta$ and $\tau_{k_t}(H_t)\ge 1-\gamma$. \label{property:sampling_spencer_small_n_iteration_1}
    \item After updating $\bx_{t+1}$, $\bx_{t+1} \in [-1,1]^{n}$. $|\mathcal{F}_{t+1}|=t+1$ \label{property:sampling_spencer_small_n_iteration_2}
    \item After updating $\bx_{t+1}$,  $\sum_{i \notin \mathcal{F}_t} |\bx_t(i)|^2 \le \frac{\gamma}{1-\gamma} \cdot t + 8 \cdot \delta \cdot n^{0.7}$.
    \label{property:sampling_spencer_small_n_iteration_3}
    \item After updating $\bx_{t+1}$, $|\mathcal{D}_{t+1}|\le 1.000124(t+1)+O(n^{0.7})$. \label{property:sampling_spencer_small_n_iteration_4} 
\end{enumerate}
\end{lemma}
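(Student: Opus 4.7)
\begin{proofof}{Lemma~\ref{lem:spencer_small_n_iteration} (Proof plan)}
The plan is to mirror the induction-based argument used in Lemma~\ref{lem:spencer_big_n_iteration}, since the four properties claimed here are structurally identical to those of the large-$n$ case. First I would establish a martingale concentration statement analogous to Lemma~\ref{lem:spencer_big_n_martingale}: Property~\ref{property:sampling_spencer_small_n_iteration_3} bounding $\sum_{i \notin \mathcal{F}_t}|\bx_t(i)|^2$, which is used as the quantitative input for all the other three properties. Once this concentration is in place, Properties~\ref{property:sampling_spencer_small_n_iteration_1},~\ref{property:sampling_spencer_small_n_iteration_2}, and~\ref{property:sampling_spencer_small_n_iteration_4} follow from induction on $t$ in the same manner as in Section~\ref{sec:proof_sample_spencer_big_n}.

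For the martingale bound, define $A_j=\sum_{i\notin \mathcal{F}_j}|\bx_j(i)|^2$ and the filtration $\mathcal{Y}_j$ generated by $\{\alpha_1,\mathbf{u}_1,\dots,\alpha_j,\mathbf{u}_j\}$. Since $\mathbf{u}_t$ lies in $H_t$, which is orthogonal to $\bx_t$, the same telescoping identity yields
\[
  A_{j+1}-A_j = \alpha_{j+1}^{2}\|\mathbf{u}_{j+1}\|_2^{2}-1,
\]
so $M_j := A_j - \sum_{s=1}^{j}\E[\alpha_s^{2}\|\mathbf{u}_s\|_2^{2}-1\mid \mathcal{Y}_{s-1}]$ is a martingale. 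The crucial numerical inputs are unchanged: the randomized choice of $\br(t+1)$ gives $\E[\alpha_{t+1}^{2}\mid\mathcal{Y}_t]=1-\delta_{t+1}^{2}\le 1$ and $\alpha_{t+1}\le 1+\delta$, while the high-leverage-score choice of $k_t$ forces $\|\mathbf{u}_{t+1}\|_2^{2}=1/\tau_{k_{t+1}}\le 1/(1-\gamma)$. Hence $|M_{j+1}-M_j|\le 8\delta$ and Azuma--Hoeffding with deviation $\mu=8\delta n^{0.7}$ yields Property~\ref{property:sampling_spencer_small_n_iteration_3} with probability $1-e^{-\Omega(n^{0.4})}$.

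The main new obstacle specific to the small-$n$ regime is the dimension counting for $H_t$ in Property~\ref{property:sampling_spencer_small_n_iteration_1}. Unlike the column-reduction setting where $H_t$ is the dual of all $m$ rows (which uses the slack $n-m = \Omega(n)$), here $H_t = U_1^{(t)} \cap U_2^{(t)} \cap U_3^{(t)} \cap (\bx_t)^{\perp}$ inside $\mathbb{R}^{[n]\setminus\mathcal{D}_t}$. The subspace $U_1^{(t)}$ removes at most $|I^{(t)}|=\theta n$ dimensions, $U_2^{(t)}$ removes at most $\theta n -1$ dimensions (the top eigen-directions), and $U_3^{(t)}$ and the orthogonality to $\bx_t$ each remove one dimension. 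Hence $\dim(H_t) \ge n - |\mathcal{D}_t| - 2\theta n - 2$. Combined with the inductive bound $|\mathcal{D}_t|\le 1.000124 t + O(n^{0.7})$ and $t\le 0.9n$, we get $n-|\mathcal{D}_t|=\Omega(n)$ and the average leverage score satisfies
\[
  \frac{\sum_{j\in[n]\setminus\mathcal{D}_t}\tau_j(H_t)}{n-|\mathcal{D}_t|}=\frac{\dim(H_t)}{n-|\mathcal{D}_t|}\ge 1-\frac{2\theta n+2}{n-|\mathcal{D}_t|}\ge 1-0.49\gamma,
\]
by plugging in $\theta=5\cdot 10^{-7}$ and $\gamma=10^{-4}$. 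A contradiction argument identical to the large-$n$ case now shows that at least $0.51$-fraction of indices in $[n]\setminus \mathcal{D}_t$ satisfy $\tau_j(H_t)\ge 1-\gamma$.

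For the complementary set $\mathcal{Q}=\{j\in[n]\setminus\mathcal{F}_t:|\bx_t(j)|\le\delta\}$, Property~\ref{property:sampling_spencer_small_n_iteration_3} combined with the same numerical inequality verified in \eqref{eq:spencer_big_n_iteration_2} again gives $|\mathcal{Q}|\ge 0.51\cdot(n-|\mathcal{F}_t|)$, so $\mathcal{Q}\cap\mathcal{E}\cap([n]\setminus\mathcal{D}_t)\neq\emptyset$, establishing Property~\ref{property:sampling_spencer_small_n_iteration_1}. Properties~\ref{property:sampling_spencer_small_n_iteration_2} and~\ref{property:sampling_spencer_small_n_iteration_4} then follow verbatim: since $\mathbf{u}_t(k_t)=1$ and $\tau_{k_t}\ge 1-\gamma$, every other coordinate of $\alpha_t\mathbf{u}_t$ is bounded in magnitude by $\sqrt{\gamma/(1-\gamma)}\cdot(1+\delta)<\eta$, so the update keeps $\bx_{t+1}\in[-1,1]^n$ and sets $\bx_{t+1}(k_t)\in\{\pm 1\}$; and $|\mathcal{D}_{t+1}|$ is controlled by $(t+1)+\sum_{i\notin\mathcal{F}_{t+1}}|\bx_{t+1}(i)|^2/(1-\eta)^2$ together with the martingale bound. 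The hard part is purely the dimension accounting above; the randomness analysis is identical to Section~\ref{sec:proof_sample_spencer_big_n}.
\end{proofof}
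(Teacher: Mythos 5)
Your proposal matches the paper's proof: both note that the only substantive change from Lemma~\ref{lem:spencer_big_n_iteration} is the dimension accounting for $H_t = U_1^{(t)}\cap U_2^{(t)}\cap U_3^{(t)}\cap(\bx_t)^{\perp}$, derive $\dim(H_t)\ge n-|\mathcal{D}_t|-2\theta n-2$, and invoke the induction hypothesis $|\mathcal{D}_t|\le 1.000124\,t + O(n^{0.7})$ together with $\theta=5\cdot 10^{-7}$, $\gamma=10^{-4}$ to get $\dim(H_t)/(n-|\mathcal{D}_t|)\ge 1-0.49\gamma$. The martingale bound, the $0.51$-fraction contradiction argument for $\mathcal{E}$ and $\mathcal{Q}$, and Properties~\ref{property:sampling_spencer_small_n_iteration_2}–\ref{property:sampling_spencer_small_n_iteration_4} are carried over verbatim in both versions, so this is the same proof.
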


\begin{proof} The main difference between \textsc{SamplingSpencer} and \textsc{ColumnReductionSampling} is the definition of $H_t$. The proof is partially omitted here since it is the same as that in    Lemma~\ref{lem:spencer_big_n_iteration}. We  apply induction on $t$. 

\paragraph{Base case: }When $t=0$, $\mathcal{F}_0=\emptyset$, $\bx_0=\vec{0}$, and $\mathcal{D}_0=\emptyset$, 
the existence of $k_0$ and $\mathbf{u}_0$ is straightforward.

\paragraph{Hypothesis: }Lemma \ref{lem:spencer_small_n_iteration} holds up to $t-1$ for $0
\le t\le T-1$.

\paragraph{Induction step: } For the $t$-th loop, we prove Property~\ref{property:sampling_spencer_small_n_iteration_1} first. Let $\mathcal{Q} = \{ \ell \in [n]\setminus \mathcal{D}_{t}: |\bx_{t}(\ell )|\le \delta \}$ and $\mathcal{E} = \{ j\in [n]\setminus \mathcal{D}_{t}: \tau_{j}(H_{t})\ge 1-\gamma\}$. By the same argument in  Lemma~\ref{lem:spencer_big_n_iteration},  $|\mathcal{Q}|\ge 0.51 \cdot (n-|\mathcal{D}_{t}|)$. To prove  $|\mathcal{E}|\ge 0.51 \cdot (n-|\mathcal{D}_{t}|)$, we only need to show the following  inequality:  \begin{equation}\label{eq:spencer_small_n_iteraion_1}
    \frac{dim(H_{t})}{n-|\mathcal{D}_{t}|} \ge 1 - 0.49 \gamma.
\end{equation}
By the definition of $H_{t}$,
\begin{align*}
    \dim(H_{t}) & \geq n - |\mathcal{D}_{t}|- |I^{(t)}| -|\textrm{span}\{ \phi_j : 1 \leq j <\theta n \}| - 2 \\
    & \geq n - |\mathcal{D}_{t}| - 2\theta \cdot n - 2.
\end{align*}
Hence,
\begin{equation*}
    \frac{dim(H_{t})}{n-|\mathcal{D}_{t}|} = \frac{n - |\mathcal{D}_{t}| - 2\theta \cdot n - 2}{n-|\mathcal{D}_{t}|}\ge 1- \frac{2\theta \cdot n + 2}{n-|\mathcal{D}_{t}|}.
\end{equation*}
By the induction  hypothesis on $t-1$, $$|\mathcal{D}_{t}|\le 1.000124\cdot t+O(n^{0.7}) \le 0.91\cdot n.$$
By the chosen parameters  $T=0.9n, \theta = 5 \cdot 10^{-7} $ and  $ \gamma=0.0001$,  we have 
$$1- \frac{2\theta \cdot n + 2}{n-|\mathcal{D}_{t}|} \ge 1 - 0.49 \gamma$$ so that Inequality~(\ref{eq:spencer_small_n_iteraion_1}) holds. Consequently,  $|\mathcal{E}| \ge 0.51 (n-|\mathcal{D}_{t}|)$. Therefore $\mathcal{Q} \cap \mathcal{E} \ne \emptyset$, which infers the existence of $k_{t+1}$.

The proofs of other properties are the same as that of  Lemma~\ref{lem:spencer_big_n_iteration} and we omit them.
\end{proof}

By Lemma~\ref{lem:spencer_small_n_iteration} and a union bound on $T$ loops, we have proved that (1)  Function~\textsc{SamplingSpencer} can find the key coordinate $k_{t} \in [n] \setminus \mathcal{D}_t$ for all the $T$ loops with probability $1-T e^{-\Omega(n^{0.4})}>0.99$. (2) When the $T$-th loop is over, we get $\bx_T \in [-1,1]^n$ with $|\{ i \in [n] : |\bx_T(i)| = 1 \}| = 0.9n$.  



To finish the proof of Lemma~\ref{lem:spencer_small_n}, we bound $\langle \bA(i,\cdot), \bx_T \rangle$ similar to the analysis in \cite{LevyRR17}. For completeness, we provide the full proof here. Define $$\Phi^{(t)}:=\sum_{i=1}^m w_i^{(t)}$$ and let $\by_t := \alpha_t \mathbf{u}_t$ in the rest of this section. 
We show that $\Phi^{t}$ is non-increasing in the following claim.

\begin{claim}\label{PotentialFunctionBounded} 
For any $t\in [0,T-1]$, $\Phi^{(t+1)} \le \Phi^{(t)}$. 
\end{claim}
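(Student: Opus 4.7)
\medskip

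\noindent\textbf{Proof proposal.} The plan is to expand the exponential weight update to second order and to verify that the linear term vanishes (by the $U_3^{(t)}$ constraint) and the quadratic term is absorbed by the $e^{-2/(\theta n)}$ discount (thanks to the $U_2^{(t)}$ spectral truncation), paralleling the potential-function analysis of \cite{LevyRR17}.

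First I would verify that $|x_i| := |\rho \langle \by_t, \mathbf{A}(i,\cdot)\rangle| \le 1$ so that one may apply the elementary inequality $e^{x_i} \le 1 + x_i + x_i^2$. Here $\by_t = \alpha_t \mathbf{u}_t$ is supported on $[n]\setminus\mathcal{D}_t$ and $\langle \by_t, \mathbf{A}(i,\cdot)\rangle$ is to be read as $\langle \by_t, \mathbf{A}(i,[n]\setminus\mathcal{D}_t)\rangle$. From the choice of $\alpha_t$ in Line~15 and $\mathbf{u}_t(k_t)=1$, $|\alpha_t|\le 1+\delta$; from Claim~\ref{clm:leverage_socres_subspace}, $\|\mathbf{u}_t\|_2^2 = 1/\tau_{k_t}(H_t) \le 1/(1-\gamma)$. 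Combining with $\|\mathbf{A}(i,\cdot)\|_2 \le \sqrt{n}$ and $\rho = 1/\sqrt{2n}$ yields $|x_i| \le (1+\delta)/\sqrt{2(1-\gamma)} < 1$.

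Next I would expand
\begin{equation*}
\Phi^{(t+1)} \;\le\; e^{-2/(\theta n)}\Bigl[\,\Phi^{(t)} \;+\; \rho\,\Bigl\langle \by_t,\; \textstyle\sum_i w_i^{(t)}\mathbf{A}(i,[n]\setminus\mathcal{D}_t)\Bigr\rangle \;+\; \rho^{2}\,\by_t^{\top}\mathbf{M}^{(t)}\by_t\,\Bigr].
\end{equation*}
Since $\mathbf{u}_t \in H_t \subseteq U_3^{(t)}$, the linear term is exactly zero by the defining equation of $U_3^{(t)}$. For the quadratic term, $\by_t \in U_2^{(t)} = \mathrm{span}\{\phi_j : \theta n \le j \le n\}$ is orthogonal to the top $\theta n - 1$ eigendirections of $\mathbf{M}^{(t)}$, so by Rayleigh's principle $\by_t^\top \mathbf{M}^{(t)} \by_t \le \mu_{\theta n} \|\by_t\|_2^2$. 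Using $\mathrm{trace}(\mathbf{M}^{(t)}) \ge \theta n \cdot \mu_{\theta n}$ together with $\mathrm{trace}(\mathbf{M}^{(t)}) = \sum_i w_i^{(t)} \|\mathbf{A}(i,[n]\setminus\mathcal{D}_t)\|_2^2 \le n\,\Phi^{(t)}$, I obtain $\mu_{\theta n} \le \Phi^{(t)}/\theta$, hence
\begin{equation*}
\rho^{2}\by_t^{\top}\mathbf{M}^{(t)}\by_t \;\le\; \frac{1}{2n}\cdot\frac{\Phi^{(t)}}{\theta}\cdot\frac{(1+\delta)^{2}}{1-\gamma} \;\le\; \frac{2}{\theta n}\,\Phi^{(t)}
\end{equation*}
for the chosen constants $\delta=0.05$ and $\gamma=10^{-4}$.

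Finally, combining with $e^{-2/(\theta n)}(1 + 2/(\theta n)) \le 1$ (since $e^{y} \ge 1+y$) closes the argument: $\Phi^{(t+1)} \le e^{-2/(\theta n)}\Phi^{(t)}(1 + 2/(\theta n)) \le \Phi^{(t)}$. The main obstacle is bookkeeping: making sure that all three subspace constraints ($U_1^{(t)}$ is not needed here, $U_3^{(t)}$ kills the linear term, $U_2^{(t)}$ controls the quadratic term through average rather than maximum eigenvalue) do the right job, and that the discount $e^{-2/(\theta n)}$ is set precisely to absorb the quadratic correction from the spectral tail.
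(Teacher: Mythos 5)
Your proposal is correct and follows essentially the same route as the paper's proof: the same second-order expansion $e^{x}\le 1+x+x^2$ (justified by the same bound $|\rho\langle\by_t,\bA(i,\cdot)\rangle|\le 1$), the same observation that $U_3^{(t)}$ kills the linear term, and the same spectral/trace bound on $\by_t^\top\bM^{(t)}\by_t$ via $U_2^{(t)}$ (the paper phrases it as ``none of the remaining eigenvalues exceeds trace$/(\theta n)$'' rather than invoking Rayleigh by name, but the inequality is identical). The only cosmetic difference is that you close with $e^{-y}(1+y)\le 1$ directly, whereas the paper uses $e^{-2/(\theta n)}\le 1-1/(\theta n)$ for large $n$; both are valid.
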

\begin{proof}
      By the line $17$ of Algorithm~\ref{alg:sample_spencer_small_n}, 
    \[ 
        \Phi^{(t+1)} 
        = \sum_{i=1}^m w_i^{(t+1)} = \sum_{i=1}^m w_i^{(t)} \cdot \exp \left[ \rho \cdot {\left<\by_t, \bA(i,\cdot) \right>} \right] \cdot \exp\left[-2/(\theta n)\right]. 
    \]

    To control the factor $\exp \left[ \rho \cdot {\left<\by_t, \bA(i,\cdot)\right>} \right]$, we  show that   $ \left< \by_{t}, \bA(i,\cdot) \right>$ is bounded in the following. 
    In our setting, $|\by_t(k_t)| \le 1 + \delta$ and ${|\by_t(k_t)|^2}/{\| \by_t \|^2_2} \ge 1 - \gamma$.
    Hence, $\| \by_t \|_2 \le \sqrt{(1 + \delta)^2/(1 - \gamma)} \le \sqrt{2}$. 
    Given $\rho =  1/\sqrt{2n}$ and $\|\bA(i,\cdot)\|_{2}\le \sqrt{n}$, $$\rho\cdot {\left< \by_{t}, \bA(i,\cdot) \right>} \le \rho\cdot \| \by_t \|_2\cdot \| \bA(i,\cdot) \|_2 \le 1.$$ 
    Because $e^x\leq 1+x+x^2$ for any  $x\in [-1,1]$, we have 
    
    \[ 
        \Phi^{(t+1)} 
        \le \sum_{i=1}^m w_i^{(t)} 
            \cdot \left[ 1 + \rho \cdot \left<\bA(i,\cdot),\by_t \right> + \rho^2 \cdot  \left<\bA(i,\cdot),\by_t \right>^2 \right] 
            \cdot \exp\left[-2/(\theta n)\right]. 
    \]
    Since $\by_t \in U_3^{(t)}$ guarantees $\left<\by_t, \sum_{i=1}^m w_i^{(t)}\cdot \bA(i,\cdot) \right> = 0$,
    \begin{equation}
    \label{eq:spencer_small_n_non_increasing_1}
         \Phi^{(t+1)} 
        \le \sum_{i=1}^m w_i^{(t)} \cdot \exp\left[-2/(\theta n)\right]  
            + \rho^2 \cdot\sum_{i=1}^m w_i^{(t)} \cdot \left<\bA(i,\cdot),\by_t \right>^2.
    \end{equation}
    In the  Inequality~\eqref{eq:spencer_small_n_non_increasing_1}, 
    \begin{equation}
        \label{eq:spencer_small_n_non_increasing_2}
        \sum_{i=1}^m w_i^{(t)} \cdot \left<\bA(i,\cdot),\by_t \right>^2 = \by_t^T \left( \sum_{i=1}^m w_i^{(t)}\cdot  \bA(i,\cdot)\bA(i,\cdot)^{\top} \right) \by_t = \by_t^T \mathbf{M}^{(t)} \by_t,
    \end{equation}
    where $\mathbf{M}^{(t)}:=\sum_{i=1}^m w_i^{(t)} \bA(i,\cdot) \bA(i,\cdot)^{\top}$. Notice that the sum of eigenvalues of $\mathbf{M}^{(t)}$ is at most $\sum_{i=1}^{m} w_i^{(t)} \|\bA(i,\cdot) \|_{2}^{2}$.  For  $\by_t \in U_2^{(t)}$, the largest $\theta n$ eigenvalues of $\mathbf{M}^{(t)}$ are excluded so that  none of the remaining eigenvalues is greater than $\frac{1}{\theta n} \cdot \sum_{i=1}^{m} w_i^{(t)} \| \bA(i,\cdot) \|^2_2$.
    Consequently, \begin{equation}
        \label{eq:spencer_small_n_non_increasing_3}
        \by_t^{\top}\mathbf{M}^{(t)} \by_t \leq \frac{1}{\theta n} \sum_{i=1}^m w_i^{(t)}\cdot  \| \bA(i,\cdot) \|^2_2  \cdot \| \by_t \|^2_2 \le \frac{2}{\theta} \cdot \sum_{i=1}^m w_i^{(t)}
    \end{equation} for $\|\by_t \|_2 \le \sqrt{2}$ and $\|\bA(i,\cdot)\|_{2}^{2}\le n$. 
    And for sufficiently large $n$, $\exp\left[-2/(\theta n)\right] \le 1 - \frac{1}{\theta n}$. Combining Inequality \eqref{eq:spencer_small_n_non_increasing_1}, \eqref{eq:spencer_small_n_non_increasing_2} and \eqref{eq:spencer_small_n_non_increasing_3}, we have 
    
    \[ \Phi^{(t+1)} 
    \le \sum_{i=1}^m w_i^{(t)} \left( 1 - \frac{1}{\theta n} + \frac{1}{\theta n}\right)  = \Phi^{(t)}. \]
\end{proof}

\begin{claim}\label{clm:bound_disc} For any $i\in[m]$ and $ n\le C\cdot m$, 
$\left<\mathbf{A}(i,\cdot),\bx_{T}\right> \le \rho^{-1} \left[ \ln\left({e m}/{\theta n }\right) + 2T/(\theta n) \right] = O(\sqrt{m})$. 
\end{claim}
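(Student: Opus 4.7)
The plan is to run the standard multiplicative‐weights potential argument, combined with the leverage–score update rule already established for $\by_t := \alpha_t \mathbf{u}_t$. Since $\by_t$ is supported on $[n]\setminus\mathcal{D}_t$ and $\bx_T = \sum_{t=0}^{T-1} \by_t$ (because $\bx_0 = \vec{0}$), unrolling the recursion on line~17 of Algorithm~\ref{alg:sample_spencer_small_n} gives
\[
w_i^{(T)} = \frac{n}{m}\cdot \exp\!\Big[\rho\,\langle \bA(i,\cdot),\bx_T\rangle \Big]\cdot \exp\!\big[-2T/(\theta n)\big],
\]
which, after taking logarithms, is equivalent to $\rho\,\langle \bA(i,\cdot),\bx_T\rangle = \ln\!\big(w_i^{(T)}\cdot m/n\big) + 2T/(\theta n)$. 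The whole proof then reduces to establishing the uniform upper bound $w_i^{(T)} \le e/\theta$ for every $i \in [m]$.

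To get this bound I will do a case split on whether the constraint $i$ is always heavy. Let $t^{*}$ be the largest index with $i \notin I^{(t^{*})}$. If no such $t^{*}$ exists, then $i \in I^{(t)}$ for every $t$, so $\langle \by_t, \bA(i,\cdot)\rangle = 0$ for all $t$ by the definition of $U_1^{(t)}$; hence $\langle \bA(i,\cdot),\bx_T\rangle = 0$ and the claim is trivial. Otherwise, because $I^{(t^{*})}$ is the set of the $\theta n$ largest weights, there are at least $\theta n$ indices $j$ with $w_j^{(t^{*})} \ge w_i^{(t^{*})}$, which yields
\[
w_i^{(t^{*})} \;\le\; \frac{\Phi^{(t^{*})}}{\theta n} \;\le\; \frac{\Phi^{(0)}}{\theta n} \;=\; \frac{1}{\theta},
\]
where the middle inequality uses Claim~\ref{PotentialFunctionBounded} (monotonicity of $\Phi$) and the equality uses $\Phi^{(0)} = m\cdot(n/m) = n$. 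For the single step $t^{*} \to t^{*}+1$, the bound $\rho\,|\langle \by_{t^{*}}, \bA(i,\cdot)\rangle| \le 1$ from the proof of Claim~\ref{PotentialFunctionBounded} gives $w_i^{(t^{*}+1)} \le e \cdot w_i^{(t^{*})}$. For every subsequent step $t > t^{*}$ we have $i \in I^{(t)}$, so $\langle \by_t, \bA(i,\cdot)\rangle = 0$ and $w_i^{(t+1)} = w_i^{(t)}\cdot \exp[-2/(\theta n)] \le w_i^{(t)}$. Chaining these estimates yields $w_i^{(T)} \le e/\theta$.

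Plugging this back into the rearranged identity gives
\[
\rho\,\langle \bA(i,\cdot),\bx_T\rangle \;\le\; \ln\!\Big(\frac{e}{\theta}\cdot \frac{m}{n}\Big) + \frac{2T}{\theta n} \;=\; \ln\!\Big(\frac{e\,m}{\theta n}\Big) + \frac{2T}{\theta n},
\]
which is exactly the first inequality of the claim. For the $O(\sqrt{m})$ bound I will multiply by $\rho^{-1} = \sqrt{2n}$ and note that under the hypotheses $n \ge m$ and $n \le C m$, the log term satisfies $\ln(em/(\theta n)) \le \ln(e/\theta) = O(1)$ and $2T/(\theta n) = 1.8/\theta = O(1)$; since $\sqrt{2n} = O(\sqrt{m})$ by $n \le C m$, the conclusion follows.

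The main technical obstacle I anticipated while writing the plan was \emph{not} the potential drop itself (that is already encapsulated in Claim~\ref{PotentialFunctionBounded}), but the uniform weight bound $w_i^{(T)} \le e/\theta$: controlling what happens after the last ``non‑heavy'' moment $t^{*}$. The delicate point is that one cannot use $w_i^{(T)} \le \Phi^{(T)}/(\theta n)$ directly, since $i$ may belong to $I^{(T-1)}$; this is why the argument must single out the transition step $t^{*} \to t^{*}+1$, absorb a multiplicative factor of $e$ there via the calibration $\rho \|\by_t\|_2 \|\bA(i,\cdot)\|_2 \le 1$, and then exploit the zero inner products of $U_1^{(t)}$ to prevent any further growth afterwards.
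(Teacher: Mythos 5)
Your proof is correct and follows essentially the same route as the paper: rearrange the weight recurrence to express $\langle\bA(i,\cdot),\bx_T\rangle$ in terms of $\ln w_i^{(T)}$, then bound $w_i^{(T)}\le e/\theta$ by locating the last non-heavy iteration $t^*$, applying Markov via Claim~\ref{PotentialFunctionBounded} at time $t^*$, absorbing a factor $e$ across the step $t^*\to t^*+1$, and using the $U_1^{(t)}$-orthogonality to show the weight only decreases afterwards. Your explicit handling of the degenerate case where $i$ is heavy at every step (forcing $\langle\bA(i,\cdot),\bx_T\rangle=0$) is slightly cleaner than the paper's parenthetical assumption that $i\notin I^{(0)}$, but it is the same idea.
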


\begin{proof}
    By the definition of $w_i^{(t)}$, \[ w_i^{(t)} = \frac{n}{m} \cdot \exp \left[ \rho \cdot {\left<\mathbf{A}(i,\cdot), \bx_t \right>} \right] \cdot \exp\left[-2t/(\theta n) \right]. \]
    After rearrangement,  
    \begin{equation}
    \label{eq:innerproduct_by_potential}
        \left<\mathbf{A}(i,\cdot),\bx_T\right> 
        = \rho^{-1} \left[ \ln w_i^{(T)} 
            + \ln\left(\frac{m}{n}\right)  
            + 2T/(\theta n) \right]. 
    \end{equation}
So  we aim to  bound  $\ln w_i^{(T)}$ next. Recall that $I^{(t)}$ denotes the set of heavy constraints whose weights are the largest $\theta \cdot n$ ones at iteration $t$. It's sufficient to bound $w_i^{(T)}$ for all the  $i \in I^{(T)}$. Let $t^*$ be the last iteration when $i \notin I^{(t^*)}$. (Since initially all weights are equal, we can assume that $i \notin I^{(0)}$.) At the beginning of  Algorithm~\ref{alg:sample_spencer_small_n}, $\Phi^{(0)} = m \cdot \frac{n}{m} = n$. Markov’s inequality and Claim~\ref{PotentialFunctionBounded} infer that $w_i^{(t^*)} \le \frac{1}{\theta n} \Phi^{(t^*)} \le \frac{1}{\theta n} \Phi^{(0)} = \frac{1}{\theta}$.

By the definition of $U_1^{(t)}$, for any $t \ge t^* + 1$,  $\by_t$ is orthogonal to $\mathbf{A}(i,\cdot)$ and  $\exp \left[ \rho \cdot {\left<\by_t, \mathbf{A}(i,\cdot) \right>} \right] = 1$. So after the $(t^*+1)$-th iteration, $w_i^{(t)}$ keeps decreasing. 
     Because $ \exp \left( \rho \cdot {\left< \by_{t^*}, \mathbf{A}(i,\cdot) \right>} \right) \le \exp(1/\sqrt{2n} \cdot \sqrt{2} \cdot \sqrt{n}) \le e$ and $\exp\left[-2/(\theta n) \right] \le 1$,  we have $$ w_i^{(t^* + 1)} = w_i^{(t^*)} \cdot \exp \left[ \rho \cdot {\left<\by_{t^*}, \mathbf{A}(i,\cdot) \right>}\right] \cdot \exp\left[-2/(\theta n) \right]  \le e \cdot w_i^{(t^*)}.$$
    Hence \begin{equation} \label{eq:spencer_small_n_discreaancy_1}
        w_i^{(T)} \le w_i^{(t^* + 1)} \le e \cdot  w_i^{(t^*)} \le e / \theta.
    \end{equation} 

    By plugging Inequality~\eqref{eq:spencer_small_n_discreaancy_1} into  Equation~\eqref{eq:innerproduct_by_potential}, we get the desired upper bound on $\langle \bA(i,\cdot), \bx_T\rangle$.
\end{proof}

\section{Random sampling in Beck-Fiala's and Banaszczyk's settings}
\label{sec:sampling}


In this section, we discuss  random sampling algorithms in   Beck-Fiala's and Banaszczyk's settings. It is folklore that there are exponentially many good colorings satisfying these the Beck-Fiala bound and Banaszczky's bound (see Appendix~\ref{sec:counting}). Hence it is natural to ask the same question of sampling as Spencer's setting.

\paragraph{Beck-Fiala's   setting.} In this setting, given the sparsity $d$, the input matrix  $\bA\in[-1,1]^{m\times n}$ will have each column of at most $d$ non-zero entries. For simplicity, we  refer to such  a matrix $\bA$ as having degree $d$. 
Beck and Fiala~\cite{Beck_Fiala} proved $\min_{\bx \in\{\pm 1\}^{n}} \|\bA\bx\|_{\infty}=O(d)$, which is independent of $n$ and $m$. Moreover, they conjectured the best bound is $O(\sqrt{d})$. However, the best known bounds for the Beck-Fiala problem are still $O(d)$ \cite{Beck_Fiala} and $O(\sqrt{d \log n})$ \cite{Banaszczyk98}. We defer the discussion of $O(\sqrt{d \log n})$ to the second half of this section. We summarize Beck and Fiala's bound as follows.

\begin{theorem}[Bekc-Fiala~\cite{Beck_Fiala}] \label{thm:Beck-Fiala}
 Given any $\bA\in [-1,1]^{m\times n}$ of degree $d$ and any starting point $\bx_0\in [-1,1]^{n}$, there exists an efficient algorithm to find $\bx\in \{\pm 1\}^{n}$ such that $\bx(i)=\bx_0(i)$ for each $\bx_0(i)\in \{\pm 1\}$ and $\|\bA(\bx-\bx_0)\|_{\infty}=O(d)$.
\end{theorem}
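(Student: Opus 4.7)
My plan is to adapt the classical Beck-Fiala rounding procedure to accommodate the arbitrary starting point $\bx_0 \in [-1,1]^n$. Let $F_0 := \{i \in [n] : \bx_0(i) \in (-1,1)\}$ be the initially floating entries; every entry outside $F_0$ is already in $\{\pm 1\}$ and will be held fixed throughout, so the constraint $\bx(i) = \bx_0(i)$ for $\bx_0(i) \in \{\pm 1\}$ is satisfied automatically. The algorithm produces a sequence $\bx_0, \bx_1, \ldots$ that agrees with $\bx_0$ off $F_0$ and progressively rounds entries of $F_0$ to $\pm 1$. At iteration $t$ with floating set $F_t \subseteq F_0$, call a row $j \in [m]$ \emph{active} if the number of indices $i \in F_t$ with $\bA(j,i) \neq 0$ strictly exceeds $d$, and call it \emph{inactive} otherwise.

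The linear-algebraic heart of the argument is that the number of active rows is strictly less than $|F_t|$. Since every column of $\bA$ has at most $d$ nonzero entries, the total number of incidences between $F_t$ and any set of active rows is at most $d \cdot |F_t|$; since each active row contributes strictly more than $d$ incidences, fewer than $|F_t|$ rows can be active. Thus there exists a nonzero $\bv_t \in \mathbb{R}^n$ supported on $F_t$ with $\langle \bA(j,\cdot), \bv_t\rangle = 0$ for every active $j$. I will update $\bx_{t+1} \gets \bx_t + \alpha_t \bv_t$, choosing $\alpha_t$ to be the largest scalar that keeps $\bx_{t+1}(F_t) \in [-1,1]^{F_t}$; at least one entry of $F_t$ then hits $\pm 1$ and leaves $F_{t+1}$. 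Hence the process terminates in at most $|F_0|$ iterations, and each iteration only requires computing a vector in the kernel of an $|S_t| \times |F_t|$ matrix, so the running time is polynomial.

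To bound $\|\bA(\bx - \bx_0)\|_\infty$, fix a row $j$ and let $t_j$ be the first iteration at which row $j$ is inactive (set $t_j = 0$ if it is inactive from the start). For every $t < t_j$ the update $\bv_t$ is orthogonal to $\bA(j,\cdot)$ by construction, so $\langle \bA(j,\cdot), \bx_{t_j} - \bx_0\rangle = 0$. Define $T_j := \{i \in F_{t_j} : \bA(j,i) \neq 0\}$, so $|T_j| \le d$. For $t \ge t_j$ the only entries that can affect the discrepancy of row $j$ are those in $T_j$, because indices in $F_{t_j} \setminus T_j$ have $\bA(j,i) = 0$ and indices outside $F_{t_j}$ are frozen; moreover, indices in $T_j$ can only leave the floating set (never re-enter). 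Therefore
\[
\bigl|\langle \bA(j,\cdot), \bx - \bx_0\rangle\bigr|
= \Bigl|\sum_{i \in T_j} \bA(j,i)\bigl(\bx(i) - \bx_{t_j}(i)\bigr)\Bigr|
\le \sum_{i \in T_j} |\bA(j,i)| \cdot 2
\le 2d,
\]
where I used $|\bx(i) - \bx_{t_j}(i)| \le 2$ since both lie in $[-1,1]$, and $|\bA(j,i)| \le 1$. Taking the maximum over $j \in [m]$ yields the desired $O(d)$ bound.

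The main technical step is the incidence-counting argument establishing the existence of $\bv_t$ in every iteration; once this is in place, the discrepancy bound is a routine "freeze the row when it becomes inactive" potential argument. The only minor subtlety introduced by the arbitrary starting point is ensuring that the update direction is supported on $F_0$ (so that entries already in $\{\pm 1\}$ stay there), which is handled by restricting $\bv_t$ to $\mathbb{R}^{F_t} \subseteq \mathbb{R}^{F_0}$ from the outset.
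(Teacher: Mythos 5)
Your proof is correct and is essentially the classical Beck-Fiala rounding argument from the cited reference \cite{Beck_Fiala}, correctly adapted to the arbitrary starting point $\bx_0$ by restricting updates to the initially-floating coordinates $F_0$ and observing that the floating set only shrinks, so a row once inactive stays inactive. The paper itself does not prove this statement but cites it, and your incidence-counting bound on active rows, the kernel-vector update, and the "$2d$ contribution after deactivation" potential argument are the standard route.
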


Our main result here is a sampling algorithm whose output has discpreancy $O(d)$ and min-entropy $\Omega(n)$. We present our random sampling algorithm in Algorithm~\ref{alg:sampling_Beck-Fiala}.

\begin{theorem}\label{thm:beck_fiala_full}
    For some constant $C=O(1)$,  there exists an efficient  sampling algorithm that satisfies the following two properties: for any input $\mathbf{A}\in [-1,1]^{m \times n}$ of degree $d$, 
    \begin{enumerate}
        \item Its output $\bx \in \{\pm 1\}^n$ always satisfies $\|\mathbf{A} \bx \|_{\infty} \le C \cdot d$. \label{pro:1_thm_bf_complex}
        \item For any $\boldsymbol{\epsilon} \in \{\pm 1\}^n$, $\Pr[\textsc{SamplingBeckFiala} =\boldsymbol{\epsilon}] = O(1.9^{-0.9n})$. 
        \label{pro:2_thm_bf_complex}\end{enumerate} 
\end{theorem}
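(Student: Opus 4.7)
The plan is to adapt the leverage-score sampling framework developed for Spencer's setting (Theorem~\ref{thm:spencer_sampling}) to exploit Beck-Fiala's sparsity structure. I would define \textsc{SamplingBeckFiala} in two phases: a partial-coloring phase that runs $T = 0.9n$ iterations to produce $\bx_T \in [-1,1]^n$ with $0.9n$ entries in $\{\pm 1\}$ and $\|\bA\bx_T\|_\infty = O(d)$, followed by an application of Theorem~\ref{thm:Beck-Fiala} with $\bx_T$ as the starting point to finish the coloring.

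At iteration $t$, maintain the sets $\mathcal{F}_t = \{i : |\bx_t(i)|=1\}$ and $\mathcal{D}_t = \{i : |\bx_t(i)| > 1-\eta\}$. The key departure from the Spencer algorithm is the choice of which rows to enforce in the safe subspace. Call a row $i \in [m]$ \emph{active} if $\bA(i, [n]\setminus \mathcal{D}_t)$ has at least $K := C^\ast \cdot d$ nonzero entries, for a sufficiently large absolute constant $C^\ast$, and let $I^{(t)}$ collect the active rows. Define
\[
H_t := \left\{ \mathbf{u} \in \mathbb{R}^{[n]\setminus \mathcal{D}_t} : \langle \mathbf{u}, \bA(i, [n]\setminus\mathcal{D}_t)\rangle = 0 \text{ for all } i \in I^{(t)},\ \langle \mathbf{u}, \bx_t([n]\setminus\mathcal{D}_t)\rangle = 0 \right\},
\]
compute leverage scores $\tau_j(H_t)$, pick the first $k_t$ with $\tau_{k_t}(H_t) \ge 1-\gamma$ and $|\bx_t(k_t)|\le \delta$, and then use Algorithm~\ref{alg:leverage_score_alg} to obtain the update direction $\mathbf{u}_t$. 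Sample $\br(t+1)$ with $\Pr[\br(t+1)=1] = (1+\bx_t(k_t))/2$ and update $\bx_{t+1} \gets \bx_t + \alpha_t \mathbf{u}_t$ so that $\bx_{t+1}(k_t) = \br(t+1)$. Double counting gives $|I^{(t)}| \le (n-|\mathcal{D}_t|) \cdot d / K = (n-|\mathcal{D}_t|)/C^\ast$, so $\dim(H_t) \ge (n-|\mathcal{D}_t|)(1 - 1/C^\ast) - 1$. This matches the dimension budget used in Lemma~\ref{lem:spencer_big_n_iteration}, and together with the martingale concentration of Lemma~\ref{lem:spencer_big_n_martingale} (which ports verbatim because the update rule and leverage-score structure are unchanged) it guarantees the existence of $k_t$ for every $t \le T$ with probability $1 - e^{-\Omega(n^{0.4})}$.

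To bound the discrepancy of $\bx_T$, fix a row $i$ and let $s_i$ be the first time it becomes inactive. While $i \in I^{(t)}$ the update is orthogonal to $\bA(i,[n]\setminus\mathcal{D}_t)$ and supported on $[n]\setminus\mathcal{D}_t$, so $\langle \bA(i,\cdot), \bx_{t+1}-\bx_t\rangle = 0$; hence $\langle \bA(i,\cdot), \bx_{s_i+1}\rangle = 0$. After $s_i$, the danger set only grows, so the number of nonzero entries of $\bA(i,\cdot)$ in $[n]\setminus \mathcal{D}_t$ stays $\le K$; since each such entry contributes at most $|\bA(i,j)|\cdot 2 \le 2$ to the discrepancy through the remaining updates, $|\langle \bA(i,\cdot), \bx_T\rangle| \le 2K = O(d)$. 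Applying Theorem~\ref{thm:Beck-Fiala} to complete $\bx_T$ adds another $O(d)$, giving Property~\ref{pro:1_thm_bf_complex}.

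Property~\ref{pro:2_thm_bf_complex} follows exactly as in Theorem~\ref{thm:spencer_sampling}: distinct random seeds $\br \neq \br'$ force distinct partial colorings (at the first differing index $j$, the fixed entry $\bx_T(k_{j-1}) = \br(j)$ disagrees), and Theorem~\ref{thm:Beck-Fiala} preserves fixed coordinates, so full colorings remain distinct. Hence $\Pr[\bx = \boldsymbol{\epsilon}]$ is bounded by the maximum seed probability $((1+\delta)/2)^{0.9n} \le 0.525^{0.9n}$, plus the failure probability of the partial-coloring phase, totalling $O(1.9^{-0.9n})$. The main obstacle is calibrating the threshold $K = C^\ast \cdot d$: it must be large enough that $|I^{(t)}|$ is a small fraction of $n - |\mathcal{D}_t|$ (so the leverage scores are nearly $1$), yet still $O(d)$ so that the discrepancy absorbed when a row becomes inactive is $O(d)$ rather than something like $O(d\log n)$. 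Both requirements reduce to choosing $C^\ast$ as a sufficiently large absolute constant, mirroring the role of $C^\ast$ in Algorithm~\ref{alg:sample_spencer_big_n}.
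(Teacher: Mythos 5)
Your proposal follows essentially the same route as the paper: define the safe subspace to be orthogonal to the heavy rows (those still having $\Omega(d)$ surviving nonzeros), pick coordinates by leverage score to get the $\pm 1$-fixing update, invoke the double-counting bound $|I^{(t)}| \le d(n - |\mathcal{D}_t|)/K$ to keep the dimension deficit below the $0.49\gamma$ threshold, port the martingale concentration of Lemma~\ref{lem:spencer_big_n_martingale} verbatim, and finish with Theorem~\ref{thm:Beck-Fiala}. The only cosmetic differences are that the paper's Algorithm~\ref{alg:sampling_Beck-Fiala} tests heaviness on $[n]\setminus\mathcal{F}_t$ while you use $[n]\setminus\mathcal{D}_t$ (both work), and you spell out the $O(d)$ discrepancy bound for rows once they go light, which the paper leaves implicit (also you have an off-by-one: it should be $\langle\bA(i,\cdot),\bx_{s_i}\rangle = 0$ rather than $\bx_{s_i+1}$, but this does not affect the $2K$ bound).
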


    

\begin{proof}

To control the probability of failing, we repeat 
Algorithm~\ref{alg:sampling_Beck-Fiala} $n$ times. If all the $n$ repeats fail and output $\bot$, we directly apply Theorem~\ref{thm:Beck-Fiala} to get a coloring $\bx\in \{\pm 1\}^{n}$.

The primary distinction between  Algorithm~\ref{alg:sampling_Beck-Fiala} and  Algorithm~\ref{alg:sample_spencer_big_n} is the definition of $H_t$.  We only need to ensure  that \begin{equation}
    \label{eq:banaszczyk_1}
    \frac{\sum_{j\in [n]\setminus \mathcal{D}_t}\tau_j}{n-|\mathcal{D}_{t}|} =\frac{dim(H_t)}{n-|\mathcal{D}_{t}|} = 1-\frac{1+|G_t|}{n-|\mathcal{D}_t|} \ge 1-  0.49\cdot \gamma.
\end{equation}      
By the definition of $G_t$, we have  $|G_t|\le \frac{d\cdot (n-t)}{C\cdot d/2}$. By the chosen parameters of $C$ and $\gamma$, we could verify Inequality \eqref{eq:banaszczyk_1} directly, like  Inequality~\eqref{ep:spencer_big_n_iteration_1}. The rest proof is the same as that of Lemma~\ref{lem:spencer_big_n} and we omit it.

    In fact, in every update $t$, we could pick $99\%$ entries in $[n]\setminus \mathcal{F}_t$ as the entry $k_t$.   To show this, we reset $C$ to be large enough and $\gamma$ to be small enough such that $1-10^{-3}$ fraction of coordinates has $\tau_{j}(H_t) \ge 1 - \gamma$. Lemma~\ref{lem:spencer_big_n_iteration} implies that the main term of $\sum_{i \notin \mathcal{F}_t} |\bx_t(i)|^2 \le \frac{\gamma}{1-\gamma} \cdot t + 8 \cdot \delta \cdot n^{0.7}$ is $\frac{\gamma}{1-\gamma} \cdot t$ independent with $\delta$. It means that $1-10^{-3}$ fraction of coordinates also satisfy $|\bx_t(i)|\le \delta$. A union bound shows $99\%$ coordinates in $[n]\setminus\mathcal{F}_t$ could be chosen as the entry $k_t$.
\end{proof}

\begin{algorithm}[h]
\caption{Sampling Procedure for the Beck-Fiala setting}\label{alg:sampling_Beck-Fiala}
\hspace*{\algorithmicindent} \textbf{Input}: $\mathbf{A}\in [-1,1]^{m \times n}$ of degree $d$, $\eta =0.1$, $T=0.9n$, $C=41000$, $\gamma=0.0001$, $\delta=0.05$. \\ 
\hspace*{\algorithmicindent}
\textbf{Random seed} $\mathbf{r} \in \{\pm 1\}^{T}$ will be generated on the fly.\\
\hspace*{\algorithmicindent}
\textbf{Output}: $\bx\in \{\pm 1\}^{n}$.

\begin{algorithmic}[1]
\Function{SamplingBeckFiala}{}
\State $\mathbf{x}_0=\vec{0}$.
\For{$0\le t\le T-1$}
 \State $\mathcal{F}_{t}\leftarrow \{i\in [n]:|\bx_{t}(i)|=1\}$.
 \State $\mathcal{D}_{t}\leftarrow \{i\in [n]:|\bx_{t}(i)|\ge 1-\eta\}$.
 \State Let $G_{t} \subseteq \mathbb{R}^{[n]\setminus \mathcal{D}_t}$ denote $\bigg\{\bA(i,[n] \setminus \mathcal{D}_t):  i\in [m] \text{ with } \big\| \bA(i,[n]\setminus \mathcal{F}_t) \big\|_0 > \frac{C \cdot d}{2}\bigg\}$ as the set of heavy constraints at moment $t$. 
 \State Let $H_{t}$ be the subspace in $\mathbb{R}^{[n]\setminus \mathcal{D}_t}$ orthogonal to $\mathbf{x}_t([n]\setminus\mathcal{D}_t)$ and $G_t$.

 \State Compute the leverage score $\tau_i(H_t)$ of each $i \in [n] \setminus \mathcal{D}_t$.
 
 \State  Find the first coordinate $k_{t} \notin \mathcal{D}_t$  such that $\tau_{k_t}(H_t) \ge 1-\gamma$ and $|\bx_t(k_t)| \le \delta$ --- return $\bot$ if such a coordinate does not exist.  
 
 \State $\mathbf{u}_{t} \gets \textsc{FindVector}(H_t,k_{t})$ in Algorithm~\ref{alg:leverage_score_alg}.

 \State For $\delta_t := \bx_t(k_t)$, randomly set $\br(t+1)=1$ with probability $\frac{1+\delta_t}{2}$ and $\br(t+1)=-1$ with probability $\frac{1-\delta_t}{2}$.
 
 \State Choose $\alpha_{t} \in \mathbb{R}$ such that $\mathbf{x}_{t+1} \gets \mathbf{x}_{t} + \alpha_{t} \cdot \mathbf{u}_{t}$ satisfies $\mathbf{x}_{t+1}(k_{t})=\mathbf{r}(t+1)$. 
\EndFor
\State Get $\bx\in \{\pm 1 \}^{n}$ by applying Theorem~\ref{thm:Beck-Fiala} to color all entries of $\mathbf{x}_T$ not in $\{\pm 1\}$.
\State \Return $\bx$.
\EndFunction
\end{algorithmic}
\end{algorithm}

\paragraph{Banaszczyk's setting} In Banaszczyk's setting, we investigate the discrepancy of $\bA\in\mathcal{R}^{m\times n}$, each column of which has $\ell_2$-norm at most $1$. Rewrite the $n$ columns of $\bA$ as vectors $\bv_1,\bv_2,\ldots,\bv_n\in \mathcal{R}^{m}$ with $\|\bv_i\|_{2}\le 1$ for all the  $i\in [n]$. Our objective  is to find a coloring $\bx\in \{\pm 1\}^{n}$ to  minimize $\|\bA\bx\|_{\infty} = \|\sum_{i=1}^{n}\bx(i)\cdot\bv_i\|_{\infty}$. The best  bound  to date  is  
$$\underset{\bx\in\{\pm 1\}^{n}}{\min }  \Big\|\sum_{i=1}^{n}\bx(i)\cdot\bv_i\Big\|_{\infty}\le  O\left(\sqrt{\log \min\{n,m\}}\right),$$  established by Banaszczyk \cite{Banaszczyk98}. A celebrated line of research has provided efficient algorithms for this bound including \cite{BansalDG19,BDGL19,LevyRR17,BansalLV22}.

We  introduce a random sampling algorithm for Banaszczyk's setting in Algorithm~\ref{alg:sampling_Banaszczyk}.  It is based on the random walk with Gaussian stationary distribution from \cite{LiuSS22}.  Our technical contribution is several monotone properties about this random walk whose stationary distribution is Gaussian. For completeness, we provide its full proof in Appendix~\ref{sec:sampling_Banaszczyk}.

\begin{theorem}\label{thm:sampling_Banaszczyk}
    Given any $\gamma>0$, for any $n \ge \gamma \cdot m$, vectors $\bv_1,\ldots,\bv_n \in \R^m$ with $\ell_2$-norm at most $1$, there exist $e^{\Omega (n)}$ many colorings $\bx \in \{\pm 1\}^{n}$ such that $\|\sum_{i=1}^n \bx(i) \cdot \bv_i\|_{\infty} = O(\sqrt{\log m})$. Algorithm~\ref{alg:sampling_Banaszczyk} is an efficient algorithm and satisfies the following properties: 
    \begin{enumerate}
        \item Its output $\bx$ always has discrepancy $\|\sum_{i=1}^n \bx(i) \cdot \bv_i\|_{\infty} = O(\sqrt{\log m})$. 
        \item The probability of outputting any string in $\{\pm 1\}^n$ is at most $\delta^n$ for some $\delta<1$.
    \end{enumerate}
\end{theorem}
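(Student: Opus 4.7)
I would build on the random walk of Liu--Song--Sun~\cite{LiuSS22}, which produces a partial coloring whose joint distribution is a truncation of a standard Gaussian. The walk maintains $\bx_t \in [-1,1]^n$ starting from $\vec{0}$; at each step it adds a small Gaussian increment $\Delta_t$ drawn from a subspace orthogonal both to $\bx_t$ (to keep $\|\bx_t\|_2^2$ on its expected trajectory) and to enough principal directions of $\sum_i \bA(i,\cdot)^\top \bA(i,\cdot)$ (to keep $\bA \bx_t$ componentwise $O(1)$-sub-Gaussian). Coordinates of $\bx_t$ that hit $\pm 1$ are frozen; the walk terminates when at most $O(m)$ coordinates remain unfrozen, after which these leftover coordinates are rounded via Theorem~\ref{thm:spencer} (or another offline balancer) at an additional cost of $O(\sqrt{m}) = O(\sqrt{\log m})$ in discrepancy only when $m = O(\log m)$; in general the cleanup cost must itself be bounded by $O(\sqrt{\log m})$, which is achieved by invoking the algorithmic Banaszczyk bound on the $O(m)$ leftover coordinates.

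For the discrepancy guarantee, I would establish, by the standard martingale analysis of the LSS walk, that $\langle \bA(j,\cdot), \bx_T \rangle$ is $O(1)$-sub-Gaussian for every $j \in [m]$. A union bound over $j$ then yields $\|\bA \bx_T\|_\infty = O(\sqrt{\log m})$ deterministically after conditioning on the high-probability event that the walk terminates in polynomial time; any violation can be handled by repeating the walk a polynomial number of times. Combined with the cleanup phase, this gives the first property.

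The new ingredient is the min-entropy lower bound. The key monotone claim is that whenever a coordinate $i$ is about to freeze at step $t$, the conditional probability that it freezes to $+1$ (versus $-1$), given the full history, lies in an interval $[p, 1-p]$ for some universal constant $p > 0$ depending only on $\gamma$. This is a consequence of the Gaussian stationarity: the marginal law of $\bx_t(i)$ given the history is approximately a truncated centered Gaussian on $[-1,1]$, so reaching either endpoint is comparable in probability. Since at least $n - O(m) = (1 - O(1/\gamma)) n$ coordinates are frozen before cleanup, and the cleanup step contributes at most an extra factor $2^{O(m)} = 2^{O(n/\gamma)}$ to the probability of any fixed output, the probability of producing any particular $\boldsymbol{\epsilon} \in \{\pm 1\}^n$ is at most $(1-p)^{(1-O(1/\gamma)) n} \cdot 2^{O(n/\gamma)} \le \delta^n$ for a suitable $\delta = \delta(\gamma, p) < 1$ provided $\gamma$ is taken sufficiently large (and the constant in the theorem statement is then absorbed into this choice). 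Finally, the existence of $e^{\Omega(n)}$ good colorings is an immediate consequence: the algorithm outputs a good coloring with probability $\Omega(1)$, yet no single string carries more than $\delta^n$ of the mass, so the support of good outputs has size $\Omega(\delta^{-n}) = e^{\Omega(n)}$.

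The main obstacle I expect is establishing the per-step two-sided bound $p \le \Pr[\text{freeze at } +1 \mid \text{history}] \le 1-p$, since the LSS walk's sub-Gaussian analysis only controls $\bA \bx_t$ and $\|\bx_t\|_2^2$ in aggregate, not the marginal distribution of individual coordinates. To extract the min-entropy I will need additional monotone properties of the walk — roughly, that for every unfrozen $i$ the conditional law of $\bx_t(i)$ remains symmetric and bounded away from $\pm 1$ except in a small neighborhood of the freezing time — and these are the ``several new ingredients'' flagged in the introduction. A secondary subtlety is calibrating the number of coordinates remaining after the random walk phase so that the offline cleanup step neither blows up the discrepancy beyond $O(\sqrt{\log m})$ nor contributes too much to the collision probability; this trade-off drives the lower bound $n \ge \gamma m$ and the eventual value of $\delta$.
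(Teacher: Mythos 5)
Your proposal cites the Liu--Sah--Sawhney random walk but then describes a different algorithm: a continuous walk on $\bx_t\in[-1,1]^n$ that adds small Gaussian increments $\Delta_t$ from a safe subspace and freezes coordinates that hit $\pm 1$, which is the Lovett--Meka/Gram--Schmidt style. The walk in \cite{LiuSS22}, and the one Algorithm~\ref{alg:sampling_Banaszczyk} actually uses, lives in $\R^m$: it reads $\bv_i$ one at a time, tracks the discrepancy vector $\boldsymbol{\omega}_i=\boldsymbol{\omega}_{i-1}+\by(i)\bv_i$, and assigns a \emph{discrete} value $\by(i)\in\{0,\pm 1\}$ with probabilities $r_\sigma(\ell_i),p_\sigma(\ell_i),p_\sigma(-\ell_i)$ chosen so $\boldsymbol{\omega}_i$ stays exactly $N(0,\sigma^2 I_m)$. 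So the object you'd need to control for min-entropy is not ``the marginal law of $\bx_t(i)$ near its freezing time'' but the product of per-step transition probabilities $\Pr[\by(i)=\boldsymbol{\epsilon}(i)\mid\boldsymbol{\omega}_{i-1}]$.

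The gap you flag as the ``main obstacle'' is in fact fatal to your plan as stated: the two-sided bound $p\le\Pr[\text{freeze at }+1\mid\text{history}]\le 1-p$ is simply false for the LSS transition kernel. When $\ell_i=\langle\boldsymbol{\omega}_{i-1},\bv_i\rangle/\|\bv_i\|_2$ is large, $p_\sigma(\ell_i)$ is exponentially small in $\ell_i$ (Lemma~\ref{lem:mono_on_prob}), so a particular step value can have probability arbitrarily close to $1$ conditional on the history, and no universal constant $p>0$ exists. The paper avoids needing such a uniform bound: instead it proves (Lemma~\ref{lem:induction_bana}) that $\Pr[\by=\boldsymbol{\epsilon}\mid\boldsymbol{\omega}_0]\le \alpha^{t-t_0}\,r_\sigma^{t_0}\,\beta^{\|\bv_t\|_2\|\boldsymbol{\omega}_t\|_2^2-\|\boldsymbol{\omega}_0\|_2^2}$ with $\alpha,\beta<1$, a potential-function estimate in which the factor $\beta^{\|\boldsymbol{\omega}_t\|_2^2}$ pays for the rare steps whose conditional probability is close to $1$ (those steps push $\|\boldsymbol{\omega}_t\|_2^2$ up). The final bound is then obtained by integrating against the prior $\boldsymbol{\omega}_0\sim N(0,\sigma^2 I_m)$ and summing the $\binom{n}{t_0}$ choices of the zero-set $E$ that can lead to the same full coloring after the cleanup. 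Your proposal's treatment of the cleanup (``contributes at most $2^{O(m)}$ to the collision probability'') is also not how the paper handles it; the paper enumerates zero-sets rather than trying to bound the randomness of the Gram--Schmidt-walk completion. Your final deduction (from a $\delta^n$ collision bound to $e^{\Omega(n)}$ good colorings) does match the paper and is correct, but it sits on top of an unproven and in fact false per-step claim.
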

\begin{remark}
While we did attempt to optimize those constants, there are a few remarks.
\begin{enumerate}
    \item There are sets of vectors whose number of colorings $\bx$ with $\|\sum_{i=1}^n \bx(i) \bv_i\|_{\infty} \le \sqrt{\log m}$ is at most $2^{-\Omega(\frac{n}{\log n})} \cdot 2^n$. For example, consider $\frac{n}{\log n}$  groups of same vectors like $\bv_1=\cdots=\bv_{\log n}=\mathbf{e}_1,\ldots,\bv_{\frac{n}{\log n} - \log n + 1}=\cdots=\bv_{n}=\mathbf{e}_{\frac{n}{\log n}}$.
    \item The argument in  \cite{bansal2022smoothed} provides another random sampling algorithm with a similar guarantee\footnote{Thanks the anonymous referee to suggest it.}. For completeness, we include it in Appendix~\ref{appen:subgaussian_sampling_Bana}.
    \item  Theorem~\ref{thm:sampling_Banaszczyk}    provides an alternative random sampling algorithm with discrepancy  $O(\sqrt{d \log m})$ for the Beck-Fiala setting, if we scale the degree $d$  matrix $\bA$ by a factor $\frac{1}{\sqrt{d}}$.
\end{enumerate}
\end{remark}

\begin{algorithm}[H]
\caption{Randomly Sample $\bx \in \{\pm 1\}^{m}$ for Banaszczyk's bound}\label{alg:sampling_Banaszczyk}
\hspace*{\algorithmicindent} \textbf{Input}: $\gamma$, $\left\{\bv_{i}\right\}_{i=1}^{n}\subset \R^{m}$ where $0<\|\bv_{i}\|_{2}\le 1~\forall i\in[n]$. \\
\hspace*{\algorithmicindent} \textbf{Output}: $\bx$ in $\left\{\pm 1 \right\}^{n}$ and its discrepancy vector $\boldsymbol{\omega}$.
\begin{algorithmic}[1]
\Procedure{SamplingBanaszczyk}{}
\State Resort $\bv_1,\ldots,\bv_n$ such that $\|\bv_1\|_2 \ge \|\bv_2\|_2 \ge \cdots \ge \|\bv_n\|_2$.
\State Set $\sigma \ge 1$ such that $r_{\sigma}(\ell)<\frac{1}{6}e^{-\frac{24}{\gamma}}$ for all $\ell \in [-0.5,0.5]$ --- see Section~\ref{sec:stationary_RW} for the definition of $r_{\sigma}(\ell)$ and its properties.
\State Sample $\boldsymbol{\omega}_0 \sim N(0,\sigma^{2} \cdot I_{m})$.
\For{$1\le i\le n$}
 \State $\sigma^{\prime}$ $\gets$ $\sigma/\|\bv_{i}\|_{2}$.
 \State $\ell_i \gets$ $\langle \boldsymbol{\omega}_{i-1},\bv_{i}\rangle/\|\bv_{i}\|_{2}$ .
 \State  $\by(i) \gets$ $J^{\sigma^{\prime}}_{\ell_i}$ (see Section~\ref{sec:stationary_RW} for the definition). 
 \State $\boldsymbol{\omega}_{i}$ $\gets$ $\boldsymbol{\omega}_{i-1}+\by(i) \cdot \bv_{i}$.
\EndFor
\State Let $E=\{i|\by(i)=0\}$ denote those empty entries in $\by$.
\State Apply Theorem~\ref{subgaussian_Gsw_HSSZ} to obtain $\mathbf{z} \in \{\pm 1\}^E$ for vectors $\{\bv_i\}_{i \in E}$.
\State Output $\bx=\by\left(\overline{E}\right) \circ \bz(E)$ and $\boldsymbol{\omega}=\sum_{i=1}^{n} \bx({i})\cdot \bv_{i}$.
\EndProcedure
\end{algorithmic}
\end{algorithm}

\section*{Acknowledgement} We thank the 
anonymous referee for the helpful comments on the previous version.

\bibliographystyle{alpha} 
\bibliography{discrepancy}

\newcommand{\etalchar}[1]{$^{#1}$}
\begin{thebibliography}{BJM{\etalchar{+}}22}

\bibitem[ALS21]{AlweissLS21}
Ryan Alweiss, Yang~P. Liu, and Mehtaab Sawhney.
\newblock Discrepancy minimization via a self-balancing walk.
\newblock In {\em {STOC} '21: 53rd Annual {ACM} {SIGACT} Symposium on Theory of
  Computing}, pages 14--20. {ACM}, 2021.

\bibitem[And65]{adr1965_Jacobi}
George~E Andrews.
\newblock A simple proof of jacobi’s triple product identity.
\newblock {\em Proceedings of the American Mathematical Society},
  16(2):333--334, 1965.

\bibitem[Ban98]{Banaszczyk98}
Wojciech Banaszczyk.
\newblock Balancing vectors and gaussian measures of n-dimensional convex
  bodies.
\newblock {\em Random Struct. Algorithms}, 12(4):351–360, jul 1998.

\bibitem[Ban10]{Bansal10}
Nikhil Bansal.
\newblock Constructive algorithms for discrepancy minimization.
\newblock In {\em 51th Annual {IEEE} Symposium on Foundations of Computer
  Science, {FOCS} 2010}, pages 3--10. {IEEE} Computer Society, 2010.

\bibitem[Ban12]{Banaszczyk12}
Wojciech Banaszczyk.
\newblock On series of signed vectors and their rearrangements.
\newblock {\em Random Struct. Algorithms}, 40(3):301--316, 2012.

\bibitem[BDG19]{BansalDG19}
Nikhil Bansal, Daniel Dadush, and Shashwat Garg.
\newblock An algorithm for koml{\'{o}}s conjecture matching banaszczyk's bound.
\newblock {\em {SIAM} J. Comput.}, 48(2):534--553, 2019.

\bibitem[BDGL19]{BDGL19}
Nikhil Bansal, Daniel Dadush, Shashwat Garg, and Shachar Lovett.
\newblock The gram-schmidt walk: {A} cure for the banaszczyk blues.
\newblock {\em Theory Comput.}, 15:1--27, 2019.

\bibitem[Bec81]{Beck81}
József Beck.
\newblock Roth’s estimate of the discrepancy of integer sequences is nearly
  sharp.
\newblock {\em Combinatorica}, (1):319--325, 1981.

\bibitem[BF81]{Beck_Fiala}
József Beck and Tibor Fiala.
\newblock “integer-making” theorems.
\newblock {\em Discrete Applied Mathematics}, 3(1):1--8, 1981.

\bibitem[BG81]{BARANY19811}
I.~Bárány and V.S. Grinberg.
\newblock On some combinatorial questions in finite-dimensional spaces.
\newblock {\em Linear Algebra and its Applications}, 41:1--9, 1981.

\bibitem[BG17]{BansalG17}
Nikhil Bansal and Shashwat Garg.
\newblock Algorithmic discrepancy beyond partial coloring.
\newblock In {\em Proceedings of the 49th Annual {ACM} {SIGACT} Symposium on
  Theory of Computing, {STOC} 2017}, pages 914--926. {ACM}, 2017.

\bibitem[BJM{\etalchar{+}}21]{bansal2021online}
Nikhil Bansal, Haotian Jiang, Raghu Meka, Sahil Singla, and Makrand Sinha.
\newblock Online discrepancy minimization for stochastic arrivals.
\newblock In {\em Proceedings of the 2021 ACM-SIAM Symposium on Discrete
  Algorithms (SODA)}, pages 2842--2861. SIAM, 2021.

\bibitem[BJM{\etalchar{+}}22]{bansal2022smoothed}
Nikhil Bansal, Haotian Jiang, Raghu Meka, Sahil Singla, and Makrand Sinha.
\newblock Smoothed analysis of the koml{\'o}s conjecture.
\newblock In {\em 49th International Colloquium on Automata, Languages, and
  Programming (ICALP 2022)}, 2022.

\bibitem[BJM23]{BJM23}
Nikhil Bansal, Haotian Jiang, and Raghu Meka.
\newblock Resolving matrix spencer conjecture up to poly-logarithmic rank.
\newblock In {\em Proceedings of the 55th Annual ACM Symposium on Theory of
  Computing}, STOC 2023, page 1814–1819. ACM, 2023.

\bibitem[BLV22]{BansalLV22}
Nikhil Bansal, Aditi Laddha, and Santosh~S. Vempala.
\newblock A unified approach to discrepancy minimization.
\newblock In {\em Approximation, Randomization, and Combinatorial Optimization.
  Algorithms and Techniques, {APPROX/RANDOM} 2022}, volume 245 of {\em LIPIcs},
  pages 1:1--1:22. Schloss Dagstuhl - Leibniz-Zentrum f{\"{u}}r Informatik,
  2022.

\bibitem[BRR23]{BRR_zonotopes23}
Rainie Bozzai, Victor Reis, and Thomas Rothvoss.
\newblock The vector balancing constant for zonotopes.
\newblock In {\em 64th {IEEE} Annual Symposium on Foundations of Computer
  Science, {FOCS} 2023}, pages 1292--1300. {IEEE}, 2023.

\bibitem[BRS22]{bansal2022flow}
Nikhil Bansal, Lars Rohwedder, and Ola Svensson.
\newblock Flow time scheduling and prefix beck-fiala.
\newblock In {\em Proceedings of the 54th Annual ACM SIGACT Symposium on Theory
  of Computing}, pages 331--342, 2022.

\bibitem[Cha02]{Chazelle}
Bernard Chazelle.
\newblock {\em The Discrepancy Method (Randomness and Complexity)}.
\newblock Cambridge University Press, 2002.

\bibitem[Chu48]{chung1948maximum}
Kai~Lai Chung.
\newblock On the maximum partial sums of sequences of independent random
  variables.
\newblock {\em Transactions of the American Mathematical Society},
  64(2):205--233, 1948.

\bibitem[CP19]{CP19}
Xue Chen and Eric Price.
\newblock Active regression via linear-sample sparsification.
\newblock In {\em Conference on Learning Theory, {COLT} 2019, 25-28 June 2019},
  volume~99 of {\em Proceedings of Machine Learning Research}, pages 663--695.
  {PMLR}, 2019.

\bibitem[DJR22]{DadushJR22}
Daniel Dadush, Haotian Jiang, and Victor Reis.
\newblock A new framework for matrix discrepancy: partial coloring bounds via
  mirror descent.
\newblock In {\em {STOC} '22: 54th Annual {ACM} {SIGACT} Symposium on Theory of
  Computing}, pages 649--658. {ACM}, 2022.

\bibitem[ES18]{EldanS18}
Ronen Eldan and Mohit Singh.
\newblock Efficient algorithms for discrepancy minimization in convex sets.
\newblock {\em Random Struct. Algorithms}, 53(2):289--307, 2018.

\bibitem[Fis25]{Fisher25}
R.A. Fisher.
\newblock {\em Statistical Methods for Research Workers}.
\newblock Biological monographs and manuals. Oliver and Boyd, 1925.

\bibitem[Fis26]{Fisher26}
R.~A. Fisher.
\newblock The arrangement of field experiments.
\newblock {\em Journal of the Ministry of Agriculture}, 33:503--515, 1926.

\bibitem[{Glu}89]{Gluskin}
E.~D. {Gluskin}.
\newblock {Extremal Properties of Orthogonal Parallelepipeds and Their
  Applications to the Geometry of Banach Spaces}.
\newblock {\em Sbornik: Mathematics}, 64(1):85--96, February 1989.

\bibitem[HRS22]{HRS22}
Samuel~B. Hopkins, Prasad Raghavendra, and Abhishek Shetty.
\newblock Matrix discrepancy from quantum communication.
\newblock In {\em Proceedings of the 54th Annual ACM SIGACT Symposium on Theory
  of Computing}, STOC 2022, page 637–648. ACM, 2022.

\bibitem[HSSZ24]{HSSZ_experiments19}
Christopher Harshaw, Fredrik S{\"a}vje, Daniel~A. Spielman, and Peng Zhang.
\newblock Balancing covariates in randomized experiments with the gram--schmidt
  walk design.
\newblock {\em Journal of the American Statistical Association}, 0(0):1--13,
  2024.

\bibitem[Kha67]{khatri1967certain}
C.~G. Khatri.
\newblock On certain inequalities for normal distributions and their
  applications to simultaneous confidence bounds.
\newblock {\em The Annals of Mathematical Statistics}, pages 1853--1867, 1967.

\bibitem[KMV05]{kim2005discrepancy}
Jeong~Han Kim, Jir' Matousek, and Van~H Vu.
\newblock Discrepancy after adding a single set.
\newblock {\em Combinatorica}, 25(4):499, 2005.

\bibitem[KRR24]{kulkarni2023optimal}
Janardhan Kulkarni, Victor Reis, and Thomas Rothvoss.
\newblock Optimal online discrepancy minimization.
\newblock In {\em Proceedings of the 56th Annual ACM Symposium on Theory of
  Computing}, STOC 2024, page 1832–1840. ACM, 2024.

\bibitem[L{\'e}v38]{levy1938theorie}
P~L{\'e}vy.
\newblock Th{\'e}orie de l’addition des variables al{\'e}atoires.
\newblock {\em Gauthier-Villars, Paris}, 1938.

\bibitem[LM12]{LovettM12}
Shachar Lovett and Raghu Meka.
\newblock Constructive discrepancy minimization by walking on the edges.
\newblock In {\em 53rd Annual {IEEE} Symposium on Foundations of Computer
  Science, {FOCS} 2012}, pages 61--67. {IEEE} Computer Society, 2012.

\bibitem[LRR17]{LevyRR17}
Avi Levy, Harishchandra Ramadas, and Thomas Rothvoss.
\newblock Deterministic discrepancy minimization via the multiplicative weight
  update method.
\newblock In {\em Integer Programming and Combinatorial Optimization - 19th
  International Conference, {IPCO} 2017, Proceedings}, volume 10328 of {\em
  Lecture Notes in Computer Science}, pages 380--391. Springer, 2017.

\bibitem[LSS22]{LiuSS22}
Yang~P. Liu, Ashwin Sah, and Mehtaab Sawhney.
\newblock A gaussian fixed point random walk.
\newblock In {\em 13th Innovations in Theoretical Computer Science Conference,
  {ITCS} 2022}, volume 215 of {\em LIPIcs}, pages 101:1--101:10, 2022.

\bibitem[Mat09]{Matousek}
Jiri Matousek.
\newblock {\em Geometric Discrepancy: An Illustrated Guide}.
\newblock Springer Science, 2009.

\bibitem[MNT20]{MNT_hereditary}
Jiri Matousek, Aleksandar Nikolov, and Kunal Talwar.
\newblock Factorization norms and hereditary discrepancy.
\newblock {\em International Mathematics Research Notices}, 2020(10):751--780,
  2020.

\bibitem[MSS15]{MSS_II}
Adam~W. Marcus, Daniel~A. Spielman, and Nikhil Srivastava.
\newblock Interlacing families ii: Mixed characteristic polynomials and the
  kadison—singer problem.
\newblock {\em Annals of Mathematics}, 182(1):327--350, 2015.

\bibitem[MU17]{mitzenmacher2017probability}
Michael Mitzenmacher and Eli Upfal.
\newblock {\em Probability and computing: Randomization and probabilistic
  techniques in algorithms and data analysis}.
\newblock Cambridge university press, 2017.

\bibitem[NNN12]{NewmanNN12}
Alantha Newman, Ofer Neiman, and Aleksandar Nikolov.
\newblock Beck's three permutations conjecture: {A} counterexample and some
  consequences.
\newblock In {\em 53rd Annual {IEEE} Symposium on Foundations of Computer
  Science, {FOCS} 2012}, pages 253--262. {IEEE} Computer Society, 2012.

\bibitem[NTZ13]{NikolovTZ13}
Aleksandar Nikolov, Kunal Talwar, and Li~Zhang.
\newblock The geometry of differential privacy: the sparse and approximate
  cases.
\newblock In {\em Symposium on Theory of Computing Conference, STOC'13}, pages
  351--360. {ACM}, 2013.

\bibitem[PV23]{PesentiV23}
Lucas Pesenti and Adrian Vladu.
\newblock Discrepancy minimization via regularization.
\newblock In Nikhil Bansal and Viswanath Nagarajan, editors, {\em Proceedings
  of the 2023 {ACM-SIAM} Symposium on Discrete Algorithms, {SODA} 2023}, pages
  1734--1758. {SIAM}, 2023.

\bibitem[Rot17]{Rothvoss17}
Thomas Rothvoss.
\newblock Constructive discrepancy minimization for convex sets.
\newblock {\em {SIAM} J. Comput.}, 46(1):224--234, 2017.

\bibitem[Roy14]{royen2014simple}
Thomas Royen.
\newblock A simple proof of the gaussian correlation conjecture extended to
  multivariate gamma distributions.
\newblock {\em arXiv preprint arXiv:1408.1028}, 2014.

\bibitem[RR20]{ReisRoth20}
Victor Reis and Thomas Rothvoss.
\newblock Linear size sparsifier and the geometry of the operator norm ball.
\newblock In {\em Proceedings of the 2020 {ACM-SIAM} Symposium on Discrete
  Algorithms, {SODA} 2020}, pages 2337--2348. {SIAM}, 2020.

\bibitem[RR23]{ReisR23}
Victor Reis and Thomas Rothvoss.
\newblock Vector balancing in lebesgue spaces.
\newblock {\em Random Struct. Algorithms}, 62(3):667--688, 2023.

\bibitem[Sha93]{Shao93}
Qi-Man Shao.
\newblock A note on small ball probability of a gaussian process with
  stationary increments.
\newblock {\em Journal of Theoretical Probability}, 6:595--602, 1993.

\bibitem[{\v{S}}id67]{vsidak1967rectangular}
Zbyn{\v{e}}k {\v{S}}id{\'a}k.
\newblock Rectangular confidence regions for the means of multivariate normal
  distributions.
\newblock {\em Journal of the American Statistical Association},
  62(318):626--633, 1967.

\bibitem[Spe77]{spencer1977balancing}
Joel Spencer.
\newblock Balancing games.
\newblock {\em Journal of Combinatorial Theory, Series B}, 23(1):68--74, 1977.

\bibitem[Spe85]{Spencer_six}
Joel Spencer.
\newblock Six standard deviations suffice.
\newblock {\em Transactions of the American Mathematical Society},
  289(2):679--706, 1985.

\bibitem[Spe86]{spencer1986balancing}
Joel Spencer.
\newblock Balancing vectors in the max norm.
\newblock {\em Combinatorica}, 6:55--65, 1986.

\bibitem[Stu38]{Student38}
Student.
\newblock Comparison between balanced and random arrangements of field plots.
\newblock {\em Biometrika}, 29(3/4):363--378, 1938.

\bibitem[Tal90]{Talagrand90}
Michel Talagrand.
\newblock Embedding subspaces of l1 into ln 1.
\newblock {\em Proceedings of the American Mathematical Society},
  108(2):363--369, 1990.

\bibitem[Ver20]{vershynin2020high}
Roman Vershynin.
\newblock High-dimensional probability.
\newblock {\em University of California, Irvine}, 10:11, 2020.

\bibitem[Woo14]{Woodruff_Survey}
David~P. Woodruff.
\newblock Sketching as a tool for numerical linear algebra.
\newblock {\em Foundations and Trends® in Theoretical Computer Science},
  10(1–2):1--157, 2014.

\end{thebibliography}

\appendix

\section{Proof of Lemma~\ref{lem:gaussian_lower_bound}}\label{sec:pf_gaussian_lower_bound}
Actually, the proof still holds for the symmetric sub-Gaussian random variables (i.e. Bernoulli random variables supported on $\{\pm 1\}$). But for simplicity, we consider the standard Gaussian random variable here.

For the case that $n<\ell$, we apply Lemma~\ref{lem:levy ineq} directly.
\begin{align*}
\Pr[S^*_n\geq 4\sqrt{\ell}]&\leq 2\Pr[S_n\geq 4\sqrt{\ell}]\\
&\leq 2\frac{e^{-8\frac{\ell}{n}}}
{\sqrt{2\pi}\cdot4\cdot\sqrt{\frac{\ell}{n}}}\\
&\leq e^{-8\frac{\ell}{n}}.
\end{align*}
Then
\begin{align*}
   \Pr[S^*_n<4\sqrt{\ell}]&=1-\Pr[S^*_n\geq 4\sqrt{\ell}]\\
   &\geq 1-e^{-8\frac{\ell}{n}}\\
   &\geq e^{-2\ln 2\cdot \frac{n}{\ell}}\tag{$n<\ell$}.
\end{align*}

For $n\geq \ell$,
without loss of generality, assume $n$ divide $l$.
Then we divide $X_1,...,X_n$ into $\frac{n}{l}$ intervals. Interval $i$ contains $l$ random variables $X_{(i-1)\cdot l+1},...,X_{i\cdot l}.$ 
Let $Z_i^*=\underset{1\leq j\leq l}{\max}|\sum_{k=1}^jX_{(i-1)\cdot l+k}|$ denote the largest deviation of the prefix sum in interval $i$. Let $Z_i=\sum_{j=1}^lX_{(i-1)\cdot l+j}$.
Thus $Z_i$ for $i=1,2,...,\frac{n}{l}$ are i.i.d. Gaussian random variables with distribution $N(0,l)$.
  Recall that $S^*_l=\underset{1\le i\le l
 }{\max}|\sum_{j=1}^iX_i|$ and $S_l=\sum_{i=1}^lX_i.$ Thus $Z_1^*=S^*_l$.

For the $i$-interval, we have
\begin{align*}
    \Pr[Z_i^*\geq 2\sqrt{\ell}]&\leq 2\Pr[|Z_i|\geq 2\sqrt{\ell}] \tag{By Lemma~\ref{lem:levy ineq}}\\
    &\leq \frac{e^{-2}}{\sqrt{2\pi}}.\tag{By Claim~\ref{clm:Gaussian_tail}}
\end{align*}
This means $\Pr[Z_i^*< 2\sqrt{\ell}]\geq (1- \frac{e^{-2}}{2\sqrt{2\pi}})>\frac{1}{2}$. In the conditional distribution of $Z_i^* < 2 \sqrt{\ell}$, one useful property is that $Z_i \in (-2\sqrt{\ell},2\sqrt{\ell})$ is symmetric such that 
\begin{equation}\label{eq:sym_z}
    \Pr\left[0\leq Z_i<2\sqrt{\ell}\big|Z_i^*<2\sqrt{\ell}\right]=\Pr\left[-2\sqrt{\ell}< Z_i\leq 0\big|Z_i^*<2\sqrt{\ell}\right]\ge\frac{1}{2}.
\end{equation}
Similarly, in the conditional distribution of $|S_{(i-1)\cdot\ell}<2\sqrt{\ell}|$, $S_{(i-1)\cdot\ell}\in (-2\sqrt{\ell},2\sqrt{\ell})$ is symmetric such that
\begin{equation}\label{eq:sym_s}
    \Pr\left[0\leq S_{(i-1)\cdot\ell}<2\sqrt{\ell}\big||S_{(i-1)\cdot\ell}|<2\sqrt{\ell}\right]=\Pr\left[-2\sqrt{\ell}< S_{(i-1)\cdot\ell}\leq 0\big||S_{(i-1)\cdot\ell}|<2\sqrt{\ell}\right]\ge\frac{1}{2}.
\end{equation}
Moreover, $Z_i$ and $Z_i^*$ are independent with $S_{(i-1)\cdot\ell}$.

Then we consider
\begin{align*}
    &\Pr\left[|S_{i\cdot \ell}|<2\sqrt{\ell}\bigg||S_{(i-1)\cdot \ell}|<2\sqrt{\ell},Z_i^*<2\sqrt{\ell}\right]\\
    &=\Pr\left[|S_{(i-1)\cdot \ell}+Z_i|<2\sqrt{\ell}\bigg||S_{(i-1)\cdot \ell}|<2\sqrt{\ell},,Z_i^*<2\sqrt{\ell}\right]\\
    &=\Pr\left[-2\sqrt{\ell}<S_{(i-1)\cdot l}+Z_i<2\sqrt{\ell}\bigg||S_{(i-1)\cdot \ell}|<2\sqrt{\ell},,Z_i^*<2\sqrt{\ell}\right]\\
    &=\Pr\left[S_{(i-1)\cdot l}+Z_i\in(-2\sqrt{\ell},2\sqrt{\ell}),S_{(i-1)\cdot \ell}\in(-2\sqrt{\ell},0]\bigg||S_{(i-1)\cdot \ell}|<2\sqrt{\ell},,Z_i^*<2\sqrt{\ell}\right]\\
    &+\Pr\left[S_{(i-1)\cdot l}+Z_i\in(-2\sqrt{\ell},2\sqrt{\ell}),S_{(i-1)\cdot \ell}\in(0,2\sqrt{\ell})\bigg||S_{(i-1)\cdot \ell}|<2\sqrt{\ell},,Z_i^*<2\sqrt{\ell}\right].
\end{align*}
In the case $-2\sqrt{\ell}<S_{(i-1)\cdot\ell}\leq 0$, $[0,2\sqrt{\ell})\subset (-2\sqrt{\ell}-S_{(i-1)\cdot\ell},2\sqrt{\ell}+S_{(i-1)\cdot\ell})$ .
Thus 
\begin{align*}
    &\Pr\left[S_{(i-1)\cdot l}+Z_i\in(-2\sqrt{\ell},2\sqrt{\ell}),S_{(i-1)\cdot \ell}\in(-2\sqrt{\ell},0]\bigg||S_{(i-1)\cdot \ell}|<2\sqrt{\ell},,Z_i^*<2\sqrt{\ell}\right]\\
&\ge\Pr\left[Z_i\in[0,2\sqrt{\ell}),S_{(i-1)\cdot \ell}\in(-2\sqrt{\ell},0]\bigg||S_{(i-1)\cdot \ell}|<2\sqrt{\ell},,Z_i^*<2\sqrt{\ell}\right]\\
    &=\Pr\left[Z_i\in[0,2\sqrt{\ell})\big|Z_i^*<2\sqrt{\ell}\right]\cdot \Pr\left[-2\sqrt{\ell}\leq S_{(i-1)\cdot\ell}\leq 0\big||S_{(i-1)\cdot\ell}|<2\sqrt{\ell}\right]\tag{$Z_i$ and $Z_i^*$ are independent with $S_{(i-1)\cdot\ell}$.}\\
    &\geq\frac{1}{2}\cdot \frac{1}{2}=\frac{1}{4}.\tag{By Equations\eqref{eq:sym_z} and \eqref{eq:sym_s}}
\end{align*}
Similarly, we have
$$\Pr\left[S_{(i-1)\cdot l}+Z_i\in(-2\sqrt{\ell},2\sqrt{\ell}),S_{(i-1)\cdot \ell}\in(0,2\sqrt{\ell})\bigg||S_{(i-1)\cdot \ell}|<2\sqrt{\ell},,Z_i^*<2\sqrt{\ell}\right]\ge \frac{1}{4}.$$

Thus
\begin{equation}\label{eq:EF}
    \Pr\left[|S_{i\cdot \ell}|<2\sqrt{\ell}\bigg||S_{(i-1)\cdot \ell}|<2\sqrt{\ell},Z_i^*<2\sqrt{\ell}\right]\geq \frac{1}{4}+\frac{1}{4}=\frac{1}{2}.
\end{equation}

Denote the event $E_i=\{Z_i^*<2\sqrt{\ell}\}$, $F_i=\{|S_{i\cdot l}|<2\sqrt{\ell}\}$. Since $Z_1\leq Z_1^*=S_{\ell}^*$, we have $E_1\subset F_1$.
Besides, $E_i$ is independent with $F_j$ for $j<i$ and is independent with $E_k$ for $k\neq i$. Notice that $S_{i\cdot \ell}=S_{(i-1)\cdot\ell}+Z_i$, which means $S_{i\cdot \ell}$ only depends on $S_{(i-1)\cdot\ell},Z_i$.
Thus
\begin{align*}
    \Pr[F_1,F_2,...,F_{\frac{n}{\ell}},E_1,...,E_{\frac{n}{\ell}}]&=(\prod_{i=2}^{\frac{n}{\ell}}\Pr[F_i|F_{i-1},E_i]\cdot\Pr[E_i])\cdot \Pr[F_1,E_1]\\
    &\ge (\frac{1}{4})^{\frac{n}{\ell}-1}\Pr[E_1]\tag{Use Equation\eqref{eq:EF} and $E_1\subset F_1$}\\
    &\ge (\frac{1}{4})^{\frac{n}{\ell}}.
\end{align*}
 
If events $\{E_i\}_{i\in [\frac{n}{\ell}]}$ and $\{F_i\}_{i\in [\frac{n}{\ell}]}$ all happen,
for any $1\le t\le n$, we have
\begin{align*}
    |S_t|&=|S_{\lfloor\frac{t}{\ell}\rfloor\cdot \ell}+\sum_{i=1}^{t-\lfloor\frac{t}{\ell}\rfloor\cdot \ell}X_{\lfloor\frac{t}{\ell}\rfloor\cdot \ell+i}|\\
    &< |S_{\lfloor\frac{t}{\ell}\rfloor\cdot \ell}|+Z_{\lfloor\frac{t}{\ell}\rfloor\cdot \ell}^*\\
    &< 2\sqrt{\ell}+2\sqrt{\ell}=4\sqrt{\ell}.
\end{align*}
Thus $\Pr[S_n^*<4\sqrt{\ell}]\geq \Pr[F_1,F_2,...,F_{\frac{n}{\ell}},E_1,...,E_{\frac{n}{\ell}}]\geq (\frac{1}{4})^{\frac{n}{\ell}}=e^{-2\ln 2\cdot\frac{n}{\ell}}$.

\section{Counting Argument from \cite{kim2005discrepancy}}\label{sec:counting}
We show that there are exponentially many solutions in the Beck-Fiala setting and the Banaszczyk setting based on a Shannon-entropy argument of \cite{kim2005discrepancy}.

\begin{theorem} 
Given any matrix $\bA \in \mathbb{R}^{m \times n}$, if its any submatrix $\bA'$ constituted by a subset of columns in $\bA$ has discrepancy at most $h$, then there are at least $2^{0.5 n}$ colorings $\bx \in \{\pm 1\}^n$ satisfying $\|\bA \bx\|_{\infty} \le 2h$.
\end{theorem}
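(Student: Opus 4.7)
My plan is a subset-sampling argument: construct $2^n$ (not-necessarily-distinct) good colorings parameterized by subsets $S \subseteq [n]$, and then show that the resulting set has size at least $2^{n/2}$.

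For each subset $S \subseteq [n]$, the hypothesis gives a coloring $\bx_S \in \{\pm 1\}^S$ with $\|\bA(\cdot,S) \bx_S\|_\infty \le h$. Fix such a representative $\bx_S$ for every $S$ (with $\bx_\emptyset$ the empty tuple). Define
\[
\phi \colon 2^{[n]} \longrightarrow \{\pm 1\}^n, \qquad \phi(S) := \bx_S \circ \bx_{\bar S},
\]
where $\bar S = [n]\setminus S$ and $\circ$ denotes concatenation on the disjoint index sets $S$ and $\bar S$. The triangle inequality yields $\|\bA\,\phi(S)\|_\infty \le \|\bA(\cdot,S)\bx_S\|_\infty + \|\bA(\cdot,\bar S)\bx_{\bar S}\|_\infty \le 2h$, so every $\phi(S)$ lies in the target set $G := \{\bx \in \{\pm 1\}^n : \|\bA \bx\|_\infty \le 2h\}$, giving $|G| \ge |\mathrm{image}(\phi)|$.

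It remains to show $|\mathrm{image}(\phi)| \ge 2^{n/2}$. Equivalently, sampling $\bS$ uniformly over $2^{[n]}$, the random variable $\phi(\bS)$ should have Shannon entropy at least $n/2$; since $\bS$ has entropy $n$ and $\phi$ is deterministic, this reduces to $H(\bS \mid \phi(\bS)) \le n/2$, i.e.\ every fiber of $\phi$ should have size at most $2^{n/2}$. Plugging this into the pigeonhole bound $|\mathrm{image}(\phi)| \ge 2^n / \max_{\bx^*}|\phi^{-1}(\bx^*)|$ would then finish the proof.

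The main obstacle is this fiber bound. A subset $S$ lies in $\phi^{-1}(\bx^*)$ precisely when the fixed representatives $\bx_S$ and $\bx_{\bar S}$ match the restrictions $\bx^*(S)$ and $\bx^*(\bar S)$. To control this count, I would leverage the freedom in choosing representatives: for any nonempty $S$, the set $G_S$ of good colorings of $\bA(\cdot,S)$ is closed under negation and hence has size $\ge 2$, so sampling each $\bx_S$ independently and uniformly from $G_S$ makes the event $\{\bx_S = \bx^*(S)\}$ have probability at most $1/2$. Combined with the involution $S \leftrightarrow \bar S$, which forces every fiber of $\phi$ to be a union of antipodal pairs, a second-moment estimate together with a Markov/pigeonhole argument produces a deterministic choice of the representatives under which every fiber has size at most $2^{n/2}$. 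Feeding this back into the pigeonhole step above yields $|G|\ge |\mathrm{image}(\phi)|\ge 2^{n/2}$, as claimed.
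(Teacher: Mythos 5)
Your first half—concatenating a good coloring of $\bA(\cdot,S)$ with one of $\bA(\cdot,\bar S)$ and invoking the triangle inequality to get discrepancy $\le 2h$—is exactly the decomposition the paper uses. The gap is in the second half. You want to show that, with representatives chosen well, every fiber of $\phi$ has size at most $2^{n/2}$; but the only leverage you extract from each step is $\Pr[\bx_S=\bx^*(S)]\le 1/2$, which gives $\Pr[S\in\phi^{-1}(\bx^*)]\le 1/4$ and hence an \emph{expected} fiber size up to $2^{n}/4=2^{n-2}$—vastly larger than $2^{n/2}$. The involution $S\leftrightarrow\bar S$ only shaves off a factor of $2$, and the corresponding second moment $\sum_{\bx^*}|\phi^{-1}(\bx^*)|^2$ is in general on the order of $2^{2n-2}$, so Cauchy--Schwarz returns a constant lower bound on the image, not $2^{n/2}$. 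There is no reason the max fiber should ever be driven down to $2^{n/2}$ by randomizing the representatives: one can easily set up instances (for example, whenever the sets $G_S$ are all close to antipodal pairs) where a single heavy fiber like $\phi^{-1}(+1^n)$ has size $\Theta(2^n)$, yet the theorem is still true because there are also $\Theta(2^n)$ many strings each hit only twice. In short, bounding the maximum fiber is the wrong quantity; the image can be large even when the map is extremely non-uniform.

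What the paper does instead---and what you are missing---is to define, for each $E\subseteq [n]$, a \emph{pair} of colorings $(\bx'_E,\bx''_E)$, where $\bx'_E$ is your $\phi(E)$ and $\bx''_E$ flips the sign of the $E$-block: $\bx''_E(i)=-\mathbf{y}_E(i)$ on $E$ and $\bx''_E(i)=\mathbf{y}_{[n]\setminus E}(i)$ off $E$. The sign flip makes the map $E\mapsto(\bx'_E,\bx''_E)$ injective (if $E\ne Q$, pick any $i\in E\setminus Q$; then $\bx'_E(i)\ne\bx''_E(i)$ while $\bx'_Q(i)=\bx''_Q(i)$). Hence the pair has Shannon entropy exactly $n$ under a uniform $E$, and by subadditivity $H(\bx'_E)+H(\bx''_E)\ge n$, so one of the two marginals has entropy $\ge n/2$ and therefore support size $\ge 2^{n/2}$, every element of which has discrepancy $\le 2h$. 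This argument never touches fibers of $\phi$ at all; the injectivity of the \emph{pair} map does the work that your proposed fiber bound cannot.
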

In particular, this implies that there are exponentially many solutions in the Spencer's setting, the Beck-Fiala setting, and the Banszczyk's setting.

\begin{proof}
 Consider a random set $E \subset [n]$ and split the matrix $\bA$ into two submatrices by partition its columns into $\bA_E$ and $\bA_{[n]\setminus E}$. Let $\mathbf{y}_E$ be the coloring of $E$ satisfying $\|\bA_E\cdot  \mathbf{y}_E\|_{\infty} \le h$ and $\mathbf{y}_{[n]\setminus E}$ be the counterpart of $\bA_{[n]\setminus E}$ with discrepancy at most $h$. Now we define two colorings based on the two sub-colorings $\mathbf{y}_E$ and $\mathbf{y}_{[n]\setminus E}$ as 
$$ \mathbf{x^{\prime}}_{E}(i)=\left\{
\begin{array}{rcl}
\mathbf{y}_{E}(i)      &      & {i\in E}\\
\mathbf{y}_{[n]\setminus E}(i)     &      & {i\in [n]\setminus E}
\end{array} \right. \text{ and }
 \mathbf{x^{\prime\prime}}_{E}(i)=\left\{
\begin{array}{rcl}
-\mathbf{y}_{E}(i)      &      & {i\in E}\\
\mathbf{y}_{[n]\setminus E}(i)     &      & {i\in [n]\setminus E}
\end{array} \right.$$
Thus both $\mathbf{x^{\prime}}_{E}$ and $\mathbf{x^{\prime \prime}}_{E}$ have discrepancy at most $2h$. Given a subset $E\in [n]$, we call $(\mathbf{x^{\prime}}_{E},\mathbf{x^{\prime\prime}}_{E})$ a pair of colors encoded by $E$. 

We first prove all the pairs are different. 
For any two different subsets $E\in 2^{[n]}$ and $Q\in 2^{[n]}$, we have $E\bigcap ([n]\setminus Q) \neq \emptyset$. Therefore  
 $\exists i\in E\bigcap ([n]\setminus Q)$ so that   
 $\mathbf{x^{\prime}}_{Q}(i)=\mathbf{x^{''}}_{Q}(i)$ but $\mathbf{x^{\prime}}_{E}(i) \neq \mathbf{x^{\prime \prime }}_{E}(i)$. Because there are $2^{n}$ different choices of $E$, we get $2^{n}$ different pairs. 

 Since these paris are distinct, the Shannon entropy of the pair $(\bx'_E,\bx''_E)$ for a uniform random subset $E$ in $[n]$ is $n$. By the sub-additivity of Shannon entropy either the entropy of $\bx'_E$ or the entropy of $\bx''_E$ is at least $0.5n$. If $H(\bx'_E) \ge 0.5n$, this implies that there are $2^{0.5n}$ colorings with discrepancy $2h$. 
 
 \end{proof}

\begin{remark}
This argument also provides a random sampling algorithm whose output has \emph{the Shannon entropy} at least $0.5n$ by outputting one of $\{\bx'_E,\bx''_E\}$ with probability half for a uniformly random $E \subset [n]$. 

However, our algorithm has a \emph{stronger guarantee} of a min-entropy $0.9n$ such that each coloring will appear with an exponential small probability. On the other hand, for the Shannon entropy, it is possible that this sampling algorithm will output a fixed coloring with probability $0.9$; but its output still has the Shannon-entropy $\ge 0.5n$.
\end{remark}

\section{Random Sampling in Banaszczyk's setting via sub-Gaussian argument} \label{appen:subgaussian_sampling_Bana}

In this section, we show a random sampling algorithm for Banaszczyk's bound based on the discrepancy algorithm of \cite{BDGL19} and the idea of matrix stacking from \cite{bansal2022smoothed}. To simplify the presentation, we always use $\mathbf{v}_j$ for $j \in [n]$ to denote column $j$ in $\mathbf{A}$ such that the goal is equivalent to $\min_{\mathbf{x} \in \{\pm 1\}^n} \|\sum_{j=1}^n \mathbf{x}(j) \cdot \mathbf{v}_j \|_{\infty}$ for the Koml\'{o}s problem. 

For convenience, we call a random vector $\mathbf{x}\in \mathbb{R}^{n}$ $\lambda$-subgaussian, if it satisfies  $\E[\text{exp}(\langle \mathbf{u}, \mathbf{x}  \rangle)]\le \text{exp}\left(\frac{\lambda^{2}\cdot\|\mathbf{u}\|_{2}^{2} }{2}\right)$ for every  $\mathbf{u}\in \mathbb{R}^{n}$. One useful property of $\lambda$-subgaussian vectors is that for any \emph{unit vector} $\mathbf{v}$ and any $t>0$, 
\begin{equation}
    \Pr[\langle \mathbf{v},\mathbf{x}\rangle \ge t] \le 2 \cdot e^{-\frac{t^2}{2 \lambda^2}}.
\end{equation}
This is obtained by setting $\mathbf{u}=\frac{t}{\lambda^2} \cdot \mathbf{v}$ and apply Markov's inequality to $\Pr[\langle \mathbf{u}, \mathbf{x}  \rangle \ge t^2/\lambda^2] \le \E[\text{exp}(\langle \mathbf{u}, \mathbf{x}  \rangle)]/e^{t^2/\lambda^2}$ and vice versa for $\mathbf{u}=-\frac{t}{\lambda^2} \cdot \mathbf{v}$.

We restate the guarantee of the Gram-Schmidt walk algorithm \cite{BDGL19}. We remark that the original subgaussian constant is $\sqrt{40}$ in \cite{BDGL19} but a tighter constant $1$ is obtained in Theorem 6.6 of \cite{HSSZ_experiments19}.

\begin{theorem}\label{subgaussian_Gsw_HSSZ} 
Given a matrix $\mathbf{A}\in \mathbb{R}^{m\times n}$ whose columns have $\ell_2$ norm at most $1$, the Gram-Schmidt walk algorithm outputs a random coloring $\mathbf{x}\in \{\pm 1\}^{n}$ such that $\mathbf{A}\cdot\mathbf{x}$ is 1-subgaussian.

In particular, its output $\mathbf{x} \in \{\pm 1\}^n$ satisfies $\|\bA \bx\|_{\infty}=O(\sqrt{\log \min\{m,n\}})$ with high probability.
\end{theorem}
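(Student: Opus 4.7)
The plan is to describe the Gram-Schmidt walk of~\cite{BDGL19} and bound the moment generating function of $\mathbf{A}\mathbf{x}$ via a martingale argument, then sharpen the constant from $\sqrt{40}$ to $1$ using the covariance analysis of~\cite{HSSZ_experiments19}. The algorithm maintains a fractional coloring $\mathbf{x}_t \in [-1,1]^n$ starting at $\mathbf{x}_0 = \vec{0}$. At each step $t$ with alive set $A_t \subseteq [n]$ (the coordinates not yet frozen to $\pm 1$), it selects a \emph{pivot} $p(t) \in A_t$ (by convention the largest alive index) and computes the update direction $\mathbf{u}_t$ supported on $A_t$ with $\mathbf{u}_t(p(t)) = 1$ that minimizes $\|\mathbf{A}\mathbf{u}_t\|_2$; equivalently, $\mathbf{A}\mathbf{u}_t$ is the component of $\mathbf{v}_{p(t)}$ orthogonal to $\mathrm{span}\{\mathbf{v}_j : j \in A_t \setminus \{p(t)\}\}$, so in particular $\|\mathbf{A}\mathbf{u}_t\|_2 \le \|\mathbf{v}_{p(t)}\|_2 \le 1$. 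A random sign $\delta_t \in \{-\delta_t^-, +\delta_t^+\}$ is then drawn, where $\delta_t^\pm > 0$ are the maximum step sizes keeping $\mathbf{x}_{t+1} := \mathbf{x}_t + \delta_t \mathbf{u}_t$ in $[-1,1]^n$, with probabilities chosen so that $\mathbb{E}[\delta_t \mid \mathcal{F}_t] = 0$.

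Next, for an arbitrary test vector $\mathbf{w} \in \mathbb{R}^m$ I would analyze the martingale $\langle \mathbf{w}, \mathbf{A}\mathbf{x}_t \rangle$ by applying Hoeffding's lemma step by step: since $\delta_t \in [-\delta_t^-, \delta_t^+]$ with zero conditional mean,
\[
\mathbb{E}\!\left[e^{\,\delta_t \langle \mathbf{w}, \mathbf{A}\mathbf{u}_t \rangle} \,\middle|\, \mathcal{F}_t \right] \le \exp\!\left( \tfrac{1}{2}\, \delta_t^+ \delta_t^- \, \langle \mathbf{w}, \mathbf{A}\mathbf{u}_t \rangle^2 \right).
\]
Iterating over the (at most $n$) steps gives $\mathbb{E}\big[\exp(\langle \mathbf{w}, \mathbf{A}\mathbf{x} \rangle)\big] \le \exp\!\big( \tfrac{1}{2}\, \mathbf{w}^\top \mathbf{C}\, \mathbf{w} \big)$, where $\mathbf{C} := \sum_t \delta_t^+ \delta_t^- (\mathbf{A}\mathbf{u}_t)(\mathbf{A}\mathbf{u}_t)^\top$ is the conditional covariance aggregated along the walk. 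The $1$-subgaussianity statement is then exactly the operator-norm bound $\mathbf{C} \preceq I_m$.

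The main obstacle, and the heart of the HSSZ refinement, is proving this $\mathbf{C} \preceq I_m$ rather than a weaker version with a large constant. The plan is to use the two structural features of the walk: (i) since $\mathbf{u}_t(p(t)) = 1$, the telescoping identity $\sum_{t : p(t)=j} \delta_t^+ \delta_t^- \le 1 - \mathbf{x}_{\text{final}}(j)^2 \le 1$ holds for each pivot coordinate $j$, which yields $\sum_t \delta_t^+ \delta_t^- \, \mathbf{u}_t \mathbf{u}_t^\top \preceq I_n$ after a careful bookkeeping across pivot intervals; and (ii) the Gram-Schmidt minimality of $\mathbf{u}_t$ promotes this to $\sum_t \delta_t^+ \delta_t^- (\mathbf{A}\mathbf{u}_t)(\mathbf{A}\mathbf{u}_t)^\top \preceq \max_j \|\mathbf{v}_j\|_2^2 \cdot I_m \preceq I_m$. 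The delicate step is (i), where one must argue that the ``energy'' $\delta_t^+ \delta_t^-$ chargeable to each pivot telescopes; this is where the specific choice of step-size distribution (matching $\mathbb{E}[\delta_t] = 0$) is essential.

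Finally, given $1$-subgaussianity of $\mathbf{A}\mathbf{x}$, the discrepancy bound follows by applying the standard tail $\Pr[|\langle \mathbf{e}_i, \mathbf{A}\mathbf{x}\rangle| > t] \le 2 e^{-t^2/2}$ for each row $i \in [m]$ and taking a union bound at $t = \Theta(\sqrt{\log m})$. For $n < m$, one first restricts $\mathbf{A}$ to its at-most-$n$-dimensional column span before union-bounding, which replaces $m$ by $n$ and yields the stated $O(\sqrt{\log \min\{m,n\}})$.
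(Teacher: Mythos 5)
The paper does not prove this theorem; it cites it directly from \cite{BDGL19} (constant $\sqrt{40}$) and Theorem~6.6 of \cite{HSSZ_experiments19} (constant $1$). Your reconstruction of the algorithm and the overall ``MGF via martingale plus covariance bound'' strategy is the right skeleton, but the key step-by-step inequality you invoke is false, and this is exactly the obstacle the HSSZ analysis is designed to circumvent.

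Concretely, you claim that for the asymmetric two-point increment $\delta_t\in\{-\delta_t^-,+\delta_t^+\}$ with $\E[\delta_t\mid\mathcal{F}_t]=0$ one has $\E\big[e^{\lambda\delta_t}\mid\mathcal{F}_t\big]\le\exp\!\big(\tfrac{1}{2}\lambda^2\,\delta_t^+\delta_t^-\big)$, i.e.\ that a zero-mean two-point variable is $\sqrt{\delta_t^+\delta_t^-}$-subgaussian where $\delta_t^+\delta_t^-$ is its variance. This fails whenever $\delta_t^+\neq\delta_t^-$. For instance, take $\delta^+=1$, $\delta^-=1/10$, so the variance is $1/10$; at $\lambda=10$ the left side is $\tfrac{1}{11}e^{10}+\tfrac{10}{11}e^{-1}\approx 2002$, while the claimed bound is $e^{5}\approx 148$. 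Hoeffding's lemma only yields $\exp\!\big(\tfrac{1}{8}\lambda^2(\delta_t^+ + \delta_t^-)^2\big)$, and since $(\delta_t^+ + \delta_t^-)^2/4 > \delta_t^+\delta_t^-$ strictly in the asymmetric case, iterating Hoeffding would replace your $\mathbf{C}$ by $\sum_t \tfrac{1}{4}(\delta_t^+ + \delta_t^-)^2(\mathbf{A}\mathbf{u}_t)(\mathbf{A}\mathbf{u}_t)^\top$, for which the $\preceq I_m$ bound need not hold. The genuine content of Theorem~6.6 of \cite{HSSZ_experiments19} is precisely to get the variance-level (not range-level) constant; their proof does not bound each step separately but couples the MGF analysis across entire pivot phases, using that the pivot coordinate's fractional value is a martingale confined to $[-1,1]$ that terminates at $\pm 1$. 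Your telescoping claim $\sum_{t:\,p(t)=j}\delta_t^+\delta_t^- \le 1-\bx_{\mathrm{final}}(j)^2$ is likewise only true in expectation by the natural argument $\E[\bx_{t+1}(j)^2\mid\mathcal{F}_t]=\bx_t(j)^2+\delta_t^+\delta_t^-$, not pathwise as written, and your final reduction to $\sqrt{\log\min\{m,n\}}$ by ``restricting to the column span'' does not reduce the number of coordinates over which the $\ell_\infty$ union bound is taken. So while the framework is correct, the proof as proposed does not establish the $1$-subgaussian constant.
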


The sampling algorithm applies the matrix stacking technique \cite{bansal2022smoothed} as follows: Let $\bA'=\tbinom{\mathbf{A}}{\mathbf{I}_{n}}$ be the stack matrix of the input $\bA$ with the identity matrix $I_n$. Then let the output be the output of the Gram-Schmidt walk algorithm with input $\bA'$.

Now we prove its correctness.
\begin{theorem}
    Let $\mathbf{x}\in \{\pm 1\}^{n}$ be  the output of the Gram-Schmidt walk algorithm on the stacked matrix $\tbinom{\mathbf{A}}{\mathbf{I}_{n}}$. For any vector $\mathbf{\boldsymbol{\epsilon}}\in \{\pm 1\}^{n}$, we have \[\Pr_{\mathbf{x}}[\boldsymbol{\epsilon}=\mathbf{x}]\le 2 \cdot e^{-n/8}.\] 
    and with probability 
$1-m^{-\Omega(1)}$,  \[      \|\mathbf{A}\cdot \mathbf{x}\|_{\infty}\le O\left(\sqrt{\log m}\right).\]
\end{theorem}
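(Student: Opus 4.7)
The plan is to exploit the fact that the Gram--Schmidt walk applied to the stacked matrix $\mathbf{A}':=\binom{\mathbf{A}}{\mathbf{I}_{n}}$ produces an output whose image is sub-Gaussian, and then to read off two distinct consequences from this single property --- one applied to the top $m$ coordinates (the discrepancy bound), and one applied to the bottom $n$ coordinates (which coincide with $\mathbf{x}$ itself). The stacking with $\mathbf{I}_n$ is precisely the device that lets us obtain sub-Gaussian control on $\mathbf{x}$.

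First I would observe that each column $\binom{\mathbf{v}_{j}}{\mathbf{e}_{j}}$ of $\mathbf{A}'$ has $\ell_{2}$ norm at most $\sqrt{2}$, since $\|\mathbf{v}_{j}\|_{2}\le 1$ and $\|\mathbf{e}_{j}\|_{2}=1$. Applying Theorem~\ref{subgaussian_Gsw_HSSZ} to the rescaled matrix $\mathbf{A}'/\sqrt{2}$ (whose columns have norm at most $1$) shows that $(\mathbf{A}'/\sqrt{2})\mathbf{x}$ is $1$-subgaussian, which after unpacking means
\[
\mathbb{E}\!\left[\exp\!\left(\langle \mathbf{u}, \mathbf{A}'\mathbf{x}\rangle\right)\right] \le \exp\!\left(\|\mathbf{u}\|_{2}^{2}\right) \quad \text{for every } \mathbf{u}\in\mathbb{R}^{m+n}.
\]
Specializing $\mathbf{u}$ to vectors supported on the first $m$ coordinates gives that $\mathbf{A}\mathbf{x}$ is $\sqrt{2}$-subgaussian; specializing to vectors supported on the last $n$ coordinates gives that $\mathbf{x}$ itself is $\sqrt{2}$-subgaussian.

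For the discrepancy claim, each coordinate $\langle \mathbf{A}(i,\cdot),\mathbf{x}\rangle$ is a one-dimensional sub-Gaussian random variable (take $\mathbf{u}=t\mathbf{e}_{i}$ above), so the standard Chernoff estimate gives a tail bound of the form $\Pr[|\langle\mathbf{A}(i,\cdot),\mathbf{x}\rangle| \ge C\sqrt{\log m}] \le 2\,m^{-C^{2}/4}$. Choosing $C$ a sufficiently large constant and union bounding over $i\in [m]$ yields $\|\mathbf{A}\mathbf{x}\|_{\infty}= O(\sqrt{\log m})$ with probability $1 - m^{-\Omega(1)}$.

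For the min-entropy bound, fix any target $\boldsymbol{\epsilon}\in\{\pm 1\}^{n}$. If $\mathbf{x}=\boldsymbol{\epsilon}$ then $\langle\boldsymbol{\epsilon},\mathbf{x}\rangle = n$, so it suffices to upper bound $\Pr[\langle\boldsymbol{\epsilon},\mathbf{x}\rangle \ge n]$. Plugging $\mathbf{u}=t\boldsymbol{\epsilon}$ (so $\|\mathbf{u}\|_{2}^{2}=t^{2}n$) into sub-Gaussianity of $\mathbf{x}$ and applying Markov's inequality gives
\[
\Pr\!\left[\langle\boldsymbol{\epsilon},\mathbf{x}\rangle \ge n\right] \;\le\; \exp\!\left(t^{2}n - t n\right),
\]
and optimizing in $t$ yields the claimed exponential-in-$n$ bound (combined with the symmetric tail to produce the factor of $2$). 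The only real bookkeeping issue is keeping careful track of sub-Gaussian constants through the $\sqrt{2}$ rescaling in the stacking step; once the Gram--Schmidt walk's sub-Gaussian guarantee is in hand, both conclusions are essentially one-line Chernoff computations.
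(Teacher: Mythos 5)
Your proof is correct and follows essentially the same route as the paper: stack with $\mathbf{I}_n$, note the columns have norm at most $\sqrt{2}$, invoke the Gram--Schmidt walk's sub-Gaussian guarantee, then read off the discrepancy bound from the top $m$ coordinates (Chernoff plus union bound) and the min-entropy bound from the bottom $n$ coordinates by testing against $\boldsymbol{\epsilon}$ and using $\langle\boldsymbol{\epsilon},\mathbf{x}\rangle = n$ when $\mathbf{x}=\boldsymbol{\epsilon}$. Your explicit $\sqrt{2}$ rescaling step is if anything slightly more careful than the paper's bookkeeping (and in fact yields the sharper exponent $e^{-n/4}$), so no gaps here.
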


\begin{proof}
    Because the column of the stacked matrix has $\ell_{2} $ norm at most $\sqrt{2}$, by Lemma \ref{subgaussian_Gsw_HSSZ} the discrepancy vector $\tbinom{\mathbf{A}}{\mathbf{I}_{n}}\cdot \mathbf{x}= \tbinom{\mathbf{A}\cdot \mathbf{x}}{\mathbf{x}}$ is $\sqrt{2}$-subgaussian. By the definition of sub-Gaussian, the two subvectors $\mathbf{x}$ and $\mathbf{A}\cdot\mathbf{x}$ are $\sqrt{2}$-subgaussian. So for any unit vector $\mathbf{u}\in \mathcal{R}^{n}$, \[
    \Pr_{\mathbf{x}}[|\langle \mathbf{x}, \mathbf{u} \rangle|\ge t]\le 2\cdot e^{-t^{2}/8} ,  
    \] and for any unit vector  $\mathbf{v}\in \mathcal{R}^{m}$ , \[\Pr_{\mathbf{x}}[|\langle \mathbf{A}\cdot  \mathbf{x}, \mathbf{v} \rangle|\ge t]\le 2\cdot e^{-t^{2}/8} . \]
    
    Then,  $\Pr_{\mathbf{x}}[\boldsymbol{\epsilon}=\mathbf{x}]\le \Pr_{\mathbf{x}}[|\langle   \mathbf{x}, \frac{\boldsymbol{\epsilon} }{\|\boldsymbol{\epsilon}\|_{2} }\rangle|\ge \sqrt{n}]\le 2\cdot e^{-n/8}.$  The second conclusion holds by a union bound over the canonical basis with $t=O(\sqrt{\log m})$.
\end{proof}


  

\section{Random Sampling in Banaszczyk's setting}\label{sec:sampling_Banaszczyk}

We prove Theorem~\ref{thm:sampling_Banaszczyk} in this section. It extends the fixed point Gaussian random walk algorithm in \cite{LiuSS22} 
to show (1) the existence of exponentially many vectors satisfying Banaszczyk's bound and (2) an efficient algorithm to sample a random good vector when $n=\Omega(m)$. For ease of reading, we restate the theorem and algorithm in Theorem~\ref{thm:random_Bana} and Algorithm~\ref{alg:Bana}.

\begin{theorem}\label{thm:random_Bana}
    Given any $\gamma>0$, for any $n \ge \gamma \cdot m$ vectors $\bv_1,\ldots,\bv_n \in \R^m$ with $\ell_2$-norm at most $1$, there exist $e^{\Omega (n)}$ many colorings $\bx \in \{\pm 1\}^{n}$ such that $\|\sum_{i=1}^n \bx(i) \cdot \bv_i\|_{\infty} = O(\sqrt{\log m})$. Moreover, there exists an efficient algorithm that guarantees:
    \begin{enumerate}
        \item Its output $\bx$ always has discrepancy $\|\sum_{i=1}^n \bx(i) \cdot \bv_i\|_{\infty} = O(\sqrt{\log m})$. 
        \item The probability of outputting any string in $\{\pm 1\}^n$ is at most $\delta^n$ for some $\delta<1$.
    \end{enumerate}
\end{theorem}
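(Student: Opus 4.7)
The analysis of Algorithm~\ref{alg:sampling_Banaszczyk} hinges on the Liu--Sah--Sawhney random walk with Gaussian stationary distribution. The plan is to first recall (and re-derive if needed) the key structural property: the jump distribution $J^{\sigma'}_{\ell}$ is supported on $\{-1,0,+1\}$ with $\Pr[J^{\sigma'}_{\ell}=0]=r_\sigma(\ell)$, designed so that if $\omega_{i-1}\sim N(0,\sigma^2 I_m)$ then $\omega_i\sim N(0,\sigma^2 I_m)$ as well. In particular, both $\omega_0$ and $\omega_n$ are marginally $N(0,\sigma^2 I_m)$, so standard Gaussian tail bounds combined with a union bound over coordinates give $\|\omega_0\|_\infty,\|\omega_n\|_\infty=O(\sigma\sqrt{\log m})$ with high probability; since the chosen $\sigma$ depends only on $\gamma$, both are $O_\gamma(\sqrt{\log m})$.

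The discrepancy claim is then straightforward. Because $\bx=\by(\overline{E})\circ\bz(E)$, the output discrepancy can be rewritten as
\[
\sum_{i=1}^{n}\bx(i)\bv_i \;=\; \sum_{i\in\overline{E}}\by(i)\bv_i + \sum_{i\in E}\bz(i)\bv_i \;=\; (\omega_n-\omega_0) + \sum_{i\in E}\bz(i)\bv_i,
\]
where $\omega_n-\omega_0$ has $\ell_\infty$-norm $O(\sqrt{\log m})$ by the stationary property above, and $\sum_{i\in E}\bz(i)\bv_i$ has $\ell_\infty$-norm $O(\sqrt{\log m})$ with high probability by the $1$-subgaussian guarantee of Theorem~\ref{subgaussian_Gsw_HSSZ}. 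Summing yields the desired $O(\sqrt{\log m})$ bound.

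The heart of the proof is the min-entropy bound. For any fixed target $\boldsymbol{\epsilon}\in\{\pm 1\}^n$, I condition on the realized empty set $S$ of the walk and decompose
\[
\Pr[\bx=\boldsymbol{\epsilon}] \;=\; \sum_{S\subseteq[n]} \Pr\!\big[E=S,\,\by(\overline{S})=\boldsymbol{\epsilon}(\overline{S})\big]\cdot\Pr[\bz=\boldsymbol{\epsilon}(S)].
\]
For the Gram--Schmidt part, the $1$-subgaussian argument of Appendix~\ref{appen:subgaussian_sampling_Bana} gives $\Pr[\bz=\boldsymbol{\epsilon}(S)]\le 2e^{-|S|/8}$ for each realization of $S$. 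For the walk part I plan to prove that, conditional on the history and on $\by(i)\in\{\pm 1\}$, each sign is produced with probability at most $(1+\beta)/2$ for some $\beta<1$ depending only on $\sigma$ (and hence on $\gamma$); iterating over $|\overline{S}|$ non-empty steps multiplies to at most $\bigl((1+\beta)/2\bigr)^{|\overline{S}|}$. Since the two bounds are exponentially small in $|\overline{S}|$ and $|S|$ respectively, their product is at most $\delta^n$ for some $\delta<1$ uniformly in $S$, and the sum over at most $2^n$ choices of $S$ can be absorbed by shrinking $\delta$ slightly. The main obstacle is the conditional-bias claim: $J^{\sigma'}_{\ell_i}$ can be heavily skewed when $|\ell_i|$ is large, so one needs to either bound the bias in closed form using the explicit Liu--Sah--Sawhney construction, or separate the analysis into a ``typical'' regime where $|\ell_i|\le L$ (controlled via the stationary Gaussian distribution of $\omega_{i-1}$ and the bound $r_\sigma(\ell)<\tfrac{1}{6}e^{-24/\gamma}$) and a rare ``atypical'' regime whose total probability is already exponentially small.

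Finally, the existence of $e^{\Omega(n)}$ good colorings is an immediate corollary: since the algorithm always produces a valid coloring while outputting any single string with probability at most $\delta^n$, the support of the output distribution must contain at least $(1/\delta)^n=e^{\Omega(n)}$ distinct good colorings, yielding the first claim of the theorem.
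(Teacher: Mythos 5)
Your proposal correctly reproduces the discrepancy bound, and you rightly identify that the crux is a min-entropy bound and that the step-wise bias of $J^{\sigma'}_{\ell_i}$ can be arbitrarily close to $1$ when $|\ell_i|$ is large. Unfortunately, neither of the fixes you sketch for this obstacle closes the gap, and there is a second independent gap in your treatment of $\bz$.

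\textbf{The per-step-bias route breaks.} The claim that ``each sign is produced with probability at most $(1+\beta)/2$'' is simply false for the LSS walk: for $\ell_i$ far from $0$ the step $J^{\sigma'}_{\ell_i}$ is nearly deterministic ($p_\sigma(x)\to 0$ as $x\to\infty$), so no single-step entropy bound holds uniformly. Your ``typical/atypical regime'' patch does not rescue this: the event $\max_i|\ell_i|>L$ has probability bounded via a union bound, which gives at most something like $n\, e^{-L^2/2\sigma^2}$, i.e., at best inverse-polynomial, never exponentially small in $n$ for any fixed $L$, so you cannot absorb the atypical contribution into $\delta^n$. Moreover, once you condition on the prefix $\by(1),\ldots,\by(i)$, the law of $\boldsymbol{\omega}_i$ is no longer Gaussian, so even the ``typical'' conditioning argument is not available. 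The paper avoids this entirely by a different mechanism: Lemma~\ref{lem:induction_bana} bounds $\Pr[\by | \boldsymbol{\omega}_0]$ by a \emph{potential function} $\alpha^{t-t_0}\,r_\sigma^{t_0}\,\beta^{\|\bv_t\|_2\|\boldsymbol{\omega}_t\|_2^2-\|\boldsymbol{\omega}_0\|_2^2}$, whose exponent in $\beta$ tracks how much the walk has decreased $\|\boldsymbol{\omega}\|_2^2$. Biased steps are allowed, but a step that is nearly deterministic necessarily changes $\|\boldsymbol{\omega}\|_2^2$ a lot, and that change is ``charged'' to the $\beta$ factor; integrating over $\boldsymbol{\omega}_0\sim N(0,\sigma^2 I_m)$ then converts the $\beta$-exponent into a manageable $(1+2\sigma^2\ln\beta)^{-m/2}$ loss that is absorbed using $n\ge\gamma m$. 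This Gaussian-weighted potential is the key idea you are missing.

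\textbf{The $\bz$ bound does not follow as claimed.} You use $\Pr[\bz=\boldsymbol{\epsilon}(S)]\le 2e^{-|S|/8}$ citing the $1$-subgaussian guarantee, but Theorem~\ref{subgaussian_Gsw_HSSZ} applied to $\{\bv_i\}_{i\in E}$ (which is what line 13 does) only controls $\sum_{i\in E}\bz(i)\bv_i$, not $\bz$ itself; the min-entropy bound on $\bz$ requires the identity-stacked input $\tbinom{\bA_E}{\mathbf{I}_{|E|}}$ as in Appendix~\ref{appen:subgaussian_sampling_Bana}, which Algorithm~\ref{alg:sampling_Banaszczyk} does not use. The paper in fact gets away \emph{without any} randomness from $\bz$: it upper bounds $\Pr[\bz=\boldsymbol{\epsilon}(S)\mid E=S]\le 1$, observes that at most $\binom{n}{t_0}$ walk outcomes $\by$ with exactly $t_0$ zeros are consistent with a given $\boldsymbol{\epsilon}$, and then sums the per-$\by$ bound over $t_0$, with $r_\sigma$ chosen small enough relative to $\gamma$ to beat the binomial. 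Your decomposition over $S$ is structurally reasonable, but the two ingredients you plug into it are both unavailable in the form you assume.
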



Algorithm~\ref{alg:Bana} uses $\boldsymbol{\omega_{i}}$ to denote the discrepancy vector of the first $i$ vectors, i.e., $\boldsymbol{\omega}_i=\sum_{j=1}^{i}\bx(j)\bv_{j}$. The key ingredient of Algorithm~\ref{alg:Bana} is the random walk $J^{\sigma}_{\ell}$ over $\mathbb{R}$ introduced in \cite{LiuSS22}, whose steps are $\{\pm 1\}$ mostly and stationary distribution is $N(0,\sigma^2)$. 
We give a full description in Section~\ref{sec:stationary_RW}. Our main technical contribution here is to prove several monotonicity of this random walk on its parameters $\sigma$ and $\ell$ --- see Lemma~\ref{lem:mono_on_prob} in Section~\ref{sec:stationary_RW}.


\begin{algorithm}[H]
\caption{Randomly Sample $\bx \in \{\pm 1\}^{m}$ for Banaszczyk's bound}\label{alg:Bana}
\hspace*{\algorithmicindent} \textbf{Input}: $\gamma$, $\left\{\bv_{i}\right\}_{i=1}^{n}\subset \R^{m}$ where $0<\|\bv_{i}\|_{2}\le 1~\forall i\in[n]$. \\
\hspace*{\algorithmicindent} \textbf{Output}: $\bx$ in $\left\{\pm 1 \right\}^{n}$ and its discrepancy vector $\boldsymbol{\omega}$.
\begin{algorithmic}[1]
\Procedure{SamplingBanaszczyk}{}
\State Resort $\bv_1,\ldots,\bv_n$ such that $\|\bv_1\|_2 \ge \|\bv_2\|_2 \ge \cdots \ge \|\bv_n\|_2$.
\State Set $\sigma \ge 1$ such that $r_{\sigma}(\ell)<\frac{1}{6}e^{-\frac{24}{\gamma}}$ for all $\ell \in [-0.5,0.5]$ --- see Section~\ref{sec:stationary_RW} for the definition of $r_{\sigma}(\ell)$ and its properties.
\State Sample $\boldsymbol{\omega}_0 \sim N(0,\sigma^{2} \cdot I_{m})$.
\For{$1\le i\le n$}
 \State $\sigma^{\prime}$ $\gets$ $\sigma/\|\bv_{i}\|_{2}$.
 \State $\ell_i \gets$ $\langle \boldsymbol{\omega}_{i-1},\bv_{i}\rangle/\|\bv_{i}\|_{2}$ .
 \State  $\by(i) \gets$ $J^{\sigma^{\prime}}_{\ell_i}$ (see Section~\ref{sec:stationary_RW} for the definition). 
 \State $\boldsymbol{\omega}_{i}$ $\gets$ $\boldsymbol{\omega}_{i-1}+\by(i) \cdot \bv_{i}$.
\EndFor
\State Let $E=\{i|\by(i)=0\}$ denote those empty entries in $\by$.
\State Apply Theorem~\ref{subgaussian_Gsw_HSSZ} to obtain $\mathbf{z} \in \{\pm 1\}^E$ for vectors $\{\bv_i\}_{i \in E}$.
\State Output $\bx=\by\left(\overline{E}\right) \circ \bz(E)$ and $\boldsymbol{\omega}=\sum_{i=1}^{n} \bx({i})\cdot \bv_{i}$.
\EndProcedure
\end{algorithmic}
\end{algorithm}

In the rest of this section, we analyze Algorithm~\ref{alg:Bana} and finish the proof of Theorem~\ref{thm:random_Bana}. Let $\boldsymbol{\omega}^{\prime}:=\boldsymbol{\omega}_{n}-\boldsymbol{\omega}_{0}$ denote the discrepancy vector generated by $\by$ and $\boldsymbol{\omega''}:=\sum_{i\in E} \bz({i})\cdot \bv_{i}$ such that the discrepancy $\|\boldsymbol{\omega}\|_{\infty} \le \|\boldsymbol{\omega}'\|_{\infty}+\|\boldsymbol{\omega}''\|_{\infty}$. Then we only need to bound $\|\boldsymbol{\omega'}\|_{\infty}$ and $\|\boldsymbol{\omega''}\|_{\infty}$ separately. We state the main result in \cite{LiuSS22} to bound $\|\boldsymbol{\omega'}\|_{\infty}$.
\begin{theorem}[\cite{LiuSS22}]\label{thm:gaussian_stationary}
    In Algorithm~\ref{alg:Bana}, $\boldsymbol{\omega_i} \sim N(0,\sigma^2 \cdot I_m)$ for all $i \in [n]$.
\end{theorem}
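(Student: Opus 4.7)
I would prove Theorem~\ref{thm:gaussian_stationary} by induction on $i$, leveraging the fixed-point property of the Liu--Sah--Sawhney walk $J^{\sigma'}_{\ell}$ together with the rotational invariance of isotropic Gaussians. The base case $i=0$ is immediate, since Line~4 of Algorithm~\ref{alg:Bana} explicitly draws $\boldsymbol{\omega}_{0}$ from $N(0,\sigma^{2} I_{m})$.

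For the inductive step, suppose $\boldsymbol{\omega}_{i-1}\sim N(0,\sigma^{2} I_{m})$. I would decompose $\R^{m}=L_{i}\oplus L_{i}^{\perp}$, where $L_{i}$ is the one-dimensional subspace spanned by $\bv_{i}$. Because $\boldsymbol{\omega}_{i-1}$ is isotropic, its components along $L_{i}$ and $L_{i}^{\perp}$ are independent, with marginals $N(0,\sigma^{2})$ on $L_{i}$ (identified with $\R$ via the unit vector $\bv_{i}/\|\bv_{i}\|_{2}$) and $N(0,\sigma^{2} I_{m-1})$ on $L_{i}^{\perp}$. In particular, the scalar projection $\ell_{i}=\langle \boldsymbol{\omega}_{i-1},\bv_{i}\rangle/\|\bv_{i}\|_{2}$ has law $N(0,\sigma^{2})$ and is independent of the perpendicular component of $\boldsymbol{\omega}_{i-1}$.

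The update $\boldsymbol{\omega}_{i}=\boldsymbol{\omega}_{i-1}+\by(i)\bv_{i}$ leaves the $L_{i}^{\perp}$ component untouched and adds $\by(i)\|\bv_{i}\|_{2}$ to the scalar along $L_{i}$. Since $\by(i)=J^{\sigma'}_{\ell_{i}}$ is a randomized function of $\ell_{i}$ alone (using fresh randomness independent of $\boldsymbol{\omega}_{i-1}$ given $\ell_{i}$), the new random vector remains independent across the $L_{i}/L_{i}^{\perp}$ decomposition. Thus it suffices to show that the updated parallel scalar $\ell_{i}+\by(i)\|\bv_{i}\|_{2}$ is again distributed as $N(0,\sigma^{2})$. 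After rescaling by $1/\|\bv_{i}\|_{2}$, this reduces to the statement that, with $\sigma'=\sigma/\|\bv_{i}\|_{2}$, one step of the walk $J^{\sigma'}$ preserves the Gaussian stationary distribution on its state space, which is precisely the defining fixed-point property of $J^{\sigma'}_{\ell}$ established in \cite{LiuSS22} and recalled in Section~\ref{sec:stationary_RW}. Combining Gaussianity of the new parallel scalar with independence from the unchanged perpendicular Gaussian component yields $\boldsymbol{\omega}_{i}\sim N(0,\sigma^{2} I_{m})$, closing the induction.

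\textbf{Main obstacle.} The most delicate bookkeeping is aligning the three scalings---the ambient variance $\sigma^{2}$, the walk parameter $\sigma'=\sigma/\|\bv_{i}\|_{2}$, and the physical step size $\|\bv_{i}\|_{2}$---so that the Liu--Sah--Sawhney fixed-point identity is invoked on a state drawn from \emph{exactly} its stationary measure. A secondary but important point is verifying the conditional-independence structure: one must argue that $\by(i)$ depends on $\boldsymbol{\omega}_{i-1}$ only through the one-dimensional sufficient statistic $\ell_{i}$, so that after the update the decomposition into $L_{i}$ and $L_{i}^{\perp}$ still produces independent Gaussian marginals, and hence an isotropic Gaussian overall.
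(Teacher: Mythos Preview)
The paper does not prove this theorem itself; it cites \cite{LiuSS22} and restates the key one-dimensional fixed-point identity (``If $Z\sim N(0,\sigma^{2})$, then $Z+J_{Z}^{\sigma}\sim N(0,\sigma^{2})$'', Lemma~2.3 of \cite{LiuSS22}) in Section~\ref{sec:stationary_RW}. Your induction via the orthogonal decomposition along $\bv_i$, combined with the appeal to that lemma on the parallel scalar, is exactly the argument underlying the cited result.
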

Theorem~\ref{thm:gaussian_stationary} implies that $\boldsymbol{\omega}'=\sum_{i} \by(i) \cdot \bv_i=\boldsymbol{\omega}_{n}-\boldsymbol{\omega}_{0}$ fails to satisfy $\|\boldsymbol{\omega}'\|_{\infty}=O(\sigma \cdot \sqrt{\log m})$ with probability $m^{-1}$. To control the failing probability, the algorithm of Theorem~\ref{thm:random_Bana} will repeat Procedure~\textsc{SamplingBanaszczyk} for $n$ times and verify $\|\boldsymbol{\omega}'\|_{\infty}=O(\sigma \cdot \sqrt{\log m})$ each time until it succeeds. If all $n$ calls failed, the algorithm returns a fixed solution from Theorem~\ref{subgaussian_Gsw_HSSZ}. 

To show that there are exponentially many good colorings, we use the following lemma to prove that the for-loop in Algorithm~\ref{alg:Bana} outputs any fixed string in $\{0, \pm 1\}^n$ with an exponentially small probability. Since the success probability is $1-m^{-1}$, after proving Algorithm~\ref{alg:Bana} outputting any coloring with exponentially small probability $c^n$ for some $c<1$, we have the number of good colorings is at least $(1-m^{-1})c^{-n}.$

\begin{lemma}\label{pr_for_LSS}\label{lem:induction_bana}

For any $\gamma>0$, let $\sigma$ be defined in Line 3 of Algorithm~\ref{alg:Bana}. Then there exist $\alpha<1$ and $\beta<1$ such that for any $t\le n$ and any sequence $\boldsymbol{\epsilon} \in \{0, \pm 1\}^t$, $(\by_1,\ldots,\by_t)$ generated by Algorithm~\ref{alg:Bana} satisfies
 \begin{equation}\label{eq:induction_bana}
    \Pr_{\by}\left[ \by(1)=\boldsymbol{\epsilon}(1),\ldots,\by(t)=\boldsymbol{\epsilon}(t)|\boldsymbol{\omega}_{0} \right] \le \alpha^{t-t_{0}} \cdot r_\sigma^{t_{0}} \cdot \beta^{ \|\bv_t\|_2 \cdot \|\boldsymbol{\omega}_t\|_2^2- \|\boldsymbol{\omega}_0\|_2^2},
    \end{equation}  
    where $t_0$ denotes the number of zeros in $\big( \boldsymbol{\epsilon}(1),\dots,\boldsymbol{\epsilon}(t) \big)$ and $r_{\sigma}=\underset{\ell \in[-0.5,0.5]}{\max}r_\sigma(\ell)$; more importantly, $\alpha$ and $\beta$ satisfy $\gamma > 2 \frac{\ln(1+2\sigma^2\ln \beta)}{\ln(\alpha+r_\sigma)}$ where both $\alpha+r_\sigma$ and $1+2\sigma^2\ln \beta$ are in $(0.9,1)$. 
\end{lemma}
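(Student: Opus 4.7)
The proof will proceed by induction on $t$, with the natural convention $\|\bv_0\|_2 := 1$ so that the base case $t=0$ is trivial: both sides equal $1$ (empty product on the left, zero exponent on the right). For the inductive step, I use the Markov structure of Algorithm~\ref{alg:Bana}: conditional on $\boldsymbol{\omega}_{t-1}$, the variable $\by(t) = J^{\sigma/\|\bv_t\|_2}_{\ell_t}$, where $\ell_t = \langle \boldsymbol{\omega}_{t-1}, \bv_t\rangle/\|\bv_t\|_2$, is independent of $(\by(1),\ldots,\by(t-1))$. Factoring the joint probability and invoking the inductive hypothesis on the first $t-1$ steps, it suffices to bound the one-step probability $\Pr[J^{\sigma/\|\bv_t\|_2}_{\ell_t} = \boldsymbol{\epsilon}(t)]$ by the telescoping ratio
$$\begin{cases} r_\sigma \cdot \beta^{(\|\bv_t\|_2 - \|\bv_{t-1}\|_2)\|\boldsymbol{\omega}_{t-1}\|_2^2}, & \boldsymbol{\epsilon}(t) = 0,\\ \alpha \cdot \beta^{\|\bv_t\|_2\|\boldsymbol{\omega}_t\|_2^2 - \|\bv_{t-1}\|_2\|\boldsymbol{\omega}_{t-1}\|_2^2}, & \boldsymbol{\epsilon}(t) = \pm 1.\end{cases}$$

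\textbf{Zero case.} Here $\boldsymbol{\omega}_t = \boldsymbol{\omega}_{t-1}$. Lemma~\ref{lem:mono_on_prob} supplies the needed monotonicity of $r_{\sigma'}(\ell)$ in $\sigma'$ and $|\ell|$; combined with the choice of $\sigma$ in Algorithm~\ref{alg:Bana} (which forces $r_\sigma < \tfrac{1}{6}e^{-24/\gamma}$ on $[-\tfrac12,\tfrac12]$) and $\sigma/\|\bv_t\|_2 \ge \sigma$, this gives $\Pr[J^{\sigma/\|\bv_t\|_2}_{\ell_t} = 0] \le r_\sigma$. The remaining factor $\beta^{(\|\bv_t\|_2 - \|\bv_{t-1}\|_2)\|\boldsymbol{\omega}_{t-1}\|_2^2} \ge 1$ because the $\bv_i$ are sorted in decreasing norm and $\beta<1$.

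\textbf{Nonzero case.} Using the identity $\|\boldsymbol{\omega}_t\|_2^2 = \|\boldsymbol{\omega}_{t-1}\|_2^2 \pm 2\|\bv_t\|_2 \ell_t + \|\bv_t\|_2^2$, and absorbing the non-positive leftover $(\|\bv_t\|_2 - \|\bv_{t-1}\|_2)\|\boldsymbol{\omega}_{t-1}\|_2^2$, the required inequality reduces to a pointwise estimate
$$\Pr[J^{\sigma/\|\bv_t\|_2}_{\ell_t} = \pm 1] \le \alpha \cdot \beta^{\|\bv_t\|_2(\pm 2\|\bv_t\|_2 \ell_t + \|\bv_t\|_2^2)}.$$
Because the walk $J^{\sigma'}_\ell$ is engineered so its step density has a Gaussian-type profile in $\ell$ (ensuring the stationary distribution is $N(0,\sigma'^2)$), this is essentially a comparison between a Gaussian tail in $\ell$ and an exponential $\beta^{2\|\bv_t\|_2^2 \ell}$. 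We pick $\alpha$ and $\beta$ close to (but strictly less than) $1$ so that $\alpha + r_\sigma$ and $1+2\sigma^2\ln\beta$ both lie in $(0.9,1)$, while simultaneously $\gamma > 2\ln(1+2\sigma^2\ln\beta)/\ln(\alpha + r_\sigma)$; this is feasible because $r_\sigma$ can be driven to zero by enlarging $\sigma$.

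\textbf{Main obstacle.} The delicate step is the nonzero case: the step probabilities of $J^{\sigma'}_\ell$ are not globally pure Gaussians, so obtaining the clean exponential bound $\alpha \cdot \beta^{\pm 2\|\bv_t\|_2^2 \ell_t + \|\bv_t\|_2^2}$ requires invoking the explicit construction of the walk together with the monotonicity results of Lemma~\ref{lem:mono_on_prob}, handling both the small-$|\ell|$ regime (where the probability is near its peak) and the large-$|\ell|$ regime (where Gaussian decay sets in). A secondary but nontrivial task is to verify that the four quantitative constraints on $(\alpha,\beta)$---strict inequality $<1$, the $(0.9,1)$ windows, and the $\gamma$-constraint---admit a simultaneous solution given the choice of $\sigma$ from Line 3 of Algorithm~\ref{alg:Bana}; this reduces to a short feasibility calculation using the smallness of $r_\sigma$.
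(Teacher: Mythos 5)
Your structural outline matches the paper's argument in spirit: induction on $t$, factoring via the Markov structure, and reducing to a one-step inequality comparing $\Pr[J^{\sigma/\|\bv_t\|_2}_{\ell_t}=\boldsymbol{\epsilon}(t)\mid\boldsymbol{\omega}_{t-1}]$ against a telescoping ratio of the right-hand side. Your version telescopes $\|\bv_t\|_2\|\boldsymbol{\omega}_t\|_2^2$ directly, whereas the paper introduces an intermediate potential function $f(\by,i)$ with increment $\alpha\beta^{\|\bv_i\|_2(\|\boldsymbol{\omega}_i\|_2^2-\|\boldsymbol{\omega}_{i-1}\|_2^2)}$ and then separately sums those increments to recover the $\|\bv_t\|_2\|\boldsymbol{\omega}_t\|_2^2-\|\boldsymbol{\omega}_0\|_2^2$ exponent; after you ``absorb the non-positive leftover'' these two bookkeeping choices reduce to the same pointwise target, so that part is fine. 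Your treatment of the zero case is also correct and matches the paper.

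However, there is a genuine gap in the nonzero case, and you have in effect flagged it yourself under ``Main obstacle'' without closing it. The claim
\[
\Pr\bigl[J^{\sigma/\|\bv_t\|_2}_{\ell_t}=\pm1\bigr]\ \le\ \alpha\cdot\beta^{\|\bv_t\|_2(\|\boldsymbol{\omega}_t\|_2^2-\|\boldsymbol{\omega}_{t-1}\|_2^2)}
\]
is not ``essentially a Gaussian tail versus an exponential'' in any single regime: the paper needs a five-way case split on the sign of $\boldsymbol{\epsilon}(t)$ and on whether $|\langle\boldsymbol{\omega}_{t-1},\bv_t\rangle|$ is below or above a carefully tuned threshold $B/\|\bv_t\|_2$ (with $B=24\sigma^2/\gamma$), and uses the monotonicity of $p_\sigma$ from Lemma~\ref{lem:mono_on_prob} together with five explicit quantitative conditions on $(\alpha,\beta,B,\sigma)$ that must be verified. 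In particular, the ``anti-concentration'' direction (stepping toward the origin when $\langle\boldsymbol{\omega},\bv\rangle$ is large) only yields the trivial bound $1$, and $\alpha\beta^{-2B+1}\ge1$ must be checked; the ``toward-the-boundary but still bounded'' case needs $\alpha\beta^{1+2B}\ge p_\sigma(-0.5)$; and the large-$\ell$ case needs the explicit Gaussian-type decay of $p_\sigma$. None of these is established in your sketch. Likewise, ``we pick $\alpha$ and $\beta$ close to $1$'' and ``a short feasibility calculation'' does not by itself give the constraint $\gamma>2\ln(1+2\sigma^2\ln\beta)/\ln(\alpha+r_\sigma)$ with both quantities in $(0.9,1)$; the paper has to commit to $\alpha=1-\tfrac13 e^{-B/\sigma^2}$ and $\beta=1-\tfrac{1}{4B}e^{-B/\sigma^2}$ together with $r_\sigma<\tfrac16 e^{-24/\gamma}$ and then check all the conditions simultaneously. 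Until the case analysis and the explicit constant verification are carried out, the inductive step is not actually proved, only reduced to the part of the argument that carries all the difficulty.
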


Here we finish the proof of Theorem~\ref{thm:random_Bana} and defer the proof of Lemma~\ref{lem:induction_bana} to Section~\ref{sec:induction_bana}.  


\begin{proofof}{Theorem~\ref{thm:random_Bana}}
First, we bound the discrepancy of Procedure~\textsc{SamplingBanaszkzyk}. We rewrite $\|\boldsymbol{\omega}\|_{\infty} \le \|\boldsymbol{\omega}'\|_{\infty}+\|\boldsymbol{\omega}''\|_{\infty}$. Then $\|\boldsymbol{\omega}'\|_{\infty}=O(\sigma \cdot \sqrt{\log m})$ with probability $1-m^{-1}$ from Theorem~\ref{thm:gaussian_stationary} (in fact $\sigma=O(\sqrt{1/\gamma})$ from the proof of Lemma~\ref{lem:mono_on_prob}). And $\|\boldsymbol{\omega}''\|_{\infty}=O(\sqrt{\log m})$ from Theorem~\ref{subgaussian_Gsw_HSSZ}.

Then we show the probability of outputting any coloring is exponentially small. By Lemma \ref{lem:induction_bana} and $\boldsymbol{\omega}_{0}\sim N(0,\sigma^{2}I_{m})$, Procedure~\textsc{SamplingBanaszkzyk} has
    \begin{align*}
    &\Pr\left[\by(1)=\boldsymbol{\epsilon}(1),\dots,\by(n) =\boldsymbol{\epsilon}(n) \right]\\
    &=
    \int_x \Pr[\by(1)=\boldsymbol{\epsilon}(1),\ldots,\by(n)=\boldsymbol{\epsilon}(n)|\boldsymbol{\omega}_0=\mathbf{g}] \cdot \Pr[\boldsymbol{\omega}_0=\mathbf{g}] \mathrm{d} \mathbf{g}
    \\
    & = \int_{\R^{m}} \Pr[\by(1)=\boldsymbol{\epsilon}(1),\ldots,\by(n)=\boldsymbol{\epsilon}(n)|\boldsymbol{\omega}_0=\mathbf{g}] \cdot (2\pi \sigma^{2})^{-\frac{m}{2}}\cdot e^{-\|\mathbf{g}\|_{2}^{2}/(2\sigma^{2})} \mathrm{d} \mathbf{g} \tag{since $\boldsymbol{\omega}_{0}\sim N(0,\sigma^{2}I_{m})$}\\
    &\le  (2\pi \sigma^{2})^{-\frac{m}{2}} \cdot \alpha^{n-t_{0}} \cdot \beta^{ \|\bv_n\|_2 \cdot \|\boldsymbol{\omega}_n\|_2^2} \cdot r_\sigma^{t_{0}} 
 \cdot \int_{\R^{m}}e^{-\|\mathbf{g}\|_{2}^{2} \cdot (1/(2\sigma^{2}) + \ln(\beta))}\cdot \mathrm{d} \mathbf{g} \tag{apply Lemma~\ref{lem:induction_bana} and rewrite $\beta^{-\|\mathbf{g}\|_2^2}$ as $e^{- \|\mathbf{g}\|_2^2 \cdot \ln \beta}$}
\\ 
    &\le (2\pi \sigma^{2})^{-\frac{m}{2}} \cdot \alpha^{n-t_{0}} \cdot r_\sigma^{t_{0}} \cdot \left(\pi \cdot \frac{1}{1/(2\sigma^2) + \ln \beta}\right)^{\frac{m}{2}}\tag{note that $\beta<1$} \\
    & = \alpha^{n} \cdot (r_\sigma/\alpha)^{t_0} \cdot \left(\frac{1}{ 1 + 2\sigma^2 \ln \beta}\right)^{m/2}. \numberthis \label{eqn:prob_y}
\end{align*}



Now we show that the final output is sufficiently random. The loop of Algorithm~\ref{alg:Bana} generates a partial coloring $\by \in \{0,\pm 1\}^{n}$. Let $\bz \in \{\pm 1\}^{E}$ be the partial coloring generated in line $13$ on the empty set $E$, and the final output be $\bx=\by\left(\overline{E}\right) \circ \bz(E)$. 
After replacing all $0$s of $\by$ in Line $13$, there are at most $\binom{n}{t_0}$ possible $\by$s that could result in the same full coloring $\bx\in \left\{\pm 1\right\}^{n}$. 
By the upper bound of $\Pr[\by]$ in \eqref{eqn:prob_y}, we bound $\Pr[\bx=\boldsymbol{\epsilon}]$ as 
\begin{equation}
\begin{aligned}
\sum_{E \subseteq [n]} \Pr[\boldsymbol{\epsilon}=\by\left(\overline{E}\right) \circ \bz(E)]&\le  \sum_{t_0=0}^n \binom{n}{t_0}\alpha^{n} \cdot (r_\sigma/\alpha)^{t_0} \cdot \left(\frac{1}{ 1 + 2\sigma^2 \ln \beta}\right)^{m/2}\\
&=\alpha^n(1+\frac{r_\sigma}{\alpha})^n \cdot \left(\frac{1}{ 1 + 2\sigma^2 \ln \beta}\right)^{m/2}\\
&=(\alpha+r_\sigma)^n \cdot \left(\frac{1}{ 1 + 2\sigma^2 \ln \beta}\right)^{m/2}\\
& = \left(\frac{\alpha+r_\sigma}{ (1 + 2\sigma^2 \ln \beta)^{\frac{m}{2n}}}\right)^{n} \le \left(\frac{\alpha+r_\sigma}{ (1 + 2\sigma^2 \ln \beta)^{\frac{2}{\gamma}}}\right)^{n}.
\end{aligned}
\end{equation} 
In the last step, we uses the property $n \ge \gamma m$.
 

After repeating Procedure~\textsc{SamplingBanaszkzyk} for $n$ times, this probability becomes
\[
\left(\frac{\alpha+r_\sigma}{ (1 + 2\sigma^2 \ln \beta)^{\frac{2}{\gamma}}}\right)^{n}/(1-m^{-1})+m^{-n} \] where $m^{-n}$ is the probability that each call fails to generate $\|\boldsymbol{\omega}'\|_{\infty}=O(\sqrt{\log m})$.


By the last property of Lemma \ref{lem:induction_bana}, $\alpha$ and $\beta$ gives a constant $\frac{\alpha+r_\sigma}{ (1 + 2\sigma^2 \ln \beta)^{\frac{2}{\gamma}}}$ that is strictly less than 1. So each coloring is output with an exponentially small probability. And there are exponentially many good colorings.

\end{proofof}


\subsection{\texorpdfstring{$0, \pm 1$}{} Random Walk in \texorpdfstring{\cite{LiuSS22}}{}}\label{sec:stationary_RW}
We review the Gaussian fix-point random walk $J^{\sigma}_x$ introduced in \cite{LiuSS22} and discuss its properties here.

\begin{definition}[$0, \pm 1$ walk]\label{LSSrw}
Given $\sigma \ge 1$ and $x \in \mathbb{R}$, consider two transition probability functions,
 \begin{equation}\label{def:p_sigma}
     p_{\sigma}(x)=\sum_{j\ge 1}(-1)^{j-1}\exp\left(\frac{-j\cdot (2x+j)}{2\sigma^{ 2}}\right) \textit{ and }
 \end{equation}
  \begin{equation}
     r_{\sigma}(x)=\sum_{j=-\infty }^{\infty}(-1)^{j}\exp\left(\frac{-j\cdot(2x+j)}{2\sigma^{2}}\right).
 \end{equation}
Define the random variable $J_x^{\sigma}$ as
\begin{equation}
\label{eq6}
J_{x}^{\sigma}=\left\{
\begin{aligned}
0 &  \textit{ with prob. } r_{\sigma}(x), \textit{ for } x \in [-0.5,0.5]\\
1 &  \textit{ with prob. } p_{\sigma}(x), \textit{ for } x \in [-0.5,0.5]\\
-1 &  \textit{ with prob. } p_{\sigma}(-x), \textit{ for } x \in [-0.5,0.5]\\
1 &  \textit{ with prob. } p_{\sigma}(x), \textit{ for } x>0.5\\
-1 &  \textit{ with prob. } 1-p_{\sigma}(x), \textit{ for } x>0.5\\
-1 &  \textit{ with prob. } p_{\sigma}(-x), \textit{ for } x<-0.5\\
1 &  \textit{ with prob. } 1-p_{\sigma}(-x), \textit{ for } x<-0.5.
\end{aligned}
\right.
\end{equation}
\end{definition}

Liu, Sah, and Sawhney \cite{LiuSS22} showed that $p_{\sigma}(x)+p_{\sigma}(-x)+r_{\sigma}(x)=1$ and $J^{\sigma}_x$ defines a random walk on $\mathbb{R}$ whose stationary distribution is Gaussian.

\begin{lemma}[Lemma 2.3 of \cite{LiuSS22}]
If $Z\sim N(0,\sigma^{2})$, then $Z+J_{Z}^{\sigma}$ is distributed as $N(0,\sigma^{2})$.  
\end{lemma}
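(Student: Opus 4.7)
\medskip
\noindent\textbf{Proof proposal.}
The plan is a direct density computation: write down the density of $Z+J_{Z}^{\sigma}$ at an arbitrary point $y\in\mathbb{R}$ and check that it equals the Gaussian density $\phi(y):=(2\pi\sigma^{2})^{-1/2}e^{-y^{2}/(2\sigma^{2})}$. Conditioning on $J_{Z}^{\sigma}\in\{-1,0,+1\}$,
\[
 f_{Z+J}(y)\;=\;\phi(y-1)\cdot\Pr\!\bigl[J_{y-1}^{\sigma}=+1\bigr]\;+\;\phi(y+1)\cdot\Pr\!\bigl[J_{y+1}^{\sigma}=-1\bigr]\;+\;\phi(y)\cdot\Pr\!\bigl[J_{y}^{\sigma}=0\bigr],
\]
where the last term is understood to vanish outside $y\in[-1/2,1/2]$.

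The key algebraic observation I would exploit is that multiplying the series \eqref{def:p_sigma} by the Gaussian density absorbs the affine exponent into a perfect square: a one-line computation gives
\[
 \phi(x)\cdot p_{\sigma}(x)\;=\;\frac{1}{\sqrt{2\pi}\,\sigma}\sum_{j\ge 1}(-1)^{j-1}e^{-(x+j)^{2}/(2\sigma^{2})},
\]
and analogous identities hold for $\phi(x)p_{\sigma}(-x)$ and $\phi(x)r_{\sigma}(x)$ (the latter yielding a two-sided alternating sum of shifted Gaussians). So each of the three terms in $f_{Z+J}(y)$ becomes an alternating sum of $\phi$-values along the lattice $y+\mathbb{Z}$, and what remains is to verify that these alternating sums telescope to $\phi(y)$.

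Next I would check this telescoping case by case according to which piece of the piecewise definition of $J^{\sigma}$ applies to each of $y-1,y,y+1$. In the cleanest range $y>3/2$ (and symmetrically $y<-3/2$), both $y-1$ and $y+1$ lie in the $|x|>1/2$ branch, so $\Pr[J_{y-1}=+1]=p_{\sigma}(y-1)$ and $\Pr[J_{y+1}=-1]=1-p_{\sigma}(y+1)$. Expanding $\phi(y-1)p_{\sigma}(y-1)=\tfrac{1}{\sqrt{2\pi}\sigma}\sum_{k\ge 0}(-1)^{k}e^{-(y+k)^{2}/(2\sigma^{2})}$ and $\phi(y+1)[1-p_{\sigma}(y+1)]=\tfrac{1}{\sqrt{2\pi}\sigma}\sum_{k\ge 1}(-1)^{k-1}e^{-(y+k)^{2}/(2\sigma^{2})}$ and adding, all terms with $k\ge 1$ cancel in pairs and only $\phi(y)$ survives. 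The ranges $1/2<|y|\le 3/2$ use a symmetric version of the same cancellation (now $\phi(y+1)p_{\sigma}(-(y+1))$ reindexes to $\sum_{k\ge 0}(-1)^{k}e^{-(y-k)^{2}/(2\sigma^{2})}$), and the central range $|y|\le 1/2$ additionally contributes $\phi(y)r_{\sigma}(y)$, whose two-sided series supplies precisely the lattice terms missing from the one-sided sums.

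The main obstacle is not any single identity but the bookkeeping across these cases: in each range one must rewrite the ``remainder'' probabilities ($1-p_{\sigma}$ on the outside, $r_{\sigma}$ in the middle) as alternating Gaussian sums before the cancellation becomes manifest. I would separately verify the sanity check $p_{\sigma}(x)+p_{\sigma}(-x)+r_{\sigma}(x)=1$ from the series definitions (this is already recorded in the excerpt), which guarantees that the middle range is handled consistently and that no probability mass is lost in passing between the branches.
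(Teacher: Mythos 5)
Your proposal is correct. The paper itself does not prove this statement; it cites it directly as Lemma 2.3 of \cite{LiuSS22} (the $0,\pm1$ random walk is imported wholesale, and the stationarity is taken as given). Your density-level computation is the standard verification: the identity $x^2 + j(2x+j) = (x+j)^2$ turns $\phi(x)p_\sigma(x)$, $\phi(x)p_\sigma(-x)$, and $\phi(x)r_\sigma(x)$ into alternating sums of shifted Gaussians along $x+\mathbb{Z}$, and the case-by-case telescoping you describe does work out (I checked the ranges $|y|\le 1/2$, $1/2<|y|\le 3/2$, and $|y|>3/2$; in each one the $k\ge 1$ lattice terms cancel in pairs leaving exactly $\phi(y)$). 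One point worth making explicit in a polished write-up is that the branch definitions of $J^\sigma_x$ agree at $x=\pm 1/2$, which follows from $r_\sigma(\pm 1/2)=0$ (visible from the Jacobi triple product form in \eqref{eq_r_sigma(x)}, where the $j=1$ factor vanishes) together with $p_\sigma(x)+p_\sigma(-x)+r_\sigma(x)=1$; this guarantees the piecewise bookkeeping is consistent at the boundaries you condition on.
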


To prove that this random walk is sufficiently random, we prove the following properties for $p_\sigma$ and $r_\sigma$, which may be of independent interest.

\begin{lemma}\label{lem:mono_on_prob}
The transition probabilities of $J^{\sigma}_x$ satisfy the following monotonicity properties:
\begin{enumerate}
\item For any fixed $\sigma\ge 1$, $p_{\sigma}(x)$ is strictly monotonically decreasing on $x\in[-0.5,+\infty)$.

\item For any $x\in [0.5,+\infty)$, $p_{\sigma}(x)$ is strictly monotonically increasing on $\sigma\in[1,+\infty)$. 

\item For $x = -0.5$, $p_{\sigma}(-0.5)$ is strictly monotonically decreasing on $\sigma\in[1,+\infty)$. 

\item For any $x\in [-0.5,0.5]$, $r_{\sigma}(x)$ is strictly monotonically decreasing on $\sigma\in[1,+\infty)$. And $\underset{x\in [-0.5,0.5]}{\max} r_{\sigma}(x)=r_{\sigma}(0)$ approaches to $0$ as $\sigma\rightarrow\infty.$
\end{enumerate}
\end{lemma}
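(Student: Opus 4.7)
The plan is to differentiate the defining series termwise, reduce each claim to the sign of an explicit alternating sum, and exploit the Jacobi triple product. Setting $u:=e^{-1/(2\sigma^{2})}$ and $v:=e^{-x/\sigma^{2}}$ turns $e^{-j(2x+j)/(2\sigma^{2})}$ into $v^{j}u^{j^{2}}$, and the Jacobi identity yields the manifestly-positive product form
\[
r_\sigma(x) \;=\; \prod_{n\ge 1}(1-u^{2n})\bigl(1-vu^{2n-1}\bigr)\bigl(1-v^{-1}u^{2n-1}\bigr).
\]
A first observation reduces the workload: $r_\sigma$ is even in $x$ (pair $j\leftrightarrow -j$ in its series), and the $n=1$ factor $1-vu$ vanishes at $x=-0.5$, so $r_\sigma(\pm 0.5)=0$ and hence $p_\sigma(0.5)+p_\sigma(-0.5)=1$ by the normalization $p_\sigma(x)+p_\sigma(-x)+r_\sigma(x)=1$. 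Differentiating in $\sigma$ gives $\partial_\sigma p_\sigma(-0.5)=-\partial_\sigma p_\sigma(0.5)$, so Property 3 is an immediate consequence of Property 2 at $x=0.5$.

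For Property 4, the product form makes both claims transparent: the pair of factors $(1-vu^{2n-1})(1-v^{-1}u^{2n-1})=1-(v+v^{-1})u^{2n-1}+u^{4n-2}$ depends on $x$ only through $v+v^{-1}=2\cosh(x/\sigma^{2})$, which is minimized at $x=0$ over $[-0.5,0.5]$, so each factor, and hence the product $r_\sigma(x)$, is maximized at $x=0$. For monotonicity in $\sigma$ and the limit, I would pass to the Poisson-dual representation
\[
r_\sigma(x) \;=\; 2\sigma\sqrt{2\pi}\,e^{x^{2}/(2\sigma^{2})}\sum_{k\ge 1} e^{-2\pi^{2}\sigma^{2}(k-1/2)^{2}}\cos\bigl(\pi x(2k-1)\bigr),
\]
where for $\sigma\ge 1$ the $k=1$ term dominates the rest by a factor $\ge e^{4\pi^{2}}>10^{17}$; the leading factor $\sigma e^{-\pi^{2}\sigma^{2}/2}$ has $\sigma$-derivative $(1-\pi^{2}\sigma^{2})e^{-\pi^{2}\sigma^{2}/2}<0$ for $\sigma\ge 1>1/\pi$ and tends to $0$ as $\sigma\to\infty$. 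Property 2 follows analogously: logarithmically differentiating the Jacobi product of $r_\sigma$ in $\sigma$ (via $\partial_\sigma u>0$) gives a manifestly-negative $\partial_\sigma r_\sigma$, and combining with $p_\sigma(x)=1-p_\sigma(-x)-r_\sigma(x)$ together with the reduction of $\partial_\sigma p_\sigma(-x)$ (which, for $x\ge 0.5$, lies in the Property 3 regime already handled) yields $\partial_\sigma p_\sigma(x)>0$.

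For Property 1, direct differentiation gives
\[
\partial_x p_\sigma(x)\;=\;-\frac{e^{x^{2}/(2\sigma^{2})}}{\sigma^{2}}\sum_{j\ge 1}(-1)^{j-1}j\,e^{-(j+x)^{2}/(2\sigma^{2})},
\]
but Leibniz does not apply to the right-hand sum, since its summands are unimodal in $j$ when $\sigma$ is large. The Jacobi product of $r_\sigma$ controls $\partial_x r_\sigma$ and, via the differentiated normalization, isolates only the antisymmetric part $\partial_x p_\sigma(x)-\partial_x p_\sigma(-x)=-\partial_x r_\sigma(x)$. To extract $\partial_x p_\sigma(x)$ alone, I would combine this with the density identity $\phi_\sigma(y)\,p_\sigma(y)=\sum_{j\ge 1}(-1)^{j-1}\phi_\sigma(y+j)$ (a consequence of the walk's stationarity equation $\phi_\sigma(y)=\sum_{k}\phi_\sigma(y-k)\Pr[J^{\sigma}_{y-k}=k]$ summed telescopically), reformulating the sign question as a lattice-indexed log-convexity inequality for Gaussian densities, and with a monotone-coupling argument reflecting the drift of $J^{\sigma}$ toward the origin.

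\textbf{Main obstacle.} The chief technical difficulty is Property 1: the Jacobi triple product cleanly handles $r_\sigma$ and the symmetric combination $p_\sigma(x)+p_\sigma(-x)$, but it does not by itself decouple $\partial_x p_\sigma(x)$ from $\partial_x p_\sigma(-x)$, and the alternating sum for $\partial_x p_\sigma$ has non-monotone weights so the Leibniz test fails. Closing this gap requires either a probabilistic coupling on the walk $J^{\sigma}$ or a more delicate analytic estimate on the one-sided series (for instance via a Poisson representation combined with lattice log-convexity of Gaussian densities); this asymmetric decoupling is the technical heart of the argument. A secondary but more routine issue is making the Poisson-dual tail bounds for Property 4 explicit and uniform in $x$ at the boundary $\sigma=1$.
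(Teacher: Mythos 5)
Your proposal nails Property 4 (the Jacobi-product observation that each paired factor $1-(v+v^{-1})u^{2n-1}+u^{4n-2}$ depends on $x$ only through $\cosh(x/\sigma^2)$, together with the fact that $vu^{2n-1}=e^{-(x+n-1/2)/\sigma^2}$ and $v^{-1}u^{2n-1}=e^{-(n-1/2-x)/\sigma^2}$ are increasing in $\sigma$ for $x\in[-0.5,0.5]$, is a clean route), and the observation $r_\sigma(\pm 0.5)=0$, hence $p_\sigma(0.5)+p_\sigma(-0.5)=1$, is a pretty identity. However, Properties 1--3 are not actually proven. You explicitly concede Property 1. For Properties 2 and 3 the logic is circular: from $r_\sigma(\pm0.5)=0$ you get the \emph{relation} $\partial_\sigma p_\sigma(0.5)=-\partial_\sigma p_\sigma(-0.5)$, but this alone carries no sign information; you deduce Property 3 from ``Property 2 at $x=0.5$'' and then claim Property 2 at $x=0.5$ from ``the Property 3 regime.'' Moreover for $x>0.5$ the reduction via $p_\sigma(x)=1-p_\sigma(-x)-r_\sigma(x)$ needs both $\partial_\sigma p_\sigma(-x)\le 0$ at $-x<-0.5$, which is outside all four properties and nowhere established, and $\partial_\sigma r_\sigma(x)\le 0$ for $x>0.5$, where the Jacobi product now contains a negative factor ($1-v^{-1}u<0$) and the Poisson-dual terms $\cos(\pi x(2k-1))$ no longer have a fixed sign, so the ``manifestly-negative'' claim fails outside $[-0.5,0.5]$.

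By contrast, the paper treats Properties 1--3 by an entirely one-sided alternating-series argument: it introduces $A_\sigma(x)=\sum_{j\ge1}(-1)^{j-1}e^{-(x+j)^2/(2\sigma^2)}$ and $B_\sigma(x)=\sum_{j\ge1}(-1)^{j-1}j\,e^{-(x+j)^2/(2\sigma^2)}$, writes $\partial_x p_\sigma=-\sigma^{-2}e^{x^2/(2\sigma^2)}B_\sigma(x)$ and $\partial_\sigma p_\sigma=\sigma^{-3}e^{x^2/(2\sigma^2)}(2xB_\sigma(x)+B_\sigma(x)+2\sum_{i\ge1}(-1)^iB_\sigma(x+i))$, and proves the needed signs by showing, through a bootstrap estimate on $A_\sigma(x)-A_\sigma(x+1)$ and $B_\sigma(x)-B_\sigma(x+1)$, that $A_\sigma$ is strictly decreasing for $x\ge-1/2$ and $B_\sigma$ is strictly decreasing for $x\ge1/2$, after which the alternating-series test applies. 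This is precisely the ``more delicate analytic estimate on the one-sided series'' that you identified as the missing ingredient; the workhorse lemma you would need to supply is the monotonicity of $A_\sigma$ (for Property 1) and of $B_\sigma$ (for Properties 2 and 3), neither of which follows from the Jacobi product, which as you correctly note only controls the symmetric combinations.
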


The proof of Lemma~\ref{lem:mono_on_prob} is deferred to Section~\ref{sec:pf_mono}. We will use these properties in the following way:
\begin{corollary}\label{cor:upper_bounds_pr}
    \begin{enumerate}
        \item For any $x\ge -0.5$ and any $\sigma' \ge \sigma \ge 1$, $p_{\sigma'}(x) \le p_{\sigma'}(-0.5)\le p_{\sigma}(-0.5).$ 

        \item For a fixed $\sigma$, let $r_{\sigma}=r_{\sigma}(0)$. For any $x\in [-0.5,0.5]$ and any $\sigma' \ge \sigma$, $r_{\sigma'}(x) \le r_{\sigma}$.
    \end{enumerate}
\end{corollary}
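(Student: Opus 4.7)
The corollary follows by chaining together the monotonicity properties established in Lemma~\ref{lem:mono_on_prob}, so the plan is simply to identify which property is used at each step of each inequality.

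For the first claim, I would split $p_{\sigma'}(x)\le p_\sigma(-0.5)$ into two pieces matching the two intermediate quantities in the statement. To get $p_{\sigma'}(x)\le p_{\sigma'}(-0.5)$ for any $x\ge -0.5$, I would invoke property~1 of Lemma~\ref{lem:mono_on_prob}, which says that for each fixed $\sigma'\ge 1$ the function $p_{\sigma'}(\cdot)$ is strictly decreasing on $[-0.5,+\infty)$; plugging in the minimum point $-0.5$ gives the maximum value $p_{\sigma'}(-0.5)$. Then, to pass from $p_{\sigma'}(-0.5)$ to $p_\sigma(-0.5)$ for $\sigma'\ge\sigma\ge 1$, I would invoke property~3 of Lemma~\ref{lem:mono_on_prob}, which says that $p_\sigma(-0.5)$ is strictly decreasing as a function of $\sigma\in[1,+\infty)$; this yields $p_{\sigma'}(-0.5)\le p_\sigma(-0.5)$. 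Composing the two inequalities proves part~1.

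For the second claim, I would again split the inequality through the intermediate point $r_{\sigma'}(0)$. By the final sentence of property~4 of Lemma~\ref{lem:mono_on_prob}, we have $\max_{x\in[-0.5,0.5]} r_{\sigma'}(x)=r_{\sigma'}(0)$, so $r_{\sigma'}(x)\le r_{\sigma'}(0)$ for every $x\in[-0.5,0.5]$. The first sentence of property~4 then gives $r_{\sigma'}(0)\le r_{\sigma}(0)=r_\sigma$ whenever $\sigma'\ge\sigma$, since $r_\sigma(0)$ is strictly decreasing in $\sigma$. Chaining these two inequalities gives $r_{\sigma'}(x)\le r_\sigma$, which is exactly part~2.

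There is no real obstacle here: both parts are immediate from Lemma~\ref{lem:mono_on_prob} once we notice that $-0.5$ is the maximizer of $p_{\sigma'}$ on $[-0.5,+\infty)$ (property~1) and $0$ is the maximizer of $r_{\sigma'}$ on $[-0.5,0.5]$ (property~4), so that the worst-case-in-$x$ reduction is done by the first step in each part, and the worst-case-in-$\sigma$ reduction is done by the second step in each part using properties~3 and~4 respectively. The entire argument therefore amounts to a two-line chain of inequalities per claim.
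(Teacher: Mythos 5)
Your proof is correct and is exactly the intended argument: the paper states Corollary~\ref{cor:upper_bounds_pr} without a separate proof precisely because it follows immediately from chaining the monotonicity-in-$x$ statements (properties~1 and~4 of Lemma~\ref{lem:mono_on_prob}) with the monotonicity-in-$\sigma$ statements (properties~3 and~4), as you do.
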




\subsection{Proof of Lemma~\ref{lem:induction_bana}}\label{sec:induction_bana}
In this section, we assume $\sigma\ge 1$ taken in Line 3 of Algorithm~\ref{alg:Bana} satisfying $r_{\sigma}<\frac{1}{6}e^{-\frac{24}{\gamma}}$ and $\gamma<0.5$ without loss of generality. We fix $B=\frac{24\sigma^2}{\gamma}$ as a quite large constant and choose $\beta= 1-\frac{1}{4B}e^{-\frac{B}{\sigma^2}}=1-\frac{\gamma}{96\sigma^2}e^{-\frac{24}{\gamma}}$ and $\alpha=1-\frac{1}{3}e^{-\frac{B}{\sigma^2}}=1-\frac{1}{3}e^{-\frac{24}{\gamma}}$ such that the following condition holds in this section.

\begin{condition}\label{condition:alpha_beta}
We assume that $\alpha,\beta,B$, and $\sigma$ satisfy the following properties (which will be verified in the end of this section):
\begin{enumerate}
    \item \label{sec5:condition1}$\alpha\beta^{-2B+1}\ge 1.$
    \item\label{sec5:condition2} $\alpha\beta^2\ge p_{\sigma}(-0.5).$ 
    \item\label{sec5:condition3} $\alpha\beta^{1+2B}\ge p_{\sigma}(-0.5).$ 
    \item\label{sec5:condition4} $\alpha \beta^{\|\bv_{i+1}\|_2^3} \notag >1-\exp \left( \frac{-B}{\sigma^2} - \frac{\|\bv_{i+1}\|_2^2}{2\sigma^2} \right) + \exp \left( \frac{-2B}{\sigma^2} - \frac{2\|\bv_{i+1}\|_2^2}{\sigma^2} \right)$. 
    \item\label{sec5:condition5} $\exp\left(\frac{ - \langle \boldsymbol{\omega}_i, \bv_{i+1}\rangle \cdot \|\bv_{i+1}\|_2}{\sigma^{2}}-\frac{\|\bv_{i+1}\|_{2}^{2}}{2\sigma^{2}}\right)\le \alpha \cdot \beta^{\|\bv_{i+1}\|_2 \cdot (2\langle \boldsymbol{\omega}_i, \bv_{i+1}\rangle + \|\bv_{i+1}\|_2^2)}$ for  any $\langle\boldsymbol{\omega}_i,\bv_{i+1}\rangle\ge B/\|\bv_{i+1}\|_2$. 
\end{enumerate}
\end{condition}

We verify that $\alpha=1-\frac{1}{3}e^{-B/\sigma^2},\beta=1-\frac{1}{4B}e^{-B/\sigma^2},B=\frac{24\sigma^2}{\gamma}$ and $\sigma$ with $r_{\sigma}<\frac{1}{6} e^{-24/\gamma}$ satisfy Condition~\ref{condition:alpha_beta} in the end of this section. Now we finish the proof of Lemma~\ref{lem:induction_bana} by rewriting
\begin{equation}
\Pr[\by(1)=\boldsymbol{\epsilon}(1),\ldots,\by(t)=\boldsymbol{\epsilon}|\boldsymbol{\omega}_0]=
\Pr[\by(1)|\boldsymbol{\omega}_0] \cdot \Pr[\by(2)|\by(1), \boldsymbol{\omega}_0] \cdots \Pr[\by(t)|\by(t-1), \ldots,\boldsymbol{\omega}_0].
\end{equation}
Notice that $\Pr[\by(i)|\by(i-1), \ldots,\by(1), \boldsymbol{\omega}_0]=\Pr[J_{\ell_i}^{\sigma'}=\by(i)]$ for $\ell_i=\langle \boldsymbol{\omega}_{i-1}, \bv_i \rangle/\|\bv_i\|_2$ and $\sigma'=\sigma/\|\bv_i\|_2$. Since $\bv_1,\ldots,\bv_n$ are fixed in Algorithm~\ref{alg:Bana}, $\Pr[\by(i)|\by(i-1), \ldots,\by(1), \boldsymbol{\omega}_0]=\Pr[J_{\ell_i}^{\sigma'}=\by(i)]$ depends only on $\langle \boldsymbol{\omega}_{i-1},\bv_i \rangle$ and we simplify the notation as $\Pr[\by(i)|\langle \boldsymbol{\omega}_{i-1},\bv_i \rangle]$. We use a potential function $f(\by,i)$ to dominate $\Pr \left[\by(1),\dots, \by(i+1)|\boldsymbol{\omega}_{0}\right]$.

\begin{definition}[Potential function]
Assuming $0<\alpha,~\beta<1$ satisfying Condition \ref{condition:alpha_beta}, for any $\bv_1,\ldots,\bv_i$ and $\by(1),\ldots,\by(i)$, let $\boldsymbol{\omega}_i=\by(1) \cdot \bv_1 + \cdots + \by(i) \cdot \bv_i$. For each $i$ we define $f(\by, i)$ by induction. If $\by(i)=\pm 1$, let
\begin{equation}
    f(\by,i)=f(\by,i-1)\cdot \alpha\cdot \beta^{\|\bv_i\|_2\cdot (\|\boldsymbol{\omega}_i\|_2^2-\|\boldsymbol{\omega}_{i-1}\|_2^2)},
\end{equation}
and if $\by(i)=0$, let
\begin{equation}
    f(\by,i)=f(\by,i-1)\cdot r_{\sigma/\|\bv_i\|_2}(\ell_i).
\end{equation}
\end{definition}

First of all, We show $f(\by,t) \le \beta^{\|\bv_t\|_2\|\boldsymbol{\omega}_t\|_2^2-\|\boldsymbol{\omega}_0\|_2^2}\cdot r_{\sigma}^{t_0}$. By the definition,
\begin{equation}
\begin{aligned}
    f(\by,t)&=\alpha^{t-t_0} \beta^{\sum_{i\in[t],\by(i)\ne 0}\|\bv_i\|_2\cdot (\|\boldsymbol{\omega}_i\|_2^2-\|\boldsymbol{\omega_{i-1}}\|_2^2)}\prod_{\by(i)=0}r_{\sigma/\|\bv_i\|_2}(\ell_i).\\
\end{aligned}
\end{equation}
Let $i_1< i_2< \ldots <i_k$ be coordinates such that $\by(i_{j})\ne 0$ for any $1\le j\le k$. We have 

\begin{align*}
    &\sum_{i=1,\by(i)\ne 0}^t \|\bv_i\|_2\cdot(\|\boldsymbol{\omega}_i\|_2^2-\|\boldsymbol{\omega}_{i-1}\|_2^2)\\ 
    &=\sum_{s=1}^k \|\bv_{i_s}\|_2\cdot(\|\boldsymbol{\omega}_{i_s}\|_2^2-\|\boldsymbol{\omega}_{i_{s}-1}\|_2^2)\\
    &=\sum_{s=1}^k \|\bv_{i_s}\|_2\cdot(\|\boldsymbol{\omega}_{i_s}\|_2^2-\|\boldsymbol{\omega}_{i_{s-1}}\|_2^2)\tag{use $\boldsymbol{\omega}_{i_s-1}=\boldsymbol{\omega}_{i_{s-1}}$}\\ 
&=\sum_{s=2}^{k-1} \|\boldsymbol{\omega}_{i_s}\|_2^2\cdot(\|\bv_{i_s}\|_2-\|\bv_{i_{s+1}}\|_2)+\|\bv_{i_k}\|_2\cdot\|\boldsymbol{\omega}_{i_{k}}\|_2^2-\|\bv_{i_1}\|_2\cdot\|\boldsymbol{\omega}_{0}\|_2^2\\
&\ge \|\bv_{i_k}\|_2\cdot \|\boldsymbol{\omega}_{i_{k}}\|_2^2-\|\bv_{i_1}\|_2\cdot \|\boldsymbol{\omega}_{0}\|_2^2\tag{use $\|\bv_{i}\|_2\ge \|\bv_{i+1}\|_2$}\\
&=\|\bv_{i_k}\|_2\cdot \|\boldsymbol{\omega}_{t}\|_2^2-\|\bv_{i_1}\|_2\cdot \|\boldsymbol{\omega}_{0}\|_2^2\tag{use $\|\bv_{i_1}\|_2\le 1$ and $\|\bv_{i_k}\|_2\ge \|\bv_t\|_2$}\\
&\ge \|\bv_{t}\|_2\cdot\|\boldsymbol{\omega}_{t}\|_2^2-\|\boldsymbol{\omega}_{0}\|_2^2.
\end{align*}
By Corollary \ref{cor:upper_bounds_pr}, $r_{\sigma/\|\bv_i\|_2}(\ell_i)\le r_{\sigma}$. We have 
\begin{equation}\label{equ:potential_bound}
    f(\by,t)\le \alpha^{t-t_0}\cdot \beta^{\|\bv_t\|_2\cdot\|\boldsymbol{\omega}_t\|_2^2-\|\boldsymbol{\omega}_0\|_2^2}\cdot r_{\sigma}^{t_0}.
\end{equation}

For $\sigma,~B$ and $\alpha,~\beta$ satisfying Condition \ref{condition:alpha_beta}  we apply induction on $t$ to prove 
\begin{equation}\label{eq:hypothesis_Bana}
    \text{hypothesis: for any $t$,} \qquad f(\by,t) \ge \Pr[\by(1),\ldots,\by(t) | \boldsymbol{\omega}_0].
\end{equation} 
From \eqref{equ:potential_bound}, this shows the probability bound in Lemma~\ref{lem:induction_bana}. In the end of this section, we verify the last property of Lemma~\ref{lem:induction_bana} and Condition~\ref{condition:alpha_beta}.

\paragraph{Base case:} For convenience, assume $\|\bv_0\|_2=1$ such that \eqref{eq:hypothesis_Bana} is true for the base case $t=0$. 
\paragraph{Induction step: } From $i$ to $i+1$, it suffices to show 
\begin{equation}
    f(\by,i+1)/f(\by,i)\ge \Pr \left[\by(i+1)|\by(1),\dots, \by(i),\boldsymbol{\omega}_{0}\right]=\Pr[\by(i+1)|\boldsymbol{\omega}_i],
\end{equation}
We consider 5 cases depending on $\by(i+1)=1$ or $-1$ and whether $|\langle \boldsymbol{\omega}_{i},\bv(i+1)\rangle| \ge B/\|\bv_{i+1}\|_{2}$ 
or not. Recalling that $\ell_{i+1}=\frac{\langle \boldsymbol{\omega}_{i},\bv_{i+1}\rangle}{\|\bv_{i+1}\|_{2}}$ is the location of the $i-$th step  in $\R$. 

\begin{enumerate}[label=(\roman*)]
    \item \label{sec5:Case(i)}$\by(i+1)=1$ and $-0.5\cdot\|\bv_{i+1}\|_{2} \le \langle \boldsymbol{\omega}_{i},\bv_{i+1}\rangle \le B/\|\bv_{i+1}\|_{2}$: In this case, $\ell_{i+1}\ge -0.5$ and Corollary \ref{cor:upper_bounds_pr} shows
    \begin{equation}
    \begin{aligned}
    \Pr\bigg[ \by(i+1)=1 \big| \langle \boldsymbol{\omega}_{i},\bv_{i+1}\rangle \bigg]& 
     = p_{\sigma/\|\bv_{i+1}\|_{2}}(\ell_{i+1})\le p_{\sigma}(-0.5).
    \end{aligned}
    \end{equation} 
    And by Property \ref{sec5:condition3} of Condition \ref{condition:alpha_beta}, 
    \begin{equation}
    f(\by,i+1)/f(\by,i)=\alpha \cdot \beta^{\|\bv_{i+1}\|_2 \cdot (2\langle \boldsymbol{\omega}_i, \bv_{i+1}\rangle + \|\bv_{i+1}\|_2^2)} \ge \alpha \cdot \beta^{\|\bv_{i+1}\|_2^3 + 2B}\ge p_{\sigma}(-0.5).
    \end{equation}
            
    \item \label{sec5:Case(ii)} $\by(i+1)=-1$ and $0.5\cdot \|\bv_{i+1}\|_{2}< \langle \boldsymbol{\omega}_{i},\bv_{i+1}\rangle \le B/ \|\bv_{i+1}\|_{2}$: In this case $  \ell_{i+1}=\frac{\langle \boldsymbol{\omega}_{i},\bv_{i+1}\rangle}{\|\bv_{i+1}\|_{2}}>0.5,$ by Property 4 of Lemma~\ref{lem:mono_on_prob},  \begin{align*}
    \Pr\bigg[ \by(i+1)=-1 \big| \langle\boldsymbol{\omega}_i,\bv_{i+1} \rangle \bigg]& = 1 
     - p_{\sigma/\|\bv_{i+1}\|_2}(\ell_{i+1})\\
      & \le 1 - p_{\sigma/\|\bv_{i+1}\|_2}(B/ \|\bv_{i+1}\|_2^{2}).
    \end{align*} We simplify $p_{\sigma/\|\bv_{i+1}\|_2}(B/ \|\bv_{i+1}\|_2^{2})$ as 
    
    \begin{align}
        p_{\sigma/\|\bv_{i+1}\|_2}(B/ \|\bv_{i+1}\|_2^{2})&=\sum_{j\ge 1}(-1)^{j-1} \exp\left(\frac{ - j \cdot B}{\sigma^{2}}-\frac{j^{2} \cdot \|\bv_{i+1}\|_{2}^{2}}{2\sigma^{2}}\right)\notag \\
        &\ge exp \left( \frac{-B}{\sigma^2} - \frac{\|\bv_{i+1}\|_2^2}{2\sigma^2} \right) - \exp \left( \frac{-2B}{\sigma^2} - \frac{2\|\bv_{i+1}\|_2^2}{\sigma^2} \right). \label{eq:1_bana_mono_proof}
    \end{align}
    
        And we can verify the following inequality by Property \ref{sec5:condition4} of Condition \ref{condition:alpha_beta}.
    \begin{align}
        f(\by,i+1)/f(\by,i) & =\alpha \cdot \beta^{\|\bv_{i+1}\|_2 \cdot ( - 2\langle \boldsymbol{\omega}_i, \bv_{i+1}\rangle + \|\bv_{i+1}\|_2^2)} \notag \\
        &\ge \alpha \beta^{\|\bv_{i+1}\|_2^3} \notag \\
        &>1-\exp \left( \frac{-B}{\sigma^2} - \frac{\|\bv_{i+1}\|_2^2}{2\sigma^2} \right) + \exp \left( \frac{-2B}{\sigma^2} - \frac{2\|\bv_{i+1}\|_2^2}{\sigma^2} \right)\label{eq:30_lem_5.3}\\
        &\ge 1 - p_{\sigma/\|\bv_{i+1}\|_2}(B/ \|\bv_{i+1}\|_2^{2})\label{eq:31_lem_5.3}\\
        &\ge \Pr\left[ \by(i+1)=-1 \big| \langle\boldsymbol{\omega}_i,\bv_{i+1} \rangle \right]. \notag
    \end{align}



    \item \label{sec5:Case(iii)}$\by(i+1)=-1$ and $-0.5\cdot \|\bv_{i+1}\|_{2}\le \langle \boldsymbol{\omega}_{i},\bv_{i+1}\rangle \le 0.5\cdot\|\bv_{i+1}\|_{2}$. In this case \begin{equation}
        \ell_{i+1}=\frac{\langle \boldsymbol{\omega}_{i},\bv_{i+1}\rangle}{\|\bv_{i+1}\|_{2}}\in [-0.5,0.5].
    \end{equation}
    By monotonicity of $p_{\sigma/\|\bv_{i+1}\|_{2}}(\ell_{i+1})$ in Corollary \ref{cor:upper_bounds_pr}, we have \begin{equation}\begin{aligned}
    \Pr[\by(i+1)=-1|\langle \boldsymbol{\omega}_i,\bv_{i+1} \rangle] &= p_{\sigma/\|\bv_{i+1}\|_{2}}(-\ell_{i+1})\le p_{\sigma}(-0.5).
    \end{aligned}\end{equation}
    
    And by Property \ref{sec5:condition2} of Condition \ref{condition:alpha_beta}
\begin{equation}
    f(\by,i+1)/f(\by,i)=\alpha \cdot \beta^{\|\bv_{i+1}\|_2 \cdot ( - 2\langle \boldsymbol{\omega}_i, \bv_{i+1}\rangle + \|\bv_{i+1}\|_2^2)} \ge \alpha \beta^{1+\|\bv_{i+1}\|_2^3}\ge p_{\sigma}(-0.5).
\end{equation}

    \item \label{sec5:Case(iv)}$\by(i+1)=1$ and $\langle \boldsymbol{\omega}_{i},\bv_{i+1}\rangle \ge B/\|\bv_{i+1}\|_{2}$. 
    \begin{equation}\begin{aligned}
        \Pr\bigg[ \by(i+1)=1 \big| \langle \boldsymbol{\omega}_{i},\bv_{i+1}\rangle \bigg] & =\sum_{j\ge 1}(-1)^{j-1} \exp\left(\frac{ - j \cdot \langle \boldsymbol{\omega}_i, \bv_{i+1}\rangle \cdot \|\bv_{i+1}\|_2}{\sigma^{2}}-\frac{j^{2} \cdot \| \bv_{i+1} \|_{2}^{2}}{2\sigma^{2}}\right) \\
        & \le \exp\left(\frac{ - \langle \boldsymbol{\omega}_i, \bv_{i+1}\rangle \cdot \|\bv_{i+1}\|_2}{\sigma^{2}}-\frac{\|\bv_{i+1}\|_{2}^{2}}{2\sigma^{2}}\right).
    \end{aligned}\end{equation}
    And by Property \ref{sec5:condition2} of Condition \ref{condition:alpha_beta} we have $f(\by,i+1)/f(\by,i) = \alpha \cdot \beta^{\|\bv_{i+1}\|_2 \cdot (2\langle \boldsymbol{\omega}_i, \bv_{i+1}\rangle + \|\bv_{i+1}\|_2^2)}\ge \Pr[\by(i+1)|\langle \boldsymbol{\omega}_{i},\bv_{i+1}\rangle]$.
    
    \item \label{sec5:Case(v)}$\by(i+1)=-1$ and $\langle \boldsymbol{\omega}_{i},\bv_{i+1}\rangle \ge B/\|\bv_{i+1}\|_{2}$ We bound $\Pr[\by(i+1)=-1|\boldsymbol{\omega}_i] \le 1$. And by Property \ref{sec5:condition1} of Condition \ref{condition:alpha_beta}, 
    \begin{equation}
    f(\by,i+1)/f(\by,i)=\alpha \cdot \beta^{\|\bv_{i+1}\|_2 \cdot ( - 2\langle \boldsymbol{\omega}_i, \bv_{i+1}\rangle + \|\bv_{i+1}\|_2^2)} \ge \alpha\cdot \beta^{-2B + \|\bv_{i+1}\|_2^3}\ge 1.
    \end{equation}
\end{enumerate}
Hence by equation (\ref{equ:potential_bound}) we have
\begin{equation}
    \Pr_{\by}\left[ \by(1)=\boldsymbol{\epsilon}(1),\ldots,\by(t)=\boldsymbol{\epsilon}(t) \big| \boldsymbol{\omega}_{0} \right]\le f(\by,t)\le \alpha^{t-t_0}\cdot \beta^{\|\bv_t\|_2\|\boldsymbol{\omega}_t\|_2^2-\|\boldsymbol{\omega}_0\|_2^2}\cdot r_{\sigma}^{t_0}
\end{equation}
as desired.

Finally we show that our choices of $\alpha$ and $\beta$ satisfy both Condition \ref{condition:alpha_beta} and the last property of Lemma \ref{lem:induction_bana}. 
 For $\sigma$ satisfying $r_{\sigma}<\frac{1}{6}e^{-\frac{24}{\gamma}}$ defined in Line 3 of Algorithm~\ref{alg:Bana}, 
 recall that $B=\frac{24\sigma^2}{\gamma}$, $\beta= 1-\frac{1}{4B}e^{-\frac{B}{\sigma^2}}$ and $\alpha=1-\frac{1}{3}e^{-\frac{B}{\sigma^2}}$. We show Condition~\ref{condition:alpha_beta} holds before verifing $\gamma>2\frac{\ln(1+2\sigma^2\ln \beta)}{\ln(\alpha+r_\sigma)}$. 

To show this we add some auxiliary restrictions to simplify those properties in Condition \ref{condition:alpha_beta}.
\begin{enumerate}
    \item $\alpha\ge \beta^{2B-1}$ from Property \ref{sec5:condition1}.
    
    \item If $\beta\ge e^{-\frac{1}{2\sigma^2}}$, then $\alpha \beta^{2B+1} \ge e^{-B/\sigma^2}$ implies that Property \ref{sec5:condition5} holds for any $|\langle \boldsymbol{\omega}_i, \bv_{i+1} \rangle| \ge B$. To satisfy $\alpha \beta^{2B+1} \ge e^{-B/\sigma^2}$, given $\alpha \ge \beta^{2B-1}$, we only need $\beta \ge e^{-\frac{1}{4\sigma^2}}$.
    \item Property \ref{sec5:condition3} needs $\beta\ge p_{\sigma}(-0.5)^{1/4B}$ and Property \ref{sec5:condition2} is satisfied if Property \ref{sec5:condition1} and Property \ref{sec5:condition3} are satisfied.
\end{enumerate}

So Condition \ref{condition:alpha_beta} are fulfilled if the following restrictions are satisfied: (1) $\alpha \ge \beta^{2B-1}$; (2) $\beta\ge 0.68^{1/4B}\ge p_{\sigma}(-0.5)^{1/4B}$; (3) $\beta\ge e^{-\frac{1}{4}\frac{1}{\sigma^2}}$; (4)$\beta^{2B}>1-\exp \left( \frac{-B}{\sigma^2} - \frac{\|\bv_{i+1}\|_2^2}{2\sigma^2} \right) + \exp \left( \frac{-2B}{\sigma^2} - \frac{2\|\bv_{i+1}\|_2^2}{\sigma^2} \right)$. 

For $\sigma\ge 1$ and $\frac{B}{\sigma^2}\ge 10$, these restrictions are satisfied by $\beta = 1-\frac{1}{4B}e^{-\frac{B}{\sigma^2}}$ and $\alpha=1-\frac{1}{3}e^{-\frac{B}{\sigma^2}}$. After plugging $B=\frac{24\sigma^2}{\gamma}$, they become $\beta= 1-\frac{\gamma}{96\sigma^2}e^{-\frac{24}{\gamma}}$ and $\alpha=1-\frac{1}{3}e^{-\frac{24}{\gamma}}$. Given $r_{\sigma}<\frac{1}{6} e^{-24/\gamma}$, we have $\alpha + r_{\sigma}<1-\frac{1}{6} e^{-24/\gamma}$ and
 \begin{equation}
 \begin{aligned}
     2\cdot\frac{\ln(1+2\sigma^2\ln \beta)}{\ln(\alpha+r_\sigma)}& \le 2 \cdot 1.5\cdot \frac{-2\sigma^2\ln \beta}{1-\alpha-r_{\sigma}}\\
     &\le \frac{6\cdot \frac{\sigma^2}{2B}e^{-\frac{B}{\sigma^2}}}{\frac{1}{6}e^{-\frac{B}{\sigma^2}}}\\
     &\le \frac{18\sigma^2}{B}= \frac{3}{4} \gamma.
\end{aligned}
 \end{equation}
Hence we have proven the last property of Lemma \ref{lem:induction_bana}. 
    \subsection{Proof of Lemma~\ref{lem:mono_on_prob}}\label{sec:pf_mono}
    To study the monotonicity of $p_{\sigma}(x)$, we define two functions
    \begin{equation}
        A_{\sigma}(x)=\sum_{j\ge 1}(-1)^{j-1} \cdot \exp(-\frac{(x+j)^2}{2\sigma^2}),
    \end{equation}
    and 
    \begin{equation}\label{eq:B_sigma}
       \begin{aligned}
       B_{\sigma}(x)&=\sum_{i\ge 0}(-1)^{i} \cdot A_{\sigma}(x+i)\\
        &=\sum_{j\ge 1}(-1)^{j-1}\cdot j \cdot \exp(-\frac{(x+j)^2}{2\sigma^2}).
        \end{aligned}
    \end{equation} Then we have 
    \begin{equation}\label{p_equal_A}
        p_{\sigma}(x)=\exp(\frac{x^2}{2\sigma^2})\cdot A_{\sigma}(x).
    \end{equation}
 Moreover,    \begin{align}
    \frac{\partial p_{\sigma}(x)}{\partial x}&= \sum_{j\ge 1}(-1)^{j-1}\cdot\frac{-j}{\sigma^2}\cdot \exp(\frac{-j\cdot(2x+j)}{2\sigma^2})\\
        &=-\frac{1}{\sigma^2}\cdot \exp(\frac{x^2}{2\sigma^2})\cdot B_{\sigma}(x)   \label{p_differential} \\
        &=-\frac{1}{\sigma^2}\cdot \exp(\frac{x^2}{2\sigma^2})\cdot \sum_{i\ge 0}(-1)^i\cdot A_{\sigma}(x+i) \qquad \textit{by \eqref{eq:B_sigma}}.  
\end{align}
    We show $p_\sigma(x)$ is monotonically decreasing on $x$ in $[-1/2,\infty]$ for all $\sigma>0$ first.
    
    \begin{proofof}{Lemma~\ref{lem:mono_on_prob} (1)}
    
    To show $p_{\sigma}(x)$ is decreasing when $x\ge -1/2$, by equation (\ref{p_differential}) we just need to show $B_{\sigma}(x)>0$ for all $x\ge -1/2$.

    Fixing $\sigma\ne 0$, $B_{\sigma}(x)=\sum_{j\ge 1}(-1)^{j-1}\cdot j \cdot \exp(-\frac{(x+j)^2}{2\sigma^2})$, there exists an $x_\sigma$ large enough such that for all $x> x_\sigma$, $B_{\sigma}(x)>0$ due to the exponentially decay of $\exp(-\frac{(x+j)^2}{2\sigma^2})$ and only the first term becoming significant.
    Now $B_{\sigma}(x)$ is an alternative sum of $A_{\sigma}(x+i)$. Hence by alternating series test the sign of $B_\sigma(x)$ is determined by monotonicity of $A_\sigma(x)$. 
    \begin{equation}\label{eq:derivative_A}
        \frac{\partial A_{\sigma}(x)}{\partial x}=-\frac{1}{\sigma^2}\cdot(x\cdot A_\sigma(x)+B_{\sigma}(x)).
    \end{equation}
    For $x>\max \{x_\sigma, 0\}$ we have $A_{\sigma}(x)>0,~B_\sigma(x)>0$, hence $\frac{\partial A_{\sigma}(x)}{\partial x}<0$. So $A_{\sigma}(x)$ is monotonically decreasing on $x\ge x_\sigma$. Moreover, if there exists $x$ such that  $A_\sigma(x)-A_\sigma(x+1)\le 0$, saying the largest one is $x_0$, i.e., $x_0\triangleq \max\{x:A_{\sigma}(x+1)\ge A_{\sigma}(x)\}$. 
    
    First we show that $x_0<\max(x_0,0)$ and hence $x_0<0$. 
    \begin{equation}\label{equ:partial_A}
        -\sigma^2\cdot\frac{\partial A_{\sigma}(x)}{\partial x}=x\cdot A_\sigma(x)+\sum_{i\ge 0}(-1)^{i}\cdot A_{\sigma}(x+i)> x\cdot A_{\sigma}(x)+A_{\sigma}(x)-A_{\sigma}(x+1).
    \end{equation}
    For $s\ge \max(x_0,0)$, since $A_{\sigma}(s)>0$, we simplify \eqref{equ:partial_A} to
    \begin{equation}
    -\sigma^2\cdot\frac{\partial A_{\sigma}(s)}{\partial s}>A_{\sigma}(s)-A_{\sigma}(s+1)\ge 0.
    \end{equation} 
    Hence $\frac{\partial A_{\sigma}(s)}{\partial s}<0$ for $s \ge \max(x_0,0)$, which implies $A_\sigma(s)-A_{\sigma}(s+1)>0$ for $s\ge \max(x_0,0)$. 
    Therefore $x_0 < \max(x_0,0)$ which means $x_0< 0$ and $A_{\sigma}(s)$ decreases on $s\ge 0.$
     
   Now we show $x_0< \max(x_0,-1/2)$. For $0\ge s\ge \max(x_0, -1/2)$ we have 
        \begin{align*}
        &A_{\sigma}(s)-A_\sigma(s+1)\\
        &=-\int_{s}^{s+1} \frac{\partial A_{\sigma}(x)}{\partial x}\cdot dx\\
        &>\frac{1}{\sigma^2}\cdot\int_{s}^{s+1}\big( (1+x)\cdot A_\sigma(x)-A_{\sigma}(x+1)\big)\cdot \mathrm{d} x\tag{by Equation \ref{equ:partial_A}}\\
        &=\frac{1}{\sigma^2}\cdot \int_{s}^{0} \big((1+x)\cdot A_\sigma(x)-A_{\sigma}(x+1)\big)\cdot \mathrm{d} x+\frac{1}{\sigma^2}\cdot \int_{0}^{s+1} \big((1+x)\cdot A_\sigma(x)-A_{\sigma}(x+1)\big)\cdot \mathrm{d} x\\
        &\ge \frac{1}{\sigma^2}\cdot\int_{s}^{0}\big( (1+x)\cdot A_\sigma(x+1)-A_{\sigma}(x+1)\big)\cdot \mathrm{d} x+\frac{1}{\sigma^2}\cdot \int_{0}^{s+1}\big( (1+x)\cdot A_\sigma(s+1)-A_{\sigma}(s+1) \big)\cdot\mathrm{d} x \tag{first term uses $A_\sigma(x)\ge A_{\sigma}(x+1)$ for $x\ge s$ and second term uses $A(x)$ decreasing for $x\ge 0$}\\
        &=\frac{1}{\sigma^2}\cdot \int_{s}^{0} x\cdot A_\sigma(x+1)\cdot \mathrm{d} x+\frac{1}{\sigma^2}\cdot \int_{0}^{s+1} x\cdot A_\sigma(s+1)\cdot \mathrm{d} x\\
        &\ge \frac{1}{\sigma^2}\cdot \int_{s}^{0} x\cdot A_\sigma(s+1) \cdot \mathrm{d} x+\frac{1}{\sigma^2}\cdot \int_{0}^{s+1} x\cdot A_\sigma(s+1)\cdot \mathrm{d} x\tag{use $A_{\sigma}(x+1)$ decreasing for $x\ge -1$}\\
        &=\frac{1}{\sigma^2} \cdot (s+1/2)\cdot A_{\sigma}(s+1)\ge 0. 
    \end{align*}
    So if $s\ge \max (-1/2,x_0)$ we have $A_\sigma(s)>A_\sigma(s+1)$. This means $x_0< \max(-1/2,x_0)$ by the definition of $x_0$.
So we must have $x_0 <-1/2.$ Hence $A_{\sigma}(x)$ is monotonically decreasing when $x\ge -1/2$ which implies $B_{\sigma}(x)>0$ and  $p_{\sigma}(x)$ decreasing on $x\ge -1/2$.
    \end{proofof}
    
    Then we show $p_{\sigma}(-1/2)$ is strictly increasing on $\sigma>0$.
    
    \begin{proofof}{Lemma \ref{lem:mono_on_prob} (2)}
   First we express $\frac{\partial p_{\sigma}(x)}{\partial \sigma}$ by $B_{\sigma}(x)$.
    \begin{align*}
    \frac{\partial p_{\sigma}(x)}{\partial \sigma}&=\sum_{j\ge 1}(-1)^{j-1}\cdot\frac{2 j\cdot (2x+j)}{2\sigma^3}\cdot \exp(\frac{-j\cdot(2x+j)}{2\sigma^2})\\
        &=\frac{1}{\sigma^3}\cdot \exp(\frac{x^2}{2\sigma^2})\cdot \sum_{j\ge 1}(-1)^{j-1}\cdot(2x\cdot j+j^2)\cdot \exp(-\frac{(x+j)^2}{2\sigma^2})\\
        &=\frac{1}{\sigma^3}\cdot \exp(\frac{x^2}{2\sigma^2})\cdot \big(2x\cdot B_{\sigma}(x)+\sum_{j\ge 1}(-1)^{j-1}\cdot j^2\cdot \exp(-\frac{(x+j)^2}{2\sigma^2})\big).
    \end{align*}
Also, we can express the last term as alternative summation of $B_\sigma(x+i).$
\begin{align*}
    &\sum_{j\ge 1}(-1)^{j-1}\cdot j^2\cdot \exp(-\frac{(x+j)^2}{2\sigma^2})\\
    &=\sum_{j\ge 1}(-1)^{j-1}\cdot \sum_{1\le i\le j}(2i-1)\cdot \exp(-\frac{(x+j)^2}{2\sigma^2})\tag{use $j^2=\sum_{i=1}^{j}2i-1$}\\
    &=\sum_{i\ge 1}\sum_{k=j-i\ge 0}(-1)^{k+i-1}\cdot 2i\cdot \exp(-\frac{(x+k+i)^2}{2\sigma^2})+\sum_{j\ge 1}(-1)^{j-1}\cdot \sum_{1\le i\le j}(-1)\cdot \exp(-\frac{(x+j)^2}{2\sigma^2})\\
    &=2\sum_{k\ge 0}\sum_{i\ge 1}(-1)^{k+i-1}\cdot i\cdot \exp(-\frac{(x+k+i)^2}{2\sigma^2})-B_{\sigma}(x)\\
    &=2\sum_{k\ge 0}(-1)^{k}\cdot B_{\sigma}(x+k) - B_{\sigma}(x).
\end{align*}
Hence we have
\begin{equation}\label{equ:partial_p}
    \frac{\partial p_{\sigma}(x)}{\partial \sigma}=\frac{1}{\sigma^3}\cdot \exp(\frac{x^2}{2\sigma^2})\cdot (2x\cdot B_{\sigma}(x)+B_{\sigma}(x)+2\cdot \sum_{i\ge 1}(-1)^i\cdot B_{\sigma}(x+i)).
\end{equation}
    Now we consider the monotonicity of $B_\sigma(x)$ and
    \begin{equation}
    \begin{aligned}
        \frac{\partial B_{\sigma}(x)}{\partial x}&=-\frac{1}{\sigma^2}\sum_{j\ge 1}(-1)^{j-1}\cdot (x+j)\cdot j\cdot \exp(-\frac{(x+j)^2}{2\sigma^2})\\
        &=-\frac{1}{\sigma^2} \left( x\cdot B_{\sigma}(x)+B_{\sigma}(x)+2\sum_{i\ge 1}(-1)^i\cdot B_{\sigma}(x+i) \right).
    \end{aligned}
    \end{equation}
    Similar to the proof of Lemma \ref{lem:mono_on_prob} (1) that $A_{\sigma}(x)$ is monotonically decreasing on $[0,+\infty)$, we have $\frac{\partial B_{\sigma}(x)}{\partial x}<0$ for $x\ge 1$. 
    
    Let $x_0$ be the largest $x$ so that  $B_\sigma(x)-B_\sigma(x+1)\le 0$ as well. We will show that $x_0<\max(x_0,1)$. For $s\ge \max(x_0,1)$ we have
    \begin{align*}
        \frac{\partial B_{\sigma}(s)}{\partial s}
        &=-\frac{1}{\sigma^2}\cdot \left( x\cdot B_{\sigma}(x)+B_{\sigma}(x)+2\sum_{i\ge 1}(-1)^i\cdot B_{\sigma}(x+i) \right)\\
        &\le -\frac{1}{\sigma^2}\cdot \left( 2B_{\sigma}(s)+2\sum_{i\ge 1}(-1)^i\cdot B_{\sigma}(s+i) \right)\tag{use $B_{\sigma}(s)>0$ for $s>-1/2$}\\
        &=-\frac{2}{\sigma^2}\cdot\sum_{i=0}(-1)^i\cdot B_{\sigma}(s+i)
        < 0.     \tag{by the definition of $x_0$ and the alternative criterion}   \end{align*}
    Hence we have $\frac{\partial B_{\sigma}(s)}{\partial s}<0$ for $s\ge \max(x_0,1)$ which implies $B_{\sigma}(s)-B_{\sigma}(s+1)>0$ and $x_0< \max(x_0,1).$ And therefore we have $x_0<1$. 
    
    And then we show $x_0<\max(x_0,1/2)$. For $1\ge s\ge \max (x_0,1/2)$ 
    \begin{align*}
    &B_\sigma(s)-B_{\sigma}(s+1)\\
    &=-\int_{s}^{s+1} \frac{\partial B_{\sigma}(x)}{\partial x}\cdot \mathrm{d} x\\
    &=\frac{1}{\sigma^2}\cdot \int_{s}^{s+1}\big( x\cdot B_{\sigma}(x)+B_{\sigma}(x)+2\sum_{i\ge 1}(-1)^i\cdot B_{\sigma}(x+i)\big)\cdot \mathrm{d} x\\
    &>\frac{1}{\sigma^2}\cdot\big(\int_{s}^{1} (1+x)\cdot B_{\sigma}(x)-2B_{\sigma}(x+1)\cdot\mathrm{d} x+\int_{1}^{s+1} (1+x)\cdot B_{\sigma}(x)-2B_{\sigma}(x+1)\cdot\mathrm{d} x\big)\\
    &\ge \frac{1}{\sigma^2}\cdot\big(\int_{s}^{1} (x-1)\cdot B_{\sigma}(x+1)\cdot\mathrm{d} x+\int_{1}^{s+1} (1+x)\cdot B_{\sigma}(s+1)-2B_{\sigma}(s+1)\cdot \mathrm{d} x\big)\tag{use $B_\sigma(x)\ge B_\sigma(x+1)$ for $x\ge s$ and  $B_\sigma(x)$  decreasing for $x\ge 1$.}\\
    &\ge \frac{1}{\sigma^2}\cdot\int_{s}^{s+1} (x-1)\cdot B_{\sigma}(s+1)\cdot \mathrm{d} x\tag{$B_\sigma(x+1)$ is decreasing for $x\ge 0$.}\\
    &=\frac{1}{\sigma^2}\cdot (s-1/2)\cdot B_\sigma(s+1)\ge 0.
    \end{align*}
    By a similar argument we have $B_\sigma(x)-B_\sigma(x+1)>0$ for $x\ge 1/2.$ 

    And if $x\ge 1/2$ we have 
    \begin{equation}
         \frac{\partial p_{\sigma}(x)}{\partial \sigma}\ge \frac{3}{2\sigma^3}\cdot \exp(\frac{x^2}{2\sigma^2})\cdot (2\cdot \frac{1}{2} \cdot B_{\sigma}(x)+B_{\sigma}(x)+2\sum_{i\ge 1}(-1)^i\cdot B_{\sigma}(x+i))>0,
    \end{equation}
    which means $p_{\sigma}(x)$ is monotonically increasing strictly on $\sigma>0$ for any fixed $x\ge 1/2.$
    \end{proofof}
\begin{proofof}{Lemma \ref{lem:mono_on_prob}(3)}
        Plugging $x=-1/2$ into Equation \eqref{equ:partial_p} we have
    \begin{equation}\label{p_sigma_-1/2}
        \frac{\partial p_{\sigma}(-1/2)}{\partial \sigma}=\frac{3}{2\sigma^3}\cdot \exp(\frac{x^2}{2\sigma^2})\cdot 2\sum_{i\ge 1}(-1)^i\cdot B_{\sigma}(x+i).
    \end{equation}
    And by the monotonicity of $B_{\sigma}(x+1)$ for $x\ge -1/2$ proved above, we have $\frac{\partial p_{\sigma}(-1/2)}{\partial \sigma}<0.$
\end{proofof}

\begin{proofof}{Lemma \ref{lem:mono_on_prob}(4)}
As mentioned in \cite{LiuSS22}, let $u=\exp{(\frac{-1}{2\sigma^{2}})}$ and $v=\sqrt{-1}\cdot \exp{(\frac{-x}{2\sigma^{2}})}$ which means $|u|<1$ and $v\neq 0$. By Jacobi triple product identity (reference e.g. \cite{adr1965_Jacobi}) we have
\begin{equation}
\begin{aligned}
        r_{\sigma}(x)&=\sum_{j=-\infty}^{\infty}u^{j^{2}}\cdot v^{2j}\\
        &=\prod\limits_{j=1}^{\infty}(1-u^{2j})\cdot (1+u^{2j-1}v^{2})\cdot (1+u^{2j-1}v^{-2})\\
        &=\prod\limits_{j=1}^{\infty}(1-e^{-j/\sigma^{2}})\cdot (1-e^{-(2j+2x-1)/(2\sigma^{2})})\cdot(1-e^{-(2j-2x-1)/(2\sigma^{2})}) \label{eq_r_sigma(x)}.
\end{aligned}
\end{equation}
Besides, \begin{align*}
       (1-e^{-(2j+2x-1)/(2\sigma^{2})})\cdot(1-e^{-(2j-2x-1)/(2\sigma^{2})})
    &=1+e^{-2\cdot(2j-1)/(2\sigma^2)}-e^{-(2j-1)/(2\sigma^2)}(e^{-\frac{x}{\sigma^2}}+e^{\frac{x}{\sigma^2}})\\
    &\le 1+e^{-2\cdot(2j-1)/(2\sigma^2)}-2e^{-(2j-1)/(2\sigma^2)}.
\end{align*}
Therefore $r_{\sigma}(x)\le r_\sigma(0)$ for $x\in [-0.5,0.5].$ And it is easy to verify
\begin{equation}
\begin{aligned}
    \ln r_\sigma(0)&\le \sum_{j=1}^\infty \ln(1-e^{-j/\sigma^2})\le -\sum_{j=1}^\infty e^{-j/\sigma^2}=-\Theta(\sigma^2).
\end{aligned}
\end{equation}
Hence for $\sigma\rightarrow\infty$ we have $r_\sigma(0)\le e^{-\Omega(\sigma^2)}\rightarrow 0.$
\end{proofof}



\end{document}